\UseRawInputEncoding
\documentclass[11pt]{article}

\usepackage{amssymb,amsmath,amsthm,amsfonts,dsfont}
\usepackage[numbers,comma,sort&compress]{natbib}
\usepackage[pagebackref,colorlinks,hypertexnames=false]{hyperref}
\hypersetup{
	pdfstartview={FitH},
	pdfnewwindow=true,
	colorlinks=true,
	linkcolor=blue,
	citecolor=blue,
	filecolor=blue,
	urlcolor=blue}

\renewcommand*{\backrefalt}[4]{%
\ifcase #1 %
No citations.%
\or
(Cited on page #2).%
\else
(Cited on pages #2).%
\fi
}


\usepackage{etoolbox}
\makeatletter
\patchcmd\NAT@citexnum{\let\NAT@last@num\NAT@num}{\MakeLinkTarget[cite]{}\Hy@backout{\@citeb\@extra@b@citeb}\let\NAT@last@num\NAT@num}{}{\fail}
\makeatother

\usepackage[capitalise]{cleveref}
\usepackage[all]{hypcap}
\usepackage{url}
\urlstyle{same}
\usepackage{graphicx}
\usepackage[font=small]{caption}
\usepackage[subrefformat=parens,labelformat=parens]{subcaption}
\usepackage{float}
\usepackage{footnote}
\usepackage{enumitem}
\usepackage[margin=1in]{geometry}

\newcommand{\abs}[1]{\left\lvert#1\right\rvert}
\newcommand{\norm}[1]{\left\lVert#1\right\rVert}

\newtheorem{theorem}{Theorem}
\newtheorem{lemma}{Lemma}

\newtheorem{proposition}[lemma]{Proposition}
\newtheorem{corollary}[lemma]{Corollary}
\theoremstyle{remark}

\theoremstyle{plain}
\newtheorem*{problem}{Problem}

\newcommand{\eq}[1]{\cref{eq:#1}}
\newcommand{\thm}[1]{\hyperref[thm:#1]{Theorem~\ref*{thm:#1}}}
\newcommand{\defn}[1]{\hyperref[defn:#1]{Definition~\ref*{defn:#1}}}
\newcommand{\lem}[1]{\hyperref[lem:#1]{Lemma~\ref*{lem:#1}}}
\newcommand{\prop}[1]{\hyperref[prop:#1]{Proposition~\ref*{prop:#1}}}
\newcommand{\fig}[1]{\hyperref[fig:#1]{Figure~\ref*{fig:#1}}}
\newcommand{\tab}[1]{\hyperref[tab:#1]{Table~\ref*{tab:#1}}}
\renewcommand{\sec}[1]{\hyperref[sec:#1]{Section~\ref*{sec:#1}}}
\newcommand{\append}[1]{\hyperref[append:#1]{Appendix~\ref*{append:#1}}}
\newcommand{\cor}[1]{\hyperref[cor:#1]{Corollary~\ref*{cor:#1}}}

\newcommand{\ket}[1]{|#1\rangle}
\newcommand{\bra}[1]{\langle#1|}
\newcommand{\ketbra}[2]{\ket{#1}\!\bra{#2}}

\usepackage{mathtools}

\DeclareFontFamily{U}{matha}{\hyphenchar\font45}
\DeclareFontShape{U}{matha}{m}{n}{
	<5> <6> <7> <8> <9> <10> gen * matha
	<10.95> matha10 <12> <14.4> <17.28> <20.74> <24.88> matha12
}{}
\DeclareSymbolFont{matha}{U}{matha}{m}{n}
\DeclareFontFamily{U}{mathx}{\hyphenchar\font45}
\DeclareFontShape{U}{mathx}{m}{n}{
	<5> <6> <7> <8> <9> <10>
	<10.95> <12> <14.4> <17.28> <20.74> <24.88>
	mathx10
}{}
\DeclareSymbolFont{mathx}{U}{mathx}{m}{n}
\DeclareMathSymbol{\obot}         {2}{matha}{"6B}
\DeclareMathSymbol{\bigobot}       {1}{mathx}{"CB}

\usepackage{nicematrix}
\usepackage{tikz}





\usepackage{colortbl}
\definecolor{Gray}{gray}{0.85}


\makeatletter
\def\newmaketag{%
  \def\maketag@@@##1{\hbox{\m@th\normalfont\normalsize##1}}%
  }
\makeatother

\usepackage{aligned-overset}

\usepackage[thicklines]{cancel}
\newcommand\Ccancel[2][black]{\renewcommand\CancelColor{\color{#1}}\cancel{#2}}

\usepackage{authblk}


\usepackage[breakable,skins]{tcolorbox}
\tcolorboxenvironment{theorem}{blanker,breakable=true,before skip=10pt,after skip=10pt}
\tcolorboxenvironment{proposition}{blanker,breakable=true,before skip=10pt,after skip=10pt}
\tcolorboxenvironment{corollary}{blanker,breakable=true,before skip=10pt,after skip=10pt}
\tcolorboxenvironment{lemma}{blanker,breakable=true,before skip=10pt,after skip=10pt}

\title{Quantum matrix arithmetics with Hamiltonian evolution}
\date{\vspace{-5mm}}
\author[1,2]{Christopher Kang}
\author[1]{Yuan Su\thanks{The second author is now at the \emph{AWS Center for Quantum Computing, Pasadena, CA 91106, USA}.}}
\affil[1]{Azure Quantum, Microsoft, Redmond, WA 98052, USA}
\affil[2]{Department of Computer Science, The University of Chicago, Chicago, IL 60637, USA}

\begin{document}
\maketitle

\newcommand{\circuitwidth}{1.05}

\begin{abstract}
The efficient implementation of matrix arithmetic operations underpins the speedups of many quantum algorithms. Standard circuit constructions rely on ancilla qubits and multi-qubit controlled gates, which do not well-align with the capabilities of quantum devices expected in the foreseeable future.

We develop a suite of methods to perform matrix arithmetics---with the result encoded in the off-diagonal blocks of a Hamiltonian---using Hamiltonian evolutions of input operators. We show how to maintain this \emph{Hamiltonian block encoding} after specifying all its entries, so that matrix operations can be composed one after another, and the entire quantum computation takes $\leq 2$ ancilla qubits.

We achieve this for matrix multiplication, matrix addition, matrix inversion, Hermitian conjugation, fractional scaling, integer scaling, complex phase scaling, as well as singular value transformation for both odd and even polynomials. We also present an overlap estimation algorithm to extract classical properties of Hamiltonian block encoded operators, analogous to the well known Hadmard test, at no extra cost of qubit.

Our Hamiltonian matrix multiplication uses the Lie group commutator product formula and its higher-order generalizations due to Childs and Wiebe. We prove a concrete error bound exactly matching the Baker-Campbell-Hausdorff series to third order, which is provably tight up to a single application of the triangle inequality. Our Hamiltonian singular value transformation employs a dominated polynomial approximation, where the approximation holds within the domain of interest, while the constructed polynomial is upper bounded by the target function over the entire unit interval.

When applied to quantum simulation, our methods inherit the commutator scaling of conventional product formulas and leverage the power of matrix arithmetics to reduce the cost of each simulation step. To illustrate this feature, we describe a circuit for simulating a class of sum-of-squares Hamiltonians, attaining a commutator scaling in step count, while the gate cost per step remains comparable to that of more advanced algorithms. In particular, we apply this to the doubly factorized tensor hypercontracted Hamiltonians from recent studies of quantum chemistry, obtaining further improvements for initial states with a fixed number of particles. We achieve this with $1$ ancilla qubit.
\end{abstract}
\newpage
{
	\thispagestyle{empty}
	\clearpage\tableofcontents
	\thispagestyle{empty}
}
\newpage

\section{Introduction}
\label{sec:intro}

\subsection{Matrix arithmetics with unitary block encoding}
\label{sec:intro_unitary}
Quantum computers hold the promise of solving numerous problems faster than classical computers. One such example is the dynamical simulation of quantum many-body Hamiltonians~\cite{Lloyd96}---a problem that motivates Feynman~\cite{Fey82}, Manin~\cite{Manin80} and others to propose the idea of quantum computers. Due to the exponential growth of Hilbert space dimensions, Hamiltonian simulation is intractable on classical devices, but there exist efficient quantum algorithms that approximate the evolution $e^{-itH}$ using only a polynomial amount of resources. Other prominent examples include ground state preparation~\cite{Lin2020nearoptimalground,Ge19} and ground energy estimation, whose solutions provide insights for understanding complex chemical reactions and material properties~\cite{vonBurg21,Lee21}. These correspond to approximating the ground state projector of an underlying Hamiltonian $H$, which can be efficiently realized on a quantum computer assuming suitable initial states and energy gaps are available.

In practice, the Hamiltonian of interest $H$ is usually not a simple operator, and its time evolution $e^{-itH}$ cannot be directly performed on a quantum computer. Instead, $H$ can often be decomposed into a linear combination of elementary terms $H=\sum_jh_jH_j$, where $h_j$ are real coefficients and $H_j$ are Hermitian unitaries (such as tensor products of Pauli operators). Such a linear combination can be probabilistically implemented using the \emph{Linear Combination of Unitary} (LCU) technique of~\cite{ChildsWiebe12}, which can then be amplified and transformed to yield the desired output for Hamiltonian simulation and ground state preparation.

The above workflow was systematically extended by the \emph{Quantum Singular Value Transformation} (QSVT)~\cite{Gilyen2018singular} to realize more general matrix arithmetics on a quantum computer. As the singular values of a Hermitian operator agree with its eigenvalues in magnitude, QSVT can be applied to simulate Hamiltonian evolutions and prepare ground states, and its performance recovers or surpasses that of alternative methods. However, QSVT also supports operations like matrix multiplication that do not necessarily preserve Hermicity, which arise naturally in a host of applications such as solving systems of linear equations~\cite{Harrow2009} and differential equations~\cite{Berry_2014}, providing a unified methodology for developing quantum algorithms~\cite{2021MartynGrand}.

The input to QSVT is a \emph{unitary block encoding} of the form
\begin{equation}
    O_A=
    \begin{bmatrix}
        A & \boldsymbol\cdot\\
        \boldsymbol\cdot & \boldsymbol\cdot
    \end{bmatrix},
\end{equation}
where the desired matrix $A$ is encoded in the top-left block~\cite{CGJ19,Low17} with spectral norm $\norm{A}\leq1$. The remaining blocks satisfy the unitary constraint of $O_A$, but are otherwise unspecified and denoted by dots. The circuit construction of such a block encoding typically requires ancilla qubits and multi-qubit controlled quantum gates~\cite[Appendix G.4]{CMNRS18}. Note this is not a one-time cost: as block encodings are combined together by matrix arithmetics, fresh ancillas are required at each step of the computation, which may not well-align with the capabilities of quantum devices expected in the foreseeable future.

For problems such as quantum simulation and ground state preparation, the input $H$ is intrinsically Hermitian and one can instead encode it as a Hamiltonian evolution, leading to algorithms with cost comparable to QSVT, using only a constant number of ancilla qubits and substantially simplified circuit~\cite{LinTong22,DongLinTong22,WangMcArdleBerta24,WangZhangYuWang23,Chakraborty25}. The ancilla requirement may be relaxed more generally using a recent uncomputation technique~\cite{Vasconcelos25}. However, that would need to be applied recursively to reduce the ancilla count to constant, incurring a query overhead exponential in the recursion depth. In particular, it is not suitable for Hamiltonian simulation (such as the simulation of sum-of-squares Hamiltonians studied here) where the number of required matrix operations increases polynomially with the problem size.

\subsection{Hamiltonian block encoding}
\label{sec:intro_block}
Here, we demonstrate how to perform common matrix operations---including matrix multiplication, matrix addition, matrix inversion, Hermitian conjugation, fractional scaling, integer scaling, complex phase scaling, and singular value transformation for both odd and even polynomials---using Hamiltonian evolutions of input operators. As matrix arithmetics serve as a fundamental primitive in developing quantum algorithms, the methods described here could lead to a generic resource reduction across many applications, beyond existing ad-hoc solutions for the Hermitian case.

To handle a generic matrix $A$, it is natural to consider its Hermitian dilation $\left[\begin{smallmatrix}
        0 & A^\dagger\\
        A & 0
\end{smallmatrix}\right]$ similar to prior work~\cite{Jordan09,berry2012black,Lloye21hamiltonianqsvt,Shang24Lindbladians,fang2025qubitefficientquantumalgorithmlinear}. We then define the \emph{Hamiltonian block encoding} through the operator
\begin{equation}
    E_A=\exp\left(-i
    \begin{bmatrix}
        0 & A^\dagger\\
        A & 0
    \end{bmatrix}\right).
\end{equation}
This can be seen as a quantum evolution under the Hamiltonian $X\otimes\Re(A)+Y\otimes\Im(A)$, where $X$, $Y$, $Z$ are Pauli matrices and $\Re(A)=\frac{A+A^\dagger}{2}$, $\Im(A)=\frac{A-A^\dagger}{2i}$.
Alternatively, one can also consider $\exp\left(-i
\left[
\begin{smallmatrix}
    \Re(A) & \Im(A)\\
    \Im(A) & -\Re(A)
\end{smallmatrix}
\right]\right)$, generated by the effective Hamiltonian $Z\otimes\Re(A)+X\otimes\Im(A)$.
This encoding consumes $1$ additional qubit beyond those supporting $A$. In practice, the Hamiltonian block encoding $E_A$ can be performed on a quantum computer when $A$ itself is a many-body Hamiltonian, converted from a unitary block encoding $O_A$ using QSVT, or constructed indirectly through operations like linear combination and matrix multiplication. For the purpose of generality, we will quantify the complexity of our algorithms in terms of the number of calls to $E_A$, which may be further bounded when the algorithms are instantiated in concrete applications, such as quantum simulation of the electronic structure Hamiltonians~\cite{vonBurg21,Lee21} to be discussed below.

Any quantum computation involving only unitary operations can be reformulated as a Hamiltonian evolution~\cite{Childs09}. Understanding such a formulation is theoretically interesting in its own right. However, Hamiltonian-based matrix arithmetics also offer unique benefits that are difficult to achieve otherwise.

\begin{enumerate}[label=(\roman*)]
    \item \textbf{Improved complexity scaling}: matrix addition can be performed by evolving individual summands using the Lie-Trotter-Suzuki formulas, with a complexity depending on the norm of nested commutators~\cite{CSTWZ19}, lower than that of more advanced methods. With new results on Hamiltonian matrix multiplication, Hamiltonian squaring and Hamiltonian singular value transformation, we show an analogous improvement for simulating a class of sum-of-squares models with more intricate coefficient tensors.
    \item \textbf{Reduced qubit consumption}: by coherently maintaining the Hamiltonian block encoding, one can complete the entire quantum computation with at most $2$ ancilla qubits. The ancilla consumption can often be reduced to $1$, as is the case with electronic structure Hamiltonian simulation. This contrasts with standard approaches based on QSVT where fresh ancillas are required each time unitary block encodings are combined.
    \item \textbf{More structured circuits}: implementing Hamiltonian-based matrix arithmetics often results in quantum circuits with structures resembling that of the encoded operators, preserving properties such as locality and symmetry. Meanwhile, the interactions between ancilla and system qubits are mostly controlled evolutions, making them more adaptable to digital-analog hybrid devices. Finally, Hamiltonian block-encodings avoid the need for multi-qubit controlled gates required by unitary block-encodings, which could introduce significant overhead. In our simulation of the electronic structure Hamiltonians, all the decomposed terms preserve the particle number, which we utilize to further reduce the cost scaling. 
\end{enumerate}

\begin{table}[htbp]
\centering
\resizebox{\textwidth}{!}{%
\renewcommand{\arraystretch}{1.3}
\begin{tabular}{c|c|c|c|c} 
 \textbf{Matrix arithmetic} & \textbf{Input} & \textbf{Output} & \textbf{Circuit} & \textbf{Complexity} \\
 \hline\hline
 Hermitian conjugation & $\exp\left(-i\begin{bmatrix}
     0 & A^\dagger\\
     A & 0
 \end{bmatrix}\right)$ & $\exp\left(-i\begin{bmatrix}
     0 & A\\
     A^\dagger & 0
 \end{bmatrix}\right)$ &\fig{conjugate} & \prop{herm_conjugate}\\\hline
Complex phase scaling & $\exp\left(-i\begin{bmatrix}
    0 & A^\dagger\\
    A & 0
\end{bmatrix}\right)$ & $\exp\left(-i\begin{bmatrix}
    0 & e^{-i\theta}A^\dagger\\
    e^{i\theta}A & 0
\end{bmatrix}\right)$ &\fig{complex_phase} & \prop{phase_scale}\\\hline
Integer scaling & $\exp\left(-i\begin{bmatrix}
    0 & A^\dagger\\
    A & 0
\end{bmatrix}\right)$ & $\exp\left(-i\begin{bmatrix}
    0 & nA^\dagger\\
    nA & 0
\end{bmatrix}\right)$ &\fig{integer} & \prop{integer_scale}\\\hline
Matrix addition & $\exp\left(-i\begin{bmatrix}
    0 & A_j^\dagger\\
    A_j & 0
\end{bmatrix}\right)$ & $\exp\left(-i\begin{bmatrix}
    0 & \sum_jA_j^\dagger\\
    \sum_jA_j & 0
\end{bmatrix}\right)$ &\fig{addition} & \prop{add}\\\hline
\cellcolor{Gray} Matrix multiplication & \cellcolor{Gray} \begin{tabular}{c}
$\exp\left(-i\begin{bmatrix}
    0 & A^\dagger & 0 \\
    A & 0 & 0 \\
    0 & 0 & 0 \\
\end{bmatrix}\right)$,\\$\exp\left(-i\begin{bmatrix}
    0 & B^\dagger & 0 \\
    B & 0 & 0 \\
    0 & 0 & 0 \\
\end{bmatrix}\right)$
\end{tabular}& \cellcolor{Gray} $\exp\left(-i\begin{bmatrix}
    0 & (AB)^\dagger & 0 \\
    AB & 0 & 0 \\
    0 & 0 & 0 \\
\end{bmatrix}\right)$ & \cellcolor{Gray} \fig{multiply} & \cellcolor{Gray} \thm{multiply}\\\hline
Hermitian matrix multiplication & \begin{tabular}{c}
$\exp\left(-i\begin{bmatrix}
    0 & A^\dagger\\
    A & 0
\end{bmatrix}\right)$,\\
$e^{-i\tau J}$,
$e^{-i\tau K}$
\end{tabular}& $\exp\left(-i\begin{bmatrix}
    0 & (JAK)^\dagger\\
    JAK & 0
\end{bmatrix}\right)$ & \fig{multiply_jk} &\prop{herm_multiply}\\\hline
Unitary matrix multiplication & \begin{tabular}{c} $\exp\left(-i\begin{bmatrix}
    0 & A^\dagger\\
    A & 0
\end{bmatrix}\right)$,\\
$U$, $V$
\end{tabular}& $\exp\left(-i\begin{bmatrix}
    0 & (UAV)^\dagger\\
    UAV & 0
\end{bmatrix}\right)$ &\fig{multiply_unitary} & \prop{unitary_multiply}\\\hline
\cellcolor{Gray} \begin{tabular}{c}
     Singular value transformation\\
     (odd case)
\end{tabular} & \cellcolor{Gray} $\exp\left(-i\begin{bmatrix}
    0 & A^\dagger\\
    A & 0
\end{bmatrix}\right)$ & \cellcolor{Gray} $\exp\left(-i\begin{bmatrix}
    0 & f_{\text{sv}}^\dagger(A)\\
    f_{\text{sv}}(A) & 0
\end{bmatrix}\right)$ & \cellcolor{Gray}\fig{ham_qsvt} & \cellcolor{Gray} \thm{qsvt_odd}\\\hline
\cellcolor{Gray} \begin{tabular}{c}
    Singular value transformation\\
    (Hermitian even case)
\end{tabular} & \cellcolor{Gray} $\exp\left(-i\begin{bmatrix}
    0 & H\\
    H & 0
\end{bmatrix}\right)$ & \cellcolor{Gray} $\exp\left(-i\begin{bmatrix}
    0 & f(H)\\
    f(H) & 0
\end{bmatrix}\right)$ & \cellcolor{Gray} & \cellcolor{Gray} \thm{qsvt_herm_even}\\\hline
\cellcolor{Gray} \begin{tabular}{c}
     Singular value transformation\\
     (even case)
\end{tabular} & \cellcolor{Gray}$\exp\left(-i\begin{bmatrix}
    0 & A^\dagger & 0 & 0\\
    A & 0 & 0 & 0\\
    0 & 0 & 0 & 0\\
    0 & 0 & 0 & 0\\
\end{bmatrix}\right)$ & \cellcolor{Gray}$\exp\left(-i\begin{bmatrix}
    0 & f_{\text{sv}}(A) & 0 & 0\\
    f_{\text{sv}}(A) & 0 & 0 & 0\\
    0 & 0 & 0 & 0\\
    0 & 0 & 0 & 0\\
\end{bmatrix}\right)$ & \cellcolor{Gray}\fig{ham_qsvt_even} & \cellcolor{Gray}\thm{qsvt_even}\\\hline
Matrix inversion & $\exp\left(-i\begin{bmatrix}
    0 & A^\dagger\\
    A & 0
\end{bmatrix}\right)$  & $\exp\left(-i\begin{bmatrix}
        0 & A^{-1\dagger}/\kappa\\
        A^{-1}/\kappa & 0
    \end{bmatrix}\right)$ & & \cor{inverse}\\\hline
Fractional scaling & $\exp\left(-i\begin{bmatrix}
    0 & A^\dagger\\
    A & 0
\end{bmatrix}\right)$ & $\exp\left(-i\begin{bmatrix}
    0 & \tau A^\dagger\\
    \tau A & 0
\end{bmatrix}\right)$ & & \cor{frac_scale}\\\hline
\cellcolor{Gray} Overlap estimation & \cellcolor{Gray} \begin{tabular}{c}
$\exp\left(-i\begin{bmatrix}
    0 & A^\dagger\\
    A & 0
\end{bmatrix}\right)$,\\
$\ket{\psi}$
\end{tabular}& \cellcolor{Gray} \begin{tabular}{c}
$\exp\left(-i\begin{bmatrix}
        0 & \frac{1}{2}\arcsin_{\text{sv}}^\dagger\left(A\right)\\
        \frac{1}{2}\arcsin_{\text{sv}}\left(A\right)& 0
    \end{bmatrix}\right)$,\\
    $\bra{\psi}A\ket{\psi}$
    \end{tabular}&\cellcolor{Gray} \fig{overlap} & \cellcolor{Gray} \thm{overlap}\\\hline
Green's function estimation & $\exp\left(-i\begin{bmatrix}
    0 & H\\
    H & 0
\end{bmatrix}\right)$ & $\exp\left(-i\begin{bmatrix}
    0 & \frac{1}{2}\arcsin\left(\frac{\eta H}{\eta^2+H^2}\right)\\
    \frac{1}{2}\arcsin\left(\frac{\eta H}{\eta^2+H^2}\right) & 0
\end{bmatrix}\right)$ & & \prop{green}\\\hline
\begin{tabular}{c}
     Sum-of-squares\\
     Hamiltonian simulation
\end{tabular} & $\exp\left(-i\begin{bmatrix}
    0 & A_{jk}^\dagger & 0\\
    A_{jk} & 0 & 0\\
    0 & 0 & 0
\end{bmatrix}\right)$ & $\exp\left(-i\begin{bmatrix}
    0 & \sum\limits_{k}\sum\limits_{u}A_{uk}^\dagger\sum\limits_{v}A_{vk} & 0\\
    \sum\limits_{k}\sum\limits_{u}A_{uk}^\dagger\sum\limits_{v}A_{vk} & 0 & 0\\
    0 & 0 & 0
\end{bmatrix}\right)$ & & \cor{sos}\\\hline
Hamiltonian squaring & $\exp\left(-i\begin{bmatrix}
    0 & H\\
    H & 0
\end{bmatrix}\right)$ & $\exp\left(-i\begin{bmatrix}
    0 & H^2\\
    H^2 & 0
\end{bmatrix}\right)$ & \fig{hamiltonian square} & \cor{square}
\end{tabular}
\renewcommand{\arraystretch}{1}
}
\caption{Summary of results on Hamiltonian-based matrix arithmetics, including main results on matrix multiplication, singular value transformation and overlap estimation, with the corresponding rows shaded in gray. Matrices are padded with zero blocks to illustrate the minimum space requirement for implementing our methods. By coherently maintaining the Hamiltonian block encoding, the entire quantum computation can be completed with at most $2$ ancillas and a query cost comparable to existing approaches. Moreover, if the computation does not involve the generic matrix multiplication (the first shaded row) and the singular value transformation for even polynomials and generic inputs (the fourth shaded row), then $1$ single ancilla qubit suffices. At the final stage, classical properties of the Hamiltonian block encoded operator can be estimated with the overlap estimation algorithm (the last shaded row) or the singular value estimation algorithm, at no extra cost of ancilla qubit.}
\label{tab:result}
\end{table}

See \tab{result} for a summary of our results on Hamiltonian-based matrix arithmetics. Among these results, Hermitian conjugation, integer scaling and complex phase scaling follow almost immediately from the definition of Hamiltonian block encoding, which we describe in \sec{block_conjugate} and \sec{block_scaling} respectively. Matrix addition, to be discussed in \sec{block_addition}, is also fairly straightforward to realize using known facts about the Lie-Trotter-Suzuki formulas~\cite{CSTWZ19}. In the following, we focus on Hamiltonian-based matrix multiplication, singular value transformation, and overlap estimation, which build upon new algorithmic ideas and lead to further results on matrix inversion, fractional scaling, and Hamiltonian squaring, as well as applications in simulating sum-of-squares Hamiltonians.

\subsection{Matrix multiplication}
\label{sec:intro_multiply}
Matrix multiplication serves as a fundamental primitive to realize more general arithmetic operations. Suppose that operators $A$ and $B$ are block encoded by unitaries $O_A$ and $O_B$ respectively, each using $1$ ancilla qubit. Then a unitary block encoding of their product $AB$ can be obtained as follows. We introduce an ancilla qubit to control $O_A$ and $O_B$, which increases the number of matrix blocks from $2\times 2$ to $4\times 4$. Concatenation then yields a block encoding with $AB$ at the top left corner, i.e.,
\begin{equation}
\NiceMatrixOptions
{
    custom-line = 
    {
        letter = I , 
        command = hline,
        width = \pgflinewidth
    }
}
    \begin{bNiceArray}{ccIcc}
        A & \boldsymbol\cdot & 0 & 0\\
        \boldsymbol\cdot & \boldsymbol\cdot & 0 & 0\\
        \hline
        0 & 0 & A & \boldsymbol\cdot\\
        0 & 0 & \boldsymbol\cdot & \boldsymbol\cdot
    \end{bNiceArray}
    \begin{bNiceArray}{ccIcc}
        B & 0 & \boldsymbol\cdot & 0\\
        0 & B & 0 & \boldsymbol\cdot\\
        \hline
        \boldsymbol\cdot & 0 & \boldsymbol\cdot & 0\\
        0 & \boldsymbol\cdot & 0 & \boldsymbol\cdot
    \end{bNiceArray}
    =\begin{bNiceArray}{ccIcc}
        AB & \boldsymbol\cdot & \boldsymbol\cdot & \boldsymbol\cdot\\
        \boldsymbol\cdot & \boldsymbol\cdot & \boldsymbol\cdot & \boldsymbol\cdot\\
        \hline
        \boldsymbol\cdot & \boldsymbol\cdot & \boldsymbol\cdot & \boldsymbol\cdot\\
        \boldsymbol\cdot & \boldsymbol\cdot & \boldsymbol\cdot & \boldsymbol\cdot
    \end{bNiceArray}.
\end{equation}
However, all the remaining $15$ matrix blocks are now unspecified and denoted by $\boldsymbol\cdot$, so to continue the computation, new blocks with zero matrix must be introduced, which requires additional fresh ancillas. Carrying out this construction recursively allows one to multiply matrices with an ancilla count growing linearly in the number of multiplicands. This linear dependence on ancilla count can be made logarithmic as shown by~\cite{LW18}, and the underlying idea can be traced back to the study of operator theory in the 1950s~\cite{egervary1954contractive} on ``power dilations''~\cite{thompson1982doubly,Shalit}.

We now explain how to perform matrix multiplication within Hamiltonian block encoding using $\leq2$ ancilla qubits. Our goal here is to produce $E_{AB}=\exp\left(-i\left[\begin{smallmatrix}
    0 & (AB)^\dagger\\
    AB & 0
\end{smallmatrix}\right]\right)$ by querying the Hamiltonian evolution of individual multiplicands $E_{A}=\exp\left(-i\left[\begin{smallmatrix}
    0 & A^\dagger\\
    A & 0
\end{smallmatrix}\right]\right)$ and $E_{ B}=\exp\left(-i\left[\begin{smallmatrix}
    0 & B^\dagger\\
    B & 0
\end{smallmatrix}\right]\right)$.
We achieve this using the Lie group commutator product formulas. Historically, these formulas were used in quantum computing to refine operator approximations in the construction of Solovey-Kitaev iterates~\cite{DawsonNielsen07}, as well as to create higher-order terms in the engineering of Hamiltonians~\cite{Park2017,YuAn22commutator,crane2024hybrid,kang2025leveraging}.

In the lowest order, we prove the following error bound for the Lie group commutator product formula
\begin{equation}
    \norm{e^{-i\tau J}e^{-i\tau K}e^{i\tau J}e^{i\tau K}-e^{-\tau ^2[J,K]}}\leq\frac{\tau ^3}{2}\norm{[J,[J,K]]}+\frac{\tau ^3}{2}\norm{[K,[K,J]]},
\end{equation}
where $J$, $K$ are Hermitian and $\tau \geq0$ without loss of generality. This matches the Baker-Campbell-Hausdorff (BCH) expansion to third order (\append{lie}), and is provably tight up to a single application of the triangle inequality. In particular, this tightens a recent result of~\cite[Lemma 9]{Gluza2024doublebracket} which instead reads $\norm{e^{-i\tau J}e^{-i\tau K}e^{i\tau J}e^{i\tau K}-e^{-\tau ^2[J,K]}}\leq\tau^3\norm{[J,[J,K]]}+\tau^3\norm{[K,[K,J]]}$. Hence, we can approximate the evolution of $[J,K]=JK-KJ$ using exponentials of individual $J$ and $K$, and the approximation is accurate for a short time period. To evolve for a longer time, we divide the evolution into $r$ steps and implement each using the Lie group commutator formula. The approximation quality can then be systematically improved using higher-order formulas introduced by Childs and Wiebe~\cite{Childs13commutator,YuAn22commutator}.

By definition, a Lie group commutator formula generates the evolution of a matrix product and its reverse-ordered product with an equal weighting. This can be applied to perform Hamiltonian-based matrix multiplication as follows. We introduce another ancilla qubit to dilate the Hamiltonian and employ the group commutator formula to approximate
\vspace{-3mm}

\begin{footnotesize}
\newmaketag
\begin{align}
    &\exp\left(-i\tau
    \begin{bmatrix}
        0 & A^\dagger & 0 & 0\\
        A & 0 & 0 & 0\\
        0 & 0 & 0 & 0\\
        0 & 0 & 0 & 0
    \end{bmatrix}\right)
    \exp\left(-i\tau
    \begin{bmatrix}
        0 & 0 & B & 0\\
        0 & 0 & 0 & 0\\
        B^\dagger & 0 & 0 & 0\\
        0 & 0 & 0 & 0
    \end{bmatrix}\right)
    \exp\left(i\tau
    \begin{bmatrix}
        0 & A^\dagger & 0 & 0\\
        A & 0 & 0 & 0\\
        0 & 0 & 0 & 0\\
        0 & 0 & 0 & 0
    \end{bmatrix}\right)
    \exp\left(i\tau
    \begin{bmatrix}
        0 & 0 & B & 0\\
        0 & 0 & 0 & 0\\
        B^\dagger & 0 & 0 & 0\\
        0 & 0 & 0 & 0
    \end{bmatrix}\right) \notag\\
    &=
    \exp\left(-i\tau^2
    \begin{bmatrix}
        0 & 0 & 0 & 0\\
        0 & 0 & -iAB & 0\\
        0 & i(AB)^\dagger & 0 & 0\\
        0 & 0 & 0 & 0
    \end{bmatrix}\right)
    +\mathbf{O}\left(\tau^3\left(\norm{A}+\norm{B}\right)^3\right).
\end{align}
\end{footnotesize}%
Here, each multiplicand can be further converted into a standard Hamiltonian block encoding using simple corrections such as the Pauli-$X$ gate, the SWAP gate, the phase gate $S$ and $S^\dagger$. Crucially, the Hamiltonian block encoding is preserved after matrix multiplication, so no fresh ancilla is needed to perform multiple rounds of multiplications. As aforementioned, the error dependence can be made negligible using higher-order formulas~\cite{Childs13commutator}. This construction is illustrated in \fig{multiply_intro} and detailed in \sec{multiply}.

\begin{figure}[t]
	\centering
\includegraphics[scale=0.75]{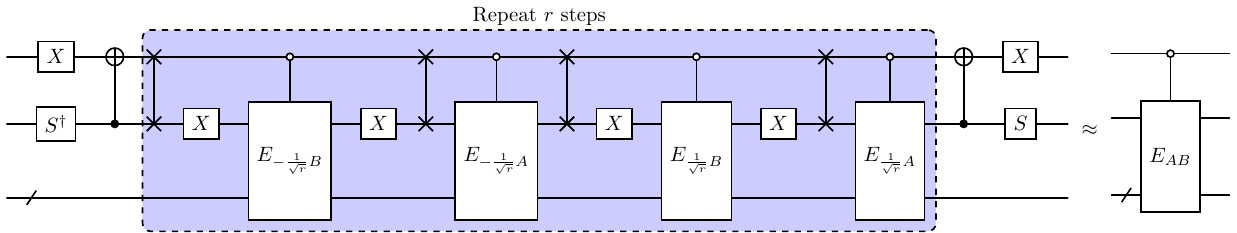}
\caption{Quantum circuit for Hamiltonian matrix multiplication. All circuit diagrams in the paper, including the present one, were prepared using Quantikz~\cite{kay2018tutorial}.}
\label{fig:multiply_intro}
\end{figure}

The above discussion handles the most general case where multiplicands $A$ and $B$ are both arbitrary operators. In many applications, we are promised that at least one (say $B$) is Hermitian. We can then directly evolve a Hamiltonian where $B$ is encoded in the diagonal block, yielding an alternative construction
\begin{equation}
\begin{aligned}
    &\exp\left(-i\tau
    \begin{bmatrix}
        0 & A^\dagger\\
        A & 0
    \end{bmatrix}\right)
    \exp\left(-i\tau
    \begin{bmatrix}
        B & 0\\
        0 & 0
    \end{bmatrix}\right)
    \exp\left(i\tau
    \begin{bmatrix}
        0 & A^\dagger\\
        A & 0
    \end{bmatrix}\right)
    \exp\left(i\tau
    \begin{bmatrix}
        B & 0\\
        0 & 0
    \end{bmatrix}\right)\\
    &=\exp\left(-i\tau^2
    \begin{bmatrix}
        0 & i(AB)^\dagger\\
        -iAB & 0
    \end{bmatrix}\right)
    +\mathbf{O}\left(\tau^3\left(\norm{A}+\norm{B}\right)^3\right).
\end{aligned}
\end{equation}
An analogous scheme holds when $A$ is Hermitian. This uses only $1$ ancilla qubit and thus improves over the generic case discussed above. And the error scaling can again be lowered using higher-order formulas.

Finally, we observe that it is fairly straightforward to multiply unitary matrices with Hamiltonian block encoding. Specifically, given unitaries $U$ and $V$, we have
\begin{equation}
    \begin{bmatrix}
        V^\dagger & 0\\
        0 & U
    \end{bmatrix}
    \exp\left(-i
    \begin{bmatrix}
        0 & A^\dagger\\
        A & 0
    \end{bmatrix}\right)
    \begin{bmatrix}
        V & 0\\
        0 & U^\dagger
    \end{bmatrix}
    =\exp\left(-i
    \begin{bmatrix}
        0 & (UAV)^\dagger\\
        UAV & 0
    \end{bmatrix}\right).
\end{equation}
This multiplication can be performed exactly, using no additional qubit beyond those supporting $A$ plus the single ancilla defining the Hamiltonian block encoding.

\subsection{Singular value transformation}
\label{sec:intro_qsvt}
Given a unitary block encoding of $A$ with the singular value decomposition $A=U\Sigma V^\dagger$, the goal of QSVT is to apply a polynomial $f$ to its singular values, with the result $f_{\text{sv}}(A)$ encoded by another unitary block encoding. Here $f_{\text{sv}}(A)=Uf\left(\Sigma\right)V^\dagger=f_{\text{sv}}^\dagger\left(A^\dagger\right)$ if $f$ is an odd polynomial and $f_{\text{sv}}(A)=Vf\left(\Sigma\right)V^\dagger=f_{\text{sv}}^\dagger(A)$ if $f$ is even. Crucially, QSVT has a query complexity depending only on degree of the target polynomial, and thus remains efficient even when the dimensions of $A$ are exponentially large. By constructing appropriate polynomial approximations, one can apply QSVT to solve various problems of interest with near-optimal query complexity~\cite{Montanaro2024quantum}, achieving a grand unification of quantum algorithms~\cite{2021MartynGrand}.

Many variants of QSVT have been developed  since its introduction in~\cite{LYC16,LC17,Low2016Qubitization,Gilyen2018singular}. Here, we review two versions that are of most relevance to our results. The first variant due to Lloyd et al.~\cite{Lloye21hamiltonianqsvt} is a Hamiltonian QSVT that achieves $\exp\left(-i\left[\begin{smallmatrix}
        0 & A^\dagger\\
        A & 0
    \end{smallmatrix}\right]\right)\mapsto
    \left[\begin{smallmatrix}
        p_{\text{sv}}\left(\cos_{\text{sv}}(A)\right) & i\sin_{\text{sv}}(A^\dagger)q_{\text{sv}}\left(\cos_{\text{sv}}(A^\dagger)\right)\\
        i\sin_{\text{sv}}\left(A\right)q_{\text{sv}}^*\left(\cos_{\text{sv}}(A)\right) & p_{\text{sv}}^*\left(\cos_{\text{sv}}(A^\dagger)\right)
    \end{smallmatrix}\right]$,
where $p$ and $q$ are complex polynomials of opposite parities, satisfying the normalization constraint $\abs{p(x)}^2+(1-x^2)\abs{q(x)}^2=1$ for all $-1\leq x\leq 1$.
Consider the special case where the input matrix $H$ is Hermitian and only one target polynomial $p$ is of interest. Then,~\cite[Theorem 4]{Gilyen2018singular} shows the existence of an \emph{implicit} complementary polynomial $q$ such that $\abs{p(x)}^2+(1-x^2)\abs{q(x)}^2=1$, among all possible choices of $q$ for which the constraint is satisfied. This gives the transformation
\begin{equation}
    \exp\left(-i\begin{bmatrix}
        0 & H\\
        H & 0
    \end{bmatrix}\right)
    \mapsto
    \begin{bmatrix}
        p(\cos(H)) & \cdot\\
        \cdot & \cdot
    \end{bmatrix},
\end{equation}
with the outcome encoded at the top left corner and all other blocks unspecified,
which is utilized by more recent work such as~\cite{WangZhangYuWang23,DongLinTong22} to reduce the quantum resource requirement for solving Hamiltonian-related problems like ground energy estimation.
Assuming $H$ is Hermitian, the second variant of QSVT~\cite{MotlaghWiebe24} performs $\exp\left(-i\left[\begin{smallmatrix}
        H & 0\\
        0 & 0
    \end{smallmatrix}\right]\right)
    \mapsto
    \left[\begin{smallmatrix}
        p\left(e^{-iH}\right) & \cdot\\
        q\left(e^{-iH}\right) & \cdot
    \end{smallmatrix}\right]$,
where $p$ and $q$ are complex polynomials satisfying the normalization constraint $\abs{p(z)}^2+\abs{q(z)}^2=1$ on the unit circle $\abs{z}=1$, and the unspecified blocks are denoted by dots. This reduces to
\begin{equation}
    \exp\left(-i\begin{bmatrix}
        H & 0\\
        0 & 0
    \end{bmatrix}\right)
    \mapsto
    \begin{bmatrix}
        p\left(e^{-iH}\right) & \cdot\\
        \cdot & \cdot
    \end{bmatrix}
\end{equation}
when only the polynomial $p$ is of interest.

Whichever version of QSVT one picks, only $1$ out of the $4$ matrix blocks encodes the desired outcome. This unspecified behavior precludes the minimization of ancilla usage and is incompatible with the previously described multiplication subroutine. Instead, we show how to realize the transformation
\begin{equation}
    \exp\left(-i\begin{bmatrix}
        0 & A^\dagger\\
        A & 0
    \end{bmatrix}\right)
    \mapsto
    \exp\left(-i\begin{bmatrix}
        0 & f_{\text{sv}}^\dagger\left(A\right)\\
        f_{\text{sv}}\left(A \right)& 0
    \end{bmatrix}\right).
\end{equation}
That is, we demand both the input and output to be Hamiltonian block encodings, so that this transformation can be properly composed with other Hamiltonian-based matrix operations. Our choice thus defines the behavior of all $4$ matrix blocks, and construction of the implicit complementary polynomial from~\cite[Theorem 4]{Gilyen2018singular} is not directly applicable. Of course, one can always implement polynomials $p$ and $q$ separately and combine them with LCU, but doing so necessitates additional fresh ancilla qubits, defeating the purpose of Hamiltonian block encoding. To address this, we solve the following dominated approximation problem.

\begin{problem}[Dominated polynomial approximation]
Given $\epsilon>0$, constant $0<\xi\leq\frac{\pi}{2}$, and function $f:\left[-\frac{\pi}{2},\frac{\pi}{2}\right]\rightarrow\mathbb{R}$, find real polynomials $p$ and $q$ such that the following hold:
\begin{equation}
\begin{aligned}
    &\abs{p(x)-\sin(f(\arcsin(x)))}\leq\epsilon,\qquad&&\forall x\in\left[-\sin\left(\frac{\pi}{2}-\xi\right),\sin\left(\frac{\pi}{2}-\xi\right)\right],\\
    &\abs{q(x)-\frac{\cos(f(\arcsin(x)))}{\sqrt{1-x^2}}}\leq\epsilon,\qquad&&\forall x\in\left[-\sin\left(\frac{\pi}{2}-\xi\right),\sin\left(\frac{\pi}{2}-\xi\right)\right],\\
    &p^2(x)+(1-x^2)q^2(x)\leq1+\epsilon,\qquad&&\forall x\in[-1,1].
\end{aligned}
\end{equation}
\end{problem}
\noindent Let us interpret this problem. In the setting of unitary block encoding, the target function $\sin(f(\arcsin(\cdot))):[-1,1]\rightarrow[-1,1]$ preserves the unit interval, whereas the input operator has singular values from $[-1+\xi,1-\xi]$ for $0<\xi\leq1$. To implement it with QSVT, one thus seeks a single polynomial $p$ with maximum error $\norm{p-\sin(f(\arcsin(\cdot)))}_{\max,[-1+\xi,1-\xi]}\leq\epsilon$, so that its action on the input operator mimics that of the target function. Moreover, one requires that $\norm{p}_{\max,[-1,1]}\leq1+\epsilon$ everywhere on the unit interval, so that an implicit complementary polynomial can be constructed via~\cite[Theorem 4]{Gilyen2018singular}. $q$ is obtained separately from a similar construction. This leads to a \emph{bounded polynomial approximation} problem, wherein the approximation holds over the domain of interest, while the constructed polynomial is bounded by $1$ over the unit interval. Our \emph{dominated approximation} problem is different, as we impose the stronger requirement $p^2(x)+(1-x^2)q^2(x)\leq1+\epsilon$ to hold \emph{simultaneously} for $p$ and $q$ throughout the entire unit interval $x\in[-1,1]$. 

We now explain how to realize Hamiltonian QSVT from a dominated polynomial approximation. We first modify QSVT to perform 
\begin{align}
    \exp\left(-i\begin{bmatrix}
        0 & A^\dagger\\
        A & 0
    \end{bmatrix}\right)\mapsto
    \begin{bmatrix}
        p_{\text{sv}}\left(\sin_{\text{sv}}(A^\dagger)\right) & iq_{\text{sv}}\left(\sin_{\text{sv}}(A)\right)\cos_{\text{sv}}(A)\\
        iq_{\text{sv}}^*\left(\sin_{\text{sv}}(A^\dagger)\right)\cos_{\text{sv}}\left(A^\dagger\right) & p_{\text{sv}}^*\left(\sin_{\text{sv}}(A)\right)
    \end{bmatrix}.
\end{align}
Compared to the version from~\cite{Lloye21hamiltonianqsvt}, note that we have replaced the cosine function by sine in the top-left block. This treatment avoids the singularity of $\arccos(x)$ near $x\approx\cos(0)=1$ for a generic input operator with singular values close to $0$, unlike prior results focusing on the Hermitian case~\cite{DongLinTong22} where this singularity can be circumvented by simply shifting the spectra. Now, given a target polynomial $f:\left[-\frac{\pi}{2},\frac{\pi}{2}\right]\rightarrow\mathbb{R}$ and a Hamiltonian block encoding of $A$ with $\norm{A}\leq\frac{\pi}{2}-\xi<\frac{\pi}{2}$, we find polynomials $p$ and $q$ that are simultaneous approximations of $\sin(f(\arcsin(x)))$ and $\frac{\cos(f(\arcsin(x)))}{\sqrt{1-x^2}}$ over $\left[-\sin\left(\frac{\pi}{2}-\xi\right),\sin\left(\frac{\pi}{2}-\xi\right)\right]$, while still satisfying the dominated condition throughout the unit interval.
The output naturally satisfies the normalization constraint of QSVT and approximates the Hamiltonian block encoding of $f_{\text{sv}}(A)$. Moreover, if $f$ is an odd polynomial, then $\sin(f(\arcsin(x)))$ is odd and $\frac{\cos(f(\arcsin(x)))}{\sqrt{1-x^2}}$ is even, so the parity constraint of QSVT is also guaranteed. As a result, we obtain a Hamiltonian QSVT algorithm for odd polynomials.

Implementing even polynomials is somewhat more involved. This is because when $f$ is even, $\sin(f(\arcsin(x)))$ and $\frac{\cos(f(\arcsin(x)))}{\sqrt{1-x^2}}$ are both even and hence the parity constraint of QSVT can never be fulfilled. 
In fact, it is generally not possible to replace even polynomials by their odd approximants due to potential singularities near $x=0$.
To overcome this limitation, we first consider a Hermitian input operator, which can be suitably shifted~\cite{FullyCoherent23} so that all its eigenvalues are positive. Concurrently, the domain of the target polynomial is also shifted. So to perform an even polynomial on the input operator, it suffices to implement an odd extension of the shifted polynomial on the shifted operator---the latter of which can be realized by our Hamiltonian QSVT for odd functions. See~\fig{even} for an illustration of this shifting. The general case can then be reduced to the Hermitian case by first constructing a Hamiltonian block encoding of $\left[\begin{smallmatrix}
    0 & A^\dagger\\
    A & 0
\end{smallmatrix}\right]$ for an arbitrary input matrix $A$. See \sec{qsvt} for details.
We note that our degree dependence almost quadratically improves that of recent result~\cite[Subroutine 2]{Odake23} based on randomization, while our error dependence exponentially improves that of the same subroutine, using $1$ single ancilla qubit.

\begin{figure}[t]
	\centering
    \begin{subfigure}[t]{.495\textwidth}
        \centering
        \includegraphics[scale=.95]{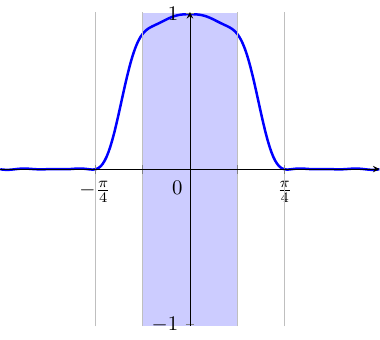}
        \caption{}
    \end{subfigure}%
    ~
    \begin{subfigure}[t]{.495\textwidth}
        \centering
        \includegraphics[scale=.95]{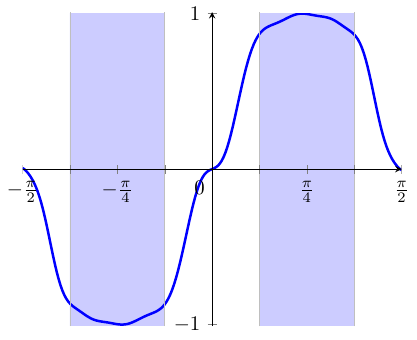}
        \caption{}
    \end{subfigure}%
\caption{Illustration of performing Hamiltonian QSVT for even functions via shifting and odd extension. 
Subfigure (a) illustrates some target function over a domain strictly enclosed by $[-\frac{\pi}{4}, \frac{\pi}{4}]$, with the desired approximation region shaded in blue. Subfigure (b) illustrates the constructed odd extension that approximates the initial even function on a shifted region.
}
\label{fig:even}
\end{figure}

Starting from a given function $f$, one can get dominated approximations of $\sin(f(\arcsin(x)))$ and $\frac{\cos(f(\arcsin(x)))}{\sqrt{1-x^2}}$ with degrees at most logarithmic factors larger than that of approximations of $f$. For instance, this happens if one approximates $f(x)$, $\sin(x)$, $\cos(x)$, $\arcsin(x)$, $\frac{1}{\sqrt{1-x^2}}$ separately, and concatenates the resulting polynomials. However, such a logarithmic overhead can often be avoided by handling the composite functions as a whole, giving $\abs{p(x)-\sin(f(\arcsin(x)))}\leq\epsilon$, $\abs{q(x)-\frac{\cos(f(\arcsin(x)))}{\sqrt{1-x^2}}}\leq\epsilon$ when $x\in\left[-\sin\left(\frac{\pi}{2}-\xi\right),\sin\left(\frac{\pi}{2}-\xi\right)\right]$, and $\abs{p(x)}\leq\abs{\sin(f(\arcsin(x)))}+\epsilon$, $\abs{q(x)}\leq\frac{\abs{\cos(f(\arcsin(x)))}+\epsilon}{\sqrt{1-x^2}}$ for all $x\in[-1,1]$. We achieve this by analyzing behavior of the composite functions in the complex plane around the unit interval~\cite{TangTian24}. See~\append{composite} for details. As immediate corollaries, we show how to perform matrix inversion and fractional scaling within Hamiltonian block encoding.

\subsection{Overlap estimation}
\label{sec:intro_overlap}
Classical properties of an underlying quantum system can be accessed through quantum measurements. A common example is the quantum overlap estimation problem, which can be formulated using the language of block encoding as follows. Given a unitary block encoding of an operator $A$ along with a quantum state $\ket{\psi}$, the goal is to estimate the overlap $\bra{\psi}A\ket{\psi}$ which is generally a complex number with absolute value at most $1$. This can be solved with accuracy $\epsilon$ and failure probability $p_{\text{fail}}$ using the well-known \emph{Hadamard test}, by introducing another anilla qubit and repeatedly measuring a controlled version of the unitary block encoding $\mathbf{O}\left(\frac{1}{\epsilon^2}\log\left(\frac{1}{p_{\text{fail}}}\right)\right)$ times. We now present an algorithm that estimates the overlap of a Hamiltonian block encoded operator, with comparable performance and no extra ancilla qubits.

Specifically, we assume that the input operator $A$ is Hamiltonian block encoded with $\norm{A}<1$ and singular value decomposition $A=U\Sigma V^\dagger$, whereas the quantum state $\ket{\psi}$ is prepared by a unitary. Then our algorithm proceeds as
$
    \exp\left(-i\left[\begin{smallmatrix}
        0 & A^\dagger\\
        A & 0
    \end{smallmatrix}\right]\right)
    \mapsto
    \exp\left(-i\left[\begin{smallmatrix}
        0 & f_{\text{sv}}^\dagger\left(A\right)\\
        f_{\text{sv}}\left(A \right)& 0
    \end{smallmatrix}\right]\right)
    \mapsto\bra{\psi}A\ket{\psi}
$.
Here, we perform Hamiltonian QSVT in the first step, with the target function $f$ to be specified momentarily. In the second step, we repeatedly measure the Hamiltonian block encoding and collect the following four types of measurement statistics:
\begin{enumerate}[label=(\roman*)]
    \item ($+Z$): we prepare the initial state $\ket{0}\ket{\psi}$ and project onto $\bra{0}\otimes I$, with probability
    \begin{equation}
        \norm{\left(\bra{0}\otimes I\right)\exp\left(-i\left[\begin{smallmatrix}
        0 & f_{\text{sv}}^\dagger\left(A\right)\\
        f_{\text{sv}}\left(A \right)& 0
    \end{smallmatrix}\right]\right)\ket{0}\ket{\psi}}^2
    =\norm{\cos\left(f\left(\Sigma\right)\right)V^\dagger\ket{\psi}}^2.
    \end{equation}
    \item ($-Z$): we prepare the initial state $\ket{1}\ket{\psi}$ and project onto $\bra{0}\otimes I$, with probability
    \begin{equation}
        \norm{\left(\bra{0}\otimes I\right)\exp\left(-i\left[\begin{smallmatrix}
        0 & f_{\text{sv}}^\dagger\left(A\right)\\
        f_{\text{sv}}\left(A \right)& 0
    \end{smallmatrix}\right]\right)\ket{1}\ket{\psi}}^2
    =\norm{\sin\left(f\left(\Sigma\right)\right)U^\dagger\ket{\psi}}^2.
    \end{equation}
    \item ($-X$): we prepare the initial state $\frac{\ket{0}-\ket{1}}{\sqrt{2}}\ket{\psi}$ and project onto $\bra{0}\otimes I$, with probability
    \begin{equation}
    \begin{aligned}
        &\norm{\left(\bra{0}\otimes I\right)\exp\left(-i\left[\begin{smallmatrix}
        0 & f_{\text{sv}}^\dagger\left(A\right)\\
        f_{\text{sv}}\left(A \right)& 0
    \end{smallmatrix}\right]\right)\frac{\ket{0}-\ket{1}}{\sqrt{2}}\ket{\psi}}^2\\
    &=\frac{1}{2}\left(\norm{\cos\left(f\left(\Sigma\right)\right)V^\dagger\ket{\psi}}^2
    +\norm{\sin\left(f\left(\Sigma\right)\right)U^\dagger\ket{\psi}}^2
    +\Im\left(\bra{\psi}U\sin\left(2f\left(\Sigma\right)\right)V^\dagger\ket{\psi}\right)\right).
    \end{aligned}
    \end{equation}
    \item ($+Y$): we prepare the initial state $\frac{\ket{0}+i\ket{1}}{\sqrt{2}}\ket{\psi}$ and project onto $\bra{0}\otimes I$, with probability
    \begin{equation}
    \begin{aligned}
        &\norm{\left(\bra{0}\otimes I\right)\exp\left(-i\left[\begin{smallmatrix}
        0 & f_{\text{sv}}^\dagger\left(A\right)\\
        f_{\text{sv}}\left(A \right)& 0
    \end{smallmatrix}\right]\right)\frac{\ket{0}+i\ket{1}}{\sqrt{2}}\ket{\psi}}^2\\
    &=\frac{1}{2}\left(\norm{\cos\left(f\left(\Sigma\right)\right)V^\dagger\ket{\psi}}^2
    +\norm{\sin\left(f\left(\Sigma\right)\right)U^\dagger\ket{\psi}}^2
    +\Re\left(\bra{\psi}U\sin\left(2f\left(\Sigma\right)\right)V^\dagger\ket{\psi}\right)\right).
    \end{aligned}
    \end{equation}
\end{enumerate}
Hence, we choose the odd function $f(x)=\frac{\arcsin(x)}{2}$
and estimate the overlap $\bra{\psi}A\ket{\psi}$ by combining estimates of its real and imaginary components. Similar to the Hadamard test, our method requires $\mathbf{O}\left(\frac{1}{\epsilon^2}\log\left(\frac{1}{p_{\text{fail}}}\right)\right)$ repeated measurements of quantum circuits, each with maximum query depth $\mathbf{O}\left(\log\left(\frac{1}{\epsilon}\right)\right)$. We present the algorithm and its analysis in detail in \sec{overlap}.

As an immediate application, we obtain a qubit-efficient quantum algorithm for estimating the Green's functions of many-body Hamiltonians~\cite{WangMcArdleBerta24,2021Yupreconditioned} which carry valuable spectroscopic information about the underlying quantum system. In this problem, the Hermitian part of the target operator is an odd function of the input Hamiltonian, and can thus be estimated by running our Hamiltonian QSVT and overlap estimation algorithm. On the other hand, the anti-Hermitian part is positive semidefinite up to a rescaling, and its overlap can be estimated by directly measuring a unitary block encoding. The resulting algorithm has a runtime matching that of the Fourier-based method~\cite{WangMcArdleBerta24,LinTong22}, using $1$ ancilla qubit throughout the entire computation.

\subsection{Simulating sum-of-squares Hamiltonians}
\label{sec:intro_sos}
As matrix arithmetic is a fundamental primitive in designing quantum algorithms, the techniques introduced here have the potential to reduce resource requirements across many applications. Specifically, when applied to quantum simulation, our methods inherit the commutator scaling of conventional product formulas, while utilizing the power of matrix arithmetics to improve the implementation of each Trotter step.

To illustrate this improvement, we consider the simulation of a class of \emph{sum-of-squares} Hamiltonians, which provide useful models for quantum systems of spins, fermions, and bosons~\cite{hastings2022perturbation}. Then, the target Hamiltonian has the form
\begin{equation}
    H=\sum_{k=1}^{n_K}H_k=\sum_{k=1}^{n_K}\left(\sum_{j_1=1}^{n_J}A_{j_1k}\right)^\dagger\left(\sum_{j_2=1}^{n_J}A_{j_2k}\right).
\end{equation}
For simplicity, we assume that each $A_{jk}$ is an elementary term whose Hamiltonian block encoding is available as input. In practice, $A_{jk}$ are monomials of spin, fermionic, and bosonic operators, and their Hamiltonian block encodings can be constructed through further matrix arithmetics. We will discuss this point below in the context of simulating sum-of-squares electronic structure.

When expanded in full, the target Hamiltonian contains $n_Kn_J^2$ terms. This suggests that each step of the simulation would have an implementation cost of $\mathbf{\Theta}\left(n_Kn_J^2\right)$ using conventional product formulas. We now explain how that can be improved using Hamiltonian-based matrix arithmetics.
To this end, let us start with a Hamiltonian block encoding of $A_{jk}$ which is available as input. Using the Lie-Trotter-Suzuki formulas, we combine the $n_J$ terms to get a Hamiltonian block encoding of $\sum_{j=1}^{n_J}A_{jk}$. Now we multiply the encoded operator and its Hermitian conjugate to obtain $H_k=\left(\sum_{j_1=1}^{n_J}A_{j_1k}\right)^\dagger\left(\sum_{j_2=1}^{n_J}A_{j_2k}\right)$ for each value of $k$. Finally, we take another summation of $n_K$ terms using product formulas. The resulting Hamiltonian block encoding is effectively a (bi-direction) controlled time evolution, and can thus be used in conjunction with quantum phase estimation to extract static properties of the system. The main stages are:
\begin{equation}
\begin{aligned}
    &\exp\left(-i\sqrt{\tau}\begin{bmatrix}
        0 & A_{jk}^\dagger\\
        A_{jk} & 0
    \end{bmatrix}\right)\\
    \overset{\text{Summation}}&{\longmapsto}\exp\left(-i\sqrt{\tau}\begin{bmatrix}
        0 & \sum_{j=1}^{n_J}A_{jk}^\dagger\\
        \sum_{j=1}^{n_J}A_{jk} & 0
    \end{bmatrix}\right)\\
    \overset{\text{Multiplication}}&{\underset{\dagger}{\longmapsto}}\exp\left(-i\tau\begin{bmatrix}
        0 & \left(\sum_{j_1=1}^{n_J}A_{j_1k}\right)^\dagger\left(\sum_{j_2=1}^{n_J}A_{j_2k}\right)\\
        \left(\sum_{j_1=1}^{n_J}A_{j_1k}\right)^\dagger\left(\sum_{j_2=1}^{n_J}A_{j_2k}\right) & 0
    \end{bmatrix}\right)\\
    \overset{\text{Summation}}&{\longmapsto}\exp\left(-i\tau\begin{bmatrix}
        0 & H\\
        H & 0
    \end{bmatrix}\right)
    \overset{\text{Redefinition}}{\longmapsto}\exp\left(-i\tau\begin{bmatrix}
        H & 0\\
        0 & -H
    \end{bmatrix}\right).
\end{aligned}
\end{equation}

To simulate $H$ for a long time $t$, we divide the evolution into $r$ steps and implement the above Hamiltonian matrix arithmetics with $\tau=\frac{t}{r}$. To ensure that the simulation error is at most $\epsilon$, it suffices to choose $r=\left(\alpha_{\text{comm}}+\alpha_{\infty}\right)t\left(\frac{n_K\alpha_{\infty}t}{\epsilon}\right)^{o(1)}$, where the effective normalization factors are $\alpha_{\text{comm}}=\sup_{p\in\mathbb{Z}_{\geq1}}\left(\sum_{k_1,\ldots,k_{p+1}=1}^{n_K}\norm{\left[H_{k_{p+1}},\ldots,\left[H_{k_2},H_{k_1}\right]\right]}\right)^{\frac{1}{p+1}}$ and the subdominant $\alpha_\infty=\max_{k=1,\ldots,n_K}\left(\sum_{j=1}^{n_J}\norm{A_{jk}}\right)^2$.
This simulation thus achieves the commutator scaling while the cost of each step is reduced to $\mathbf{O}\left(n_Kn_J\right)$, giving a total complexity of  
\begin{equation}
    \left(\alpha_{\text{comm}}+\alpha_{\infty}\right)t\left(\frac{n_K\alpha_{\infty}t}{\epsilon}\right)^{o(1)}n_Kn_J
\end{equation}
using $2$ ancilla qubits throughout. In contrast, other simulation algorithms such as qubitization~\cite{Low2016Qubitization} would have complexity $\mathbf{O}\left(\left(\alpha_1t+\log\left(\frac{1}{\epsilon}\right)\right)n_Kn_J\right)$ using $\mathbf{\Theta}\left(n_K\log\left(n_J\right)\right)$ ancillas, where $\alpha_1=\sum_{k=1}^{n_K}\left(\sum_{j=1}^{n_J}\norm{A_{jk}}\right)^2$.
Asymptotically, both $\alpha_{\text{comm}}$ and $\alpha_\infty$ fall below $\alpha_1$, often substantially so when $n_K$ is large but the terms nearly commute. See~\sec{sos} for further discussion.

As a concrete application, let us consider quantum simulation of the electronic structure Hamiltonians, whose solution offers valuable insights to the study of chemistry and materials science. Specifically, we focus on the Doubly Factorized Tensor HyperContracted (DFTHC) electronic structure Hamiltonians $H=\sum_{r=1}^{n_R}\sum_{c=1}^{n_C}H_{rc}^2
    =\sum_{r=1}^{n_R}\sum_{c=1}^{n_C}\left(w_{rc}I+\sum_{b=1}^{n_B}w_{rcb}\sum_{p,q=1}^nu_{rb,p}u_{rb,q}A_p^\dagger A_q\right)^2$,
with rank $n_R$, copies $n_C$, base size $n_B$ and number of spin orbitals $n$, where $A_p^\dagger$/$A_q$ are fermionic creation/annihilation operators, and $w$, $u$ are real coefficient tensors satisfying only the normalization condition $\norm{u_{rb,\cdot}}=1$, with a subdominant one-body term omitted for simplicity. This ansatz was recently introduced by~\cite{Low25}, which generalizes previous double factorizations with orthonormal coefficient tensors~\cite{PoulinTrotter,vonBurg21,Rocca2024,Oumarou2024acceleratingquantum,Peng2017} while capturing features of the tensor hypercontraction~\cite{Lee21}.

It is evident that a DFTHC Hamiltonian can be represented in the sum-of-squares form, and thus simulated using Hamiltonian matrix arithmetics as outlined above. However, several refinements can be made to improve the simulation performance. 
\begin{enumerate}[label=(\roman*)]
\item There is no need to implement each elementary term from $\sum_{p,q=1}^nu_{rb,p}u_{rb,q}A_p^\dagger A_q$ and take the summation. This is the product of two rotated Majorana operators up to a redefinition of the one-body terms and can be decomposed into $\mathbf{O}(n)$ Givens rotations~\cite[Lemma 8]{vonBurg21}.
\item Each $H_{rc}=w_{rc}I+\sum_{b=1}^{n_B}w_{rcb}\sum_{p,q=1}^nu_{rb,p}u_{rb,q}A_p^\dagger A_q$ is Hermitian. Instead of implementing $H_{rc}^\dagger H_{rc}$ using the generic matrix multiplication algorithm, we realize the Hamiltonian squaring $H_{rc}\mapsto H_{rc}^2$ with Hamiltonian QSVT.
\item Taking nested commutators of $H_{rc}$ yields a quadratic fermionic operator, whose coefficients are nested commutators of those of $H_{rc}$. Therefore, $\alpha_{\text{comm}}$ can be tightly bounded using eigenvalues of the coefficient tensors, and such a bound can be evaluated in polynomial time on a classical computer (\append{fermionic}).
\item All the Hamiltonian-based matrix arithemtic operations preserve the fermionic particle number. This allows us to only sum over $\eta$ fermionic modes when the initial state contains $\eta$ electrons, extending previous results for the planewave basis~\cite{Su2021nearlytight,McArdleCampbell22} and real-space grids~\cite{TrotterStep23}.
\end{enumerate}

Incorporating all the above improvements, we obtain a simulation algorithm with an effective normalization factor $\alpha_{\text{comm},\eta}$, depending on the largest $\eta$ eigenvalues of nested commutators of coefficient tensors in absolute value, better than that of the qubitization approach~\cite{vonBurg21,Lee21} but incomparable to that of the gap amplification method~\cite{Low25}.
Within each step, our algorithm implements a Hamiltonian evolution under the square of a sum of fermionic operators. In the special case where the coefficient tensors are perfectly orthonormal $\langle u_{rb_1,\cdot},u_{rb_2,\cdot}\rangle=\pmb{\delta}_{b_1,b_2}$, all operators $\sum_{p,q}u_{rb,p}u_{rb,q}A_p^\dagger A_q$ pairwise commute and can be simultaneously diagonalized. It then suffices to perform a parallelized version of thin QR decomposition~\cite[2.1.P28]{horn2012matrix} with $\mathbf{O}\left(n_Bn\right)$ Givens rotations~\cite{Kivlichan18}. However, when the coefficient tensors are not orthogonal, this circuit implementation is not directly applicable.
Our algorithm performs the square function using Hamiltonian QSVT, regaining the scaling $\sim n_Bn$ up to a logarithmic factor, leading to a Trotter step with total gate complexity $\sim n_Rn_Cn_Bn$. We note that this matches the total gate count of a single qubitization step in the double factorization case~\cite{vonBurg21} where $n_R,n_B=\mathbf{\Theta}(n)$ and $n_C=\mathbf{\Theta}(1)$, but falls short in Toffoli count or total gate count per step compared to the more recent approaches~\cite{Lee21,Low25}, owing to the more delicate use of Quantum Read-Only Memory (QROM) in the latter.

We achieve this with $1$ ancilla qubit.

We present preliminaries in \sec{block} on the Hamiltonian block encoding, and conclude in \sec{discuss} with a summary of our main contributions and open questions for future work.

\section{Hamiltonian block encoding}
\label{sec:block}
In this section, we provide preliminaries essential for understanding our results on Hamiltonian-based matrix arithmetics. We start by presenting an abstract definition of the Hamiltonian block encoding in \sec{block_abstract}. We then discuss several Hamiltonian matrix operations whose implementations either follow directly from the definition or are known from existing work. This includes Hermitian conjugation (\sec{block_conjugate}), complex phase scaling and integer scaling (\sec{block_scaling}), and matrix addition (\sec{block_addition}). We conclude in \sec{block_error} with an analysis of how error propagates between an operator and its Hamiltonian block encoding.

\subsection{Abstract definition}
\label{sec:block_abstract}

It is well known that any matrix $A$ can be dilated to a Hermitian operator of the form $\left[\begin{smallmatrix}
    0 & A^\dagger\\
    A & 0
\end{smallmatrix}\right]$. This allows us to construct a Hamiltonian block encoding of $A$ as
\begin{equation}
    \exp\left(-i
    \begin{bmatrix}
        0 & A^\dagger\\
        A & 0
    \end{bmatrix}\right)
    =e^{-i\left(X\otimes\Re(A)+Y\otimes\Im(A)\right)},
\end{equation}
where $X$ and $Y$ are Pauli matrices, and $A=\Re(A)+i\Im(A)$ is the unique decomposition of $A$ into its Hermitian and anti-Hermitian components $\Re(A)=\frac{A+A^\dagger}{2}$ and $\Im(A)=\frac{A-A^\dagger}{2i}$~\cite[Theorem 4.1.2]{horn2012matrix}. In using this matrix decomposition, we have implicitly assumed that $A$ is a square operator. We will impose this assumption throughout the paper to simplify the notation, though it is possible to handle nonsquare matrices by padding them with zero entries.

This is however not the only way to define Hamiltonian block encodings. For instance, one can also consider
\begin{equation}
    \exp\left(-i
    \begin{bmatrix}
        \Re(A) & \Im(A)\\
        \Im(A) & -\Re(A)
    \end{bmatrix}\right)
    =e^{-i\left(Z\otimes\Re(A)+X\otimes\Im(A)\right)},
\end{equation}
which relates to the previous construction through the change of basis $e^{-i\left(Z\otimes\Re(A)+X\otimes\Im(A)\right)}
    =\left(S\cdot\mathrm{Had}\otimes I\right)e^{-i\left(X\otimes\Re(A)+Y\otimes\Im(A)\right)}\left(\mathrm{Had}\cdot S^\dagger\otimes I\right)$
with $Z$ the Pauli matrix, $\mathrm{Had}=\frac{1}{\sqrt{2}}\left[\begin{smallmatrix}
    1 & 1\\
    1 & -1
\end{smallmatrix}\right]$ the Hadamard gate and $S=\left[\begin{smallmatrix}
    1 & 0\\
    0 & i
\end{smallmatrix}\right]$ the phase gate.
Moreover, we introduce the controlled version of Hamiltonian block encoding
\begin{equation}
    \exp\left(-i
    \begin{bmatrix}
        0 & A^\dagger & 0 & 0\\
        A & 0 & 0 & 0\\
        0 & 0 & 0 & 0\\
        0 & 0 & 0 & 0\\
    \end{bmatrix}\right)
    =\ketbra{0}{0}\otimes e^{-i\left(X\otimes\Re(A)+Y\otimes\Im(A)\right)}+\ketbra{1}{1}\otimes I\otimes I,
\end{equation}
which is useful for multiplying generic matrices with Hamiltonian evolution and Hamiltonian QSVT for even polynomials.

Abstractly, we define the Hamiltonian block encoding of $A$ by the evolution
\begin{equation}
    e^{-i\left(\Gamma^\dagger\otimes A+\Gamma\otimes A^\dagger\right)}.
\end{equation}
Here, $\Gamma^\dagger$ and $\Gamma$ are basis operators encoding the input matrix $A$ and its Hermitian conjugate, which are padded fermionic creation and annihilation operators up to a change of basis, satisfying the equivalent conditions listed below.
\begin{proposition}
\label{prop:fermionic_anticomm}
The following statements are equivalent for an operator $\Gamma$.
\begin{enumerate}
    \item \begin{enumerate}
        \item $\Gamma^2=0$;
        \item $\left(\Gamma^\dagger\Gamma\right)^2=\Gamma^\dagger\Gamma$ is an orthogonal projection.
    \end{enumerate}
    \item \begin{enumerate}
        \item $\Gamma^2=0$;
        \item $\Gamma\Gamma^\dagger\Gamma=\Gamma$.
    \end{enumerate}
    \item There exists a unitary $U$ such that
    \begin{equation}
        \Gamma=U
        \begin{bmatrix}
            0 & I & 0\\
            0 & 0 & 0\\
            0 & 0 & 0
        \end{bmatrix}
        U^\dagger.
    \end{equation}
\end{enumerate}
\end{proposition}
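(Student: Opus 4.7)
I plan to close the loop via $1 \Leftrightarrow 2$ and $2 \Leftrightarrow 3$. The implications $1 \Leftrightarrow 2$ are short algebraic rewrites, $3 \Rightarrow 2$ is a one-line block verification, and the real content lies in $2 \Rightarrow 3$, which I approach through the singular value decomposition.

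For $1 \Rightarrow 2$, the key observation is that $(\Gamma^\dagger\Gamma)^2 = \Gamma^\dagger\Gamma$ implies $(\Gamma^\dagger\Gamma)^k = \Gamma^\dagger\Gamma$ for all $k \geq 1$, so I expand
\begin{equation*}
(\Gamma\Gamma^\dagger\Gamma - \Gamma)^\dagger(\Gamma\Gamma^\dagger\Gamma - \Gamma) = (\Gamma^\dagger\Gamma)^3 - 2(\Gamma^\dagger\Gamma)^2 + \Gamma^\dagger\Gamma = 0
\end{equation*}
and conclude $\Gamma\Gamma^\dagger\Gamma = \Gamma$ since this quantity is positive semidefinite and vanishes. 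The reverse $2 \Rightarrow 1$ is even shorter: $(\Gamma^\dagger\Gamma)^2 = \Gamma^\dagger(\Gamma\Gamma^\dagger\Gamma) = \Gamma^\dagger\Gamma$, and a Hermitian idempotent is automatically an orthogonal projection.

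The crux $2 \Rightarrow 3$ proceeds from a singular value decomposition $\Gamma = W\Sigma V^\dagger$ of the (square) operator $\Gamma$. Substituting into $\Gamma\Gamma^\dagger\Gamma = \Gamma$ gives $W\Sigma^3 V^\dagger = W\Sigma V^\dagger$, hence $\Sigma^3 = \Sigma$; since $\Sigma$ is diagonal with non-negative entries, every singular value lies in $\{0,1\}$. After a unitary permutation I may assume $\Sigma$ has $r$ ones followed by zeros on the diagonal, so that $\Gamma = W_1 V_1^\dagger$, where $W_1$ and $V_1$ denote the first $r$ columns of $W$ and $V$. The remaining axiom $\Gamma^2 = 0$ expands to $W_1(V_1^\dagger W_1)V_1^\dagger = 0$; left- and right-multiplying by $W_1^\dagger$ and $V_1$ and using $W_1^\dagger W_1 = V_1^\dagger V_1 = I_r$ collapses this to $V_1^\dagger W_1 = 0$. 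Thus the column spans of $W_1$ and $V_1$ are mutually orthogonal, which in particular guarantees that $2r$ fits inside the ambient dimension. I then extend $\{W_1, V_1\}$ to a full orthonormal basis by appending any $U_3$ spanning the orthogonal complement and set $U = [\,W_1 \mid V_1 \mid U_3\,]$; a direct block multiplication gives
\begin{equation*}
U \begin{bmatrix} 0 & I_r & 0 \\ 0 & 0 & 0 \\ 0 & 0 & 0 \end{bmatrix} U^\dagger = W_1 V_1^\dagger = \Gamma,
\end{equation*}
as required.

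Finally for $3 \Rightarrow 2$, letting $M$ denote the canonical block matrix in statement~3, I verify by block multiplication that $M^2 = 0$ and $MM^\dagger M = M$; both identities are preserved under unitary conjugation and so transfer to $\Gamma = UMU^\dagger$. The main obstacle I anticipate is the bookkeeping in $2 \Rightarrow 3$: both axioms are genuinely needed, with $\Gamma\Gamma^\dagger\Gamma = \Gamma$ quantizing the singular values to $\{0,1\}$ and $\Gamma^2 = 0$ forcing the corresponding singular spaces to sit in mutually orthogonal column blocks of $U$. Once that interplay is pinned down through the orthogonality $V_1^\dagger W_1 = 0$, the remaining manipulations are routine.
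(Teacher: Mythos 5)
Your proof is correct, and it takes a genuinely different route from the paper's in the key direction. The paper proves $3 \Rightarrow 2 \Rightarrow 1$ by direct verification and then establishes $1 \Rightarrow 3$ by invoking the nilpotent canonical form from Horn--Johnson (3.4.P5): for any $\Gamma$ with $\Gamma^2 = 0$ there is a unitary conjugating $\Gamma$ into a block-diagonal of $2\times 2$ nilpotent Jordan blocks scaled by the singular values, plus zeros; the idempotence of $\Gamma^\dagger\Gamma$ then forces those singular values into $\{0,1\}$, and a permutation regroups the blocks into the desired form. You instead close the loop as $1 \Leftrightarrow 2$, $2 \Rightarrow 3$, $3 \Rightarrow 2$, and obtain condition 3 from the ordinary SVD: $\Gamma\Gamma^\dagger\Gamma = \Gamma$ quantizes the singular values, and then $\Gamma^2 = 0$ forces $V_1^\dagger W_1 = 0$, i.e.\ the left and right singular subspaces for the unit singular values are mutually orthogonal — precisely the geometric fact the Horn--Johnson canonical form packages implicitly. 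The trade-off: the paper's citation does the bookkeeping for free but requires a less commonly quoted result, while your SVD route is self-contained and makes the orthogonality mechanism explicit. Your $1 \Rightarrow 2$ via expanding $(\Gamma\Gamma^\dagger\Gamma - \Gamma)^\dagger(\Gamma\Gamma^\dagger\Gamma - \Gamma) = (\Gamma^\dagger\Gamma)^3 - 2(\Gamma^\dagger\Gamma)^2 + \Gamma^\dagger\Gamma = 0$ is also a nice touch — it shows 1(b) alone implies 2(b), without invoking 1(a) — whereas the paper dispatches $2 \Rightarrow 1$ as an unstated ``direct verification.''
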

\begin{proof}
The implication $3\Rightarrow2\Rightarrow1$ follows from a direct verification. 

Assume Statement $1$ is true. Given an operator satisfying $\Gamma^2=0$, we invoke~\cite[3.4.P5]{horn2012matrix} to find a unitary $U_1$ such that
\begin{equation}
    \Gamma=U_1
    \begin{bmatrix}
        \begin{bmatrix}
            0 & \sigma_1\\
            0 & 0
        \end{bmatrix}& & &\\
        & \ddots & &\\
        & & \begin{bmatrix}
            0 & \sigma_r\\
            0 & 0
        \end{bmatrix} &\\
        & & & \begin{bmatrix}
            0 & &\\
            & \ddots &\\
            & & 0
        \end{bmatrix}
    \end{bmatrix}
    U_1^\dagger
    =U_1
    \begin{bmatrix}
        0 & 1 & & & & & &\\
        0 & 0 & & & & & &\\
        & & \ddots & & & & &\\
        & & & 0 & 1 & & &\\
        & & & 0 & 0 & & &\\
        & & & & &0 & &\\
        & & & & & &\ddots &\\
        & & & & & & &0\\
    \end{bmatrix}
    U_1^\dagger,
\end{equation}
where $r=\mathbf{Rank}(\Gamma)$ and $\sigma_1,\ldots,\sigma_r$ are all the positive singular values of $\Gamma$. Because $\left(\Gamma^\dagger\Gamma\right)^2=\Gamma^\dagger\Gamma$, $\Gamma$ only has singular values $1$ and $0$, leading to the above representation.

Now switching to the Dirac notation, we have
\begin{equation}
    \Gamma=U_1\sum_{j=0}^{r-1}\ketbra{2j}{2j+1}U_1^\dagger
    =U_1U_2\sum_{j=0}^{r-1}\ketbra{j}{j+r}U_2^\dagger U_1^\dagger,
\end{equation}
where $U_2$ is the permutation
\begin{equation}
    U_2=\sum_{j=0}^{r-1}\left(\ketbra{2j}{j}+\ketbra{2j+1}{j+r}\right).
\end{equation}
This gives the claimed matrix representation in Statement $3$.
\end{proof}

Alternatively, we can also define the Hamiltonian block encoding of $A=\Re(A)+i\Im(A)$ by the evolution
\begin{equation}
    e^{-i\left(\Gamma_\Re\otimes\Re(A)+\Gamma_\Im\otimes\Im(A)\right)}.
\end{equation}
Here, $\Gamma_{\Re}$ and $\Gamma_{\Im}$ are basis operators encoding the Hermitian and anti-Hermitian part of $A$ respectively, which are padded Pauli operators up to a change of basis, satisfying the equivalent conditions listed below. This is reminiscent of a more general canonical form for a family of diagonlizable anticommuting operators~\cite{KUMBASAR201279}~\cite[Section 7.1]{Rosenberg04}, but we present a proof of the special case for completeness.

\begin{proposition}
\label{prop:majorana_anticomm}
The following statements are equivalent for operators $\Gamma_{\Re}$ and $\Gamma_{\Im}$.
\begin{enumerate}
    \item $\Gamma_{\Re}$ and $\Gamma_{\Im}$ are both normal, satisfying
    \begin{enumerate}
        \item $\Gamma_{\Re}\Gamma_{\Im}=-\Gamma_{\Im}\Gamma_{\Re}$;
        \item $\Gamma_{\Re}^2=\Gamma_{\Im}^2$.
        \item $\Gamma_{\Re}^3=\Gamma_{\Re}$;
        \item $\Gamma_{\Im}^3=\Gamma_{\Im}$.
    \end{enumerate}
    \item $\Gamma_{\Re}$ and $\Gamma_{\Im}$ are both normal, satisfying
    \begin{enumerate}
        \item $\Gamma_{\Re}\Gamma_{\Im}=-\Gamma_{\Im}\Gamma_{\Re}$;
        \item $\Gamma_{\Re}^2\Gamma_{\Im}=\Gamma_{\Im}$;
        \item $\Gamma_{\Im}^2\Gamma_{\Re}=\Gamma_{\Re}$.
    \end{enumerate}
    \item There exists a unitary $U$ such that
    \begin{equation}
        \Gamma_{\Re}=U
        \begin{bmatrix}
            I & 0 & 0\\
            0 & -I & 0\\
            0 & 0 & 0
        \end{bmatrix}U^\dagger,\qquad
        \Gamma_{\Im}=U
        \begin{bmatrix}
            0 & I & 0\\
            I & 0 & 0\\
            0 & 0 & 0
        \end{bmatrix}U^\dagger,
    \end{equation}
    where matrix blocks are partitioned conformally.
\end{enumerate}
\end{proposition}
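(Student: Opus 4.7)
The plan is to establish the cyclic implications $3 \Rightarrow 2 \Rightarrow 1 \Rightarrow 3$, mirroring the strategy of \prop{fermionic_anticomm}. The implication $3 \Rightarrow 2$ reduces to direct block matrix computations: both $\Gamma_{\Re} = U\,\mathrm{diag}(I, -I, 0)\,U^\dagger$ and $\Gamma_{\Im} = U\left[\begin{smallmatrix} 0 & I & 0 \\ I & 0 & 0 \\ 0 & 0 & 0 \end{smallmatrix}\right]U^\dagger$ are manifestly Hermitian (hence normal), their squares both equal $U\,\mathrm{diag}(I, I, 0)\,U^\dagger$, anticommutation follows by multiplying the block matrices, and $\Gamma_{\Re}^2 \Gamma_{\Im} = \Gamma_{\Im}$, $\Gamma_{\Im}^2 \Gamma_{\Re} = \Gamma_{\Re}$ hold because the common projection $\mathrm{diag}(I, I, 0)$ acts as the identity on the support of both operators.

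For $2 \Rightarrow 1$, I would first lift anticommutation to $\Gamma_{\Re}^2 \Gamma_{\Im} = \Gamma_{\Im} \Gamma_{\Re}^2$ by left-multiplying $\Gamma_{\Re} \Gamma_{\Im} = -\Gamma_{\Im} \Gamma_{\Re}$ by $\Gamma_{\Re}$. Combined with parts (b) and (c) of Statement~2, this yields
\[
    \Gamma_{\Re}^2 = \Gamma_{\Im}^2 \Gamma_{\Re} \cdot \Gamma_{\Re} = \Gamma_{\Im}^2 \Gamma_{\Re}^2 = \Gamma_{\Re}^2 \Gamma_{\Im}^2 = \Gamma_{\Re}^2 \Gamma_{\Im} \cdot \Gamma_{\Im} = \Gamma_{\Im} \cdot \Gamma_{\Im} = \Gamma_{\Im}^2,
\]
giving part (b) of Statement~1. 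The cubic identities $\Gamma_{\Re}^3 = \Gamma_{\Re}^2 \Gamma_{\Re} = \Gamma_{\Im}^2 \Gamma_{\Re} = \Gamma_{\Re}$ and symmetrically $\Gamma_{\Im}^3 = \Gamma_{\Im}$ then follow immediately.

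The substantive content lies in $1 \Rightarrow 3$. Since $\Gamma_{\Re}$ is normal and satisfies $\Gamma_{\Re}^3 = \Gamma_{\Re}$, its eigenvalues lie in $\{-1, 0, 1\}$, so $\Gamma_{\Re}$ is in fact Hermitian with orthogonal spectral decomposition $\Gamma_{\Re} = P_+ - P_-$, where $P_\pm$ project onto the $\pm 1$-eigenspaces; analogously for $\Gamma_{\Im}$. The common value $P := \Gamma_{\Re}^2 = \Gamma_{\Im}^2$ is the orthogonal projection $P_+ + P_-$, so $\Gamma_{\Re}$ and $\Gamma_{\Im}$ share the same kernel. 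Anticommutation then shows that $\Gamma_{\Im}$ swaps the $\pm 1$-eigenspaces of $\Gamma_{\Re}$: for $v \in \mathrm{range}(P_+)$, $\Gamma_{\Re} \Gamma_{\Im} v = -\Gamma_{\Im} \Gamma_{\Re} v = -\Gamma_{\Im} v$. Moreover, $\langle \Gamma_{\Im} v, \Gamma_{\Im} w \rangle = \langle v, \Gamma_{\Im}^2 w \rangle = \langle v, w \rangle$ for $v, w \in \mathrm{range}(P)$, so $\Gamma_{\Im}$ restricts to an isometric bijection $\mathrm{range}(P_+) \to \mathrm{range}(P_-)$ that is its own inverse; in particular $\dim \mathrm{range}(P_+) = \dim \mathrm{range}(P_-) =: n_1$. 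I would then choose any orthonormal basis $\{e_1, \ldots, e_{n_1}\}$ of $\mathrm{range}(P_+)$, take $\{\Gamma_{\Im} e_1, \ldots, \Gamma_{\Im} e_{n_1}\}$ as the induced orthonormal basis of $\mathrm{range}(P_-)$, and append an arbitrary orthonormal basis $\{f_1, \ldots, f_{n_0}\}$ of the shared kernel. Arranging these $2 n_1 + n_0$ vectors column-wise into a unitary $U$ realizes exactly the block form claimed in Statement~3.

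The main obstacle is securing the equal dimension of the $\pm 1$-eigenspaces of $\Gamma_{\Re}$, so that Statement~3 can be written with matching identity blocks $I$; the isometric bijection argument via $\Gamma_{\Im}$ is precisely what resolves this. The remaining ingredients are either routine block algebra or standard spectral theory for normal operators with real eigenvalues.
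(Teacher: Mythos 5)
Your proof is correct, and the implications $3 \Rightarrow 2 \Rightarrow 1$ are essentially the same as the paper's (your chain for $\Gamma_\Re^2 = \Gamma_\Im^2$ just spells out the commutativity step more explicitly). The genuinely different part is $1 \Rightarrow 3$. The paper fixes a unitary $U_1$ that block-diagonalizes $\Gamma_\Re$ as $\mathrm{diag}(I,-I,0)$, observes that anticommutation forces $\Gamma_\Im$ into an off-diagonal block form $\left[\begin{smallmatrix}0 & B_{12} & 0\\ B_{12}^\dagger & 0 & 0\\ 0 & 0 & B_{33}\end{smallmatrix}\right]$ with $B_{12}$ possibly rectangular, then takes an SVD of $B_{12}$ and uses $\Gamma_\Re^2 = \Gamma_\Im^2$ to force $\Sigma = I$, $B_{33} = 0$, and squareness. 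You instead argue operator-theoretically: $\Gamma_\Re = P_+ - P_-$ is Hermitian, $P = \Gamma_\Re^2 = \Gamma_\Im^2 = P_+ + P_-$ is the common support projection, anticommutation shows $\Gamma_\Im$ intertwines $\mathrm{range}(P_+)$ and $\mathrm{range}(P_-)$, and the computation $\langle \Gamma_\Im v, \Gamma_\Im w\rangle = \langle v, \Gamma_\Im^2 w\rangle = \langle v, w\rangle$ on $\mathrm{range}(P)$ makes $\Gamma_\Im$ an isometric bijection, which immediately gives equal dimensions and lets you assemble $U$ from a basis of $\mathrm{range}(P_+)$, its image under $\Gamma_\Im$, and a basis of $\ker P$. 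Your route avoids the SVD entirely and gets the equal-rank conclusion from the isometry directly rather than from $\Sigma = I$; the paper's route is more mechanical matrix algebra. Both establish the same canonical form, and the key structural insight (that $\Gamma_\Im$ is a unitary between the $\pm 1$-eigenspaces of $\Gamma_\Re$) is shared, so the two arguments are equivalent in content even if presented differently.
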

\begin{proof}
    We start with the implication $3\Rightarrow 2$ which follows from a direct verification. 

    Assume Statement $2$ is true. We have
    \begin{equation}
        \Gamma_{\Re}^2=\Gamma_{\Re}\left(\Gamma_{\Im}^2\Gamma_{\Re}\right)
        =\Gamma_{\Im}\left(\Gamma_{\Re}^2\Gamma_{\Im}\right)
        =\Gamma_{\Im}^2,
    \end{equation}
    and hence $1(b)$ is true. This immediately implies $1(c)$ and $1(d)$ as
    \begin{equation}
        \Gamma_{\Re}^3=\Gamma_{\Im}^2\Gamma_{\Re}=\Gamma_{\Re},\qquad
        \Gamma_{\Im}^3=\Gamma_{\Re}^2\Gamma_{\Im}=\Gamma_{\Im}.
    \end{equation}

    It remains to prove $1\Rightarrow 3$. Since $\Gamma_{\Re}$ and $\Gamma_{\Im}$ are normal, they can be unitarily diagonalized. All their eigenvalues $\lambda$ satisfy the constraint $\lambda^3=\lambda$ and must therefore be $\pm1$ or $0$. We thus deduce that $\Gamma_{\Re}$ and $\Gamma_{\Im}$ are actually Hermitian and that
\begin{equation}
    \Gamma_{\Re}=U_1
    \begin{bmatrix}
        I & 0 & 0\\
        0 & -I & 0\\
        0 & 0 & 0
    \end{bmatrix}U_1^\dagger
\end{equation}
for some unitary $U_1$, whereas the anticommutation relation forces
\begin{equation}
    \Gamma_{\Im}=U_1
    \begin{bmatrix}
        0 & B_{12} & 0\\
        B_{12}^\dagger & 0 & 0\\
        0 & 0 & B_{33}
    \end{bmatrix}U_1^\dagger.
\end{equation}
Here, $B_{33}$ is a square Hermitian operator, but $B_{12}$ is not known to be square for now.

Without loss of generality, suppose that $B_{12}$ has the singular value decomposition $B_{12}=V
\begin{bmatrix}
    \Sigma & 0
\end{bmatrix}W^\dagger$ for some nonnegative diagonal $\Sigma$ and unitaries $V$ and $W$, which transforms
\begin{equation}
    \begin{bmatrix}
        0 & B_{12} & 0\\
        B_{12}^\dagger & 0 & 0\\
        0 & 0 & B_{33}
    \end{bmatrix}
    =\begin{bmatrix}
        V & 0 & 0\\
        0 & W & 0\\
        0 & 0 & I
    \end{bmatrix}
    \begin{bmatrix}
        0 & \begin{bmatrix}
            \Sigma & 0
        \end{bmatrix} & 0\\
        \begin{bmatrix}
            \Sigma\\
            0
        \end{bmatrix} & 0 & 0\\
        0 & 0 & B_{33}
    \end{bmatrix}
    \begin{bmatrix}
        V^\dagger & 0 & 0\\
        0 & W^\dagger & 0\\
        0 & 0 & I
    \end{bmatrix}
\end{equation}
but preserves
\begin{equation}
    \begin{bmatrix}
        I & 0 & 0\\
        0 & -I & 0\\
        0 & 0 & 0
    \end{bmatrix}
    =\begin{bmatrix}
        V & 0 & 0\\
        0 & W & 0\\
        0 & 0 & I
    \end{bmatrix}
    \begin{bmatrix}
        I & 0 & 0\\
        0 & -I & 0\\
        0 & 0 & 0
    \end{bmatrix}
    \begin{bmatrix}
        V^\dagger & 0 & 0\\
        0 & W^\dagger & 0\\
        0 & 0 & I
    \end{bmatrix}.
\end{equation}
Applying $\Gamma_{\Re}^2=\Gamma_{\Im}^2$, we see that $B_{33}=0$, $\Sigma=I$ and that $B_{12}$ is in fact a square matrix. The proof is complete by letting $U=U_1
    \left[\begin{smallmatrix}
        V & 0 & 0\\
        0 & W & 0\\
        0 & 0 & I
    \end{smallmatrix}\right]$.
\end{proof}

Given basis operators $\Gamma_\Re$ and $\Gamma_\Im$ satisfying the above equivalent conditions, we have
\begin{equation}
\begin{aligned}
    \mathbf{Tr}\left(\Gamma_\Re\right)
    &=\mathbf{Tr}\left(\Gamma_\Im\right)
    =0,\\
    \mathbf{Tr}\left(\Gamma_\Re^2\right)
    &=\mathbf{Tr}\left(\Gamma_\Im^2\right)
    =\mathbf{Rank}\left(\Gamma_\Re\right)
    =\mathbf{Rank}\left(\Gamma_\Im\right).
\end{aligned}
\end{equation}
We require that the ancilla space can be expanded to have dimensions at least $2\mathbf{Rank}\left(\Gamma_\Re\right)
=2\mathbf{Rank}\left(\Gamma_\Im\right)$, so that Hamiltonian-based matrix multiplication (\thm{multiply}) and Hamiltonian QSVT for even polynomials and generic inputs (\thm{qsvt_even}) can be realized up to a change of basis.

As a concrete example, consider the set of Majorana operators $\{\Gamma_j\}$ satisfying $\Gamma_j\Gamma_k+\Gamma_k\Gamma_j=2\pmb{\delta}_{j,k}I$~\cite{Majorana20}.
This means $\left(\frac{\Gamma_j}{\sqrt{2}}\right)^2=I$ for all $j$ and $\frac{\Gamma_j}{\sqrt{2}}\frac{\Gamma_k}{\sqrt{2}}=-\frac{\Gamma_k}{\sqrt{2}}\frac{\Gamma_j}{\sqrt{2}}$ whenever $j\neq k$. We thus conclude that any two Majorana operators $\Gamma_j$ and $\Gamma_k$ can be used to define a Hamiltonian block encoding $e^{-\frac{i}{\sqrt{2}}\left(\Gamma_j\otimes\Re(A)+\Gamma_k\otimes\Im(A)\right)}$.
For notational convenience, we select $\exp\left(-i\left[\begin{smallmatrix}
    0 & A^\dagger\\
    A & 0
\end{smallmatrix}\right]\right)$ as the standard form and denote
\begin{equation}
    E_{A}
    =\exp\left(-i
    \begin{bmatrix}
        0 & A^\dagger\\
        A & 0
    \end{bmatrix}\right)
    =e^{-i\left(X\otimes\Re(A)+Y\otimes\Im(A)\right)}.
\end{equation}
We then quantify the complexity of our methods in terms of number of calls to $E_{A}$ and its controlled version. According to~\prop{fermionic_anticomm} and~\prop{majorana_anticomm}, any Hamiltonian block encoding can be converted to the standard form by applying a unitary conjugation on the ancilla system with no query overhead.

\subsection{Hermitian conjugation}
\label{sec:block_conjugate}
One elementary operation on the Hamiltonian block encoding is Hermitian conjugation. Specifically, supposing $A$ is an operator with the corresponding encoding $\exp\left(-i\left[\begin{smallmatrix}
    0 & A^\dagger\\
    A & 0
\end{smallmatrix}\right]\right)$, our goal is to realize the transformation
\begin{equation}
    \exp\left(-i
    \begin{bmatrix}
        0 & A^\dagger\\
        A & 0
    \end{bmatrix}\right)
    \mapsto
    \exp\left(-i
    \begin{bmatrix}
        0 & A\\
        A^\dagger & 0
    \end{bmatrix}\right)
\end{equation}
so the result is a Hamiltonian block encoding of $A^\dagger$.

To this end, we rewrite it in the Pauli basis as
\begin{equation}
    e^{-i\left(X\otimes\Re(A)+Y\otimes\Im(A)\right)}
    \mapsto e^{-i\left(X\otimes\Re(A)-Y\otimes\Im(A)\right)}.
\end{equation}
Our goal is then to find a unitary $U$ such that its unitary conjugation has the action
\begin{equation}
    UXU^\dagger=X,\qquad
    UYU^\dagger=-Y,
\end{equation}
which can be achieved by simply setting $U=X$. We thus obtain:

\begin{proposition}[Hermitian conjugation]
\label{prop:herm_conjugate}
Let $A$ be a matrix encoded by the Hamiltonian block encoding $E_{A}=\exp\left(-i\left[\begin{smallmatrix}
    0 & A^\dagger\\
    A & 0
\end{smallmatrix}\right]\right)$. Then the Hamiltonian block encoding
\begin{equation}
    E_{A^\dagger}
    =\exp\left(-i
    \begin{bmatrix}
        0 & A\\
        A^\dagger & 0
    \end{bmatrix}\right)
\end{equation}
can be constructed with zero error using $1$ query to $E_{A}$. See~\fig{conjugate} for the corresponding circuit diagram.
\end{proposition}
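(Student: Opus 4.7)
The plan is to follow the hint already laid out in the text: pass to the Pauli representation of the Hamiltonian generator, then realise the desired sign-flip on the $Y$ component by conjugating with a single-qubit Clifford acting on the ancilla.

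First I would write both the input and the target in terms of Pauli operators on the ancilla qubit, using $\Re(A) = (A + A^\dagger)/2$ and $\Im(A) = (A - A^\dagger)/(2i)$. This gives $E_A = e^{-i(X \otimes \Re(A) + Y \otimes \Im(A))}$ and $E_{A^\dagger} = e^{-i(X \otimes \Re(A) - Y \otimes \Im(A))}$. So the task reduces to applying the ancilla-only transformation that preserves $X$ and negates $Y$.

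Next I would invoke the standard identities $X X X = X$ and $X Y X = -Y$. Applying $X \otimes I$ on both sides of $E_A$ therefore conjugates the generator term by term, yielding
\begin{equation}
(X \otimes I)\, e^{-i(X \otimes \Re(A) + Y \otimes \Im(A))}\, (X \otimes I) = e^{-i(X \otimes \Re(A) - Y \otimes \Im(A))} = E_{A^\dagger},
\end{equation}
where the first equality uses the fact that unitary conjugation commutes with the matrix exponential. This identity is exact, so the error is zero, and the construction uses exactly one call to $E_A$ sandwiched between two applications of $X$ on the ancilla (which are free in the query model).

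There is essentially no obstacle here; the only thing to be a bit careful about is confirming that $X \otimes I$ really does commute with the exponential in the intended sense, i.e.\ that conjugation distributes through $\exp(\cdot)$. I would cite this as a standard property of unitary conjugation on matrix functions. The circuit realising this construction is the one indicated in \fig{conjugate}, and the cost claim of $1$ query follows immediately.
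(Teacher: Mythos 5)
Your proof is correct and follows essentially the same route as the paper: pass to the Pauli representation $E_A = e^{-i(X\otimes\Re(A)+Y\otimes\Im(A))}$, observe that conjugation by $X\otimes I$ preserves $X$ and flips the sign of $Y$, and conclude that $(X\otimes I)E_A(X\otimes I)=E_{A^\dagger}$ exactly, at the cost of one query. No gaps.
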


\begin{figure}[t]
	\centering
\includegraphics[scale=\circuitwidth]{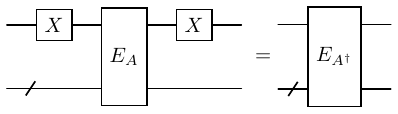}
\caption{Quantum circuit for Hermitian conjugation.}
\label{fig:conjugate}
\end{figure}

\subsection{Complex phase scaling and integer scaling}
\label{sec:block_scaling}
Next, we consider multiplying a Hamiltonian block encoded operator by a complex scalar. Specifically, given the Hamiltonian block encoding $\exp\left(-i\left[\begin{smallmatrix}
    0 & A^\dagger\\
    A & 0
\end{smallmatrix}\right]\right)$ and a phase angle $\theta\in\mathbb{R}$, our goal is to perform the transformation
\begin{equation}
    \exp\left(-i
    \begin{bmatrix}
        0 & A^\dagger\\
        A & 0
    \end{bmatrix}\right)
    \mapsto
    \exp\left(-i
    \begin{bmatrix}
        0 & e^{-i\theta}A^\dagger\\
        e^{i\theta}A & 0
    \end{bmatrix}\right),
\end{equation}
with the phased operator $e^{i\theta}A$ encoded by the resulting Hamiltonian evolution. This can be realized through the unitary conjugation
\begin{equation}
    \begin{bmatrix}
        e^{-i\frac{\theta}{2}} & 0\\
        0 & e^{i\frac{\theta}{2}}
    \end{bmatrix}\exp\left(-i
    \begin{bmatrix}
        0 & A^\dagger\\
        A & 0
    \end{bmatrix}\right)
    \begin{bmatrix}
        e^{i\frac{\theta}{2}} & 0\\
        0 & e^{-i\frac{\theta}{2}}
    \end{bmatrix}
    =\exp\left(-i
    \begin{bmatrix}
        0 & e^{-i\theta}A^\dagger\\
        e^{i\theta}A & 0
    \end{bmatrix}\right),
\end{equation}
giving:

\begin{proposition}[Complex phase scaling]
\label{prop:phase_scale}
Let $A$ be a matrix encoded by the Hamiltonian block encoding $E_{A}=\exp\left(-i\left[\begin{smallmatrix}
    0 & A^\dagger\\
    A & 0
\end{smallmatrix}\right]\right)$. For any $\theta\in\mathbb{R}$, the Hamiltonian block encoding
\begin{equation}
    E_{e^{i\theta}A}=\exp\left(-i
    \begin{bmatrix}
        0 & e^{-i\theta}A^\dagger\\
        e^{i\theta}A & 0
    \end{bmatrix}\right)
\end{equation}
can be constructed with zero error using $1$ query to $E_{A}$. See~\fig{complex_phase} for the corresponding circuit diagram.
\end{proposition}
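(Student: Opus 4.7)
The plan is to realize the desired transformation entirely via a single-qubit unitary conjugation on the ancilla, taking advantage of the fact that conjugating a Hermitian exponent commutes with the matrix exponential. Concretely, I define the diagonal phase operator $D_\theta = \bigl[\begin{smallmatrix} e^{-i\theta/2} & 0 \\ 0 & e^{i\theta/2} \end{smallmatrix}\bigr]$ acting on the ancilla qubit (which is $e^{i\theta Z/2}$ up to a global phase, i.e.\ a single-qubit $R_z$ rotation), and sandwich $E_A$ between $D_\theta$ and $D_\theta^\dagger$.

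The key step is to verify the block-level identity
\begin{equation}
D_\theta\begin{bmatrix} 0 & A^\dagger \\ A & 0 \end{bmatrix}D_\theta^\dagger
= \begin{bmatrix} 0 & e^{-i\theta}A^\dagger \\ e^{i\theta}A & 0 \end{bmatrix},
\end{equation}
which is an immediate $2\times 2$ block computation: the diagonal blocks remain zero, while the off-diagonal blocks acquire factors $e^{-i\theta/2}\cdot e^{-i\theta/2}=e^{-i\theta}$ and $e^{i\theta/2}\cdot e^{i\theta/2}=e^{i\theta}$, respectively. Since for any unitary $D$ and any operator $H$ we have $D\,e^{-iH}\,D^\dagger = e^{-iDHD^\dagger}$, applying this with $D=D_\theta$ and $H=\left[\begin{smallmatrix} 0 & A^\dagger \\ A & 0 \end{smallmatrix}\right]$ directly yields
\begin{equation}
D_\theta\, E_A\, D_\theta^\dagger = \exp\left(-i\begin{bmatrix} 0 & e^{-i\theta}A^\dagger \\ e^{i\theta}A & 0 \end{bmatrix}\right) = E_{e^{i\theta}A}.
\end{equation}

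This construction is exact (zero error) and consumes exactly one query to $E_A$, together with two single-qubit phase gates on the ancilla, which matches the claimed resource count and the circuit shown in \fig{complex_phase}. There is essentially no obstacle here: because both sides of the conjugation are unitary and the identity is an equality (not an approximation), no error analysis is needed, and the only thing to check carefully is the sign convention in the off-diagonal blocks so that $e^{i\theta}$ (rather than $e^{-i\theta}$) lands on the bottom-left block that encodes $A$. The mild subtlety, if any, is just to state the result in the Pauli basis: $D_\theta(X\otimes\Re(A)+Y\otimes\Im(A))D_\theta^\dagger$ equals $X\otimes\Re(e^{i\theta}A)+Y\otimes\Im(e^{i\theta}A)$, which is another way to package the same calculation and makes the claim transparent.
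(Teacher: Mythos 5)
Your construction is exactly the paper's: conjugating $E_A$ by the diagonal ancilla phase $D_\theta = \left[\begin{smallmatrix} e^{-i\theta/2} & 0 \\ 0 & e^{i\theta/2}\end{smallmatrix}\right]$ and using $D e^{-iH} D^\dagger = e^{-iDHD^\dagger}$, so the proof is correct and matches the paper's route. One small slip: the matrix you wrote for $D_\theta$ is $e^{-i\theta Z/2}$ (the paper's $R_z(\theta)$), not $e^{i\theta Z/2}$ as stated in your parenthetical, though this does not affect the calculation since you work directly with the explicit matrix.
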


\begin{figure}[t]
	\centering
\includegraphics[scale=\circuitwidth]{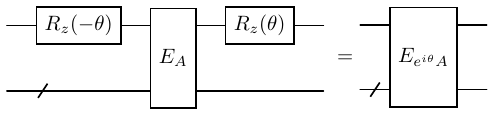}
\caption{Quantum circuit for complex phase scaling ($R_z(\theta)=e^{-i\frac{\theta}{2}Z}$).}
\label{fig:complex_phase}
\end{figure}

It is also fairly straightforward to perform integer scaling on a Hamiltonian block encoded operator:
\begin{equation}
    \exp\left(-i
    \begin{bmatrix}
        0 & A^\dagger\\
        A & 0
    \end{bmatrix}\right)
    \mapsto
    \exp\left(-i
    \begin{bmatrix}
        0 & nA^\dagger\\
        nA & 0
    \end{bmatrix}\right)
\end{equation}
with $n\in\mathbb{Z}_{\geq0}$ a nonnegative integer. This can be realized by simply repeating the original Hamiltonian block encoding $n$ times.

\begin{proposition}[Integer scaling]
\label{prop:integer_scale}
Let $A$ be a matrix encoded by the Hamiltonian block encoding $E_{A}=\exp\left(-i\left[\begin{smallmatrix}
    0 & A^\dagger\\
    A & 0
\end{smallmatrix}\right]\right)$. For any $n\in\mathbb{Z}_{\geq0}$, the Hamiltonian block encoding
\begin{equation}
    E_{nA}=\exp\left(-i
    \begin{bmatrix}
        0 & nA^\dagger\\
        nA & 0
    \end{bmatrix}\right)
\end{equation}
can be constructed with zero error using $n$ queries to $E_{A}$. See~\fig{integer} for the corresponding circuit diagram.
\end{proposition}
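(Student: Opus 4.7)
The plan is essentially to observe that this proposition is an immediate consequence of the semigroup property of Hamiltonian evolution. The generator $\left[\begin{smallmatrix} 0 & A^\dagger\\ A & 0 \end{smallmatrix}\right]$ trivially commutes with itself, so for any $n\in\mathbb{Z}_{\geq 0}$ we have $E_A^n = \exp\bigl(-i n \left[\begin{smallmatrix} 0 & A^\dagger\\ A & 0 \end{smallmatrix}\right]\bigr) = \exp\bigl(-i \left[\begin{smallmatrix} 0 & nA^\dagger\\ nA & 0 \end{smallmatrix}\right]\bigr) = E_{nA}$. The identity is exact, so the construction incurs zero error, and the right-hand side makes exactly $n$ queries to $E_A$ (with the $n=0$ case corresponding to the identity and zero queries).

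Concretely, I would write one line expanding $E_A^n$ using the commuting-exponentials identity, identify the result with $E_{nA}$ by linearity of the exponent in $n$, and then conclude. There is no genuine obstacle: the only thing to check is that the operator identity $e^{-iH}\cdot e^{-iH} = e^{-2iH}$ holds (which is standard for any bounded operator $H$, Hermitian or not), so its $n$-fold iteration gives $e^{-inH}$. The corresponding circuit is just $n$ sequential copies of $E_A$ as depicted in \fig{integer}, with no ancilla overhead beyond the single qubit already present in the definition of the Hamiltonian block encoding.
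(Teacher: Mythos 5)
Your proof is correct and matches the paper's approach exactly: the paper simply notes that integer scaling is realized by repeating $E_A$ a total of $n$ times, which is precisely the semigroup identity $E_A^n = E_{nA}$ you spell out. No differences worth noting.
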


\begin{figure}[t]
	\centering
\includegraphics[scale=\circuitwidth]{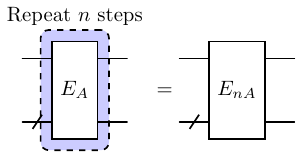}
\caption{Quantum circuit for integer scaling.}
\label{fig:integer}
\end{figure}

\subsection{Matrix addition}
\label{sec:block_addition}
We now consider adding Hamiltonian block encoded operators. Specifically, assume that we have Hamiltonian block encodings $\exp\left(-i\left[\begin{smallmatrix}
    0 & A_j^\dagger\\
    A_j & 0
\end{smallmatrix}\right]\right)$ for $j=1,\ldots,\gamma$. Then our goal is to perform the transformation
\begin{equation}
    \exp\left(-i
    \begin{bmatrix}
        0 & A_j^\dagger\\
        A_j & 0
    \end{bmatrix}\right)\ (j=1,\ldots,\gamma)
    \mapsto
    \exp\left(-i
    \begin{bmatrix}
        0 & \sum_{j=1}^\gamma A_j^\dagger\\
        \sum_{j=1}^\gamma A_j & 0
    \end{bmatrix}\right).
\end{equation}

This can be realized using higher-order product formulas, which are widely used in quantum simulation~\cite{BACS05}. Examples include the family of Lie-Trotter-Suzuki product formulas, which are defined recursively for Hermitian operators $H_1,\ldots,H_\gamma$ as
\begin{equation}
\begin{aligned}
    S_1(\tau )&=e^{-i\tau H_\gamma}\cdots e^{-i\tau H_1},\\
    S_2(\tau )&=e^{-i\frac{\tau }{2}H_1}\cdots e^{-i\frac{\tau }{2}H_\gamma}e^{-i\frac{\tau }{2}H_\gamma}\cdots e^{-i\frac{\tau }{2}H_1},\\
    S_{2k}(\tau )&=S_{2k-2}^2\left(u_k\tau \right)S_{2k-2}\left((1-4u_k)\tau \right)S_{2k-2}^2\left(u_k\tau \right),
\end{aligned}
\end{equation}
where $u_k=\frac{1}{4-4^{1/(2k-1)}}$ and $k=2,3,\ldots$ When expanded into Taylor series, the formula $S_{2k}(\tau)$ agrees with the ideal evolution $\exp\left(-i\tau\sum_{j=1}^\gamma H_j\right)$ up to order $\tau^{2k}$; hence we say $S_{2k}(\tau)$ is a ($2k$)th-order product formula. This provides a good approximation to the ideal evolution in the limit $\tau\rightarrow0$. To simulate for an arbitrary time $t$, we divide the evolution into $r$ steps and apply a product formula within each step for time $\tau=\frac{t}{r}$.

It was shown in~\cite{CSTWZ19} that product formulas can simulate Hamiltonian dynamics with a cost depending on the norm of nested commutators of individual summands. Specifically, for any Hermitian operators $H_1,\ldots,H_\gamma$ and a $p$th-order product formula $S_p(\tau)$ with $p\in\mathbb{Z}_{\geq1}$, the approximation
\begin{equation}
    \norm{S_p\left(\frac{t}{r}\right)^r-e^{-it\sum_{j=1}^\gamma H_j}}\leq\epsilon
\end{equation}
can be achieved with accuracy $\epsilon$ by choosing
\begin{equation}
    r=\mathbf{O}\left(\frac{\alpha_{\text{comm}}^{1+1/p}t^{1+1/p}}{\epsilon^{1/p}}\right),
\end{equation}
where
\begin{equation}
\label{eq:alpha_comm}
    \alpha_{\text{comm}}=\left(\sum_{j_1,\ldots,j_{p+1}=1}^{\gamma}\norm{\left[H_{j_{p+1}},\ldots,\left[H_{j_2},H_{j_1}\right]\right]}\right)^{\frac{1}{p+1}}.
\end{equation}
This commutator scaling holds in general as long as the product formula agrees with the ideal evolution up to order $\tau^p$. This covers the Lie-Trotter-Suzuki formulas introduced above but applies to other constructions as well, which may lead to simulations with smaller constant prefactors in their complexities.

In general, a $p$th-order product formula takes the form
\begin{equation}
\label{eq:def_pfp}
    S_p(\tau)=\prod_{l=1}^\upsilon\prod_{j=1}^\gamma e^{-ita_{l,j}H_{\pi_{l}(j)}},
\end{equation}
where $\upsilon$ is the total number of stages depending only on the value of $p$, $\pi_l$ is the permutation of terms within stage $l$, and $a_{l,j}$ are real coefficients such that all $\abs{a_{l,j}}\leq1$. Choosing a sufficiently large but constant value of $p$, we can simplify the scaling of number of Trotter steps as
\begin{equation}
    r=\frac{\alpha_{\text{comm}}^{1+o(1)}t^{1+o(1)}}{\epsilon^{o(1)}},
\end{equation}
where $o(1)$ represents a positive number that approaches $0$ as $p$ grows. Here, we have slightly abused the notation because the norm of nested commutators actually depends on the number of nesting layers $p$. So the scaling should be understood as
\begin{equation}
    \alpha_{\text{comm}}=\sup_{p\in\mathbb{Z}_{\geq1}}\left(\sum_{j_1,\ldots,j_{p+1}=1}^{\gamma}\norm{\left[H_{j_{p+1}},\ldots,\left[H_{j_2},H_{j_1}\right]\right]}\right)^{\frac{1}{p+1}}.
\end{equation}
We can use this to add Hamiltonian block encoded matrices as follows.

\begin{proposition}[Matrix addition]
\label{prop:add}
Let $A_j$ ($j=1,\ldots,\gamma$) be matrices encoded by the Hamiltonian block encodings $E_{A_j}=\exp\left(-i\left[\begin{smallmatrix}
    0 & A_j^\dagger\\
    A_j & 0
\end{smallmatrix}\right]\right)$ with $\norm{A_j}<\frac{\pi}{2}$. Then the Hamiltonian block encoding
\begin{equation}
    E_{\sum_{j=1}^\gamma A_j}=
    \exp\left(-i
    \begin{bmatrix}
        0 & \sum_{j=1}^\gamma A_j^\dagger\\
        \sum_{j=1}^\gamma A_j & 0
    \end{bmatrix}\right)
\end{equation}
can be constructed with accuracy $\epsilon$ using
\begin{equation}
    r=\frac{\alpha_{\text{comm}}^{1+o(1)}}{\epsilon^{o(1)}}\log(\gamma)
\end{equation}
queries to every $E_{A_j}$, where
\begin{equation}
    \alpha_{\text{comm}}=\sup_{p\in\mathbb{Z}_{\geq1}}\left(\sum_{j_1,\ldots,j_{p+1}=1}^{\gamma}\norm{\left[\begin{bmatrix}
        0 & A_{j_{p+1}}^\dagger\\
        A_{j_{p+1}} & 0
    \end{bmatrix},\ldots,\left[\begin{bmatrix}
        0 & A_{j_{2}}^\dagger\\
        A_{j_{2}} & 0
    \end{bmatrix},\begin{bmatrix}
        0 & A_{j_{1}}^\dagger\\
        A_{j_{1}} & 0
    \end{bmatrix}\right]\right]}\right)^{\frac{1}{p+1}}.
\end{equation}
See~\fig{addition} for the corresponding circuit diagram.
\end{proposition}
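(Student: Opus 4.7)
The plan is to reduce the task to standard Hamiltonian simulation of $\sum_j H_j$ with $H_j:=\begin{bmatrix} 0 & A_j^\dagger \\ A_j & 0 \end{bmatrix}$, and then directly invoke the commutator-scaling product-formula bound of~\cite{CSTWZ19} on these block-dilated Hermitian generators.

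First, I would identify $E_{A_j}=e^{-iH_j}$ and observe that the target $E_{\sum_j A_j}=e^{-i\sum_j H_j}$ is the unit-time evolution under $\sum_j H_j$. The hypothesis $\norm{A_j}<\pi/2$ ensures $\norm{H_j}<\pi/2$, so there is no principal-branch ambiguity in recovering $H_j$ from $E_{A_j}$, and the fractional-time evolutions $e^{-isH_j}$ for $s\in(0,1)$ demanded by the Trotter primitives can be synthesized from unit-time queries to $E_{A_j}$ via the fractional scaling subroutine established later as~\cor{frac_scale}.

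Second, I would fix a $p$th-order Lie-Trotter-Suzuki formula of the form~\eq{def_pfp} and write $S_p(1/r_{\text{trot}})^{r_{\text{trot}}}\approx e^{-i\sum_j H_j}$. Applying~\cite{CSTWZ19} with $\alpha_{\text{comm}}$ as defined in~\eq{alpha_comm} but computed on the block-dilated operators $H_j$, accuracy $\epsilon$ is achieved once $r_{\text{trot}}=O(\alpha_{\text{comm}}^{1+1/p}/\epsilon^{1/p})$; taking $p$ a sufficiently large but constant integer drives the $1/p$ exponents into $o(1)$, giving $r_{\text{trot}}=\alpha_{\text{comm}}^{1+o(1)}/\epsilon^{o(1)}$ Trotter steps.

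Third, each Trotter step unfolds into $\upsilon\gamma$ elementary primitives $e^{-ia_{l,j}H_{\pi_l(j)}/r_{\text{trot}}}$, and each such fractional evolution can be synthesized from queries to $E_{A_{\pi_l(j)}}$ at a cost that is logarithmic in the reciprocal of its per-primitive accuracy via~\cor{frac_scale}. Splitting the total error budget $\epsilon$ uniformly across the $O(\gamma r_{\text{trot}})$ primitives contributes an overhead of order $\log(\gamma r_{\text{trot}}/\epsilon)$ per primitive; after absorbing $\log\alpha_{\text{comm}}$ and $\log(1/\epsilon)$ into $\alpha_{\text{comm}}^{o(1)}$ and $\epsilon^{-o(1)}$ respectively, what remains is a $\log\gamma$ multiplicative overhead, producing the claimed query count $\alpha_{\text{comm}}^{1+o(1)}\epsilon^{-o(1)}\log\gamma$ per $E_{A_j}$.

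The main obstacle is the clean bookkeeping of fractional-scaling errors across the full product expansion: one has to verify that the per-primitive tolerance can indeed be met within the stated query budget and that the errors compound only additively through the Trotter product, so that the commutator scaling inherited from~\cite{CSTWZ19} is not spoiled. Beyond this, the argument is essentially a transcription of the CSTWZ19 bound to the block-dilated setting combined with the fractional scaling primitive.
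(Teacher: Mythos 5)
Your proof is correct and takes essentially the same route as the paper: identify $E_{A_j}=e^{-iH_j}$ with the dilated Hermitian generator $H_j=\left[\begin{smallmatrix}0 & A_j^\dagger\\ A_j & 0\end{smallmatrix}\right]$, apply the commutator-scaling product-formula bound of~\cite{CSTWZ19} at a constant but sufficiently large order $p$ to obtain $r_{\text{trot}}=\alpha_{\text{comm}}^{1+o(1)}/\epsilon^{o(1)}$ steps, and synthesize each fractional-time primitive $e^{-i(a_{l,j}/r_{\text{trot}})H_{\pi_l(j)}}$ from $E_{A_j}$ via fractional scaling (\cor{frac_scale}), with negative signs handled by complex phase scaling (\prop{phase_scale}), incurring the $\log\gamma$ overhead from the uniform error-budget split. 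The paper's stated argument is exactly this, so no gap.
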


\begin{figure}[t]
	\centering
\includegraphics[scale=\circuitwidth]{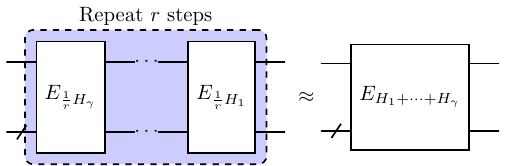}
\caption{Quantum circuit for Hamiltonian matrix addition.}
\label{fig:addition}
\end{figure}

The above result is established by dividing the desired Hamiltonian block encoding $E_{\sum_{j=1}^\gamma A_j}$ into $r$ steps, and simulating every step using a $p$th-order product formula $S_p\left(\frac{1}{r}\right)$, each making queries to $E_{\left(a_{l,\pi_l^{-1}(j)}/r\right)A_j}$ according to \eq{def_pfp}. These encodings may be constructed directly when $A_j$ are elementary terms. Alternatively, each effective evolution time $\abs{a_{l,\pi_l^{-1}(j)}}/r\leq 1$ is fractional, and the rescaled Hamiltonian block encodings can be constructed from $E_{A_j}$ via fractional scaling~\cor{frac_scale}, introducing a logarithmic overhead in the query complexity. 
Negative evolution times may be achieved via~\prop{phase_scale} at no overhead.

When applied to Hamiltonian block encodings, product formulas will have a cost depending on the nested commutators of $\left[\begin{smallmatrix}
    0 & A_j^\dagger\\
    A_j & 0
\end{smallmatrix}\right]$. These commutators can be computed as
\begin{equation}
    \left[\begin{bmatrix}
        0 & A_j^\dagger\\
        A_j & 0
    \end{bmatrix},\begin{bmatrix}
        0 & A_k^\dagger\\
        A_k & 0
    \end{bmatrix}\right]
    =\begin{bmatrix}
        A_j^\dagger A_k-A_k^\dagger A_j & 0\\
        0 & A_jA_k^\dagger-A_kA_j^\dagger
    \end{bmatrix}
    =\begin{bmatrix}
        \left[A_j,A_k\right]_\dagger & 0\\
        0 & \left[A_j^\dagger,A_k^\dagger\right]_\dagger
    \end{bmatrix},
\end{equation}
where the standard matrix commutator is replaced by
\begin{equation}
    \left[A_j,A_k\right]_\dagger=A_j^\dagger A_k-A_k^\dagger A_j.
\end{equation}
Alternatively, we may use the Pauli representation to evaluate 
\begin{equation}
\begin{aligned}
    &\left[X\otimes\Re(A_j)+Y\otimes\Im(A_j),X\otimes\Re(A_k)+Y\otimes\Im(A_k)\right]\\
    &=I\otimes\left(\left[\Re(A_j),\Re(A_k)\right]+\left[\Im(A_j),\Im(A_k)\right]\right)
    +iZ\otimes\left(\left\{\Re(A_j),\Im(A_k)\right\}-\left\{\Re(A_k),\Im(A_j)\right\}\right),
\end{aligned}
\end{equation}
where $\{\cdot,\cdot\}$ is the matrix anticommutator
\begin{equation}
    \left\{\Re(A_j),\Im(A_k)\right\}
    =\Re(A_j)\Im(A_k)+\Im(A_k)\Re(A_j).
\end{equation}
Both of these reduce to the standard commutator expression when the input operators $A_j$ are purely Hermitian.

\subsection{Error propagation}
\label{sec:block_error}
In a number of our matrix arithmetic results, the constructed operator is an approximation of the intended Hamiltonian block encoding. That is, we aim to construct a quantum algorithm $W$, which approximates the desired Hamiltonian block encoding of $A$ with accuracy $\epsilon$ in the sense that
\begin{equation}
    \norm{W-E_A}
    =\norm{W-\exp\left(-i\begin{bmatrix}
    0 & A^\dagger\\
    A & 0
\end{bmatrix}\right)}
    \leq\epsilon.
\end{equation}
However, if we know a priori that $W=E_B$ is also a Hamiltonian block encoding, and that $\norm{A},\norm{B}$ are sufficiently small, then $B$ is guaranteed to be an accurate operator approximation of $A$. This may be of use in applications such as ground energy estimation where one makes queries to the time evolution of normalized Hamiltonians.

\begin{lemma}[Perturbation of matrix exponential and matrix logarithm]
For matrices $J$ and $K$,
\begin{equation}
    \norm{e^{-iJ}-e^{-iK}}\leq\norm{J-K}.
\end{equation}
Conversely, if $\norm{J},\norm{K}\leq1-\xi<1$,
\begin{equation}
    \norm{J-K}\leq\frac{1}{\xi}\norm{e^{-iJ}-e^{-iK}}.
\end{equation}
\end{lemma}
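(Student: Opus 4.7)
The plan is to treat both directions from a single integral representation. I would differentiate the path $s \mapsto e^{-isJ} e^{-i(1-s)K}$ and integrate from $0$ to $1$ to obtain
\begin{equation*}
e^{-iJ} - e^{-iK} = -i \int_0^1 e^{-isJ} (J-K)\, e^{-i(1-s)K} \, ds.
\end{equation*}
Since the matrices appearing in the statement are Hermitian (as is the case for the exponents $\left[\begin{smallmatrix} 0 & A^\dagger \\ A & 0 \end{smallmatrix}\right]$ of Hamiltonian block encodings throughout the paper), each factor $e^{-isJ}$ and $e^{-i(1-s)K}$ is unitary and hence has spectral norm one, so the forward direction $\|e^{-iJ}-e^{-iK}\| \leq \|J-K\|$ is immediate from the triangle inequality applied under the integral.

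For the converse, I would rearrange the same identity as
\begin{equation*}
e^{-iJ} - e^{-iK} + i(J-K) = -i \int_0^1 \left[ e^{-isJ} (J-K)\, e^{-i(1-s)K} - (J-K) \right] ds,
\end{equation*}
so that the right-hand side measures how far $e^{-iJ}-e^{-iK}$ deviates from its linearization $-i(J-K)$. Writing $\Delta := J-K$, I would then split the integrand via the telescoping identity
\begin{equation*}
e^{-isJ}\Delta\, e^{-i(1-s)K} - \Delta = (e^{-isJ} - I)\Delta\, e^{-i(1-s)K} + \Delta\,(e^{-i(1-s)K} - I)
\end{equation*}
and apply the elementary bound $\|e^{-isJ} - I\| \leq s\|J\|$ valid for Hermitian $J$, which follows from $|e^{-i\lambda}-1|\leq|\lambda|$ applied eigenvalue-by-eigenvalue. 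Under the hypothesis $\|J\|,\|K\|\leq 1-\xi$, combining these estimates yields
\begin{equation*}
\|e^{-isJ}\Delta\, e^{-i(1-s)K} - \Delta\| \leq (s\|J\| + (1-s)\|K\|)\|\Delta\| \leq (1-\xi)\|\Delta\|
\end{equation*}
pointwise in $s$, so integrating gives $\|e^{-iJ}-e^{-iK}+i\Delta\| \leq (1-\xi)\|\Delta\|$.

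The claim then follows from the reverse triangle inequality $\|\Delta\| - \|e^{-iJ}-e^{-iK}\| \leq \|e^{-iJ}-e^{-iK}+i\Delta\|$, producing $\xi\|\Delta\| \leq \|e^{-iJ}-e^{-iK}\|$ and hence the stated bound with constant $\xi^{-1}$. I expect the main subtlety to lie in the linear-in-$\|J\|$ estimate $\|e^{-isJ}-I\|\leq s\|J\|$: the crude power-series bound $e^{s\|J\|}-1$ available for arbitrary matrices would degrade the constant in an uncontrolled way and would not match the clean $\xi^{-1}$ in the statement. Leveraging Hermiticity exactly at this step is what produces the correct factor and aligns with the Hamiltonian block encoding setting used throughout.
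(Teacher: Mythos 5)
Your proof is correct, and it takes a genuinely different route from the paper's. For the forward direction both arguments coincide up to rearranging the same Duhamel-type integral. For the converse, the paper switches to the matrix \emph{logarithm}: writing $C=e^{-iJ}$ and $D=e^{-iK}$, it invokes the integral representation $\ln(C)-\ln(D)=\int_{0}^{\infty}(tI+C)^{-1}(C-D)(tI+D)^{-1}\,\mathrm{d}t$ (after verifying via $\norm{I-C}\leq\norm{J}<1$ that $C$ and $D$ avoid the branch cut $(-\infty,0]$), bounds each resolvent by $\norm{(tI+C)^{-1}}\leq 1/(t+\xi)$ using a Neumann series, and integrates $\int_0^\infty (t+\xi)^{-2}\,\mathrm{d}t=1/\xi$. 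Your argument stays on the exponential side: you measure how far $e^{-iJ}-e^{-iK}$ strays from its linearization $-i(J-K)$, show via the telescoping split and $\norm{e^{-isJ}-I}\leq s\norm{J}$ that the discrepancy is at most $(1-\xi)\norm{J-K}$, and then the reverse triangle inequality converts that smallness into the required lower bound. Your route is more self-contained --- it avoids the matrix logarithm and its domain considerations entirely and uses only the exponential-difference formula plus an elementary eigenvalue inequality --- while achieving the same sharp constant $1/\xi$. What the paper's route buys is a freestanding perturbation bound for $\ln$ that could be reused elsewhere, whereas your argument is tailored to this one conclusion. You are also right that Hermiticity is load-bearing at $\norm{e^{-isJ}-I}\leq s\norm{J}$; the paper leans on Hermiticity just as much, both in rewriting $\norm{e^{-iJ}-e^{-iK}}=\norm{I-e^{iJ}e^{-iK}}$ (forward direction) and in the estimate $\norm{I-C}\leq\norm{J}$ (converse), so the lemma is implicitly about Hermitian matrices even though the hypothesis says only ``matrices.''
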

\begin{proof}
The first claim follows from a standard estimate~\cite[Corollary 5]{CSTWZ19}:
\begin{equation}
    \norm{e^{-iJ}-e^{-iK}}
    =\norm{I-e^{iJ}e^{-iK}}
    =\norm{\int_{0}^{1}\mathrm{d}t\ e^{itJ}i(J-K)e^{-itK}}
    \leq\norm{J-K}.
\end{equation}

For the second claim, we start with the integral representation
\begin{equation}
    \ln(C)-\ln(D)
    =\int_{0}^{+\infty}\mathrm{d}t\
    (tI+C)^{-1}
    (C-D)
    (tI+D)^{-1}
\end{equation}
and the bound
\begin{equation}
    \norm{\ln(C)-\ln(D)}\leq\norm{C-D}\int_{0}^{+\infty}\mathrm{d}t\norm{(tI+C)^{-1}}\norm{(tI+D)^{-1}},
\end{equation}
which hold as long as $C$ and $D$ have no eigenvalues within $(-\infty,0]$~\cite[Eqs.\ (4.8) and (4.9)]{Gil2012}. To apply this for $C=e^{-iJ}$ and $D=e^{-iK}$, it suffices to require that $\norm{J},\norm{K}\leq1-\xi<1$. Indeed, 
\begin{equation}
    \norm{I-C}=\norm{I-e^{-iJ}}\leq\norm{J}\leq1-\xi<1,\qquad
    \norm{I-D}=\norm{I-e^{-iK}}\leq\norm{K}\leq1-\xi<1,
\end{equation}
which implies $\norm{\lambda I-C}\geq\abs{\lambda-1}-\norm{I-C}>0$, $\norm{\lambda I-D}\geq\abs{\lambda-1}-\norm{I-D}>0$ when $\lambda\leq0$.

Now, we have
\begin{equation}
\begin{aligned}
    \norm{(tI+C)^{-1}}
    &=\frac{1}{t+1}\norm{\frac{1}{I+\frac{C-I}{t+1}}}
    \leq\frac{1}{t+1}\left(1+\frac{1-\xi}{t+1}+\left(\frac{1-\xi}{t+1}\right)^2+\cdots\right)\\
    &=\frac{1}{t+1}\frac{1}{1-\frac{1-\xi}{t+1}}
    =\frac{1}{t+\xi},
\end{aligned}
\end{equation}
and similarly,
\begin{equation}
    \norm{(tI+D)^{-1}}\leq\frac{1}{t+\xi}.
\end{equation}
This gives the claimed bound as $\int_{0}^{+\infty}\mathrm{d}t\frac{1}{(t+\xi)^2}=\frac{1}{\xi}$.
\end{proof}

\begin{corollary}[Error propagation]
Let $A$ and $B$ be matrices encoded by the Hamiltonian block encodings $E_A=\exp\left(-i\left[\begin{smallmatrix}
    0 & A^\dagger\\
    A & 0
\end{smallmatrix}\right]\right)$ and $E_B=\exp\left(-i\left[\begin{smallmatrix}
    0 & B^\dagger\\
    B & 0
\end{smallmatrix}\right]\right)$. It holds that
\begin{equation}
    \norm{E_A-E_B}\leq\norm{A-B}.
\end{equation}
Conversely, if $\norm{A},\norm{B}<1$,
\begin{equation}
    \norm{A-B}=\mathbf{O}\left(\norm{E_A-E_B}\right).
\end{equation}
\end{corollary}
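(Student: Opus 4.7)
The plan is to apply the preceding perturbation lemma directly to the Hermitian dilations appearing inside $E_A$ and $E_B$. Specifically, set
\[
J = \begin{bmatrix} 0 & A^\dagger \\ A & 0 \end{bmatrix}, \qquad K = \begin{bmatrix} 0 & B^\dagger \\ B & 0 \end{bmatrix},
\]
so that $E_A = e^{-iJ}$ and $E_B = e^{-iK}$ by definition of the Hamiltonian block encoding.

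The essential observation is that the Hermitian dilation is norm-preserving: $\norm{J} = \norm{A}$, and analogously $\norm{K} = \norm{B}$, and $\norm{J-K} = \norm{A-B}$. This follows because the dilation of a matrix $M$ has singular values coinciding with those of $M$ (each doubled in multiplicity), or more elementarily because $\left\|\bigl[\begin{smallmatrix} 0 & M^\dagger \\ M & 0\end{smallmatrix}\bigr]\right\|^2 = \left\|\bigl[\begin{smallmatrix} 0 & M^\dagger \\ M & 0\end{smallmatrix}\bigr]^2\right\| = \left\|\bigl[\begin{smallmatrix} MM^\dagger & 0 \\ 0 & M^\dagger M\end{smallmatrix}\bigr]\right\| = \norm{M}^2$, and the same identity applied to $J-K$ yields $\norm{J-K}=\norm{A-B}$.

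With these identifications in hand, the forward direction $\norm{E_A - E_B} \leq \norm{A-B}$ is immediate from the first half of the lemma. For the converse, under the hypothesis $\norm{A}, \norm{B} < 1$ we pick $\xi = 1 - \max(\norm{A},\norm{B}) > 0$, so that $\norm{J}, \norm{K} \leq 1 - \xi < 1$. The second half of the lemma then yields $\norm{A-B} = \norm{J-K} \leq \frac{1}{\xi}\norm{E_A - E_B}$, which absorbs the constant $\frac{1}{\xi}$ into the $\mathbf{O}(\cdot)$ notation.

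There is no real obstacle here; the only step that deserves a sentence is the norm-preservation of Hermitian dilation, and everything else is a direct substitution into the previous lemma.
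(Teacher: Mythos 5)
Your proof is correct and is exactly the argument the paper intends: the corollary is stated without proof immediately after the perturbation lemma, and the only ingredient needed to bridge them is the observation that the Hermitian dilation $M \mapsto \bigl[\begin{smallmatrix}0 & M^\dagger \\ M & 0\end{smallmatrix}\bigr]$ is an isometry for the spectral norm, which you verify via the squaring identity. Both directions then follow by direct substitution into the lemma, with $\xi = 1 - \max(\norm{A},\norm{B})$ supplying the implicit constant in the $\mathbf{O}(\cdot)$ for the converse.
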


\section{Hamiltonian matrix multiplication}
\label{sec:multiply}

In this section, we consider Hamiltonian-based matrix multiplication. We start by introducing Lie group commutator formulas and analyzing their performance in \sec{multiply_pf}. We then employ these formulas to multiply matrices within Hamiltonian block encoding, establishing our first main result~\thm{multiply} in \sec{multiply_multiply}. Finally, we describe simplified methods to multiply Hermitian and unitary operators in \sec{multiply_special}.

\subsection{Analysis of Lie group commutator product formulas}
\label{sec:multiply_pf}
Our method for multiplying generic matrices with Hamiltonian evolution relies on the Lie group commutator product formula and its higher-order generalizations~\cite{ChildsWiebe12}, which we now introduce.

Given Hermitian operators $J$ and $K$, the lowest order group commutator formula takes the form $e^{-i\tau J}e^{-i\tau K}e^{i\tau J}e^{i\tau K}$, which approximates $e^{-\tau^2[J,K]}$ when the evolution time $\tau$ is sufficiently small (assuming $\tau\geq0$ without loss of generality). Specifically, we show in~\append{lie_bch} that
\begin{equation}
    e^{-i\tau J}e^{-i\tau K}e^{i\tau J}e^{i\tau K}
    =\exp\left(-\tau^2[J,K]+\frac{i\tau^3}{2}[J,[J,K]]+\frac{i\tau^3}{2}[K,[J,K]]+\mathbf{O}\left(\tau^4\right)\right).
\end{equation}
Our goal is to prove the following concrete bound on $\norm{e^{-i\tau J}e^{-i\tau K}e^{i\tau J}e^{i\tau K}-e^{-\tau^2[J,K]}}$ that quantifies the approximation error.

\begin{lemma}[Commutator bound for second-order Lie group commutator formula]
\label{lem:commutator_tight}
Given Hermitian matrices $J$, $K$, and $\tau\geq0$,
\begin{equation}
    \norm{e^{-i\tau J}e^{-i\tau K}e^{i\tau J}e^{i\tau K}-e^{-\tau^2[J,K]}}\leq\frac{\tau^3}{2}\norm{[J,[J,K]]}+\frac{\tau^3}{2}\norm{[K,[K,J]]}.
\end{equation}
\end{lemma}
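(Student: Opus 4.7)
The plan is to match the Baker--Campbell--Hausdorff expansion to third order by inserting an intermediate exponential that captures exactly one of the two commutator terms, so that a single application of the triangle inequality splits the error into two pieces whose bounds precisely saturate at $\frac{\tau^3}{2}\norm{[J,[J,K]]}$ and $\frac{\tau^3}{2}\norm{[K,[K,J]]}$ respectively.

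First I would invoke the Hadamard identity $U e^{-i\tau K} U^\dagger = e^{-i\tau U K U^\dagger}$ with $U=e^{-i\tau J}$ to rewrite $e^{-i\tau J} e^{-i\tau K} e^{i\tau J} = e^{-i\tau K_\tau}$, where $K_\tau := e^{-i\tau J} K e^{i\tau J}$ is Hermitian. Using $\frac{d}{d\tau} K_\tau = -i[J, K_\tau]$ and iterating twice gives the remainder formula
\begin{equation}
    K_\tau = K - i\tau [J,K] + L_\tau, \qquad L_\tau = -\int_0^\tau \int_0^s e^{-iuJ}[J,[J,K]] e^{iuJ}\, du\, ds,
\end{equation}
so unitary invariance of the operator norm gives the clean estimate $\norm{L_\tau} \leq \frac{\tau^2}{2}\norm{[J,[J,K]]}$.

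Setting $P := -i\tau K - \tau^2 [J,K]$ and $Q := i\tau K$---both anti-Hermitian, since $[J,K]$ is anti-Hermitian---I would decompose
\begin{equation}
    e^{-i\tau J}e^{-i\tau K}e^{i\tau J}e^{i\tau K} - e^{-\tau^2[J,K]} = \bigl(e^{-i\tau K_\tau} - e^{P}\bigr) e^{i\tau K} + \bigl(e^{P} e^{Q} - e^{P+Q}\bigr).
\end{equation}
This is the single triangle inequality the bound is tight against. The first summand is controlled by the Duhamel identity $e^{X} - e^{Y} = \int_0^1 e^{sX}(X-Y) e^{(1-s)Y}\,ds$: both exponents are anti-Hermitian so all propagators are unitary, and after cancelling the $-i\tau K$ and $-\tau^2[J,K]$ contributions one finds $X - Y = -i\tau L_\tau$, yielding the bound $\frac{\tau^3}{2}\norm{[J,[J,K]]}$.

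For the second summand, I would interpolate via $F(s) := e^{sP}e^{sQ}e^{(1-s)(P+Q)}$, which joins $e^{P+Q}$ at $s=0$ to $e^{P}e^{Q}$ at $s=1$. A direct differentiation using $[P, e^{sP}]=0=[Q, e^{sQ}]$ collapses to $F'(s) = e^{sP}[P, e^{sQ}]\, e^{(1-s)(P+Q)}$, and the standard identity $[P, e^{sQ}] = e^{sQ}\int_0^s e^{-uQ}[P,Q] e^{uQ}\,du$ together with unitarity of all propagators yields $\norm{F'(s)} \leq s\norm{[P,Q]}$. Integrating, and using $[P,Q] = [-\tau^2[J,K],\, i\tau K] = -i\tau^3 [[J,K],K] = -i\tau^3 [K,[K,J]]$, produces $\norm{e^{P}e^{Q} - e^{P+Q}} \leq \frac{\tau^3}{2}\norm{[K,[K,J]]}$. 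Summing the two pieces gives the claimed bound. The main obstacle is arranging the intermediate exponent $P$ so that $X - Y$ in the first piece reduces precisely to $-i\tau L_\tau$ without leftover $\tau^2$ terms, and so that $[P, Q]$ in the second piece reduces exactly to one of the two BCH commutators without picking up additional factors; any looser split (for instance, estimating $K_\tau$ only to first order, or applying a generic Trotter error bound directly to the four-exponential product) would inflate one of the constants beyond $\frac{1}{2}$.
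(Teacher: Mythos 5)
Your proof is correct and arrives at exactly the same sharp constants, but the route is genuinely different from the paper's. The paper first passes, via unitary invariance, to the symmetric reformulation $\norm{e^{i\tau J}e^{i\tau K}e^{\frac{\tau^2}{2}[J,K]}-e^{i\tau K}e^{i\tau J}e^{-\frac{\tau^2}{2}[J,K]}}$ and then bounds the distance between the two trajectories by integrating the norm of the difference of their generators $R(s)$; the work there is in carefully unwinding nested integral remainders so that the first- and second-order terms cancel and only triple-commutator integrands survive, with several triangle inequalities applied termwise inside $\norm{R(s)}$ before the integral is evaluated. Your argument instead rewrites $e^{-i\tau J}e^{-i\tau K}e^{i\tau J}=e^{-i\tau K_\tau}$ by conjugation, inserts the intermediate anti-Hermitian operator $P=-i\tau K-\tau^2[J,K]$ (noting $P+Q=-\tau^2[J,K]$ with $Q=i\tau K$), and makes a single triangle-inequality split $\norm{e^{-i\tau K_\tau}-e^{P}}+\norm{e^{P}e^{Q}-e^{P+Q}}$. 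Each piece is then handled by a self-contained, standard estimate: the first via the integral representation $e^{X}-e^{Y}=\int_0^1 e^{sX}(X-Y)e^{(1-s)Y}\,ds$ together with the second-order Taylor remainder $L_\tau$ of $K_\tau$ (which gives $X-Y=-i\tau L_\tau$ exactly, with $\norm{L_\tau}\leq\frac{\tau^2}{2}\norm{[J,[J,K]]}$), and the second via the first-order Trotter error bound $\norm{e^{P}e^{Q}-e^{P+Q}}\leq\frac{1}{2}\norm{[P,Q]}$ using $[P,Q]=-i\tau^3[K,[K,J]]$. This modular structure makes the lone ``lossy'' triangle inequality explicit and isolates the two commutator contributions cleanly; the paper's symmetric reformulation, on the other hand, dovetails more naturally with the machinery it reuses to analyze higher-order group commutator formulas in~\append{lie_higher}.
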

\begin{proof}
Note that the spectral norm is invariant under multiplication by unitary, so it suffices to estimate
\begin{equation}
    \norm{e^{-i\tau J}e^{-i\tau K}e^{i\tau J}e^{i\tau K}-e^{-\tau^2[J,K]}}
    =\norm{e^{i\tau J}e^{i\tau K}e^{\frac{\tau^2}{2}[J,K]}
    -e^{i\tau K}e^{i\tau J}e^{-\frac{\tau^2}{2}[J,K]}}.
\end{equation}
Observe that
\begin{equation}
    \frac{\mathrm{d}}{\mathrm{d}\tau}\left(e^{i\tau J}e^{i\tau K}e^{\frac{\tau^2}{2}[J,K]}\right)
    =\left(iJ+e^{i\tau J}iKe^{-i\tau J}+e^{i\tau J}e^{i\tau K}\tau [J,K]e^{-i\tau K}e^{-i\tau J}\right)\left(e^{i\tau J}e^{i\tau K}e^{\frac{\tau^2}{2}[J,K]}\right),
\end{equation}
and
\begin{equation}
    \frac{\mathrm{d}}{\mathrm{d}\tau}\left(e^{i\tau K}e^{i\tau J}e^{-\frac{\tau^2}{2}[J,K]}\right)
    =\left(iK+e^{i\tau K}iJe^{-i\tau K}-e^{i\tau K}e^{i\tau J}\tau [J,K]e^{-i\tau J}e^{-i\tau K}\right)\left(e^{i\tau K}e^{i\tau J}e^{-\frac{\tau^2}{2}[J,K]}\right).
\end{equation}
Thus, the distance between evolutions can be bounded by the generators as~\cite[Corollary 5]{CSTWZ19}:
\begin{equation}
    \norm{e^{i\tau J}e^{i\tau K}e^{\frac{\tau^2}{2}[J,K]}
    -e^{i\tau K}e^{i\tau J}e^{-\frac{\tau^2}{2}[J,K]}}
    \leq\int_0^\tau\mathrm{d}s\norm{R(s)},
\end{equation}
where
\begin{equation}
\begin{aligned}
    R(s)&=\left(iJ+e^{is J}(iK)e^{-is J}+e^{is J}e^{is K}(s [J,K])e^{-is K}e^{-is J}\right)\\
    &\quad-\left(iK+e^{is K}(iJ)e^{-is K}+e^{is K}e^{is J}(s[K,J])e^{-is J}e^{-is K}\right).
\end{aligned}
\end{equation}

At this point, one may proceed as~\cite{CSTWZ19} to Taylor expand each term to second order with the remainder in the integral form. However, doing so would produce a loose error bound with larger prefactors. We now explain how to expand $R(s)$ more carefully to avoid introducing redundant terms.
Let us start with the first line. First we expand the third term as
\begin{equation}
\begin{aligned}
    e^{is K}(s [J,K])e^{-is K}
    &=s [J,K]+s\int_0^s\mathrm{d}u\ e^{iu K}[iK, [J,K]]e^{-iu K}.\\
\end{aligned}
\end{equation}
When substituted back to $R(s)$, the integral remainder cannot be canceled, while
\begin{equation}
    e^{isJ}s[J,K]e^{-isJ}
    =s[J,K]+s\int_0^s\mathrm{d}u\ e^{iuJ}[iJ,[J,K]]e^{-iuJ}.
\end{equation}
Now for the second term in $R$, we have
\begin{equation}
\begin{aligned}
    e^{is J}iKe^{-is J}
    &=iK+\int_0^s\mathrm{d}u\ e^{iu J}[iJ,iK]e^{-iu J}\\
    &=iK+\int_0^s\mathrm{d}u\left(\int_0^u\mathrm{d}v\ e^{iv J}[iJ,[iJ,iK]]e^{-iv J}+[iJ,iK]\right)\\
    &=iK+s[iJ,iK]+\int_0^s\mathrm{d}v\int_v^s\mathrm{d}u\ e^{iv J}[iJ,[iJ,iK]]e^{-iv J}\\
    &=iK+s[iJ,iK]+s\int_0^s\mathrm{d}v\ e^{iv J}[iJ,[iJ,iK]]e^{-iv J}-\int_0^s\mathrm{d}v\ ve^{iv J}[iJ,[iJ,iK]]e^{-iv J}.\\
\end{aligned}
\end{equation}
Altogether, this gives
\CancelColor{\color{blue}}
\begin{equation}
\begin{aligned}
    &e^{is J}(iK)e^{-is J}+e^{is J}e^{is K}(s [J,K])e^{-is K}e^{-is J}\\
    &=iK+\Ccancel[blue]{s[iJ,iK]}+\Ccancel[blue]{s\int_0^s\mathrm{d}v\ e^{iv J}[iJ,[iJ,iK]]e^{-iv J}}-\int_0^s\mathrm{d}v\ ve^{iv J}[iJ,[iJ,iK]]e^{-iv J}\\
    &\quad+\Ccancel[blue]{s[J,K]}+\Ccancel[blue]{s\int_0^s\mathrm{d}u\ e^{iuJ}[iJ,[J,K]]e^{-iuJ}}
    +se^{isJ}\int_0^s\mathrm{d}u\ e^{iu K}[iK, [J,K]]e^{-iu K}e^{-isJ}\\
    &=iK-\int_0^s\mathrm{d}v\ ve^{iv J}[iJ,[iJ,iK]]e^{-iv J}+se^{isJ}\int_0^s\mathrm{d}u\ e^{iu K}[iK, [J,K]]e^{-iu K}e^{-isJ}.
\end{aligned}
\end{equation}
Similarly, for the second line of $R(s)$,
\begin{equation}
\begin{aligned}
    &e^{is K}(iJ)e^{-is K}+e^{is K}e^{is J}(s [K,J])e^{-is J}e^{-is K}\\
    &=iJ-\int_0^s\mathrm{d}v\ ve^{iv K}[iK,[iK,iJ]]e^{-iv K}+se^{isK}\int_0^s\mathrm{d}u\ e^{iu J}[iJ, [K,J]]e^{-iu J}e^{-isK}.
\end{aligned}
\end{equation}

We have now represented $R$ as
\begin{equation}
\begin{aligned}
    R(s)
    &=\left(iJ+iK-\int_0^s\mathrm{d}v\ ve^{iv J}[iJ,[iJ,iK]]e^{-iv J}+se^{isJ}\int_0^s\mathrm{d}u\ e^{iu K}[iK, [J,K]]e^{-iu K}e^{-isJ}\right)\\
    &\quad-\left(iK+iJ-\int_0^s\mathrm{d}v\ ve^{iv K}[iK,[iK,iJ]]e^{-iv K}+se^{isK}\int_0^s\mathrm{d}u\ e^{iu J}[iJ, [K,J]]e^{-iu J}e^{-isK}\right)\\
    &=-\int_0^s\mathrm{d}v\ ve^{iv J}[iJ,[iJ,iK]]e^{-iv J}+se^{isJ}\int_0^s\mathrm{d}u\ e^{iu K}[iK, [J,K]]e^{-iu K}e^{-isJ}\\
    &\quad+\int_0^s\mathrm{d}v\ ve^{iv K}[iK,[iK,iJ]]e^{-iv K}-se^{isK}\int_0^s\mathrm{d}u\ e^{iu J}[iJ, [K,J]]e^{-iu J}e^{-isK},\\
\end{aligned}
\end{equation}
which implies
\begin{equation}
    \norm{R(s)}\leq \frac{s^2}{2}\norm{[J,[J,K]]}+s^2\norm{[K,[J,K]]}
    +\frac{s^2}{2}\norm{[K,[K,J]]}+s^2\norm{[J,[K,J]]},
\end{equation}
and finally
\begin{equation}
    \norm{e^{-i\tau J}e^{-i\tau K}e^{i\tau J}e^{i\tau K}-e^{-\tau^2[J,K]}}
    \leq\int_0^\tau\mathrm{d}s\norm{R(s)}
    \leq\frac{\tau^3}{2}\norm{[J,[J,K]]}+\frac{\tau^3}{2}\norm{[K,[K,J]]}.
\end{equation}
\end{proof}

Note that our error bound $\norm{e^{-i\tau J}e^{-i\tau K}e^{i\tau J}e^{i\tau K}-e^{-\tau^2[J,K]}}
\leq\frac{\tau^3}{2}\norm{[J,[J,K]]}+\frac{\tau^3}{2}\norm{[K,[K,J]]}$ matches the third-order terms of the BCH expansion, and is thus provably tight up to a single application of the triangle inequality. In particular, this tightens a recent estimate~\cite[Lemma 9]{Gluza2024doublebracket} which instead reads $\norm{e^{-i\tau J}e^{-i\tau K}e^{i\tau J}e^{i\tau K}-e^{-\tau^2[J,K]}}
\leq\tau^3\norm{[J,[J,K]]}+\tau^3\norm{[K,[K,J]]}$.

The above approximation holds in the limit where $\tau\rightarrow0$. To evolve for a longer time $t$, we divide the evolution into $r$ steps and set $\tau=\sqrt{\frac{t}{r}}$. Denoting $M_2(\tau)=e^{-i\tau J}e^{-i\tau K}e^{i\tau J}e^{i\tau K}$, we have
\begin{equation}
    \norm{M_{2}\left(\sqrt{\frac{t}{r}}\right)-e^{-\frac{t}{r}[J,K]}}
    \leq\frac{t^{\frac{3}{2}}}{2r^{\frac{3}{2}}}\left(\norm{[J,[J,K]]}+\norm{[J,[J,K]]}\right),
\end{equation}
which implies that
\begin{equation}
    \norm{M_{2}^r\left(\sqrt{\frac{t}{r}}\right)-e^{-t[J,K]}}
    \leq r\norm{M_{2}\left(\sqrt{\frac{t}{r}}\right)-e^{-\frac{t}{r}[J,K]}}
    \leq\frac{t^{\frac{3}{2}}}{2r^{\frac{1}{2}}}\left(\norm{[J,[J,K]]}+\norm{[J,[J,K]]}\right).
\end{equation}
Hence to ensure that the error is at most $\epsilon$, it suffices to choose
\begin{equation}
    r=\left\lceil\frac{t^{3}}{4\epsilon^{2}}\left(\norm{[J,[J,K]]}+\norm{[J,[J,K]]}\right)^2\right\rceil.
\end{equation}
There is a subtlety regarding the sign of the evolution time $t$. The above analysis handles the case where $t\geq0$ is nonnegative. When $t<0$, one cannot apply the formula directly because matrix exponentials with $\sqrt{\frac{t}{r}}$ would lead to imaginary-time evolution. Instead, we should take the Hermitian conjugation.

\begin{corollary}[Step number for second-order Lie group commutator formula]
Let $J$, $K$ be Hermitian matrices, and $M_2(\tau)=e^{-i\tau J}e^{-i\tau K}e^{i\tau J}e^{i\tau K}$ be the second-order Lie group commutator product formula. For any evolution time $t\geq0$, the approximations
\begin{equation}
    \norm{M_{2}^r\left(\sqrt{\frac{t}{r}}\right)-e^{-t[J,K]}}\leq\epsilon,\qquad
    \norm{M_{2}^{\dagger r}\left(\sqrt{\frac{t}{r}}\right)-e^{t[J,K]}}\leq\epsilon
\end{equation}
can be achieved with accuracy $\epsilon$ by choosing
\begin{equation}
    r=\left\lceil\frac{t^{3}}{4\epsilon^{2}}\left(\norm{[J,[J,K]]}+\norm{[J,[J,K]]}\right)^2\right\rceil.
\end{equation}
\end{corollary}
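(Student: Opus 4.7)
The plan is to apply the single-step bound from \lem{commutator_tight} and then combine $r$ copies via a telescoping argument; the calculation is essentially spelled out in the paragraph preceding the corollary, so what follows is just a matter of assembling it cleanly.

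First I would invoke \lem{commutator_tight} with evolution parameter $\sqrt{t/r}\geq 0$ to obtain the per-step estimate
\begin{equation}
    \norm{M_{2}\!\left(\sqrt{\tfrac{t}{r}}\right)-e^{-\frac{t}{r}[J,K]}}
    \leq \frac{t^{3/2}}{2\,r^{3/2}}\Bigl(\norm{[J,[J,K]]}+\norm{[K,[K,J]]}\Bigr).
\end{equation}
Next, I would use the standard telescoping identity
\begin{equation}
    M_2^r(\sqrt{t/r})-e^{-t[J,K]}
    =\sum_{j=0}^{r-1} M_2^{\,r-1-j}(\sqrt{t/r})\bigl(M_2(\sqrt{t/r})-e^{-\frac{t}{r}[J,K]}\bigr)e^{-\frac{j t}{r}[J,K]},
\end{equation}
and since $M_2(\sqrt{t/r})$ and $e^{-\frac{t}{r}[J,K]}$ are both unitary (the latter because $i[J,K]$ is Hermitian when $J,K$ are Hermitian), the triangle inequality gives
\begin{equation}
    \norm{M_2^r(\sqrt{t/r})-e^{-t[J,K]}}
    \leq r\,\norm{M_2(\sqrt{t/r})-e^{-\frac{t}{r}[J,K]}}
    \leq \frac{t^{3/2}}{2\,r^{1/2}}\Bigl(\norm{[J,[J,K]]}+\norm{[K,[K,J]]}\Bigr).
\end{equation}
Setting the right-hand side to be at most $\epsilon$ and solving for $r$ yields the stated choice $r=\lceil \tfrac{t^3}{4\epsilon^2}(\norm{[J,[J,K]]}+\norm{[K,[K,J]]})^2\rceil$, completing the bound for the first approximation.

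For the second approximation, involving $M_2^{\dagger r}$ and the reverse-sign evolution $e^{+t[J,K]}$, I would take the adjoint of the previous bound. Since $\norm{\cdot}$ is invariant under adjunction and $e^{-t[J,K]\dagger}=e^{+t[J,K]}$ (as $i[J,K]$ is Hermitian), applying $\dagger$ to both sides of the first inequality immediately gives
\begin{equation}
    \norm{M_2^{\dagger r}(\sqrt{t/r})-e^{+t[J,K]}}
    = \norm{M_2^{r}(\sqrt{t/r})-e^{-t[J,K]}}\leq\epsilon,
\end{equation}
with the same choice of $r$. There is no real obstacle here beyond being careful that $t\geq 0$ so that $\sqrt{t/r}$ is real (ensuring $M_2(\sqrt{t/r})$ is a genuine product of unitaries); the negative-$t$ case, as noted in the paragraph above, is precisely what the second approximation handles via Hermitian conjugation.
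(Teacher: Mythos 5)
Your proof is correct and follows the paper's argument exactly: invoke \lem{commutator_tight} per step with $\tau=\sqrt{t/r}$, accumulate error linearly over $r$ steps via unitarity and the triangle inequality (your telescoping identity just makes explicit what the paper leaves implicit), solve for $r$, and obtain the second bound by taking adjoints — precisely the Hermitian-conjugation device the paper mentions for the sign subtlety. As a small aside, you have implicitly corrected a typo in the corollary's displayed formula, which writes $\norm{[J,[J,K]]}+\norm{[J,[J,K]]}$ where it should read $\norm{[J,[J,K]]}+\norm{[K,[K,J]]}$ as in \lem{commutator_tight}.
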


The above asymptotic scaling can be systematically improved using higher-order group commutator formulas, such as the ones developed by Childs and Wiebe~\cite{ChildsWiebe12}. These formulas are characterized by the order condition $M_{p}(\tau)=e^{-\tau^2[J,K]}+\mathbf{O}\left(\tau^{p+1}\right)$ for $p\in\mathbb{Z}_{\geq2}$, which lead to 
\begin{equation}
    \norm{M_{p}^r\left(\sqrt{\frac{t}{r}}\right)-e^{-t[J,K]}}\leq\epsilon,\qquad
    \norm{M_{p}^{\dagger r}\left(\sqrt{\frac{t}{r}}\right)-e^{t[J,K]}}\leq\epsilon
\end{equation}
by choosing
\begin{equation}
    r=\mathbf{O}\left(\frac{\alpha_{\text{comm}}^{2+4/(p-1)}t^{1+2/(p-1)}}{\epsilon^{2/(p-1)}}\right),\qquad
    \alpha_{\text{comm}}=\left(\sum_{j_1,\ldots,j_{p+1}=1}^{2}\norm{\left[H_{j_{p+1}},\ldots,\left[H_{j_2},H_{j_1}\right]\right]}\right)^{\frac{1}{p+1}},
\end{equation}
with $H_1=J$ and $H_2=K$. We provide further analysis of these higher-order formulas in~\append{lie_higher}.

\subsection{Matrix multiplication}
\label{sec:multiply_multiply}
We now explain how to perform matrix multiplication with Hamiltonian block encodings. Specifically, suppose $E_{A}=\exp\left(-i\left[\begin{smallmatrix}
    0 & A^\dagger\\
    A & 0
\end{smallmatrix}\right]\right)$ and $E_{ B}=\exp\left(-i\left[\begin{smallmatrix}
    0 & B^\dagger\\
    B & 0
\end{smallmatrix}\right]\right)$ encode the input operators $A$ and $B$ respectively. Our goal is to produce $E_{AB}=\exp\left(-i\left[\begin{smallmatrix}
    0 & (AB)^\dagger\\
    AB & 0
\end{smallmatrix}\right]\right)$ which is a Hamiltonian block encoding of $AB$.

To this end, we first transform the Hamiltonian block encodings into
\begin{equation}
    \exp\left(-i\tau
    \begin{bmatrix}
        0 & A^\dagger & 0 & 0\\
        A & 0 & 0 & 0\\
        0 & 0 & 0 & 0\\
        0 & 0 & 0 & 0
    \end{bmatrix}\right),\qquad
    \exp\left(-i\tau
    \begin{bmatrix}
        0 & 0 & B & 0\\
        0 & 0 & 0 & 0\\
        B^\dagger & 0 & 0 & 0\\
        0 & 0 & 0 & 0
    \end{bmatrix}\right).
\end{equation}
Here, the operator $A$ remains in the standard form except that we introduce an additional ancilla qubit and perform a controlled version of Hamiltonian block encoding. As for $B$, we consider its controlled Hamiltonian block encoding and perform
\begin{equation}
\begin{aligned}
    &\left(\mathrm{SWAP}\otimes I\right)
    \left(I\otimes X\otimes I\right)
    \exp\left(-i\tau\begin{bmatrix}
        0 & B^\dagger & 0 & 0\\
        B & 0 & 0 & 0\\
        0 & 0 & 0 & 0\\
        0 & 0 & 0 & 0
    \end{bmatrix}\right)
    \left(I\otimes X\otimes I\right)
    \left(\mathrm{SWAP}\otimes I\right)\\
    &=\exp\left(-i\tau
    \begin{bmatrix}
        0 & 0 & B & 0\\
        0 & 0 & 0 & 0\\
        B^\dagger & 0 & 0 & 0\\
        0 & 0 & 0 & 0
    \end{bmatrix}\right).
\end{aligned}
\end{equation}

Observe that 
\begin{equation}
    \begin{bmatrix}
        0 & A^\dagger & 0\\
        A & 0 & 0\\
        0 & 0 & 0
    \end{bmatrix}
    \begin{bmatrix}
        0 & 0 & B\\
        0 & 0 & 0\\
        B^\dagger & 0 & 0
    \end{bmatrix}
    -\begin{bmatrix}
        0 & 0 & B\\
        0 & 0 & 0\\
        B^\dagger & 0 & 0
    \end{bmatrix}
    \begin{bmatrix}
        0 & A^\dagger & 0\\
        A & 0 & 0\\
        0 & 0 & 0
    \end{bmatrix}
    =\begin{bmatrix}
        0 & 0 & 0\\
        0 & 0 & AB\\
        0 & -B^\dagger A^\dagger & 0
    \end{bmatrix}.
\end{equation}
Hence, the second-order Lie group commutator product formula yields
\begin{footnotesize}
\newmaketag
\begin{align}
    &\exp\left(-i\tau
    \begin{bmatrix}
        0 & A^\dagger & 0 & 0\\
        A & 0 & 0 & 0\\
        0 & 0 & 0 & 0\\
        0 & 0 & 0 & 0
    \end{bmatrix}\right)
    \exp\left(-i\tau
    \begin{bmatrix}
        0 & 0 & B & 0\\
        0 & 0 & 0 & 0\\
        B^\dagger & 0 & 0 & 0\\
        0 & 0 & 0 & 0
    \end{bmatrix}\right)
    \exp\left(i\tau
    \begin{bmatrix}
        0 & A^\dagger & 0 & 0\\
        A & 0 & 0 & 0\\
        0 & 0 & 0 & 0\\
        0 & 0 & 0 & 0
    \end{bmatrix}\right)
    \exp\left(i\tau
    \begin{bmatrix}
        0 & 0 & B & 0\\
        0 & 0 & 0 & 0\\
        B^\dagger & 0 & 0 & 0\\
        0 & 0 & 0 & 0
    \end{bmatrix}\right)\notag\\
    &=
    \exp\left(-i\tau^2
    \begin{bmatrix}
        0 & 0 & 0 & 0\\
        0 & 0 & -iAB & 0\\
        0 & i(AB)^\dagger & 0 & 0\\
        0 & 0 & 0 & 0
    \end{bmatrix}\right)
    +\mathbf{O}\left(\tau^3\left(\norm{A}+\norm{B}\right)^3\right)
\end{align}
\end{footnotesize}%
for $\tau$ sufficiently small. As is explained in the previous subsection, the approximation accuracy can be systematically improved using higher-order group commutator formulas.

It remains to transform the output Hamiltonian block encoding back to the standard form. This can be achieved via the unitary conjugation
\begin{equation}
\begin{aligned}
    &\left(I\otimes XS^\dagger\otimes I\right)\left(X_{\mathbb{C}^2\otimes\mathbb{C}^2}^\dagger\otimes I\right)
    \exp\left(-i\tau^2
    \begin{bmatrix}
        0 & 0 & 0 & 0\\
        0 & 0 & -iAB & 0\\
        0 & i(AB)^\dagger & 0 & 0\\
        0 & 0 & 0 & 0
    \end{bmatrix}\right)
    \left(X_{\mathbb{C}^2\otimes\mathbb{C}^2}\otimes I\right)\left(I\otimes SX\otimes I\right)\\
    &=\exp\left(-i\tau^2
    \begin{bmatrix}
        0 & (AB)^\dagger & 0 & 0\\
        AB & 0 & 0 & 0\\
        0 & 0 & 0 & 0\\
        0 & 0 & 0 & 0
    \end{bmatrix}\right),
\end{aligned}
\end{equation}
where the cyclic shift operation $X_{\mathbb{C}^2\otimes\mathbb{C}^2}:\ket{00}\mapsto\ket{01}\mapsto\ket{10}\mapsto\ket{11}\mapsto\ket{00}$ can be further decomposed into
\begin{equation}
    X_{\mathbb{C}^2\otimes\mathbb{C}^2}
    =\mathrm{SWAP}\left(I\otimes X\right)\mathrm{CNOT}_{2,1}\left(I\otimes X\right)\mathrm{CNOT}_{1,2}
\end{equation}
for $\mathrm{CNOT}_{j,k}$ the CNOT gate with control qubit $j$ and target qubit $k$. Let us simplify this circuit as follows. Up to a global phase,
\begin{equation}
\begin{aligned}
    -iX_{\mathbb{C}^2\otimes\mathbb{C}^2}\left(I\otimes SX\right)
    &= X_{\mathbb{C}^2\otimes\mathbb{C}^2}\left(I\otimes XS^\dagger\right)\\
    &=\mathrm{SWAP}\left(I\otimes X\right)\mathrm{CNOT}_{2,1}\left(I\otimes X\right)\mathrm{CNOT}_{1,2}\left(I\otimes XS^\dagger\right)\\
    &=\mathrm{SWAP}\left(I\otimes X\right)\mathrm{CNOT}_{2,1}\mathrm{CNOT}_{1,2}\left(I\otimes S^\dagger\right)\\
    &=\begin{bmatrix}
        1 & 0 & 0 & 0\\
        0 & 0 & 1 & 0\\
        0 & 1 & 0 & 0\\
        0 & 0 & 0 & 1\\
    \end{bmatrix}
    \begin{bmatrix}
        0 & 1 & 0 & 0\\
        1 & 0 & 0 & 0\\
        0 & 0 & 0 & 1\\
        0 & 0 & 1 & 0
    \end{bmatrix}
    \begin{bmatrix}
        1 & 0 & 0 & 0\\
        0 & 0 & 0 & 1\\
        0 & 0 & 1 & 0\\
        0 & 1 & 0 & 0\\
    \end{bmatrix}
    \begin{bmatrix}
        1 & 0 & 0 & 0\\
        0 & 1 & 0 & 0\\
        0 & 0 & 0 & 1\\
        0 & 0 & 1 & 0\\
    \end{bmatrix}
    \left(I\otimes S^\dagger\right)\\
    &=\begin{bmatrix}
        0 & 0 & 1 & 0\\
        0 & 1 & 0 & 0\\
        1 & 0 & 0 & 0\\
        0 & 0 & 0 & 1
    \end{bmatrix}
    \left(I\otimes S^\dagger\right)\\
    &=\left(I\otimes\ketbra{1}{1}+X\otimes\ketbra{0}{0}\right)
    \left(I\otimes S^\dagger\right)
    =\mathrm{CNOT}_{2,1}\left(X\otimes S^\dagger\right).
\end{aligned}
\end{equation}
Hence,
\begin{equation}
\begin{aligned}
    &\left(X\otimes S\otimes I\right)\left(\mathrm{CNOT}_{2,1}\otimes I\right)
    \exp\left(-i\tau^2
    \begin{bmatrix}
        0 & 0 & 0 & 0\\
        0 & 0 & -iAB & 0\\
        0 & i(AB)^\dagger & 0 & 0\\
        0 & 0 & 0 & 0
    \end{bmatrix}\right)
    \left(\mathrm{CNOT}_{2,1}\otimes I\right)\left(X\otimes S^\dagger\otimes I\right)\\
    &=\exp\left(-i\tau^2
    \begin{bmatrix}
        0 & (AB)^\dagger & 0 & 0\\
        AB & 0 & 0 & 0\\
        0 & 0 & 0 & 0\\
        0 & 0 & 0 & 0
    \end{bmatrix}\right).
\end{aligned}
\end{equation}
Combining with the fractional scaling technique of~\cor{frac_scale}, we obtain:

\begin{theorem}[Matrix multiplication]
\label{thm:multiply}
Let $A$ and $B$ be matrices encoded by the Hamiltonian block encodings $E_{A}=\exp\left(-i\left[\begin{smallmatrix}
    0 & A^\dagger\\
    A & 0
\end{smallmatrix}\right]\right)$ and $E_{ B}=\exp\left(-i\left[\begin{smallmatrix}
    0 & B^\dagger\\
    B & 0
\end{smallmatrix}\right]\right)$ with $\norm{A},\norm{B}<\frac{\pi}{2}$. Then the controlled version of Hamiltonian block encoding
\begin{equation}
    E_{AB}=\exp\left(-i
    \begin{bmatrix}
        0 & (AB)^\dagger\\
        AB & 0
    \end{bmatrix}\right)
\end{equation}
can be constructed with accuracy $\epsilon$ using
\begin{equation}
    r=\frac{1}{\epsilon^{o(1)}}
\end{equation}
queries to controlled $E_{A}$ and $E_{B}$.
See~\fig{multiply} for the corresponding circuit diagram.
\end{theorem}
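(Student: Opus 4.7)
The plan is to realize $E_{AB}$ by applying a higher-order Lie group commutator product formula to suitably dilated Hamiltonians, following the outline developed in this subsection. I would carry out the construction in four stages, relying on \lem{commutator_tight} (together with its higher-order extensions analyzed in \append{lie}) and \cor{frac_scale}.

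First, I would introduce one additional ancilla qubit and dilate both inputs to $4\times4$ block form. A controlled call to $E_{A}$ places $A$ in the $(1,2)$ off-diagonal block of a $4\times4$ generator with all other blocks zero; for $B$, I would conjugate a controlled $E_{B}$ by $(\mathrm{SWAP}\otimes I)(I\otimes X\otimes I)$ to route $B$ into the $(1,3)$ off-diagonal block. The direct block computation reproduced in this subsection then shows that the commutator of these two dilated generators has $AB$ in the $(2,3)$ block and $-(AB)^\dagger$ in the $(3,2)$ block, so the second-order formula $M_{2}(\tau)$ approximates the controlled Hamiltonian block encoding of $AB$ in a nonstandard $4\times4$ layout, with error controlled by \lem{commutator_tight}.

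Second, I would transform this nonstandard layout back to the standard controlled form of $E_{AB}$ via the unitary conjugation derived above, which after simplification reduces to a CNOT, an $X$, and a phase gate $S$ acting on the two ancilla qubits, and incurs zero error and no extra queries. Third, to obtain the advertised asymptotic scaling, I would replace $M_{2}$ by the $p$th-order Childs-Wiebe commutator formula $M_{p}$ and split the evolution into $r$ Trotter steps with parameter $\tau=\sqrt{1/r}$, so that $r$ applications produce $\exp\bigl(-i\cdot[\text{commutator}]\bigr)$ to the target accuracy. Because $\norm{A},\norm{B}<\pi/2$, the two dilated generators have bounded spectral norm and the nested-commutator factor $\alpha_{\text{comm}}$ from \append{lie} is $\mathbf{O}(1)$, so the step-count estimate specializes to $r=\mathbf{O}(1/\epsilon^{2/(p-1)})$; taking $p$ large but constant yields $r=1/\epsilon^{o(1)}$. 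Finally, each exponential appearing inside $M_{p}$ carries a real prefactor $a_{l,j}$ that may be fractional, which I would realize via \cor{frac_scale} at only a polylogarithmic overhead absorbed into the $1/\epsilon^{o(1)}$ bound.

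The main obstacle I expect is bookkeeping rather than any single deep step: I must verify that the block dilations, the closing unitary conjugation, the higher-order commutator formula, and the fractional scaling compose into a single circuit whose total query count remains $1/\epsilon^{o(1)}$ and whose total error remains below $\epsilon$. The hypothesis $\norm{A},\norm{B}<\pi/2$ is essential here, since it ensures that $\tau\norm{A}$ and $\tau\norm{B}$ stay within the regime where both the fractional scaling subroutine and the commutator error bound of \lem{commutator_tight} remain simultaneously valid, so that per-step errors telescope cleanly over $r$ steps into a total error at most $\epsilon$.
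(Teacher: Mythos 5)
Your proposal follows essentially the same path as the paper's proof: dilate to a $4\times 4$ block structure via an extra control qubit, route $B$ into the $(1,3)$ block by conjugating the controlled encoding with $(\mathrm{SWAP}\otimes I)(I\otimes X\otimes I)$, observe that the commutator places $AB$ in the $(2,3)$ block, apply a $p$th-order Childs--Wiebe group commutator formula with $\tau=\sqrt{1/r}$ and $r=\mathbf O(1/\epsilon^{2/(p-1)})$, undo the nonstandard layout by a fixed two-ancilla Clifford-plus-$S$ conjugation, and realize fractional prefactors through \cor{frac_scale}. All key ingredients (\lem{commutator_tight}, the higher-order analysis of \append{lie}, and the norm hypothesis $\norm A,\norm B<\pi/2$ ensuring $\alpha_{\text{comm}}=\mathbf O(1)$ and validity of fractional scaling) appear exactly as the paper uses them, so the argument is correct and not a different route.
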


\begin{figure}[t]
	\centering
\includegraphics[scale=0.75]{multiply_fig.pdf}
\caption{Quantum circuit for Hamiltonian matrix multiplication. Observe that $2$ ancilla qubits are sufficient and these ancillae may be reused throughout the computation. For illustration purposes, only the lowest-order group commutator formula is shown without the fractional scaling step.}
\label{fig:multiply}
\end{figure}

\subsection{Multiplying Hermitian and unitary operator}
\label{sec:multiply_special}
Our result \thm{multiply} for multiplying generic matrices requires $2$ ancilla qubits. When at least one of the multiplicands is Hermitian, we can perform matrix multiplication using only $1$ ancilla qubit by directly evolving under the Hermitian input.

Consider first the case that $K$ is a Hermitian matrix and $A$ is arbitrary. Observe that
\begin{equation}
    \begin{bmatrix}
        0 & A^\dagger\\
        A & 0
    \end{bmatrix}
    \begin{bmatrix}
        K & 0\\
        0 & 0
    \end{bmatrix}
    -\begin{bmatrix}
        K & 0\\
        0 & 0
    \end{bmatrix}
    \begin{bmatrix}
        0 & A^\dagger\\
        A & 0
    \end{bmatrix}
    =\begin{bmatrix}
        0 & -KA^\dagger\\
        AK & 0
    \end{bmatrix},
\end{equation}
which yields 
\begin{equation}
\begin{aligned}
    &\exp\left(-i\tau
    \begin{bmatrix}
        0 & A^\dagger\\
        A & 0
    \end{bmatrix}\right)
    \exp\left(-i\tau
    \begin{bmatrix}
        K & 0\\
        0 & 0
    \end{bmatrix}\right)
    \exp\left(i\tau
    \begin{bmatrix}
        0 & A^\dagger\\
        A & 0
    \end{bmatrix}\right)
    \exp\left(i\tau
    \begin{bmatrix}
        K & 0\\
        0 & 0
    \end{bmatrix}\right)\\
    &=\exp\left(-i\tau^2
    \begin{bmatrix}
        0 & i(AK)^\dagger\\
        -iAK & 0
    \end{bmatrix}\right)
    +\mathbf{O}\left(\tau^3\left(\norm{A}+\norm{K}\right)^3\right).
\end{aligned}
\end{equation}
Here, the output Hamiltonian block encoding can be further converted into the standard form as
\begin{equation}
    \left(S\otimes I\right)\exp\left(-i\tau^2
    \begin{bmatrix}
        0 & i(AK)^\dagger\\
        -iAK & 0
    \end{bmatrix}\right)
    \left(S^\dagger\otimes I\right)
    =\exp\left(-i\tau^2
    \begin{bmatrix}
        0 & (AK)^\dagger\\
        AK & 0
    \end{bmatrix}\right).
\end{equation}
Similarly, when $J$ is Hermitian and $A$ is arbitrary, we have
\begin{equation}
    \begin{bmatrix}
        0 & 0\\
        0 & J
    \end{bmatrix}
    \begin{bmatrix}
        0 & A^\dagger\\
        A & 0
    \end{bmatrix}
    -\begin{bmatrix}
        0 & A^\dagger\\
        A & 0
    \end{bmatrix}
    \begin{bmatrix}
        0 & 0\\
        0 & J
    \end{bmatrix}
    =\begin{bmatrix}
        0 & -A^\dagger J\\
        JA & 0
    \end{bmatrix},
\end{equation}
which gives
\begin{equation}
\begin{aligned}
    &\exp\left(-i\tau
    \begin{bmatrix}
        0 & 0\\
        0 & J
    \end{bmatrix}\right)
    \exp\left(-i\tau
    \begin{bmatrix}
        0 & A^\dagger\\
        A & 0
    \end{bmatrix}\right)
    \exp\left(i\tau
    \begin{bmatrix}
        0 & 0\\
        0 & J
    \end{bmatrix}\right)
    \exp\left(i\tau
    \begin{bmatrix}
        0 & A^\dagger\\
        A & 0
    \end{bmatrix}\right)\\
    &=\exp\left(-i\tau^2
    \begin{bmatrix}
        0 & i(JA)^\dagger\\
        -iJA & 0
    \end{bmatrix}\right)
    +\mathbf{O}\left(\tau^3\left(\norm{A}+\norm{J}\right)^3\right)
\end{aligned}
\end{equation}
with the output convertible to the standard form through the unitary conjugation $(S\otimes I)(\cdot)(S^\dagger\otimes I)$. This gives:

\begin{proposition}[Matrix multiplication, Hermitian case]
\label{prop:herm_multiply}
Let $A$ be a matrix encoded by the Hamiltonian block encoding $E_{A}=\exp\left(-i\left[\begin{smallmatrix}
    0 & A^\dagger\\
    A & 0
\end{smallmatrix}\right]\right)$ with $\norm{A}<\frac{\pi}{2}$.
Let $J$ and $K$ be Hermitian matrices with $\norm{J},\norm{K}=\mathbf{O}(1)$ generating the Hamiltonian evolution $e^{-i\tau J}$ and $e^{-i\tau K}$.
Then the Hamiltonian block encoding
\begin{equation}
    E_{JAK}=\exp\left(-i
    \begin{bmatrix}
        0 & (JAK)^\dagger\\
        JAK & 0
    \end{bmatrix}\right)
\end{equation}
can be constructed with accuracy $\epsilon$ using
\begin{equation}
    r=\frac{1}{\epsilon^{o(1)}}
\end{equation}
queries to $E_{A}$, as well as $e^{-i\tau_k J}$ and $e^{-i\tau_l J}$ for effective times $\tau_k,\tau_l$ determined by the Lie group commutator formulas.
See~\fig{multiply_jk} for the corresponding circuit diagram.
\end{proposition}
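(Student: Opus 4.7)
The plan is to implement $E_{JAK}$ in two stages, first producing $E_{AK}$ and then producing $E_{JAK}=E_{J(AK)}$, and in each stage to leverage the Lie group commutator product formula machinery developed in \sec{multiply_pf}. The crucial observation, already indicated in the excerpt, is the matrix identity
\begin{equation}
    \left[\begin{bmatrix} 0 & A^\dagger \\ A & 0 \end{bmatrix}, \begin{bmatrix} K & 0 \\ 0 & 0 \end{bmatrix}\right]
    =\begin{bmatrix} 0 & -KA^\dagger \\ AK & 0\end{bmatrix},
\end{equation}
so that applying the second-order formula $M_2(\tau)$ with $J':=\bigl[\begin{smallmatrix} 0 & A^\dagger\\ A & 0\end{smallmatrix}\bigr]$ and $K':=\bigl[\begin{smallmatrix} K & 0 \\ 0 & 0\end{smallmatrix}\bigr]$ produces an evolution generated by $\bigl[\begin{smallmatrix} 0 & i(AK)^\dagger \\ -iAK & 0\end{smallmatrix}\bigr]$, up to third-order error in $\tau(\norm{A}+\norm{K})$ controlled by \lem{commutator_tight}. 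The diagonal-block Hamiltonian $K'$ is generated by a single-qubit controlled evolution $\ketbra{0}{0}\otimes e^{-i\tau K}$, which requires no additional ancilla beyond the one already defining the Hamiltonian block encoding of $A$.

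Next, I would correct the extra factor of $i$ in the off-diagonal blocks by applying the $S\otimes I$ conjugation, exactly as done in the Hermitian case in \sec{multiply_multiply}; this yields an accurate approximation of $E_{\tau^2 AK}$ in standard form. For multiplication on the left by $J$, the analogous identity
\begin{equation}
    \left[\begin{bmatrix} 0 & 0 \\ 0 & J \end{bmatrix}, \begin{bmatrix} 0 & (AK)^\dagger \\ AK & 0 \end{bmatrix}\right]
    =\begin{bmatrix} 0 & -(AK)^\dagger J \\ JAK & 0 \end{bmatrix}
\end{equation}
applies, where the diagonal-block Hamiltonian is now generated by $\ketbra{1}{1}\otimes e^{-i\tau J}$. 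Applying $M_2$ a second time and correcting by $S\otimes I$ gives an accurate approximation of $E_{\tau^4 JAK}$. Rescaling back to unit effective time is handled by the fractional scaling corollary \cor{frac_scale}, with a logarithmic overhead absorbed into the $\epsilon^{o(1)}$ factor.

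To upgrade the third-order error of $M_2$ to the stated $r=1/\epsilon^{o(1)}$ query cost, I would substitute the higher-order Childs--Wiebe group commutator formulas $M_p$ analyzed at the end of \sec{multiply_pf} (and in greater detail in \append{lie_higher}), choosing $p$ sufficiently large constant. Since both stages are built from products of $e^{-i\tau J'}, e^{-i\tau K'}$ (each implementable exactly given the input oracles, possibly up to fractional scaling), the total query count to $E_A$, $e^{-i\tau J}$, $e^{-i\tau K}$ is controlled by the step count $r=\mathbf{O}\bigl(\alpha_{\text{comm}}^{2+o(1)}/\epsilon^{o(1)}\bigr)$ from \sec{multiply_pf} specialized to our fixed matrices, which under the norm assumptions $\norm{J},\norm{K}=\mathbf{O}(1)$ and $\norm{A}<\pi/2$ gives $\alpha_{\text{comm}}=\mathbf{O}(1)$.

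The main obstacle will be combining the errors of the two composed stages without incurring multiplicative blowup. The intermediate object $E_{AK}$ produced after the first stage is only $\epsilon/2$-close in spectral norm to the exact $\exp(-i\bigl[\begin{smallmatrix} 0 & (AK)^\dagger \\ AK & 0 \end{smallmatrix}\bigr])$, so when it is used as input to the second Lie commutator approximation I need to invoke the error propagation result of \sec{block_error} to translate the unitary-level error into an operator-level perturbation on the encoded matrix, confirm $\norm{AK}<1$ is preserved (which follows from $\norm{A},\norm{K}=\mathbf{O}(1)$ and an appropriate choice of initial rescaling), and then apply triangle inequality to obtain a total error of at most $\epsilon$ with only constant-factor inflation of $r$.
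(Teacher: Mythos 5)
Your proposal matches the paper's approach essentially step for step: the same block-diagonal commutator identities, the same Lie group commutator product formula machinery (upgraded to the Childs--Wiebe higher-order variants $M_p$), the same $S \otimes I$ phase correction, and the sequential composition of right-multiplication by $K$ followed by left-multiplication by $J$, which the paper presents as two separate one-sided building blocks and leaves the composition implicit. One small imprecision: the error accounting for the composed stages does not amount to a ``constant-factor inflation of $r$'' nor require the \sec{block_error} machinery---rather, since stage 2 makes $1/\epsilon^{o(1)}$ calls to the stage-1 subroutine, you must tighten the stage-1 accuracy to $\epsilon^{1+o(1)}$, which inflates the stage-1 step count only within the $o(1)$ exponent by a direct triangle inequality on products of unitaries, leaving the overall $r = 1/\epsilon^{o(1)}$ unchanged.
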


\begin{figure}[t]
	\centering
    \begin{subfigure}[t]{\textwidth}
        \centering
        \includegraphics[scale=\circuitwidth]{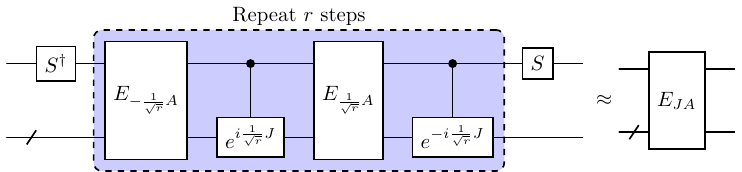}
        \caption{}
    \end{subfigure}%
    \\
    \begin{subfigure}[t]{\textwidth}
        \centering
        \includegraphics[scale=\circuitwidth]{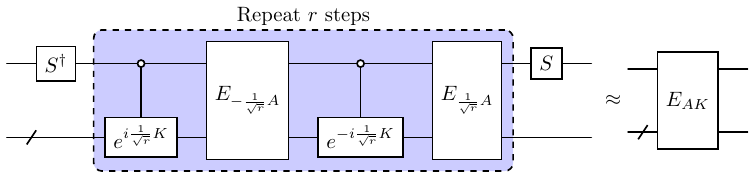}
        \caption{}
    \end{subfigure}%
\caption{Quantum circuit for multiplying Hermitian matrices. For illustration purposes, only the lowest-order group commutator formula is shown without the fractional scaling step.}
\label{fig:multiply_jk}
\end{figure}

Finally, we point out that one can multiply unitary matrices with Hamiltonian block encoding through a unitary conjugation. Specifically, if $U$ and $V$ are unitaries, then
\begin{equation}
    \begin{bmatrix}
        V^\dagger & 0\\
        0 & U
    \end{bmatrix}
    \exp\left(-i\tau
    \begin{bmatrix}
        0 & A^\dagger\\
        A & 0
    \end{bmatrix}\right)
    \begin{bmatrix}
        V & 0\\
        0 & U^\dagger
    \end{bmatrix}
    =\exp\left(-i\tau
    \begin{bmatrix}
        0 & (UAV)^\dagger\\
        UAV & 0
    \end{bmatrix}\right).
\end{equation}
This is an error-free operation and does not require any extra ancilla beyond the single one defining the Hamiltonian block encoding of $A$.

\begin{proposition}[Matrix multiplication, unitary case]
\label{prop:unitary_multiply}
Let $A$ be a matrix encoded by the Hamiltonian block encoding $E_{A}=\exp\left(-i\left[\begin{smallmatrix}
    0 & A^\dagger\\
    A & 0
\end{smallmatrix}\right]\right)$, and $U$, $V$ be unitaries.
Then the Hamiltonian block encoding
\begin{equation}
    E_{UAV}=\exp\left(-i
    \begin{bmatrix}
        0 & (UAV)^\dagger\\
        UAV & 0
    \end{bmatrix}\right)
\end{equation}
can be constructed with zero error using $1$ query to $E_{A}$, and $1$ query to controlled-$U$, $V$ and their inverses.
See~\fig{multiply_unitary} for the corresponding circuit diagram.
\end{proposition}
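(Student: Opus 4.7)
The plan is to prove the proposition in two steps: first verify the algebraic identity displayed immediately before the statement, then recognize that the conjugating block-diagonal unitary admits a trivial circuit realization that uses only controlled-$U$ and controlled-$V$ calls, and no ancilla beyond the one already supporting the Hamiltonian block encoding of $A$.

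For the first step I would simply compute the block product
\begin{equation*}
\begin{bmatrix} V^\dagger & 0 \\ 0 & U \end{bmatrix}
\begin{bmatrix} 0 & A^\dagger \\ A & 0 \end{bmatrix}
\begin{bmatrix} V & 0 \\ 0 & U^\dagger \end{bmatrix}
= \begin{bmatrix} 0 & V^\dagger A^\dagger U^\dagger \\ UAV & 0 \end{bmatrix}
= \begin{bmatrix} 0 & (UAV)^\dagger \\ UAV & 0 \end{bmatrix},
\end{equation*}
and invoke the standard identity $W e^{-iM} W^\dagger = \exp(-i W M W^\dagger)$ valid for any unitary $W$ and any operator $M$. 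Applied with $W = \begin{bmatrix} V^\dagger & 0 \\ 0 & U \end{bmatrix}$ and $M = \begin{bmatrix} 0 & A^\dagger \\ A & 0 \end{bmatrix}$, this immediately yields
\begin{equation*}
\begin{bmatrix} V^\dagger & 0 \\ 0 & U \end{bmatrix}
E_{A}
\begin{bmatrix} V & 0 \\ 0 & U^\dagger \end{bmatrix}
= E_{UAV},
\end{equation*}
which is an exact equality. Hence the construction incurs zero error.

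For the second step I would observe that the block-diagonal unitary can be written as $\ket{0}\!\bra{0} \otimes V + \ket{1}\!\bra{1} \otimes U^\dagger$, acting on the ancilla qubit of the Hamiltonian block encoding together with the system. This is realized by a single application of controlled-$V$ (activated on ancilla value $0$) followed by a single application of controlled-$U^\dagger$ (activated on ancilla value $1$); its adjoint uses controlled-$V^\dagger$ and controlled-$U$. Sandwiching the one call to $E_A$ between these two conjugating layers consumes exactly the query budget claimed in the proposition, and crucially reuses the ancilla that already carries the Hamiltonian block encoding, so no fresh ancilla qubit is required.

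There is no substantive obstacle here: the statement reduces to a two-by-two block-matrix calculation together with the observation that $\mathrm{diag}(V, U^\dagger)$ is itself a controlled-operation on the block-encoding ancilla. The only point worth flagging in the write-up is the qubit accounting, namely that the proposition is stated with $U$ and $V$ accessed in controlled form precisely because the block-encoding ancilla already plays the role of the control register, so no additional qubit beyond those supporting $A$ is introduced. The circuit \fig{multiply_unitary} can then be read off directly from the two layers of controlled operations on either side of $E_A$.
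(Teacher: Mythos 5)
Your proof is correct and follows the same route as the paper: verify the block-conjugation identity $\mathrm{diag}(V^\dagger,U)\,[\begin{smallmatrix}0 & A^\dagger\\ A & 0\end{smallmatrix}]\,\mathrm{diag}(V,U^\dagger) = [\begin{smallmatrix}0 & (UAV)^\dagger\\ UAV & 0\end{smallmatrix}]$, apply $W e^{-iM} W^\dagger = e^{-iWMW^\dagger}$, and observe that the block-diagonal conjugator is a controlled operation on the pre-existing block-encoding ancilla. Your qubit-accounting remark correctly captures why the proposition is stated with controlled access to $U$, $V$.
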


\begin{figure}[t]
	\centering
\includegraphics[scale=\circuitwidth]{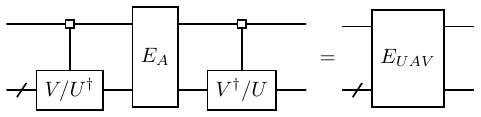}
\caption{Quantum circuit for multiplying unitary matrices.}
\label{fig:multiply_unitary}
\end{figure}

\section{Hamiltonian singular value transformation}
\label{sec:qsvt}

We now present our second main result on realizing singular value transformation within Hamiltonian block encoding. We start by solving the dominated polynomial approximation problem in~\sec{qsvt_dominate}. We then apply it to get a Hamiltonian QSVT algorithm for odd polynomials in~\sec{qsvt_odd}, establishing~\thm{qsvt_odd}. When the target polynomials are even, there is an intrinsic technical issue due to the parity constraint of QSVT. We show how to circumvent this by shifting the spectra, proving~\thm{qsvt_herm_even} and~\thm{qsvt_even} in~\sec{qsvt_even}. Finally, we apply Hamiltonian QSVT to implement Hamiltonian-based matrix inversion and fractional scaling in~\sec{qsvt_inverse_frac}.

\subsection{Dominated polynomial approximation}
\label{sec:qsvt_dominate}
The main goal of this subsection is to solve the following dominated approximation problem, which is essential to the construction of Hamiltonian QSVT as discussed in~\sec{intro}.

\begin{proposition}[Dominated polynomial approximation]
\label{prop:dominated}
Let $f$ be a real odd polynomial of degree $d$. For any $\epsilon>0$ and constant $0<\xi\leq\frac{\pi}{2}$, there exist a real odd polynomial $p(x)$ and even polynomial $q(x)$ such that
\begin{equation}
\begin{aligned}
    &\abs{p(x)-\sin(f(\arcsin(x)))}\leq\epsilon,\qquad&&\forall x\in\left[-\sin\left(\frac{\pi}{2}-\xi\right),\sin\left(\frac{\pi}{2}-\xi\right)\right],\\
    &\abs{q(x)-\frac{\cos(f(\arcsin(x)))}{\sqrt{1-x^2}}}\leq\epsilon,\qquad&&\forall x\in\left[-\sin\left(\frac{\pi}{2}-\xi\right),\sin\left(\frac{\pi}{2}-\xi\right)\right],\\
    &p^2(x)+(1-x^2)q^2(x)\leq1+\epsilon,\qquad&&\forall x\in[-1,1].
\end{aligned}
\end{equation}
Moreover, both $p$ and $q$ have the asymptotic degree
\begin{equation}
    \mathbf{O}\left(\left(d+\log\left(\frac{\norm{f}_{\max,\left[-\frac{\pi}{2},\frac{\pi}{2}\right]}}{\epsilon}\right)\right)
    \left(\norm{f}_{\max,\left[-\frac{\pi}{2},\frac{\pi}{2}\right]}+\log\left(\frac{1}{\epsilon}\right)\right)
    \right).
\end{equation}
\end{proposition}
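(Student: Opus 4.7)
The plan is to reduce the proposition to a single application of the composite-function polynomial approximation that the paper establishes in \append{composite}. That appendix, hinted at in the discussion immediately following the proposition, would provide Chebyshev-type approximants for composites built from $f$, $\sin$, $\cos$, $\arcsin$ and $1/\sqrt{1-x^2}$ whose magnitudes are pointwise dominated by those of the target functions plus the error. Applied to $g_1(x):=\sin(f(\arcsin(x)))$ and $g_2(x):=\cos(f(\arcsin(x)))/\sqrt{1-x^2}$, this yields polynomials $\tilde p$ and $\tilde q$ of the claimed asymptotic degree such that $|\tilde p(x)-g_1(x)|\leq\epsilon'$ and $|\tilde q(x)-g_2(x)|\leq\epsilon'$ on the restricted interval $[-\sin(\pi/2-\xi),\sin(\pi/2-\xi)]$, together with the pointwise domination $|\tilde p(x)|\leq|g_1(x)|+\epsilon'$ and $\sqrt{1-x^2}\,|\tilde q(x)|\leq|\cos(f(\arcsin(x)))|+\epsilon'$ on all of $[-1,1]$.

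Next I would enforce the stated parities. Since $f$ is odd, $g_1$ is odd (composition of three odd functions) and $g_2$ is even (cosine of an odd function, divided by an even function). Replacing $\tilde p$ by its odd part $\tfrac{1}{2}(\tilde p(x)-\tilde p(-x))$ and $\tilde q$ by its even part $\tfrac{1}{2}(\tilde q(x)+\tilde q(-x))$ preserves all four bounds above and does not increase the degree. Call the resulting polynomials $p$ and $q$; by construction $p$ is odd and $q$ is even. The third inequality of the proposition then follows from a one-line computation using $\sin^2+\cos^2=1$ and $|\sin|+|\cos|\leq\sqrt{2}$:
$p^2(x)+(1-x^2)q^2(x)\leq(|\sin(f(\arcsin(x)))|+\epsilon')^2+(|\cos(f(\arcsin(x)))|+\epsilon')^2\leq 1+2\sqrt{2}\,\epsilon'+2\epsilon'^2$.
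Choosing $\epsilon'$ to be a small absolute constant multiple of $\epsilon$ simultaneously makes the first two approximation errors at most $\epsilon$ and the global sum at most $1+\epsilon$, absorbing the constant into the asymptotic degree.

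The main obstacle is packed into \append{composite}: the pointwise domination $|\tilde p(x)|\leq|g_1(x)|+\epsilon'$ throughout $[-1,1]$, paired with the asymptotic degree stated in the proposition, is what removes the logarithmic overhead one would incur by Chebyshev-truncating $f$, $\sin$, $\cos$, $\arcsin$ and $1/\sqrt{1-x^2}$ separately and concatenating the resulting polynomials. Establishing it requires a complex-analytic argument on a Bernstein ellipse chosen to clear the branch-point singularities of $\arcsin$ and $1/\sqrt{1-x^2}$ at $\pm 1$, along the lines of~\cite{TangTian24}. Once that estimate is available, the proposition itself reduces to the parity projection plus the algebraic inequality above, so I would expect the proof as written in the paper to be short and defer the technical work to \append{composite}.
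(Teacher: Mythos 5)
Your proposal is fundamentally sound and follows the paper's high-level strategy: defer the hard polynomial construction to an appendix (where a Bernstein-ellipse/stadium argument supplies the approximants without the logarithmic overhead of concatenating Chebyshev truncations), use the Pythagorean identity $\sin^2 + \cos^2 = 1$ to collapse the dominated sum to $1 + \mathbf{O}(\epsilon')$, and arrange the parities. There are, however, two noteworthy differences from what the paper actually does. First, and more substantively: the paper does \emph{not} directly produce $\tilde p,\tilde q$ that are dominated on all of $[-1,1]$ by $\abs{\sin(f(\arcsin(x)))}+\epsilon'$ and $\abs{\cos(f(\arcsin(x)))}+\epsilon'$. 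Instead it builds $p$ and $q$ as $\sin$-/$\cos$-like composites of a \emph{single} intermediate polynomial $h_{f\arcsin,\text{dom}}$ that approximates $f\circ\arcsin$ only on the restricted interval and is merely dominated on the full interval. Consequently, on $[-1,1]$ one only gets $\abs{p(x)}\leq\abs{\sin(h_{f\arcsin,\text{dom}}(x))}+\epsilon'$ and $\sqrt{1-x^2}\abs{q(x)}\leq\abs{\cos(h_{f\arcsin,\text{dom}}(x))}+\epsilon'$ — i.e., domination by $\sin$ and $\cos$ evaluated at the \emph{same surrogate argument}, not at $f(\arcsin(x))$. The Pythagorean identity still applies because both trigonometric factors share that common argument, so the end-game computation is unchanged; but the domination form you assume the appendix delivers is strictly stronger than, and not literally the same as, what the appendix supplies or needs to supply. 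Since $\sin$ is not monotone, domination of $h_{f\arcsin,\text{dom}}$ by $\abs{f\circ\arcsin}+\epsilon$ does \emph{not} propagate to domination of $\sin(h_{f\arcsin,\text{dom}})$ by $\abs{\sin(f\circ\arcsin)}+\epsilon$ on the full interval, so one must be careful to phrase the appendix lemma as the paper does. Second, the paper achieves the required parities by construction — composing odd constituents ($h_{\arcsin}$, $f$, $h_{\sin,\alpha}$) for $p$, and even constituents ($h_{\text{inv-sqrt}}$, $h_{\cos,\alpha}$) for $q$ — rather than by a post-hoc projection. Your parity-projection is an equally valid alternative (it relies on $g_1$ being odd, $g_2$ even, and both intervals being symmetric about $0$), but it is a small departure from the paper's route and offers no particular advantage here.
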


We prove a weaker version of this result here, which solves the dominated approximation problem with asymptotic polynomial degree
\begin{equation}
    \mathbf{O}\left(d
    \log\left(\frac{d\norm{f}_{\max,\left[-\frac{\pi}{2},\frac{\pi}{2}\right]}}{\epsilon}\right)
    \left(\norm{f}_{\max,\left[-\frac{\pi}{2},\frac{\pi}{2}\right]}+\log\left(\frac{1}{\epsilon}\right)\right)
    \right).
\end{equation}
This is looser than the claimed asymptotic scaling by some logarithmic factors, but the analysis is significantly simpler.
Shaving-off the logarithmic factors requires analyzing the behavior of functions over the complex plane (\append{composite_stadium}), and the details are given in~\append{composite_qsvt}.

Proof of the weaker result uses the following standard tools of polynomial approximation.
\begin{lemma}[Dominated polynomial approximation of $\arcsin$ {\cite[Section 258]{fikhtengoltsfundamentals}}]
\label{lem:poly_arcsin}
For any $\epsilon_{\arcsin}>0$ and constant $0<\xi\leq\frac{\pi}{2}$, there exists a real odd polynomial $h_{\arcsin}$ such that
\begin{equation}
\begin{aligned}
    &\abs{h_{\arcsin}(x)-\arcsin(x)}\leq\epsilon_{\arcsin},\qquad&&\forall x\in\left[-\sin\left(\frac{\pi}{2}-\xi\right),\sin\left(\frac{\pi}{2}-\xi\right)\right],\\
    &\abs{h_{\arcsin}(x)}\leq\abs{\arcsin(x)},\qquad&&\forall x\in[-1,1],
\end{aligned}
\end{equation}
with degree 
\begin{equation}
    \mathbf{O}\left(\log\left(\frac{1}{\epsilon_{\arcsin}}\right)\right).
\end{equation}
\end{lemma}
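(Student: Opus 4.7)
The plan is to take $h_{\arcsin}$ to be a suitable truncation of the Maclaurin series of $\arcsin$ and exploit the nonnegativity of its coefficients to enforce both the approximation bound and the domination bound simultaneously. Recall that
\begin{equation}
    \arcsin(x)=\sum_{k=0}^{\infty}a_kx^{2k+1},\qquad a_k=\frac{(2k)!}{4^k(k!)^2(2k+1)}>0,
\end{equation}
with the series converging absolutely on $[-1,1]$. I would define
\begin{equation}
    h_{\arcsin}(x)=\sum_{k=0}^{d-1}a_kx^{2k+1},
\end{equation}
which is automatically an odd real polynomial of degree $2d-1$, and then choose the truncation parameter $d$ below.

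The domination bound is immediate from $a_k\geq 0$: for any $x\in[-1,1]$,
\begin{equation}
    \abs{h_{\arcsin}(x)}\leq\sum_{k=0}^{d-1}a_k\abs{x}^{2k+1}\leq\sum_{k=0}^{\infty}a_k\abs{x}^{2k+1}=\arcsin(\abs{x})=\abs{\arcsin(x)}.
\end{equation}
For the approximation bound, set $r=\sin\!\left(\frac{\pi}{2}-\xi\right)=\cos(\xi)<1$. Using the elementary estimate $a_k\leq\frac{1}{2k+1}$ (since $\binom{2k}{k}/4^k\leq 1$), the tail satisfies
\begin{equation}
    \sup_{\abs{x}\leq r}\abs{h_{\arcsin}(x)-\arcsin(x)}\leq\sum_{k=d}^{\infty}a_kr^{2k+1}\leq\sum_{k=d}^{\infty}\frac{r^{2k+1}}{2k+1}\leq\frac{r^{2d+1}}{(2d+1)(1-r^2)}.
\end{equation}
Since $\xi$ is a fixed positive constant, $1-r^2=\sin^2(\xi)$ is a positive constant and $\log(1/r)$ is a positive constant, so choosing
\begin{equation}
    d=\mathbf{O}\!\left(\log\!\left(\frac{1}{\epsilon_{\arcsin}}\right)\right)
\end{equation}
forces this tail to be at most $\epsilon_{\arcsin}$.

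There is no serious obstacle here; the one point requiring minor care is tracking the $\xi$-dependence in the prefactor $\frac{1}{\log(1/r)}=\frac{1}{\log(1/\cos\xi)}$ of the degree, which is absorbed into the asymptotic constant since $\xi$ is treated as a fixed parameter. This establishes the lemma and matches the citation to Fikhtengolts, where essentially the same truncation argument is carried out.
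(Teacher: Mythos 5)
The proposal is correct and takes what is surely the intended approach: the paper states this lemma without proof, citing a textbook reference, and the standard way to prove it is exactly the Maclaurin-series truncation you use, exploiting that all coefficients $a_k=\binom{2k}{k}/(4^k(2k+1))$ are nonnegative so that the truncated series is automatically dominated pointwise on $[-1,1]$ by the full series $\arcsin(|x|)$. Your tail estimate via $a_k\leq\frac{1}{2k+1}$ and the geometric sum is sound, and since $r=\cos\xi<1$ is a constant the resulting degree is indeed $\mathbf{O}(\log(1/\epsilon_{\arcsin}))$.
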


\begin{lemma}[Dominated polynomial approximation of $\frac{1}{\sqrt{1-x^2}}$ {\cite[Section 258]{fikhtengoltsfundamentals}}]
\label{lem:poly_inv_sqrt}
For any $\epsilon_{\text{inv-sqrt}}>0$ and constant $0<\xi\leq\frac{\pi}{2}$, there exists a real even polynomial $h_{\text{inv-sqrt}}$ such that
\begin{equation}
\begin{aligned}
    &\abs{h_{\text{inv-sqrt}}(x)-\frac{1}{\sqrt{1-x^2}}}\leq\epsilon_{\text{inv-sqrt}},\qquad&&\forall x\in\left[-\sin\left(\frac{\pi}{2}-\xi\right),\sin\left(\frac{\pi}{2}-\xi\right)\right],\\
    &0\leq h_{\text{inv-sqrt}}(x)\leq\frac{1}{\sqrt{1-x^2}},\qquad&&\forall x\in[-1,1],
\end{aligned}
\end{equation}
with degree
\begin{equation}
    \mathbf{O}\left(\log\left(\frac{1}{\epsilon_{\text{inv-sqrt}}}\right)\right).
\end{equation}
\end{lemma}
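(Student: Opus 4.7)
My plan is to construct $h_{\text{inv-sqrt}}$ as a truncation of the Maclaurin series
\begin{equation}
    \frac{1}{\sqrt{1-x^2}} = \sum_{k=0}^{\infty} \binom{2k}{k}\frac{x^{2k}}{4^k},
\end{equation}
retaining terms of total degree $\leq 2N$ for a parameter $N$ to be chosen. The resulting $h_{\text{inv-sqrt}}$ is automatically a real even polynomial. The decisive feature to exploit is that every coefficient in this expansion is non-negative (it is $(1-u)^{-1/2}$ evaluated at $u=x^2$, and the generalized binomial theorem yields coefficients of one sign), so pointwise domination on $[-1,1]$ will come for free: truncating deletes only non-negative summands. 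The remaining task is to choose $N=\mathbf{O}(\log(1/\epsilon_{\text{inv-sqrt}}))$ so that the tail is uniformly small on $\left[-\sin\left(\frac{\pi}{2}-\xi\right),\sin\left(\frac{\pi}{2}-\xi\right)\right]$.

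\textbf{Steps.} First, verify the series identity and the non-negativity of the coefficients $c_k:=\binom{2k}{k}4^{-k}$ via the generalized binomial theorem. Then define $h_{\text{inv-sqrt}}(x):=\sum_{k=0}^{N}c_k\,x^{2k}$. Evenness is immediate, and for any $x\in[-1,1]$ we have
\begin{equation}
    0 \;\leq\; h_{\text{inv-sqrt}}(x) \;=\; \sum_{k=0}^{N} c_k\,x^{2k} \;\leq\; \sum_{k=0}^{\infty} c_k\,x^{2k} \;=\; \frac{1}{\sqrt{1-x^2}},
\end{equation}
which is the second claim of the lemma. Next, restrict to $|x|\leq r$ with $r := \sin\left(\frac{\pi}{2}-\xi\right)\in(0,1)$. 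Using Stirling's estimate $c_k = \Theta(1/\sqrt{k})$, the tail obeys
\begin{equation}
    0 \;\leq\; \frac{1}{\sqrt{1-x^2}} - h_{\text{inv-sqrt}}(x) \;\leq\; \sum_{k=N+1}^{\infty} c_k\,r^{2k} \;\lesssim\; \frac{r^{2(N+1)}}{(1-r^2)\sqrt{N+1}}.
\end{equation}
Setting this below $\epsilon_{\text{inv-sqrt}}$ is satisfied by
\begin{equation}
    N \;=\; \mathbf{O}\!\left(\frac{\log(1/\epsilon_{\text{inv-sqrt}})}{\log(1/r^2)}\right).
\end{equation}
Because $\xi$ (hence $r$, hence $1-r^2$) is a fixed constant, the factor $1/\log(1/r^2)$ is also a constant, delivering the claimed degree $\mathbf{O}(\log(1/\epsilon_{\text{inv-sqrt}}))$.

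\textbf{Main obstacle.} There is essentially no real obstacle here: the lemma is classical, and the key piece of good fortune is the non-negativity of the Maclaurin coefficients of $(1-x^2)^{-1/2}$, which collapses the potentially delicate global domination requirement into an immediate consequence of truncating a positive series. The only mild care needed is in the tail bound, where one must note that $\xi$ being a \emph{constant} is precisely what converts the geometric rate $r^{2N}$ into a logarithmic degree; if $\xi$ were allowed to shrink with $\epsilon_{\text{inv-sqrt}}$ one would have to switch to a Chebyshev expansion or a more refined asymptotic analysis (along the lines sketched in \append{composite_stadium}), but within the stated regime the Maclaurin truncation already suffices.
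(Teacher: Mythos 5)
Your proposal is correct, and it reconstructs precisely the argument that the paper is pointing to with its citation to Fichtenholz (whose Section 258 covers the binomial/Maclaurin series, including the expansion $\frac{1}{\sqrt{1-x^2}}=\sum_{k\ge 0}\binom{2k}{k}4^{-k}x^{2k}$ with non-negative coefficients, from which truncation immediately yields the dominated approximant). The key observation — that truncating a series with non-negative coefficients gives global domination on $[-1,1]$ for free, while the geometric decay of the tail on the strictly interior interval yields degree $\mathbf{O}(\log(1/\epsilon_{\text{inv-sqrt}}))$ since $\xi$ is a fixed constant — is exactly the intended route, and your tail estimate and edge-case reasoning are sound.
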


\begin{lemma}[Polynomial approximation of $\sin$ and $\cos$ {\cite[Lemma 57]{Gilyen2018singular}}]
\label{lem:poly_trig}
For any $\epsilon_{\text{trig}}>0$ and $\alpha>0$, there exist a real odd polynomial $h_{\sin,\alpha}$ and even polynomial $h_{\cos,\alpha}$ such that
\begin{equation}
\begin{aligned}
    &\abs{h_{\sin,\alpha}(x)-\sin(\alpha x)}\leq\epsilon_{\text{trig}},\qquad&&\forall x\in[-1,1],\\
    &\abs{h_{\cos,\alpha}(x)-\cos(\alpha x)}\leq\epsilon_{\text{trig}},\qquad&&\forall x\in[-1,1],\\
\end{aligned}
\end{equation}
with degree
\begin{equation}
    \mathbf{O}\left(\alpha+\log\left(\frac{1}{\epsilon_{\text{trig}}}\right)\right).
\end{equation}
\end{lemma}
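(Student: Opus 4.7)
\medskip

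\noindent\textbf{Proof proposal for \lem{poly_trig}.} The plan is to truncate the Jacobi--Anger expansion of $\sin(\alpha x)$ and $\cos(\alpha x)$ in the Chebyshev basis, and bound the tail using decay estimates on Bessel functions of the first kind. Concretely, I will use the identities
\begin{equation*}
    \cos(\alpha x) = J_0(\alpha) + 2\sum_{k=1}^{\infty}(-1)^k J_{2k}(\alpha)\, T_{2k}(x),
    \qquad
    \sin(\alpha x) = 2\sum_{k=0}^{\infty}(-1)^k J_{2k+1}(\alpha)\, T_{2k+1}(x),
\end{equation*}
valid for all $x\in[-1,1]$, where $T_n$ denotes the $n$-th Chebyshev polynomial of the first kind. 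Since $T_n$ has degree $n$ and parity matching $n$, the partial sums automatically yield even and odd real polynomials respectively, giving the parity conditions demanded by the lemma.

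Next, I will set the truncation degree $N = \lceil c_1 \alpha + c_2 \log(1/\epsilon_{\text{trig}})\rceil$ for appropriately chosen constants $c_1, c_2$, and define
\begin{equation*}
    h_{\cos,\alpha}(x) = J_0(\alpha) + 2\sum_{1\leq 2k\leq N}(-1)^k J_{2k}(\alpha)\, T_{2k}(x),
    \qquad
    h_{\sin,\alpha}(x) = 2\sum_{0\leq 2k+1\leq N}(-1)^k J_{2k+1}(\alpha)\, T_{2k+1}(x).
\end{equation*}
Since $|T_n(x)|\leq 1$ on $[-1,1]$, the approximation error is controlled by $\sum_{n>N}|J_n(\alpha)|$. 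The main ingredient is the classical tail bound
\begin{equation*}
    \abs{J_n(\alpha)} \leq \frac{1}{n!}\left(\frac{\alpha}{2}\right)^n,\qquad n\geq 0,
\end{equation*}
which, combined with Stirling, implies that once $n\geq e\alpha$ the Bessel values decay super-exponentially, and taking $N\geq e\alpha + \Theta(\log(1/\epsilon_{\text{trig}}))$ forces the tail to be at most $\epsilon_{\text{trig}}$.

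The main (mildly) delicate step is getting the constants in the tail bound right so that the final degree is $\mathbf{O}(\alpha + \log(1/\epsilon_{\text{trig}}))$ rather than something like $\mathbf{O}(\alpha\log(1/\epsilon_{\text{trig}}))$. I would carry this out by splitting the tail $\sum_{n>N}|J_n(\alpha)|$ at $n_0=\lceil 2e\alpha\rceil$: for $e\alpha\leq n\leq n_0$ use $|J_n(\alpha)|\leq 1$ together with a crude $n_0$-term count (absorbed by the $\alpha$ in the degree), while for $n>n_0$ the ratio $(\alpha/2)^n/n!$ decays by a factor of at least $1/2$ with each increment of $n$, so the geometric tail is dominated by its first term, yielding the desired $\log(1/\epsilon_{\text{trig}})$ dependence. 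Finally, I will note that the resulting polynomials are real because $J_n(\alpha)$ is real for real $\alpha$, and that the partial sums preserve the parity of the target function, completing the proof.
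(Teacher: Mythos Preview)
Your approach is correct and is exactly the standard argument: the paper does not prove this lemma itself but simply cites \cite[Lemma~57]{Gilyen2018singular}, whose proof is precisely the Jacobi--Anger truncation with the Bessel tail bound $|J_n(\alpha)|\leq(\alpha/2)^n/n!$ that you outline.

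One minor point: your description of the ``splitting at $n_0=\lceil 2e\alpha\rceil$'' is slightly garbled. Since you are choosing $N\geq n_0$, the tail $\sum_{n>N}|J_n(\alpha)|$ contains no terms with $n\leq n_0$, so there is nothing to split; you simply need that for $n>N\geq 2e\alpha$ the bound $(\alpha/2)^n/n!$ decays by at least a factor $1/2$ per step, giving a geometric series dominated by its first term. Picking $N=\lceil 2e\alpha\rceil+\Theta(\log(1/\epsilon_{\text{trig}}))$ then suffices. This is a cosmetic fix; the substance of your argument is right.
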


\begin{proof}[Proof of~\prop{dominated}]
We start by constructing the following polynomial approximations.
\begin{enumerate}[label=(\roman*)]
\item We construct $h_{\arcsin}$ from~\lem{poly_arcsin} with $\xi$ as given and $\epsilon_{\arcsin}$ to be determined.
\item We construct $h_{\text{inv-sqrt}}$ from~\lem{poly_inv_sqrt} with $\xi$ as given and $\epsilon_{\text{inv-sqrt}}$ to be determined.
\item We construct $h_{\sin,\alpha}$ and $h_{\cos,\alpha}$ from~\lem{poly_trig} with $\alpha=\norm{f}_{\max,\left[-\frac{\pi}{2},\frac{\pi}{2}\right]}$ and $\epsilon_{\text{trig}}$ to be determined.
\end{enumerate}
We then define
\begin{equation}
    p(x)=h_{\sin,\alpha}\left(\frac{1}{\alpha}f\left(h_{\arcsin}(x)\right)\right),\qquad
    q(x)=h_{\text{inv-sqrt}}(x)h_{\cos,\alpha}\left(\frac{1}{\alpha}f\left(h_{\arcsin}(x)\right)\right).
\end{equation}
Our goal is to show that $p(x)$ and $q(x)$ are dominated approximations of $\sin(f(\arcsin(x)))$ and $\frac{\cos(f(\arcsin(x)))}{\sqrt{1-x^2}}$, with suitably chosen accuracy parameters $\epsilon_{\arcsin}$, $\epsilon_{\text{inv-sqrt}}$, and $\epsilon_{\text{trig}}$.

Let us first verify the dominated condition. For any $x\in[-1,1]$, we have
\begin{equation}
\begin{aligned}
    &\abs{h_{\arcsin}(x)}\leq\abs{\arcsin(x)}\leq\frac{\pi}{2}\\
    &\Rightarrow\abs{\frac{1}{\alpha}f\left(h_{\arcsin}(x)\right)}\leq1\\
    &\Rightarrow\abs{h_{\sin,\alpha}\left(\frac{1}{\alpha}f\left(h_{\arcsin}(x)\right)\right)}\leq\abs{\sin\left(f\left(h_{\arcsin}(x)\right)\right)}+\epsilon_{\text{trig}}.
\end{aligned}
\end{equation}
Similarly,
\begin{equation}
\begin{aligned}
    &\abs{h_{\arcsin}(x)}\leq\abs{\arcsin(x)}\leq\frac{\pi}{2}\\
    &\Rightarrow\abs{\frac{1}{\alpha}f\left(h_{\arcsin}(x)\right)}\leq1\\
    &\Rightarrow\abs{h_{\cos,\alpha}\left(\frac{1}{\alpha}f\left(h_{\arcsin}(x)\right)\right)}\leq\abs{\cos\left(f\left(h_{\arcsin}(x)\right)\right)}+\epsilon_{\text{trig}}\\
    &\Rightarrow\abs{h_{\text{inv-sqrt}}(x)h_{\cos,\alpha}\left(\frac{1}{\alpha}f\left(h_{\arcsin}(x)\right)\right)}
    \leq\frac{1}{\sqrt{1-x^2}}\left(\abs{\cos\left(f\left(h_{\arcsin}(x)\right)\right)}+\epsilon_{\text{trig}}\right).
\end{aligned}
\end{equation}
Combining these two bounds, we obtain
\begin{equation}
\begin{aligned}
    &p^2(x)+(1-x^2)q^2(x)\\
    &=h_{\sin,\alpha}^2\left(\frac{1}{\alpha}f\left(h_{\arcsin}(x)\right)\right)
    +(1-x^2)h_{\text{inv-sqrt}}^2(x)h_{\cos,\alpha}^2\left(\frac{1}{\alpha}f\left(h_{\arcsin}(x)\right)\right)\\
    &\leq\left(\abs{\sin\left(f\left(h_{\arcsin}(x)\right)\right)}+\epsilon_{\text{trig}}\right)^2
    +\left(\abs{\cos\left(f\left(h_{\arcsin}(x)\right)\right)}+\epsilon_{\text{trig}}\right)^2\\
    &\leq1+2\sqrt{2}\epsilon_{\text{trig}}+2\epsilon_{\text{trig}}^2.
\end{aligned}
\end{equation}
Hence, the dominated condition is satisfied as long as $\epsilon_{\text{trig}}=\mathbf{O}(\epsilon)$.

For the remainder of the proof, we assume $x\in\left[-\sin\left(\frac{\pi}{2}-\xi\right),\sin\left(\frac{\pi}{2}-\xi\right)\right]$ and focus on the approximation condition. We have
\begin{equation}
\begin{aligned}
    &\abs{p(x)-\sin(f(\arcsin(x)))}\\
    &=\abs{h_{\sin,\alpha}\left(\frac{1}{\alpha}f\left(h_{\arcsin}(x)\right)\right)-\sin(f(\arcsin(x)))}\\
    &\leq\abs{h_{\sin,\alpha}\left(\frac{1}{\alpha}f\left(h_{\arcsin}(x)\right)\right)
    -\sin(f(h_{\arcsin}(x)))}
    +\abs{\sin\left(f\left(h_{\arcsin}(x)\right)\right)-\sin(f(\arcsin(x)))}\\
    &\leq\epsilon_{\text{trig}}+\abs{f\left(h_{\arcsin}(x)\right)-f(\arcsin(x))}\\
    &\leq\epsilon_{\text{trig}}+\norm{f'}_{\max,\left[-\frac{\pi}{2}+\xi,\frac{\pi}{2}-\xi\right]}\epsilon_{\arcsin}
    =\mathbf{O}\left(\epsilon_{\text{trig}}+d\norm{f}_{\max,\left[-\frac{\pi}{2},\frac{\pi}{2}\right]}\epsilon_{\arcsin}\right),
\end{aligned}
\end{equation}
where the last estimate follows from Bernstein's theorem~\cite[Eq.\ (12)]{Kalmykov21}, \cite[Lemma 29]{QEVP}
\begin{equation}
    \norm{f'}_{\max,\left[-\frac{\pi}{2}+\xi,\frac{\pi}{2}-\xi\right]}=\mathbf{O}\left(d\norm{f}_{\max,\left[-\frac{\pi}{2},\frac{\pi}{2}\right]}\right).
\end{equation}
Similarly,
\begin{equation}
\begin{aligned}
    &\abs{q(x)-\frac{1}{\sqrt{1-x^2}}\cos(f(\arcsin(x)))}\\
    &=\abs{h_{\text{inv-sqrt}}(x)h_{\cos,\alpha}\left(\frac{1}{\alpha}f\left(h_{\arcsin}(x)\right)\right)-\frac{1}{\sqrt{1-x^2}}\cos(f(\arcsin(x)))}\\
    &\leq\abs{h_{\text{inv-sqrt}}(x)h_{\cos,\alpha}\left(\frac{1}{\alpha}f\left(h_{\arcsin}(x)\right)\right)-\frac{1}{\sqrt{1-x^2}}h_{\cos,\alpha}\left(\frac{1}{\alpha}f\left(h_{\arcsin}(x)\right)\right)}\\
    &\quad+\abs{\frac{1}{\sqrt{1-x^2}}h_{\cos,\alpha}\left(\frac{1}{\alpha}f\left(h_{\arcsin}(x)\right)\right)-\frac{1}{\sqrt{1-x^2}}\cos(f(\arcsin(x)))}\\
    &\leq\abs{h_{\text{inv-sqrt}}(x)-\frac{1}{\sqrt{1-x^2}}}\left(1+\epsilon_{\text{trig}}\right)
    +\mathbf{O}\left(\epsilon_{\text{trig}}+d\norm{f}_{\max,\left[-\frac{\pi}{2},\frac{\pi}{2}\right]}\epsilon_{\arcsin}\right)\\
    &=\mathbf{O}\left(\epsilon_{\text{inv-sqrt}}+\epsilon_{\text{trig}}+d\norm{f}_{\max,\left[-\frac{\pi}{2},\frac{\pi}{2}\right]}\epsilon_{\arcsin}\right).
\end{aligned}
\end{equation}
Hence, the approximation condition is satisfied as long as $\epsilon_{\text{inv-sqrt}},\epsilon_{\text{trig}}=\mathbf{O}(\epsilon)$ and $\epsilon_{\arcsin}=\mathbf{O}\left(\epsilon/\left(d\norm{f}_{\max,\left[-\frac{\pi}{2},\frac{\pi}{2}\right]}\right)\right)$.

To summarize, $p$ is a real odd polynomial of degree
\begin{equation}
\begin{aligned}
    &\mathbf{O}\left(
    \left(\alpha+\log\left(\frac{1}{\epsilon_{\text{trig}}}\right)\right)d
    \log\left(\frac{1}{\epsilon_{\arcsin}}\right)\right)\\
    &=\mathbf{O}\left(d
    \left(\norm{f}_{\max,\left[-\frac{\pi}{2},\frac{\pi}{2}\right]}+\log\left(\frac{1}{\epsilon}\right)\right)
    \log\left(\frac{d\norm{f}_{\max,\left[-\frac{\pi}{2},\frac{\pi}{2}\right]}}{\epsilon}\right)\right),
\end{aligned}
\end{equation}
whereas $q$ is a real even polynomial of degree
\begin{equation}
\begin{aligned}
    &\mathbf{O}\left(
    \left(\alpha+\log\left(\frac{1}{\epsilon_{\text{trig}}}\right)\right)d
    \log\left(\frac{1}{\epsilon_{\arcsin}}\right)
    +\log\left(\frac{1}{\epsilon_{\text{inv-sqrt}}}\right)\right)\\
    &=\mathbf{O}\left(d
    \left(\norm{f}_{\max,\left[-\frac{\pi}{2},\frac{\pi}{2}\right]}+\log\left(\frac{1}{\epsilon}\right)\right)
    \log\left(\frac{d\norm{f}_{\max,\left[-\frac{\pi}{2},\frac{\pi}{2}\right]}}{\epsilon}\right)\right).
\end{aligned}
\end{equation}
This establishes the weaker bound on the asymptotic polynomial degree. The stronger bound follows by an analogous argument, with the dominated polynomial approximations constructed from analyzing composite functions on a Bernstein ellipse enclosing the unit interval.
See~\append{composite_stadium} and~\append{composite_qsvt} for details.
\end{proof}

\subsection{Singular value transformation for odd polynomials}
\label{sec:qsvt_odd}
In this and the next subsection, we will establish our main results on Hamiltonian quantum singular value transformation, with input and output both represented by Hamiltonian block encodings. 
Specifically, suppose we are given a Hamiltonian block encoding
\begin{equation}
    E_A=\exp\left(-i\begin{bmatrix}
        0 & A^\dagger\\
        A & 0
    \end{bmatrix}\right),
\end{equation}
where $A$ is a matrix with the singular value decomposition $A=U\Sigma V^\dagger$, and $\norm{A}\leq\frac{\pi}{2}-\xi<\frac{\pi}{2}$ for some constant $\xi$. For a real polynomial $f$, our desired output is 
\begin{equation}
    E_{f_{\text{sv}}(A)}=\exp\left(-i
    \begin{bmatrix}
        0 & f_{\text{sv}}^\dagger\left(A\right)\\
        f_{\text{sv}}\left(A\right) & 0
    \end{bmatrix}\right),
\end{equation}
where $f_{\text{sv}}\left(A\right)=Uf\left(\Sigma\right)V^\dagger$ if $f$ is odd, and $f_{\text{sv}}\left(A\right)=V f\left(\Sigma\right)V^\dagger$ if $f$ is even.

Let us start with a high-level description of the algorithm. To this end, we rewrite the input Hamiltonian block encoding as
\begin{equation}\label{eq: svd of hamiltonian block encoding}
\begin{aligned}
    E_A
    &=\exp\left(-i
    \begin{bmatrix}
        0 & V\Sigma U^\dagger\\
        U\Sigma V^\dagger & 0
    \end{bmatrix}\right)
    =\begin{bmatrix}
        V & 0\\
        0 & U
    \end{bmatrix}
    \exp\left(-i
    \begin{bmatrix}
        0 & \Sigma\\
        \Sigma & 0
    \end{bmatrix}\right)
    \begin{bmatrix}
        V^\dagger & 0\\
        0 & U^\dagger
    \end{bmatrix}\\
    &=\begin{bmatrix}
        V & 0\\
        0 & U
    \end{bmatrix}
    \begin{bmatrix}
        \cos\left(\Sigma\right) & -i\sin\left(\Sigma\right)\\
        -i\sin\left(\Sigma\right) & \cos\left(\Sigma\right)
    \end{bmatrix}
    \begin{bmatrix}
        V^\dagger & 0\\
        0 & U^\dagger
    \end{bmatrix}.\\
\end{aligned}
\end{equation}
Applying $iX$ on the ancilla qubit, this is then transformed into
\begin{equation}
\label{eq:qsvt_input_sin}
\begin{aligned}
    E_A\left(iX\otimes I\right)
    &=\begin{bmatrix}
        V & 0\\
        0 & U
    \end{bmatrix}
    \begin{bmatrix}
        \cos\left(\Sigma\right) & -i\sin\left(\Sigma\right)\\
        -i\sin\left(\Sigma\right) & \cos\left(\Sigma\right)
    \end{bmatrix}
    \left(iX\otimes I\right)
    \left(X\otimes I\right)
    \begin{bmatrix}
        V^\dagger & 0\\
        0 & U^\dagger
    \end{bmatrix}\left(X\otimes I\right)\\
    &=\begin{bmatrix}
        V & 0\\
        0 & U
    \end{bmatrix}
    \begin{bmatrix}
        \sin\left(\Sigma\right) & i\cos\left(\Sigma\right)\\
        i\cos\left(\Sigma\right) & \sin\left(\Sigma\right)
    \end{bmatrix}
    \begin{bmatrix}
        U^\dagger & 0\\
        0 & V^\dagger
    \end{bmatrix}.
\end{aligned}
\end{equation}
In the case where $A^\dagger =A$ is Hermitian, we have operator $X\otimes A$ in the exponent which commutes with $X\otimes I$. Hence, the application of $iX$ on the ancilla is equivalent to shifting the spectra of $A$ by $\frac{\pi}{2}$, i.e., $E_A(iX \otimes I)=E_AE_{-\frac{\pi}{2}} = E_{A - \frac{\pi}{2}}$.
Even when $A$ is not Hermitian, this modification still shifts the spectra by $\frac{\pi}{2}$ but now introduces a change of basis on the right.
To preserve this 2D block structure in QSVT, we would alternate between applications of this iterate and 
\begin{equation}
    \left(Z\otimes I\right)
    \left(-iX\otimes I\right)
    E_{-A}
    \left(Z\otimes I\right)
    =\begin{bmatrix}
        U &\\
        & V
    \end{bmatrix}
    \begin{bmatrix}
        \sin(\Sigma) & i\cos(\Sigma)\\
        i\cos(\Sigma) & \sin(\Sigma)
    \end{bmatrix}
    \begin{bmatrix}
        V^\dagger &\\
        & U^\dagger
    \end{bmatrix}.
\end{equation}
Here, $E_{-A}=\exp\left(-i\left[\begin{smallmatrix}
    0 & -A^\dagger\\
    -A & 0
\end{smallmatrix}\right]\right)$ is a Hamiltonian block encoding of $-A=(-U)\Sigma V^\dagger$ and can be constructed from the original $E_A$ through a complex phase scaling (\prop{phase_scale}).

Restricted to each 2D subspace, the input operator has the action 
\begin{equation}
    \begin{bmatrix}
        \sin\left(\sigma\right) & i\cos\left(\sigma\right)\\
        i\cos\left(\sigma\right) & \sin\left(\sigma\right)
    \end{bmatrix}
    =\begin{bmatrix}
        x & i\sqrt{1-x^2}\\
        i\sqrt{1-x^2} & x
    \end{bmatrix},
\end{equation}
where we have denoted $x=\sin\left(\sigma\right)$. Then QSVT allows us to perform~\cite[Theorem 3]{Gilyen2018singular}
\begin{equation}
    \begin{bmatrix}
        p(x) & iq(x)\sqrt{1-x^2}\\
        iq^*(x)\sqrt{1-x^2} & p^*(x)
    \end{bmatrix}
\end{equation}
within each 2D subspace. Here, $p$ and $q$ are complex polynomials whose degrees correspond to the number of queries to $E_A$. Furthermore, $p$ and $q$ have different parities, and satisfy the normalization constraint
\begin{equation}
\label{eq:poly_normalize}
    \abs{p(x)}^2+(1-x^2)\abs{q(x)}^2=1,\qquad \forall x\in[-1,1].
\end{equation}
To realize the target polynomial $f$, we thus seek polynomials $p$ and $q$ simulatenously such that, in addition to the parity and normalization condition, the following approximation condition is satisfied
\begin{equation}
\label{eq:poly_approx}
    p(x)\approx\sin(f(\arcsin(x))),\qquad
    q(x)\approx\frac{\cos(f(\arcsin(x)))}{\sqrt{1-x^2}},\qquad
    \forall x\in\left[-\sin\left(\frac{\pi}{2}-\xi\right),\sin\left(\frac{\pi}{2}-\xi\right)\right],
\end{equation}
We will show how these conditions can all be fulfilled for an odd $f$ using the dominated approximation result of~\prop{dominated}. 

Assuming for now this is achievable, the output of QSVT is then
\begin{equation}
    \begin{bmatrix}
        V & 0\\
        0 & U
    \end{bmatrix}
    \begin{bmatrix}
        \sin\left(f\left(\Sigma\right)\right) & i\cos\left(f\left(\Sigma\right)\right)\\
        i\cos\left(f\left(\Sigma\right)\right) & \sin\left(f\left(\Sigma\right)\right)
    \end{bmatrix}
    \begin{bmatrix}
        U^\dagger & 0\\
        0 & V^\dagger
    \end{bmatrix}
\end{equation}
up to an arbitrarily small error. A final application of $-iX$ on the ancilla qubit transforms this into
\begin{equation}
\begin{aligned}
    &\begin{bmatrix}
        V & 0\\
        0 & U
    \end{bmatrix}
    \begin{bmatrix}
        \sin\left(f\left(\Sigma\right)\right) & i\cos\left(f\left(\Sigma\right)\right)\\
        i\cos\left(f\left(\Sigma\right)\right) & \sin\left(f\left(\Sigma\right)\right)
    \end{bmatrix}
    \begin{bmatrix}
        U^\dagger & 0\\
        0 & V^\dagger
    \end{bmatrix}\left(-iX\otimes I\right)\\
    &=\begin{bmatrix}
        V & 0\\
        0 & U
    \end{bmatrix}
    \begin{bmatrix}
        \sin\left(f\left(\Sigma\right)\right) & i\cos\left(f\left(\Sigma\right)\right)\\
        i\cos\left(f\left(\Sigma\right)\right) & \sin\left(f\left(\Sigma\right)\right)
    \end{bmatrix}
    \left(-iX\otimes I\right)
    \left(X\otimes I\right)
    \begin{bmatrix}
        U^\dagger & 0\\
        0 & V^\dagger
    \end{bmatrix}\left(X\otimes I\right)\\
    &=\begin{bmatrix}
        V & 0\\
        0 & U
    \end{bmatrix}
    \begin{bmatrix}
        \cos\left(f\left(\Sigma\right)\right) & -i\sin\left(f\left(\Sigma\right)\right)\\
        -i\sin\left(f\left(\Sigma\right)\right) & \cos\left(f\left(\Sigma\right)\right)
    \end{bmatrix}
    \begin{bmatrix}
        V^\dagger & 0\\
        0 & U^\dagger
    \end{bmatrix}\\
    &=\begin{bmatrix}
        V & 0\\
        0 & U
    \end{bmatrix}
    \exp\left(-i\begin{bmatrix}
        0 & f\left(\Sigma\right)\\
        f\left(\Sigma\right) & 0
    \end{bmatrix}\right)
    \begin{bmatrix}
        V^\dagger & 0\\
        0 & U^\dagger
    \end{bmatrix}
    =\exp\left(-i
    \begin{bmatrix}
        0 & f_{\text{sv}}^\dagger\left(A\right)\\
        f_{\text{sv}}\left(A\right) & 0
    \end{bmatrix}\right).
\end{aligned}
\end{equation}
This is the desired output for Hamiltonian QSVT.

Observe that our above Hamiltonian QSVT differs from that described in~\cite{Lloye21hamiltonianqsvt}. Specifically, we have shifted $A$'s spectra by applying $iX$ to the ancilla, whereas~\cite{Lloye21hamiltonianqsvt} does not, yielding the dominated polynomial approximation problem where $p(x) \approx \sqrt{1 - f(\arccos(x))^2}, q(x) \approx \frac{f(\arccos(x))}{\sqrt{1-x^2}}$. Here, $\arccos(x)$ and $\frac{1}{\sqrt{1-x^2}}$ both have singularities at $x=\pm1$. To avoid this issue, \cite{Lloye21hamiltonianqsvt} imposed an additional requirement where the minimum singular value of $A$ is gapped away from $\sigma=\arccos(1)=0$. This gap requirement can be satisfied for a Hermitian input by shifting the spectra~\cite{DongLinTong22}, but no workaround exists in general. In contrast, $\arcsin $ also has a singularity at $\pm 1$, but $\sigma=\arcsin(1)=\frac{\pi}{2}$,  so it suffices to provide an upper bound on $\norm{A}$ to set an appropriate approximation region for dominated approximation, which introduces at most a constant factor overhead in many QSVT applications.

We now state the main theorem and explain the usage of dominated polynomial approximation in its derivation.

\begin{theorem}[Singular value transformation, odd case]
\label{thm:qsvt_odd}
Let $A$ be a matrix encoded by the Hamiltonian block encoding $E_{A}=\exp\left(-i\left[\begin{smallmatrix}
    0 & A^\dagger\\
    A & 0
\end{smallmatrix}\right]\right)$ with $\norm{A}<\frac{\pi}{2}$. Let $f$ be a real odd polynomial of degree $d$. Then the Hamiltonian block encoding
\begin{equation}
    E_{f_{\text{sv}}\left(A\right)}=\exp\left(-i
    \begin{bmatrix}
        0 & f_{\text{sv}}^\dagger\left(A\right)\\
        f_{\text{sv}}\left(A\right) & 0
    \end{bmatrix}\right)
\end{equation}
can be constructed with accuracy $\epsilon$ using
\begin{equation}
    \mathbf{O}\left(\left(d+\log\left(\frac{\norm{f}_{\max,\left[-\frac{\pi}{2},\frac{\pi}{2}\right]}}{\epsilon}\right)\right)
    \left(\norm{f}_{\max,\left[-\frac{\pi}{2},\frac{\pi}{2}\right]}+\log\left(\frac{1}{\epsilon}\right)\right)
    \right)
\end{equation}
queries to $E_{A}$.
See~\fig{ham_qsvt} for the corresponding circuit diagram.
\end{theorem}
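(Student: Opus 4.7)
The plan is to combine the dominated polynomial approximation of \prop{dominated} with the QSVT framework already sketched in the paragraphs preceding the theorem. Because $f$ is an odd polynomial, $\sin(f(\arcsin(x)))$ is odd and $\cos(f(\arcsin(x)))/\sqrt{1-x^2}$ is even, so the parity constraint of QSVT will be automatically met and the construction reduces to assembling the pieces and tracking errors.

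First I would apply \prop{dominated} to $f$ with the parameter $\xi$ chosen so that $\norm{A}\leq\frac{\pi}{2}-\xi$. This yields a real odd polynomial $p_0$ and real even polynomial $q_0$ that simultaneously approximate $\sin(f(\arcsin(x)))$ and $\cos(f(\arcsin(x)))/\sqrt{1-x^2}$ to accuracy $\epsilon'$ on the interval $\left[-\sin(\frac{\pi}{2}-\xi),\sin(\frac{\pi}{2}-\xi)\right]$, while satisfying $p_0^2(x)+(1-x^2)q_0^2(x)\leq 1+\epsilon'$ on the whole of $[-1,1]$. I would then rescale by $1/\sqrt{1+\epsilon'}$ to obtain $p,q$ obeying the strict normalization $p^2(x)+(1-x^2)q^2(x)\leq 1$ while inflating the approximation error by only an additive $O(\epsilon')$.

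Next I would plug $(p,q)$ into the QSVT circuit already laid out in the text. Starting from $E_A$, I would insert $iX\otimes I$ on the ancilla to turn the input into the sine-block form of \eq{qsvt_input_sin}, and alternate this signal with its partner $(Z\otimes I)(-iX\otimes I)E_{-A}(Z\otimes I)$, where $E_{-A}$ is obtained from $E_A$ by \prop{phase_scale} at no added query cost. Within each 2D invariant subspace indexed by a singular value $\sigma$ of $A$ (so $x=\sin\sigma$), the QSVT theorem of \cite{Gilyen2018singular} realizes a matrix with top-left entry $p(x)$ and off-diagonal entry $iq(x)\sqrt{1-x^2}$; a concluding $-iX\otimes I$ on the ancilla then reassembles the four 2D blocks into the standard Hamiltonian block encoding $E_{f_{\text{sv}}(A)}$, by exactly the algebraic identities already displayed above the theorem. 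Since each 2D block has spectral-norm error $O(\epsilon')$ and the full operator is a direct sum of these blocks, choosing $\epsilon'=\Theta(\epsilon)$ yields the claimed accuracy; the query count to $E_A$ equals $\deg p$, which by \prop{dominated} matches the stated bound.

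The main obstacle, and the step I expect to require the most care, is ensuring that the rescaled pair $(p,q)$ really does admit QSVT phase factors implementing the prescribed block structure, given that the normalization holds only as an inequality $p^2+(1-x^2)q^2\leq 1$ rather than equality. This is precisely why \emph{dominated} (rather than merely bounded) approximation is invoked: controlling $p$ and $q$ jointly in a single inequality lets one complete $p$ to a complex polynomial $P$ with $\abs{P}^2+(1-x^2)q^2=1$ via a Fejér--Riesz style argument, so that the resulting QSP unitary has the desired real $(p,iq\sqrt{1-x^2})$ action on the two blocks read out by the Hamiltonian encoding, while the small imaginary deficit lives outside those blocks and does not disturb the final Hamiltonian block encoding.
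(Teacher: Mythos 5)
Your high-level plan — dominated approximation via \prop{dominated}, rescale to enforce the normalization, feed into the QSVT iterate built from $E_A(iX\otimes I)$ and its partner, then strip the $-iX$ — is exactly the paper's. But the final paragraph, where you account for the imaginary part introduced by completing the polynomials, contains a concrete error.

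You assert that ``the small imaginary deficit lives outside those blocks and does not disturb the final Hamiltonian block encoding.'' This is false, and in fact it contradicts the entire rationale for the construction. The paper stresses (in \sec{intro_qsvt}) that the defining feature of the Hamiltonian block encoding is that all four blocks of the $2\times 2$ block matrix are specified and read. Once you invoke the complementary-polynomial completion \cite[Theorem 5]{Gilyen2018singular}, you obtain complex $p_2,q_2$ with $\Re(p_2)=p_1$, $\Re(q_2)=q_1$ and $|p_2|^2+(1-x^2)|q_2|^2=1$; the imaginary part $\Im(p_2)$ then sits squarely in the diagonal blocks of the QSVT output
\begin{equation}
\begin{bmatrix}
p_2(x) & iq_2(x)\sqrt{1-x^2}\\
iq_2^*(x)\sqrt{1-x^2} & p_2^*(x)
\end{bmatrix},
\end{equation}
and $\Im(q_2)$ in the off-diagonal ones. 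There is no place for this deficit to ``live'' unseen. (Your variant of Fej\'er--Riesz keeping $q$ real and complexifying only $p$ is also not shown to exist; one needs $|P|^2$ \emph{and} $\Re P$ simultaneously prescribed, which is precisely what Theorem~5 of \cite{Gilyen2018singular} delivers only when both polynomials are allowed to become complex.)

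The fix — and the one the paper uses — is to show the imaginary parts are provably small \emph{because of} the dominated condition, not to argue they are invisible. On the approximation domain, $p_1^2(x)+(1-x^2)q_1^2(x)=\sin^2(f(\arcsin x))+\cos^2(f(\arcsin x))+\mathbf{O}(\epsilon)=1+\mathbf{O}(\epsilon)$; combined with the exact normalization of $p_2,q_2$, this forces
\begin{equation}
\Im^2\!\left(p_2(x)\right)+(1-x^2)\,\Im^2\!\left(q_2(x)\right)=\mathbf{O}(\epsilon).
\end{equation}
The resulting operator error per 2D subspace (computed via the Pauli-basis norm formula in \cite[Lemma 21]{low2024quantum}) is therefore $\mathbf{O}(\sqrt{\epsilon})$, not the $\mathbf{O}(\epsilon)$ you claimed. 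This requires a final rescaling of $\epsilon$ (effectively running \prop{dominated} at accuracy $\epsilon^2$), which is harmless for the stated asymptotic degree but must be stated to close the argument.
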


\begin{figure}[t]
	\centering
\includegraphics[scale=\circuitwidth]{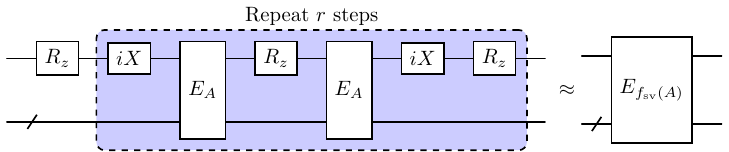}
\caption{Quantum circuit for Hamiltonian singular value transformation with odd polynomials.}
\label{fig:ham_qsvt}
\end{figure}

\begin{proof}
Suppose that $\norm{A}\leq\frac{\pi}{2}-\xi<\frac{\pi}{2}$ for some constant $0<\xi\leq\frac{\pi}{2}$. We apply~\prop{dominated} to construct a real odd polynomial $p$ and even polynomial $q$ such that
\begin{equation}
\begin{aligned}
    &\abs{p(x)-\sin(f(\arcsin(x)))}\leq\epsilon,\qquad&&\forall x\in\left[-\sin\left(\frac{\pi}{2}-\xi\right),\sin\left(\frac{\pi}{2}-\xi\right)\right],\\
    &\abs{q(x)-\frac{\cos(f(\arcsin(x)))}{\sqrt{1-x^2}}}\leq\epsilon,\qquad&&\forall x\in\left[-\sin\left(\frac{\pi}{2}-\xi\right),\sin\left(\frac{\pi}{2}-\xi\right)\right],\\
    &p^2(x)+(1-x^2)q^2(x)\leq1+\epsilon,\qquad&&\forall x\in[-1,1].
\end{aligned}
\end{equation}
Moreover, $p$ and $q$ have the asymptotic degree
\begin{equation}
    \mathbf{O}\left(\left(d+\log\left(\frac{\norm{f}_{\max,\left[-\frac{\pi}{2},\frac{\pi}{2}\right]}}{\epsilon}\right)\right)
    \left(\norm{f}_{\max,\left[-\frac{\pi}{2},\frac{\pi}{2}\right]}+\log\left(\frac{1}{\epsilon}\right)\right)
    \right).
\end{equation}

These polynomials already satisfy the parity constraint. To fulfill the normalization condition, we normalize them as
\begin{equation}
    p_1(x)=\frac{p(x)}{\sqrt{1+\epsilon}},\qquad
    q_1(x)=\frac{q(x)}{\sqrt{1+\epsilon}},
\end{equation}
which leads to the behavior
\begin{equation}
\begin{aligned}
    &\abs{p_1(x)-\sin(f(\arcsin(x)))}\leq\epsilon,\qquad&&\forall x\in\left[-\sin\left(\frac{\pi}{2}-\xi\right),\sin\left(\frac{\pi}{2}-\xi\right)\right],\\
    &\abs{q_1(x)-\frac{\cos(f(\arcsin(x)))}{\sqrt{1-x^2}}}\leq\epsilon,\qquad&&\forall x\in\left[-\sin\left(\frac{\pi}{2}-\xi\right),\sin\left(\frac{\pi}{2}-\xi\right)\right],\\
    &p_1^2(x)+(1-x^2)q_1^2(x)\leq1,\qquad&&\forall x\in[-1,1],
\end{aligned}
\end{equation}
while the polynomial degree remains asymptotically the same.
Indeed, the third claim follows directly from the definition. So we only verify the first two claims.
Using the fact that $\abs{p(x)}\leq\sqrt{1+\epsilon}$, we have
\begin{equation}
\begin{aligned}
    &\abs{p_1(x)-p(x)}
    =\frac{\sqrt{1+\epsilon}-1}{\sqrt{1+\epsilon}}\abs{p(x)}
    \leq\sqrt{1+\epsilon}-1
    \leq\frac{\epsilon}{2}\\
    \Rightarrow
    &\abs{p_1(x)-\sin(f(\arcsin(x)))}\leq\frac{3}{2}\epsilon.
\end{aligned}
\end{equation}
Similarly, since $\abs{q(x)}\leq\sqrt{\frac{1+\epsilon}{1-x^2}}\leq\frac{\sqrt{1+\epsilon}}{\cos\left(\frac{\pi}{2}-\xi\right)}$,
it holds
\begin{equation}
\begin{aligned}
    &\abs{q_1(x)-q(x)}
    =\frac{\sqrt{1+\epsilon}-1}{\sqrt{1+\epsilon}}\abs{q(x)}
    \leq\frac{\sqrt{1+\epsilon}-1}{\cos\left(\frac{\pi}{2}-\xi\right)}
    \leq\frac{\epsilon}{2\cos\left(\frac{\pi}{2}-\xi\right)}\\
    \Rightarrow&\abs{q_1(x)-\frac{\cos(f(\arcsin(x)))}{\sqrt{1-x^2}}}\leq\frac{\epsilon}{2\cos\left(\frac{\pi}{2}-\xi\right)}+\epsilon.
\end{aligned}
\end{equation}
The claimed behavior can then be achieved by rescaling $\epsilon$ without affecting the asymptotic degree of $p_1$ and $q_1$.

We are now in a position to apply~\cite[Theorem 5]{Gilyen2018singular}, which gives a complex odd polynomial $p_2$ and a complex even polynomial $q_2$ of the same asymptotic degree,
such that
\begin{equation}
    p_1(x)=\Re\left(p_2(x)\right),\qquad
    q_1(x)=\Re\left(q_2(x)\right).
\end{equation}
Moreover, the normalization constraint is satisfied over the entire unit interval:
\begin{equation}
    \abs{p_2(x)}^2+(1-x^2)\abs{q_2(x)}^2=1,\qquad \forall x\in[-1,1].
\end{equation}
Let us bound the error of using $p_2$ and $q_2$ in the implementation of Hamiltonian QSVT. For any $x\in\left[-\sin\left(\frac{\pi}{2}-\xi\right),\sin\left(\frac{\pi}{2}-\xi\right)\right]$, we have
\begin{equation}
\begin{aligned}
    \abs{p_1^2(x)+(1-x^2)q_1^2(x)-1}
    &=\abs{p_1^2(x)+(1-x^2)q_1^2(x)-\sin^2(f(\arcsin(x)))-\cos^2(f(\arcsin(x)))}\\
    &=\mathbf{O}(\epsilon).\\
\end{aligned}
\end{equation}
But because of the normalization condition $\abs{p_2(x)}^2+(1-x^2)\abs{q_2(x)}^2=1$, this means the imaginary components must be small:
\begin{equation}
    \Im^2\left(p_2(x)\right)+(1-x^2)\Im^2\left(q_2(x)\right)=\mathbf{O}(\epsilon).
\end{equation}
Restricted to each 2D subspace, we thus have 
\begin{equation}
\begin{aligned}
    &\norm{\begin{bmatrix}
        p_2(x) & iq_2(x)\sqrt{1-x^2}\\
        iq_2^*(x)\sqrt{1-x^2} & p_2^*(x)
    \end{bmatrix}
    -\begin{bmatrix}
        \sin(f(\arcsin(x))) & i\cos(f(\arcsin(x)))\\
        i\cos(f(\arcsin(x))) & \sin(f(\arcsin(x)))
    \end{bmatrix}}\\
    &\leq\norm{\begin{bmatrix}
        p_1(x)-\sin(f(\arcsin(x))) & iq_1(x)\sqrt{1-x^2}-i\cos(f(\arcsin(x)))\\
        iq_1(x)\sqrt{1-x^2}-i\cos(f(\arcsin(x))) & p_1(x)-\sin(f(\arcsin(x)))
    \end{bmatrix}}\\
    &\quad+\norm{\begin{bmatrix}
        i\Im\left(p_2(x)\right) & -\Im\left(q_2(x)\right)\sqrt{1-x^2}\\
        \Im\left(q_2(x)\right)\sqrt{1-x^2} & -i\Im\left(p_2(x)\right)
    \end{bmatrix}}
    =\mathbf{O}\left(\sqrt{\epsilon}\right),
\end{aligned}
\end{equation}
where the asymptotic estimate follows from the matrix distance formula in the Pauli basis~\cite[Lemma 21]{low2024quantum}:
\begin{equation}
    \norm{\begin{bmatrix}
        i\Im\left(p_2(x)\right) & -\Im\left(q_2(x)\right)\sqrt{1-x^2}\\
        \Im\left(q_2(x)\right)\sqrt{1-x^2} & -i\Im\left(p_2(x)\right)
    \end{bmatrix}}
    =\sqrt{\Im^2\left(p_2(x)\right)+\Im^2\left(q_2(x)\right)(1-x^2)}.
\end{equation}
By a final rescaling of $\epsilon$, we have completed the proof of all claims of the theorem.
\end{proof}

\subsection{Singular value transformation for even polynomials}
\label{sec:qsvt_even}
We now continue the discussion of Hamiltonian QSVT, but turn to real even polynomials $f$. In this case, both $\sin(f(\arcsin(x)))$ and $\frac{\cos(f(\arcsin(x)))}{\sqrt{1-x^2}}$ are even functions, so there is an intrinsic issue concerning the parity constraint of QSVT.

Note that if the input matrix has the norm bound $\norm{A}<\frac{\pi}{2}$, all its singular values satisfy $0\leq\sigma<\frac{\pi}{2}$ which implies $0\leq x=\sin(\sigma)<1$. Consequently, the functions $\sin(f(\arcsin(x)))$ and $\frac{\cos(f(\arcsin(x)))}{\sqrt{1-x^2}}$ only need to be approximated over $x\in[0,1)$, i.e., the function $f$ needs to be approximated only over $\sigma\in\left[0,\frac{\pi}{2}\right)$. Because of this, one may attempt to circumvent the parity issue by constructing an odd polynomial $h_{\text{odd}}$ that approximates the even function $f$ precisely over $\sigma\in\left[0,\frac{\pi}{2}\right)$. However, this approach would not work in general. Indeed, as $h_{\text{odd}}$ is an odd polynomial, we have $h_{\text{odd}}^{(2k)}(0) = 0$ for all $k\in\mathbb{Z}_{\geq0}$, whereas $f^{(2k)}(0)\neq0$ for some value of $k$ when $f$ is analytic but does not vanish in a neighborhood of $0$. This means a singularity at $0$ would occur if $f$ or its higher derivatives are non-vanishing, resulting in an inaccurate approximation.
We resolve this issue by shifting the domain of approximation away from $0$, thereby relaxing the requirements on the derivatives of $h_{\textnormal
{odd}}$. 

Let us first address this for Hermitian input matrices $H$. Assuming $H$ is properly normalized, we add multiples of identity so that the shifted operator only has positive spectra. Then performing even function $f$ on the original operator $H$ can be achieved by applying an odd extension of the shifted $f$ on the shifted $H$. 
This extension uses window functions and may be bundled with the dominated approximation result~\cor{approx_dominated_ellipse} whose proof also involves multiplying window functions. However, we leave the details for future work.

\begin{proposition}[Dominated polynomial extension]
\label{prop:dominated_ext}
Let $f$ be a real polynomial of degree $d$. For any $\epsilon_{\text{dom}}>0$ and constant $0<\xi\leq 1<b$, there exists a real polynomial $h_{\text{dom}}$ such that 
\begin{equation}
\begin{aligned}
    &\abs{h_{\text{dom}}(x)-f(x)}
    \leq\epsilon_{\text{dom}},\qquad&&x\in\left[-1+\xi,1-\xi\right],\\
    &\abs{h_{\text{dom}}(x)}\leq\abs{f(x)}+\epsilon_{\text{dom}},\qquad&&x\in\left[-1,1\right],\\
    &\abs{h_{\text{dom}}(x)}\leq\epsilon_{\text{dom}},\qquad&&x\in\left[-b,-1\right]\cup\left[1,b\right],
\end{aligned}
\end{equation}
with degree
\begin{equation}
    \mathbf{O}\left(\sqrt{b-1}\frac{b}{\xi}d+\frac{b}{\xi}\log\left(\frac{\norm{f}_{\max,\left[-1,1\right]}}{\epsilon_{\text{dom}}}\right)\right).
\end{equation}
Moreover, $h_{\text{dom}}$ has the same parity as $f$.
\end{proposition}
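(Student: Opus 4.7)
The plan is to set $h_{\text{dom}}(x) = w(x)\, f(x)$, where $w$ is an even polynomial window that is approximately the indicator of $[-1,1]$ inside $[-b,b]$. Because $w$ is even and $f$ fixed, $h_{\text{dom}}$ inherits the parity of $f$. The three conclusions translate into three requirements on $w$: $w\approx 1$ on $[-1+\xi,1-\xi]$; $|w|\le 1+\delta$ on $[-1,1]$; and $|w|$ small enough on $[-b,-1]\cup[1,b]$ to absorb the growth of $f$ there.

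To quantify that growth, I would invoke Chebyshev's extremal estimate: any polynomial $f$ of degree $d$ satisfies $|f(x)|\le \norm{f}_{\max,[-1,1]}\, T_d(|x|)$ for $|x|\ge 1$, where $T_d$ is the Chebyshev polynomial of the first kind, and $T_d(b)\le (b+\sqrt{b^2-1})^d$. A short computation shows $\log(b+\sqrt{b^2-1}) = \mathbf{O}(\sqrt{b-1})$ uniformly for $b>1$, so setting the outer suppression level to $\delta = \epsilon_{\text{dom}}/(\norm{f}_{\max,[-1,1]}\, T_d(b))$ yields $\log(1/\delta) = \mathbf{O}(\log(\norm{f}_{\max,[-1,1]}/\epsilon_{\text{dom}}) + d\sqrt{b-1})$. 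For the inner and middle intervals one bounds $|h_{\text{dom}}-f|=|w-1||f|$ and $|h_{\text{dom}}|=|w||f|$, so an inner approximation accuracy of $\mathbf{O}(\epsilon_{\text{dom}}/\norm{f}_{\max,[-1,1]})$ (which is weaker than $\delta$) automatically delivers conditions (i) and (ii).

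The construction of $w$ itself is a standard rectangular window coming from the polynomial approximation of a scaled-and-shifted error function. After rescaling the input by $1/b$, the problem becomes approximating the indicator of $[-1/b,1/b]$ on $[-1,1]$ with transition width of order $\xi/b$ and outer suppression $\delta$, which can be achieved with degree $\mathbf{O}((b/\xi)\log(1/\delta))$. Substituting the value of $\log(1/\delta)$ above then gives the claimed total degree $\mathbf{O}(\sqrt{b-1}\,(b/\xi)\, d + (b/\xi)\log(\norm{f}_{\max,[-1,1]}/\epsilon_{\text{dom}}))$. Evenness of $w$ is either automatic from the symmetric construction or enforced afterwards by $w(x)\mapsto (w(x)+w(-x))/2$, which preserves all three inequalities on $w$.

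The main technical ingredient is the existence of the rectangular window polynomial with the asserted degree, transition behavior, and uniform bound on $[-b,b]$; everything else is a short error-budget bookkeeping combined with Chebyshev's extremal inequality for $|f(x)|$ outside the unit interval. A minor nuisance is tracking constants so that the three conclusions all hold with the \emph{same} $\epsilon_{\text{dom}}$; this is handled by a single overall rescaling $\epsilon_{\text{dom}} \mapsto \epsilon_{\text{dom}}/C$ at the end of the proof.
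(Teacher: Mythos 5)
Your proposal is correct and takes essentially the same approach as the paper: both set $h_{\text{dom}}=w\cdot f$ for an even rectangular window polynomial $w$, and both balance the outer suppression level against the growth of $f$ on $[-b,b]$ to obtain the stated degree. The only technical difference is that you bound that growth with the Chebyshev extremal inequality $\abs{f(x)}\le\norm{f}_{\max,[-1,1]}\,T_d(\abs{x})$ rather than the paper's Markov-brothers-plus-Taylor expansion; yours is the sharper and more standard tool, and since $\log\bigl(b+\sqrt{b^2-1}\bigr)=\mathbf{O}\bigl(\sqrt{b-1}\bigr)$ the two routes give the same $\exp\bigl(\mathbf{O}(\sqrt{b-1}\,d)\bigr)$ envelope and hence the identical final degree.
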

\begin{proof}
First, we have the following bound on $\norm{f}_{\max,[-b,b]}$ in terms of $\norm{f}_{\max,[-1,1]}$:
\begin{equation}
    \norm{f}_{\max,[-b,b]}\leq\exp\left(\sqrt{2(b-1)}d\right)\norm{f}_{\max,\left[-1,1\right]}.
\end{equation}
This is similar to the estimate in the proof of~\prop{poly_stadium}, and we reproduce part of that proof here for completeness. By Markov brothers' inequality~\cite[Eq.\ (33)]{Kalmykov21}, we can bound higher-order derivatives of $f$ as
\begin{equation}
    \norm{f^{(j)}}_{\max,\left[-1,1\right]}
    \leq\frac{d^{2j}}{(2j-1)!!}\norm{f}_{\max,\left[-1,1\right]}.
\end{equation}
Now without loss of generality, taking any $z\in[1,b]$ and applying Taylor's theorem, we get
\begin{equation}
\begin{aligned}
    \abs{f(z)}
    &=\abs{\sum_{j=0}^d\frac{f^{(j)}(1)}{j!}(z-1)^j}
    \leq\sum_{j=0}^d\frac{\abs{f^{(j)}(1)}}{j!}(b-1)^j
    \leq\sum_{j=0}^d\frac{d^{2j}}{(2j-1)!!}\frac{1}{j!}(b-1)^j\norm{f}_{\max,[-1,1]}\\
    &=\sum_{j=0}^d\frac{d^{2j}2^jj!}{(2j)!j!}(b-1)^j\norm{f}_{\max,[-1,1]}
    =\sum_{j=0}^d\frac{d^{2j}\sqrt{2(b-1)}^{2j}}{(2j)!}\norm{f}_{\max,[-1,1]}\\
    &\leq\cosh\left(\sqrt{2(b-1)}d\right)\norm{f}_{\max,[-1,1]}.
\end{aligned}
\end{equation}

To restrict the range of $f$, we multiply it with a smoothened rectangular window function. Specifically, we construct an even polynomial $w(x)$:
\begin{equation}
    w(x)\in
    \begin{cases}
        [1-\epsilon_{\text{rec}},1],\quad&x\in\left[-1+\xi,1-\xi\right],\\
        [-1,1],&x\in[-1,-1+\xi]\cup[1-\xi,1],\\
        [0,\epsilon_{\text{rec}}],&x\in\left[-b,-1\right]\cup\left[1,b\right],
    \end{cases}
\end{equation}
with the effective transition width $\frac{\xi}{b}$
and an asymptotic degree of~\cite[Lemma 29]{Gilyen2018singular}
\begin{equation}
    \mathbf{O}\left(\frac{b}{\xi}\log\left(\frac{1}{\epsilon_{\text{rec}}}\right)\right).
\end{equation}
We then let
\begin{equation}
    h_{\text{dom}}(x)=f(x)w(x).
\end{equation}
This already satisfies the first two requirements of $h_{\text{dom}}$ as long as
\begin{equation}
    \epsilon_{\text{rec}}=\mathbf{O}\left(\frac{\epsilon_{\text{dom}}}{\norm{f}_{\max,\left[-1+\xi,1-\xi\right]}}\right).
\end{equation}
To fulfill the third requirement, we require
\begin{equation}
    \epsilon_{\text{rec}}
    =\mathbf{O}\left(\frac{\epsilon_{\text{dom}}}{\exp\left(\sqrt{2(b-1)}d\right)\norm{f}_{\max,\left[-1,1\right]}}\right).
\end{equation}
This choice of $\epsilon_{\text{rec}}$ gives the polynomial extension $h_{\text{dom}}$ with the claimed behavior.
\end{proof}

We now state and prove the Hamiltonin QSVT result for even polynomials and Hermitian inputs, which has the same asymptotic query complexity as in the odd case.

\begin{theorem}[Singular value transformation, Hermitian even case]
\label{thm:qsvt_herm_even}
Let $H$ be a Hermitian matrix encoded by the Hamiltonian block encoding $E_{H}=\exp\left(-i\left[\begin{smallmatrix}
    0 & H\\
    H & 0
\end{smallmatrix}\right]\right)$ with $\norm{H}<\frac{\pi}{4}$. Let $f$ be a real even polynomial of degree $d$. Then the Hamiltonian block encoding
\begin{equation}
    E_{f(H)}=\exp\left(-i
    \begin{bmatrix}
        0 & f(H)\\
        f(H) & 0
    \end{bmatrix}\right)
\end{equation}
can be constructed with accuracy $\epsilon$ using
\begin{equation}
    \mathbf{O}\left(\left(d+\log\left(\frac{\norm{f}_{\max,\left[-\frac{\pi}{4},\frac{\pi}{4}\right]}}{\epsilon}\right)\right)
    \left(\norm{f}_{\max,\left[-\frac{\pi}{4},\frac{\pi}{4}\right]}+\log\left(\frac{1}{\epsilon}\right)\right)
    \right)
\end{equation}
queries to $E_{A}$.
\end{theorem}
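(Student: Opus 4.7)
The plan is to reduce the Hermitian even case to the odd case of \thm{qsvt_odd} by shifting the spectrum of $H$ into the positive half of $\left(0,\tfrac{\pi}{2}\right)$ and then applying an odd polynomial that agrees on this shifted window with a correspondingly shifted target. Set $H'\coloneqq H+\tfrac{\pi}{4}I$. Because $\norm{H}<\tfrac{\pi}{4}$, the eigenvalues of $H'$ lie strictly in $\left(0,\tfrac{\pi}{2}\right)$, so $H'$ satisfies the input requirement of \thm{qsvt_odd}. Moreover, since $X\otimes H$ and $X\otimes I$ commute, one has $E_{H'}=E_{H}\cdot\bigl(\exp(-i\tfrac{\pi}{4}X)\otimes I\bigr)$, so the shift is implemented by a single one-qubit rotation on the ancilla at no extra query cost.

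Define the shifted target $g(\sigma)\coloneqq f(\sigma-\tfrac{\pi}{4})$, so that $g(H')=f(H)$ exactly. We need $g$ applied only on the shifted spectral window $\mathcal{W}\coloneqq\left[\tfrac{\pi}{4}-\norm{H},\tfrac{\pi}{4}+\norm{H}\right]\subset\left(0,\tfrac{\pi}{2}\right)$, yet $g$ has no particular parity, whereas \thm{qsvt_odd} requires an odd polynomial. To bridge this, I would construct a polynomial $p_{0}$ that approximates $g$ on $\mathcal{W}$ to accuracy $\mathbf{O}(\epsilon)$ and is uniformly of order $\mathbf{O}(\epsilon)$ on the mirror window $-\mathcal{W}\subset\left(-\tfrac{\pi}{2},0\right)$; this can be obtained by multiplying a standard polynomial approximant of $g$ by a smooth rectangular window concentrated near $\mathcal{W}$, analogous to the construction underlying \prop{dominated_ext}. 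Then set
\begin{equation}
    p_{\text{odd}}(\sigma)\coloneqq p_{0}(\sigma)-p_{0}(-\sigma),
\end{equation}
which is an odd polynomial of the same asymptotic degree as $p_{0}$. On $\mathcal{W}$, the term $p_{0}(-\sigma)$ is $\mathbf{O}(\epsilon)$, so $p_{\text{odd}}(\sigma)=g(\sigma)+\mathbf{O}(\epsilon)$; globally, one verifies $\norm{p_{\text{odd}}}_{\max,\left[-\frac{\pi}{2},\frac{\pi}{2}\right]}=\mathbf{O}\bigl(\norm{f}_{\max,\left[-\frac{\pi}{4},\frac{\pi}{4}\right]}\bigr)$.

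Now apply \thm{qsvt_odd} to $E_{H'}$ with target polynomial $p_{\text{odd}}$. Because $H'$ is Hermitian and positive, its SVD coincides with its spectral decomposition, so $p_{\text{odd,sv}}(H')=p_{\text{odd}}(H')$, and the resulting Hamiltonian block encoding differs from $E_{p_{\text{odd}}(H')}$ by at most $\epsilon$ in operator norm. Meanwhile $\norm{p_{\text{odd}}(H')-f(H)}=\mathbf{O}(\epsilon)$ on the spectrum of $H'$, which by the error-propagation estimate in \sec{block_error} yields $\norm{E_{p_{\text{odd}}(H')}-E_{f(H)}}=\mathbf{O}(\epsilon)$. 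Combining and rescaling $\epsilon$ by a constant gives the claimed accuracy, and the query complexity follows by inserting the degree of $p_{\text{odd}}$, namely $\mathbf{O}(d+\log(1/\epsilon))$ with a prefactor controlled by $\norm{f}_{\max,[-\pi/4,\pi/4]}$, into the query bound of \thm{qsvt_odd}.

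The main obstacle is calibrating the three sources of approximation error—the accuracy of the window-times-approximant construction on $\mathcal{W}$, the suppression quality on the mirror $-\mathcal{W}$, and the accuracy of \thm{qsvt_odd} itself—so that the final error is $\epsilon$ without introducing additional logarithmic overhead beyond what \thm{qsvt_odd} already absorbs. The fixed positive margin $\tfrac{\pi}{4}-\norm{H}$ lets all window and extension parameters be chosen as constants; once they are fixed, the asymptotic degree of $p_{\text{odd}}$ matches that of $f$ up to the advertised logarithmic terms, and the stated complexity follows by routine accounting.
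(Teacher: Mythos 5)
Your proposal matches the paper's proof essentially exactly: shift $H$ by $\tfrac{\pi}{4}I$ (a zero-error operation since $H$ and $I$ commute), construct a windowed polynomial approximation in the spirit of \prop{dominated_ext}, take its odd part $p_{0}(\sigma)-p_{0}(-\sigma)$, invoke \thm{qsvt_odd}, and use that the SVD of the positive Hermitian $H+\tfrac{\pi}{4}I$ coincides with its spectral decomposition. The only imprecision is that the degree of $p_{\text{odd}}$ should be $\mathbf{O}\bigl(d+\log(\norm{f}_{\max,[-\pi/4,\pi/4]}/\epsilon)\bigr)$ rather than $\mathbf{O}(d+\log(1/\epsilon))$, but this additive $\log\norm{f}$ term is already absorbed by the query bound of \thm{qsvt_odd}, so your final complexity is correct.
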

\begin{proof}

Suppose that $\norm{H}\leq\frac{\pi}{4}-\xi<\frac{\pi}{4}$ for some constant $0<\xi\leq\frac{\pi}{4}$. This means we only need to approximate $\sin(f(\arcsin(x)))$ and $\frac{\cos(f(\arcsin(x)))}{\sqrt{1-x^2}}$ over $\left[-\sin\left(\frac{\pi}{4}-\xi\right),\sin\left(\frac{\pi}{4}-\xi\right)\right]$. Equivalently, we only need to approximate $f(x)$ over $\left[-\frac{\pi}{4}+\xi,\frac{\pi}{4}-\xi\right]$.

We first construct a dominated extension of $f$ using a rescaled version of~\prop{dominated_ext}. This produces a real even polynomial $h_{\text{dom}}$ with the behavior
\begin{equation}
\begin{aligned}
    &\abs{h_{\text{dom}}(x)-f(x)}
    \leq\epsilon,\qquad&&x\in\left[-\frac{\pi}{4}+\xi,\frac{\pi}{4}-\xi\right],\\
    &\abs{h_{\text{dom}}(x)}\leq\abs{f(x)}+\epsilon,\qquad&&x\in\left[-\frac{\pi}{4},\frac{\pi}{4}\right],\\
    &\abs{h_{\text{dom}}(x)}\leq\epsilon,\qquad&&x\in\left[-\frac{3\pi}{4},-\frac{\pi}{4}\right]\bigcup\left[\frac{\pi}{4},\frac{3\pi}{4}\right],
\end{aligned}
\end{equation}
and an asymptotic degree of
\begin{equation}
    \mathbf{O}\left(d+\log\left(\frac{\norm{f}_{\max,\left[-\frac{\pi}{4},\frac{\pi}{4}\right]}}{\epsilon}\right)\right).
\end{equation}

Then, we shift $h$ to the right by $\frac{\pi}{4}$ and take its odd component
\begin{equation}
    h_{\text{odd}}(x)=h_{\text{dom}}\left(x-\frac{\pi}{4}\right)-h_{\text{dom}}\left(-x-\frac{\pi}{4}\right),
\end{equation}
which has the behavior
\begin{equation}
\begin{aligned}
    &\abs{h_{\text{odd}}(x)-f\left(x-\frac{\pi}{4}\right)}
    \leq2\epsilon,\qquad&&x\in\left[\xi,\frac{\pi}{2}-\xi\right],\\
    &\abs{h_{\text{odd}}(x)}\leq\abs{f\left(x-\frac{\pi}{4}\right)}+2\epsilon,\qquad&&x\in\left[0,\frac{\pi}{2}\right],\\
\end{aligned}
\end{equation}
for $x\in\left[0,\frac{\pi}{2}\right]$, an odd parity $h_{\text{odd}}(-x)=-h_{\text{odd}}(x)$ and the same asymptotic degree as $h_{\text{dom}}$.
Correspondingly, we also shift the input Hamiltonian block encoding to
\begin{equation}
    E_{H+\frac{\pi}{4}I}=\exp\left(-i
    \begin{bmatrix}
        0 & H+\frac{\pi}{4}I\\
        H+\frac{\pi}{4}I & 0
    \end{bmatrix}\right).
\end{equation}
Since $H$ and $I$ commute, this shifting can be realized using the first-order Lie-Trotter formula with no Trotter error and no query overhead.

Now setting $H+\frac{\pi}{4}I$ as the input matrix with spectra enclosed by $\left[\xi,\frac{\pi}{2}-\xi\right]$, 
and $h_{\text{odd}}$ as the target real odd polynomial of degree $\mathbf{O}\left(d+\log\left(\frac{\norm{f}_{\max,\left[-\frac{\pi}{4},\frac{\pi}{4}\right]}}{\epsilon}\right)\right)$,
let us invoke the odd version of Hamiltonian QSVT in \thm{qsvt_odd} to get
\begin{equation}
    \exp\left(-i
    \begin{bmatrix}
        0 & h_{\text{odd,sv}}^\dagger\left(H+\frac{\pi}{4}I\right)\\
        h_{\text{odd,sv}}\left(H+\frac{\pi}{4}I\right) & 0
    \end{bmatrix}\right)
    =\exp\left(-i
    \begin{bmatrix}
        0 & h_{\text{odd}}\left(H+\frac{\pi}{4}I\right)\\
        h_{\text{odd}}\left(H+\frac{\pi}{4}I\right) & 0
    \end{bmatrix}\right)
\end{equation}
up to accuracy $\epsilon$.
To justify this identity, note that if $H=U\Lambda U^\dagger$ is the spectral decomposition of $H$, then $H+\frac{\pi}{4}I=U\left(\Lambda+\frac{\pi}{4}I\right) U^\dagger$ is the singular value decomposition of the positive semidefinite operator $H+\frac{\pi}{4}I$, which implies
\begin{equation}
    h_{\text{odd,sv}}\left(H+\frac{\pi}{4}I\right)
    =h_{\text{odd,sv}}\left(U\left(\Lambda+\frac{\pi}{4}I\right) U^\dagger\right)
    =Uh_{\text{odd}}\left(\Lambda+\frac{\pi}{4}I\right) U^\dagger
    =h_{\text{odd}}\left(H+\frac{\pi}{4}I\right),
\end{equation}
where
\begin{equation}
\begin{aligned}
    \norm{h_{\text{odd}}\left(H+\frac{\pi}{4}I\right)-f(H)}\leq2\epsilon.
\end{aligned}
\end{equation}
This has query complexity
\begin{equation}
\begin{aligned}
    &\mathbf{O}\left(\left(d+\log\left(\frac{\norm{f}_{\max,\left[-\frac{\pi}{4},\frac{\pi}{4}\right]}}{\epsilon}\right)
    +\log\left(\frac{\norm{h_{\text{odd}}}_{\max,\left[-\frac{\pi}{2},\frac{\pi}{2}\right]}}{\epsilon}\right)\right)
    \left(\norm{h_{\text{odd}}}_{\max,\left[-\frac{\pi}{2},\frac{\pi}{2}\right]}+\log\left(\frac{1}{\epsilon}\right)\right)
    \right)\\
    &=\mathbf{O}\left(\left(d+\log\left(\frac{\norm{f}_{\max,\left[-\frac{\pi}{4},\frac{\pi}{4}\right]}}{\epsilon}\right)\right)
    \left(\norm{f}_{\max,\left[-\frac{\pi}{4},\frac{\pi}{4}\right]}+\log\left(\frac{1}{\epsilon}\right)\right)
    \right)
\end{aligned}
\end{equation}
as expected.
\end{proof}

Now that we have the Hamiltonian QSVT algorithm for performing even polynomials $f(x)$ on Hermitian inputs, we handle a general input matrix $A$ by constructing a Hamiltonian block encoding of its Hermitian dilation $\left[\begin{smallmatrix}
    0 & A^\dagger\\
    A & 0
\end{smallmatrix}\right]$ and then applying $f(x)$. 

\begin{figure}[t]
	\centering
    \begin{subfigure}[t]{\textwidth}
        \centering
        \includegraphics[scale=1]{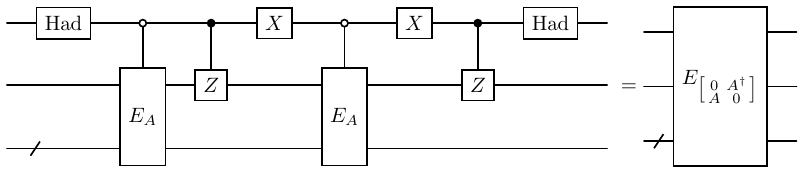}
        \caption{}
    \end{subfigure}%
    \\
    \begin{subfigure}[t]{\textwidth}
        \centering
        \includegraphics[scale=1]{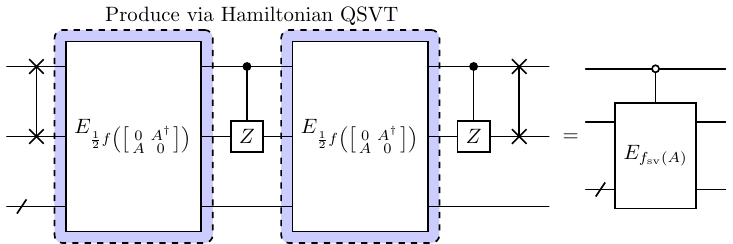}
        \caption{}
    \end{subfigure}%
\caption{Quantum circuit for Hamiltonian singular value transformation with even polynomials and generic input matrices.}
\label{fig:ham_qsvt_even}
\end{figure}

\begin{theorem}[Singular value transformation, even case]
\label{thm:qsvt_even}
Let $A$ be a matrix encoded by the Hamiltonian block encoding $E_{A}=\exp\left(-i\left[\begin{smallmatrix}
    0 & A^\dagger\\
    A & 0
\end{smallmatrix}\right]\right)$ with $\norm{A}<\frac{\pi}{4}$. Let $f$ be a real even polynomial of degree $d$. Then the controlled version of Hamiltonian block encoding
\begin{equation}
    E_{f_{\text{sv}}\left(A\right)}=\exp\left(-i
    \begin{bmatrix}
        0 & f_{\text{sv}}\left(A\right)\\
        f_{\text{sv}}\left(A\right) & 0
    \end{bmatrix}\right)
\end{equation}
can be constructed with accuracy $\epsilon$ using
\begin{equation}
    \mathbf{O}\left(\left(d+\log\left(\frac{\norm{f}_{\max,\left[-\frac{\pi}{4},\frac{\pi}{4}\right]}}{\epsilon}\right)\right)
    \left(\norm{f}_{\max,\left[-\frac{\pi}{4},\frac{\pi}{4}\right]}+\log\left(\frac{1}{\epsilon}\right)\right)
    \right)
\end{equation}
queries to controlled $E_{A}$.
See~\fig{ham_qsvt_even} for the corresponding circuit diagram.
\end{theorem}
\begin{proof}
We start with the Hamiltonian block encoding
$
    E_{\left[\begin{smallmatrix}
        0 & A^\dagger\\
        A & 0
    \end{smallmatrix}\right]}
    =\exp\left(-i\left[\begin{smallmatrix}
        0 & 0 & 0 & A^\dagger\\
        0 & 0 & A & 0\\
        0 & A^\dagger & 0 & 0\\
        A & 0 & 0 & 0
    \end{smallmatrix}\right]\right)
$,
which can be realized as follows:
\begin{equation}
\begin{aligned}
    \exp\left(-i\begin{bmatrix}
        0 & A^\dagger & 0 & 0\\
        A & 0 & 0 & 0\\
        0 & 0 & 0 & A^\dagger\\
        0 & 0 & A & 0
    \end{bmatrix}\right)
    \overset{\left(\text{C}Z\otimes I\right)(\cdot)\left(\text{C}Z\otimes I\right)}&{\longmapsto}
    \exp\left(-i\begin{bmatrix}
        0 & A^\dagger & 0 & 0\\
        A & 0 & 0 & 0\\
        0 & 0 & 0 & -A^\dagger\\
        0 & 0 & -A & 0
    \end{bmatrix}\right)\\
    \overset{\left(\text{Had}\otimes I\otimes I\right)(\cdot)\left(\text{Had}\otimes I\otimes I\right)}&{\longmapsto}
    \exp\left(-i\begin{bmatrix}
        0 & 0 & 0 & A^\dagger\\
        0 & 0 & A & 0\\
        0 & A^\dagger & 0 & 0\\
        A & 0 & 0 & 0
    \end{bmatrix}\right),
\end{aligned}
\end{equation}
where $\text{C}Z=\ketbra{0}{0}\otimes I+\ketbra{1}{1}\otimes Z$ is the controlled Pauli-$Z$ operator acting on the ancilla qubits. Here, $\exp\left(-i\left[\begin{smallmatrix}
        0 & A^\dagger & 0 & 0\\
        A & 0 & 0 & 0\\
        0 & 0 & 0 & A^\dagger\\
        0 & 0 & A & 0
    \end{smallmatrix}\right]\right)=I\otimes E_A$ is simply the input Hamiltonian block encoding with trivial action on the first ancilla. Alternatively, it can be constructed by combining the controlled version $\exp\left(-i\left[\begin{smallmatrix}
        0 & A^\dagger & 0 & 0\\
        A & 0 & 0 & 0\\
        0 & 0 & 0 & 0\\
        0 & 0 & 0 & 0
    \end{smallmatrix}\right]\right)$ and
\begin{equation}
    \exp\left(-i\begin{bmatrix}
        0 & A^\dagger & 0 & 0\\
        A & 0 & 0 & 0\\
        0 & 0 & 0 & 0\\
        0 & 0 & 0 & 0
    \end{bmatrix}\right)
    \overset{\left(X\otimes I\otimes I\right)(\cdot)\left(X\otimes I\otimes I\right)}{\longmapsto}
    \exp\left(-i\begin{bmatrix}
        0 & 0 & 0 & 0\\
        0 & 0 & 0 & 0\\
        0 & 0 & 0 & A^\dagger\\
        0 & 0 & A & 0
    \end{bmatrix}\right)
\end{equation}
using the first-order Lie-Trotter formula with no Trotter error.

Note that $\left[\begin{smallmatrix}
        0 & A^\dagger\\
        A & 0
    \end{smallmatrix}\right]$ is a Hermitian matrix with spectral norm $\norm{\left[\begin{smallmatrix}
        0 & A^\dagger\\
        A & 0
    \end{smallmatrix}\right]}=\norm{A}<\frac{\pi}{4}$.
We can thus invoke~\thm{qsvt_herm_even} with the even polynomial $\frac{1}{2}f$ to get
\begin{equation}
    E_{\frac{1}{2}f\left[\begin{smallmatrix}
        0 & A^\dagger\\
        A & 0
    \end{smallmatrix}\right]}
    =E_{\left[\begin{smallmatrix}
        \frac{1}{2}f_{\text{sv}}(A) & 0\\
        0 & \frac{1}{2}f_{\text{sv}}\left(A^\dagger\right)
    \end{smallmatrix}\right]}
    =\exp\left(-i\begin{bmatrix}
        0 & 0 & \frac{1}{2}f_{\text{sv}}(A) & 0\\
        0 & 0 & 0 & \frac{1}{2}f_{\text{sv}}\left(A^\dagger\right)\\
        \frac{1}{2}f_{\text{sv}}(A) & 0 & 0 & 0\\
        0 & \frac{1}{2}f_{\text{sv}}\left(A^\dagger\right) & 0 & 0
    \end{bmatrix}\right),
\end{equation}
where the first equality follows from the observation that
\begin{equation}
    \begin{bmatrix}
        0 & A^\dagger\\
        A & 0
    \end{bmatrix}^d
    =\begin{bmatrix}
        V\Sigma^dV^\dagger & 0\\
        0 & U\Sigma^dU^\dagger
    \end{bmatrix}
\end{equation}
when $d$ is an even integer and $A$ has the singular value decomposition $A=U\Sigma V^\dagger$.

We now combine this with
\begin{equation}
\begin{aligned}
    &\exp\left(-i\begin{bmatrix}
        0 & 0 & \frac{1}{2}f_{\text{sv}}(A) & 0\\
        0 & 0 & 0 & \frac{1}{2}f_{\text{sv}}\left(A^\dagger\right)\\
        \frac{1}{2}f_{\text{sv}}(A) & 0 & 0 & 0\\
        0 & \frac{1}{2}f_{\text{sv}}\left(A^\dagger\right) & 0 & 0
    \end{bmatrix}\right)\\
    \overset{\left(\text{C}Z\otimes I\right)(\cdot)\left(\text{C}Z\otimes I\right)}&{\longmapsto}
    \exp\left(-i\begin{bmatrix}
        0 & 0 & \frac{1}{2}f_{\text{sv}}(A) & 0\\
        0 & 0 & 0 & -\frac{1}{2}f_{\text{sv}}\left(A^\dagger\right)\\
        \frac{1}{2}f_{\text{sv}}(A) & 0 & 0 & 0\\
        0 & -\frac{1}{2}f_{\text{sv}}\left(A^\dagger\right) & 0 & 0
    \end{bmatrix}\right)
\end{aligned}
\end{equation}
using the first-order Trotter formula, obtaining $\exp\left(-i\left[\begin{smallmatrix}
        0 & 0 & f_{\text{sv}}(A) & 0\\
        0 & 0 & 0 & 0\\
        f_{\text{sv}}(A) & 0 & 0 & 0\\
        0 & 0 & 0 & 0
    \end{smallmatrix}\right]\right)$ exactly with no Trotter error. Finally, the result Hamiltonian block encoding can be converted to the standard form via
\begin{equation}
    \exp\left(-i\begin{bmatrix}
        0 & 0 & f_{\text{sv}}(A) & 0\\
        0 & 0 & 0 & 0\\
        f_{\text{sv}}(A) & 0 & 0 & 0\\
        0 & 0 & 0 & 0
    \end{bmatrix}\right)
    \overset{\left(\mathrm{SWAP}\otimes I\right)(\cdot)\left(\mathrm{SWAP}\otimes I\right)}{\longmapsto}
    \exp\left(-i\begin{bmatrix}
        0 & f_{\text{sv}}(A) & 0 & 0\\
        f_{\text{sv}}(A) & 0 & 0 & 0\\
        0 & 0 & 0 & 0\\
        0 & 0 & 0 & 0
    \end{bmatrix}\right).
\end{equation}
\end{proof}

\addtocounter{lemma}{1}

\subsection{Matrix inversion and fractional scaling}
\label{sec:qsvt_inverse_frac}
As immediate consequences of Hamiltonian QSVT, we obtain methods for inverting and rescaling matrices with Hamiltonian evolution.

Matrix inversion is a fundamental operation in quantum linear algebra, which is closely related to the quantum linear system problem that has been extensively studied~\cite{Childs2015LinearSystems,Ambainis2012VTAA,Subasi2019QLSPadiabatic,Costa2021linearsystems,Dalzell2024shortcut,low2024quantum} following the seminal work of Harrow, Hassidim and Lloyd~\cite{Harrow2009}. Within Hamiltonian block encoding, we assume that the input matrix $A$ is given by $E_{A}=\exp\left(-i\left[\begin{smallmatrix}
    0 & A^\dagger\\
    A & 0
\end{smallmatrix}\right]\right)$ with $\norm{A}<\frac{\pi}{2}$. Our goal is to produce $E_{A^{-1}/\kappa}=\exp\left(-i\left[\begin{smallmatrix}
        0 & A^{-1\dagger}/\kappa\\
        A^{-1}/\kappa & 0
\end{smallmatrix}\right]\right)$, where $\kappa\geq\norm{A^{-1}}$ is a known upper bound on the norm of inverse matrix. These conditions together imply that $A$ has singular values between $\big[\frac{1}{\kappa},\frac{\pi}{2}\big)$.

In light of the Hamiltonian QSVT algorithm from~\thm{qsvt_odd}, we can realize matrix inversion by first constructing an odd polynomial approximation $h(x)\approx\frac{1}{\kappa x}$ for $x\in\big[\frac{1}{\kappa},\frac{\pi}{2}\big)$~\cite[Corollary 69]{Gilyen2018singular}, followed by a dominated polynomial approximation of $p(x)\approx\sin(h(\arcsin(x)))$ and $q(x)\approx\frac{\cos(h(\arcsin(x)))}{\sqrt{1-x^2}}$ using~\prop{dominated}. However, this approach would introduce additional overhead to the query complexity. Instead, we directly construct the following dominated approximation of $\sin\left(\frac{1}{\kappa\arcsin(x)}\right)$ and $\frac{\cos\left(\frac{1}{\kappa\arcsin(x)}\right)}{\sqrt{1-x^2}}$, whose proof is deferred to~\append{composite_inverse}.

\begin{proposition}[Dominated approximation for Hamiltonian matrix inversion]
For any $\epsilon>0$, $\kappa>0$, and constant $0<\xi\leq\frac{\pi}{2}$, there exist a real odd polynomial $p(x)$ and even polynomial $q(x)$ such that
\begin{equation}
\begin{aligned}
    &\abs{p(x)-\sin\left(\frac{1}{\kappa\arcsin(x)}\right)}\leq\epsilon,\quad&&\forall x\in\left[-\sin\left(\frac{\pi}{2}-\xi\right),-\sin\left(\frac{1}{\kappa}\right)\right]\bigcup\left[\sin\left(\frac{1}{\kappa}\right),\sin\left(\frac{\pi}{2}-\xi\right)\right],\\
    &\abs{q(x)-\frac{\cos\left(\frac{1}{\kappa\arcsin(x)}\right)}{\sqrt{1-x^2}}}\leq\epsilon,\quad&&\forall x\in\left[-\sin\left(\frac{\pi}{2}-\xi\right),-\sin\left(\frac{1}{\kappa}\right)\right]\bigcup\left[\sin\left(\frac{1}{\kappa}\right),\sin\left(\frac{\pi}{2}-\xi\right)\right],\\
    &p^2(x)+(1-x^2)q^2(x)\leq1+\epsilon,\quad&&\forall x\in[-1,1].
\end{aligned}
\end{equation}
Moreover, both $p$ and $q$ have the asymptotic degree
\begin{equation}
    \mathbf{O}\left(\kappa\log\left(\frac{\kappa}{\epsilon}\right)\right).
\end{equation}
\end{proposition}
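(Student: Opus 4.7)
The plan is to mirror the three-step template used in \prop{dominated}, but specialized to the matrix-inverse setting. The naive route --- first produce a polynomial surrogate $h(y)\approx 1/(\kappa y)$ of degree $\mathbf{O}(\kappa\log(\kappa/\epsilon))$ on $[1/\kappa,\pi/2]$ via the standard QSVT inverse construction \cite[Corollary 69]{Gilyen2018singular}, and then feed this $h$ into \prop{dominated} --- would yield the wrong scaling $\mathbf{O}(\kappa\log^2(\kappa/\epsilon))$, losing one $\log$ factor from the outer $\sin/\cos$ Jackson-type estimates. To reach the tight $\mathbf{O}(\kappa\log(\kappa/\epsilon))$ degree stated in the proposition, I would build dominated approximations of the composite functions $\sin\!\left(\frac{1}{\kappa\arcsin(x)}\right)$ and $\cos\!\left(\frac{1}{\kappa\arcsin(x)}\right)/\sqrt{1-x^2}$ as single objects, using the complex-analytic machinery developed in \append{composite_stadium} (the same technique that removes the logarithmic overhead in the stronger form of \prop{dominated}).

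The concrete steps are as follows. First, observe that on the approximation domain $I_\xi=\left[-\sin(\tfrac{\pi}{2}-\xi),-\sin(\tfrac{1}{\kappa})\right]\cup\left[\sin(\tfrac{1}{\kappa}),\sin(\tfrac{\pi}{2}-\xi)\right]$ we have $|\arcsin(x)|\geq 1/\kappa$, so $\left|\frac{1}{\kappa\arcsin(x)}\right|\leq 1$ and both composites are bounded and well-defined there. Second, extend these composites to a stadium-shaped neighborhood of $I_\xi$ in the complex plane that stays a distance $\Theta(1/\kappa)$ away from the pole at $x=0$ and the branch points at $x=\pm 1$; composition with the entire functions $\sin$ and $\cos$ preserves analyticity, and the division by $\sqrt{1-x^2}$ is handled by pairing with the dominated surrogate $h_{\text{inv-sqrt}}$ from \lem{poly_inv_sqrt}. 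Standard Chebyshev-truncation bounds then produce polynomials of degree $\mathbf{O}(\kappa\log(\kappa/\epsilon))$ approximating each composite uniformly on $[-1,1]$. Third, symmetrize to secure the required odd/even parities (this is free), multiply by a smoothened window function on $I_\xi$ of transition width $\Theta(\xi)$ to enforce the global magnitude bound, and verify the dominated inequality
\begin{equation}
p^2(x)+(1-x^2)q^2(x)\leq\bigl(|\sin(\cdot)|+\mathbf{O}(\epsilon)\bigr)^2+\bigl(|\cos(\cdot)|+\mathbf{O}(\epsilon)\bigr)^2\leq 1+\mathbf{O}(\epsilon)
\end{equation}
exactly as in the corresponding computation inside the proof of \prop{dominated}, using Pythagoras pointwise and the uniform boundedness of the analytic extensions elsewhere.

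The main obstacle is quantifying the analyticity radius in Step~2. The target $1/(\kappa\arcsin(x))$ has a simple pole at $x=0$ that sits on the approximation domain's boundary; pushing the stadium too far from the real axis near the origin destroys analyticity, while a stadium that is too narrow gives Chebyshev coefficients that decay too slowly. The correct trade-off, which yields exactly the gap $\Theta(1/\kappa)$ and hence the claimed $\kappa\log(\kappa/\epsilon)$ degree, requires parameterizing the stadium so its half-width near $x=0$ is a constant multiple of $1/\kappa$ --- this is the delicate geometric estimate, and it is the reason the authors defer the full proof to \append{composite_inverse}. Everything after that is routine: the Lipschitz constants of the outer $\sin$ and $\cos$ are $1$, so the triangle-inequality chain from \prop{dominated} applies without the Bernstein factor $d\|f\|$ and the degrees of the separate approximants of $\sin,\cos$, and $1/\sqrt{1-x^2}$ all fit comfortably inside $\mathbf{O}(\kappa\log(\kappa/\epsilon))$.
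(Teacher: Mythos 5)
Your high-level plan — approximate the composites directly over a complex neighborhood of $I_\xi$ rather than chaining $\sin\circ f\circ\arcsin$ through generic lemmas — is the right idea, and you are correct that this is what removes a $\log$ factor. However, there is a quantitative gap at the heart of your Step 2. You propose a stadium-shaped neighborhood of the approximation domain with half-width $\Theta(1/\kappa)$, so that it stays $\Theta(1/\kappa)$ away from the pole at $x=0$. But a stadium of half-width $\ell$ around a length-$\Theta(1)$ interval is sandwiched between Bernstein ellipses of radius $\rho\approx 1+\ell$ and $\rho\approx 1+\sqrt{2\ell}$, and the dominated-approximation machinery the paper invokes (Corollary~\ref*{cor:approx_dominated_ellipse}, which relies on \cite[Theorem 21]{TangTian24}) produces a degree scaling as $\mathbf{O}\!\left(\tfrac{b}{\alpha^2}\log(\cdots)\right)$ for elliptical radius $\rho=1+\alpha$, not $\mathbf{O}\!\left(\tfrac{b}{\alpha}\log(\cdots)\right)$. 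With $\alpha=\Theta(1/\kappa)$ this gives $\mathbf{O}(\kappa^2\log(\kappa/\epsilon))$, one $\kappa$ factor too large. The paper's actual proof in Appendix~\ref*{append:composite_inverse} takes the strictly larger Bernstein ellipse $\mathbf{Ellipse}(\rho=1+\sqrt{c/\kappa})$ — i.e.\ semi-minor axis $\Theta(\kappa^{-1/2})$, far wider than a width-$1/\kappa$ stadium — and then proves a delicate geometric estimate: because a Bernstein ellipse tapers to a point at its focus $y=-1$ (which maps to $x=\sin(1/\kappa)$ after the affine change of variables), the $\arcsin$-image of the entire ellipse still lies in an annulus $\mathbf{Disk}(0;[\boldsymbol\Omega(1/\kappa),\mathbf{O}(1)])$ avoiding the pole. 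Concretely, a point at horizontal distance $\delta_x$ past the left focus has vertical excursion $\delta_y=\mathbf{O}\!\left(\sqrt{c/\kappa}\cdot\sqrt{\delta_x}\right)$, so $|\arcsin(z)|\geq\sin(1/\kappa)+\delta_x-\Theta(\sqrt{\delta_x/\kappa})$, and optimizing over $\delta_x$ gives $\boldsymbol\Omega(1/\kappa)$ for $c$ small. This $\sqrt{\cdot}$-versus-linear distinction is exactly what recovers $\mathbf{O}(\kappa\log(\kappa/\epsilon))$, and it is not a presentational detail you can gloss over; if you genuinely want to circumvent Corollary~\ref*{cor:approx_dominated_ellipse} and roll your own window construction on the width-$1/\kappa$ stadium, you would need to bound the exponential blowup of the Chebyshev truncation at distance $\Theta(1/\kappa)$ and $\Theta(1)$ past the interval and verify that a window with transition width $\Theta(1/\kappa)$ can suppress both within the claimed degree, none of which appears in your sketch.

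A second, smaller omission: you claim symmetrization for parity is ``free.'' In fact $\sin(1/(\kappa\arcsin(x)))$ and $\cos(1/(\kappa\arcsin(x)))$ are both ill-behaved in any neighborhood of $x=0$ on the complex plane, so you cannot build a one-sided analytic approximant and reflect it; the paper instead splits the target through the positive and negative branches $\mathbf{Inv}_\pm$ of the inversion function and the angle-addition identity $\sin(a+b)=\sin a\cos b+\cos a\sin b$, constructs a dominated approximant $h_{\cos\mathbf{Inv}_+,\text{dom}}$ that is $\approx\epsilon$ on the negative half, pairs it with a dominated approximant $h_{1,\text{dom}}$ of the constant $1$ supported on the negative half, and sums; the odd/even parts are extracted only at the very end. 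Your sketch needs to account for this branch decomposition to make the global dominated bound go through on both sides of the origin.
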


\begin{corollary}[Matrix inversion]
\label{cor:inverse}
Let $A$ be a matrix encoded by the Hamiltonian block encoding $E_{A}=\exp\left(-i\left[\begin{smallmatrix}
    0 & A^\dagger\\
    A & 0
\end{smallmatrix}\right]\right)$ with $\norm{A}<\frac{\pi}{2}$. Let $\kappa\geq\norm{A^{-1}}$ be a norm upper bound on the inverse matrix. Then the Hamiltonian block encoding
\begin{equation}
    E_{A^{-1}/\kappa}=\exp\left(-i\begin{bmatrix}
        0 & \frac{A^{-1\dagger}}{\kappa}\\
        \frac{A^{-1}}{\kappa} & 0
    \end{bmatrix}\right)
\end{equation}
can be constructed with accuracy $\epsilon$ using
\begin{equation}
    \mathbf{O}\left(\kappa\log\left(\frac{\kappa}{\epsilon}\right)\right)
\end{equation}
queries to $E_{A}$.
\end{corollary}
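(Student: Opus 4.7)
The plan is to apply the specialized dominated approximation stated just above in conjunction with the Hamiltonian QSVT machinery developed for odd polynomials in \sec{qsvt_odd}, specialized to the target $f(\sigma) = 1/(\kappa \sigma)$. This mirrors the template used in the proof of \thm{qsvt_odd}, but we bypass the generic \prop{dominated} in favor of the inversion-specific dominated approximation, which is the source of the improved $\mathbf{O}\left(\kappa \log(\kappa/\epsilon)\right)$ scaling.

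First I would invoke the preceding dominated approximation proposition with the given $\kappa$ and a constant $0<\xi\leq\frac{\pi}{2}$ chosen so that $\norm{A}\leq\frac{\pi}{2}-\xi$. This produces a real odd polynomial $p$ and a real even polynomial $q$, both of degree $\mathbf{O}\left(\kappa\log(\kappa/\epsilon)\right)$, with $p(x)\approx\sin\bigl(1/(\kappa\arcsin(x))\bigr)$ and $q(x)\approx\cos\bigl(1/(\kappa\arcsin(x))\bigr)/\sqrt{1-x^2}$ on the set $\left[-\sin\left(\frac{\pi}{2}-\xi\right),-\sin\left(\frac{1}{\kappa}\right)\right]\cup\left[\sin\left(\frac{1}{\kappa}\right),\sin\left(\frac{\pi}{2}-\xi\right)\right]$, and with the dominated bound $p^2(x)+(1-x^2)q^2(x)\leq 1+\epsilon$ over the whole unit interval.

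Next I would verify that this restricted approximation region captures every singular value of $A$. Since $\kappa\geq\norm{A^{-1}}$, each singular value $\sigma$ of $A$ lies in $\left[\frac{1}{\kappa},\frac{\pi}{2}-\xi\right]$, so $\sin(\sigma)\in\left[\sin(1/\kappa),\sin(\pi/2-\xi)\right]$ --- precisely the region of validity, safely away from the singularity of $1/(\kappa\arcsin(x))$ at $x=0$. Consequently $p(\sin\sigma)\approx\sin(1/(\kappa\sigma))$ and $q(\sin\sigma)\cos\sigma\approx\cos(1/(\kappa\sigma))$ uniformly in the relevant singular values, which is exactly what the QSVT block structure of~\eqref{eq: svd of hamiltonian block encoding} calls for after the $iX$ ancilla kick.

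Then I would follow the analysis of~\thm{qsvt_odd} verbatim: normalize $p_1:=p/\sqrt{1+\epsilon}$, $q_1:=q/\sqrt{1+\epsilon}$ so that $p_1^2+(1-x^2)q_1^2\leq 1$ on $[-1,1]$ (with only an $\mathbf{O}(\epsilon)$ hit to the approximation error on the spectral region), apply~\cite[Theorem 5]{Gilyen2018singular} to lift $(p_1,q_1)$ to complex polynomials $(p_2,q_2)$ of the same asymptotic degree satisfying the exact normalization $\abs{p_2(x)}^2+(1-x^2)\abs{q_2(x)}^2=1$, and feed them through the QSVT iterate built on $E_A(iX\otimes I)$ and the companion iterate using $E_{-A}$. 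The same imaginary-component bound $\Im^2(p_2)+(1-x^2)\Im^2(q_2)=\mathbf{O}(\epsilon)$ on the spectral region yields spectral-norm error $\mathbf{O}(\sqrt{\epsilon})$ restricted to each 2D block, and a final $-iX$ on the ancilla rotates the result into the standard form $E_{A^{-1}/\kappa}$. Rescaling $\epsilon\mapsto\epsilon^2$ at the outset absorbs the square-root loss without changing the asymptotic degree.

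The main obstacle is purely bookkeeping: confirming that the domain gap around $x=0$ in the inversion-specific dominated approximation aligns exactly with the spectral gap $\sigma\geq 1/\kappa$ guaranteed by $\kappa\geq\norm{A^{-1}}$, and that the constant $\xi$ can be chosen to simultaneously respect $\norm{A}<\pi/2$ and the approximation window without degrading the $\mathbf{O}(\kappa\log(\kappa/\epsilon))$ bound. All other steps are inherited without modification from the proof of~\thm{qsvt_odd}.
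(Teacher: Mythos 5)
Your proposal is correct and follows exactly the route the paper intends: the paper leaves \cor{inverse} without an explicit proof precisely because it is a direct consequence of the matrix-inversion dominated approximation proposition plugged into the QSVT machinery of \thm{qsvt_odd}, which is what you do. You correctly identify the one substantive check, namely that $\kappa \geq \norm{A^{-1}}$ places every singular value in $[\sin(1/\kappa), \sin(\pi/2-\xi)]$ so the approximation region of the specialized dominated approximation covers the spectrum, and you correctly note the $\epsilon \mapsto \epsilon^2$ rescaling at the end leaves the $\mathbf{O}(\kappa\log(\kappa/\epsilon))$ asymptotic unchanged.
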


We also consider rescaling a Hamiltonian block encoding by a fractional number $0<\tau<1$, which has already been used in the derivation of~\prop{add},~\thm{multiply} and~\prop{herm_multiply}. Concretely, we assume that the input matrix $A$ is given by $E_{A}=\exp\left(-i\left[\begin{smallmatrix}
    0 & A^\dagger\\
    A & 0
\end{smallmatrix}\right]\right)$ with $\norm{A}<\frac{\pi}{2}$. Our goal is to produce $E_{\tau A}=\exp\left(-i\left[\begin{smallmatrix}
        0 & \tau A^{\dagger}\\
        \tau A & 0
\end{smallmatrix}\right]\right)$.  This requires a dominated approximation $p(x)\approx\sin(\tau\arcsin(x))$ and $q(x)\approx\frac{\cos(\tau\arcsin(x))}{\sqrt{1-x^2}}$, whose construction resembles the bounded approximation of fractional query functions~\cite[Corollary 72]{Gilyen2018singular} and~\cite[Corollary 25]{TangTian24}, but requires only $1$ ancilla qubit to implement. 

In practice, fractional scaling can often be realized via a direct implementation of $E_{\tau A}$, thereby avoiding a logarithmic overhead. For instance, when the input $A$ is a linear combination of Pauli operators and its Hamiltonian block encoding $E_A$ is constructed by the Lie-Trotter-Suzuki formulas, the scaling factor $\tau$ can be introduced by single-qubit rotation gates. Similarly, if an algorithm makes calls to $E_{\tau f_{\text{sv}}(A)}$, it suffices to absorb $\tau$ into the polynomial approximation of $f$ and solve the dominated approximation problem for each value of $\tau$.

The rescaling result is previewed below and proved in~\append{composite_frac}. 

\begin{proposition}[Dominated approximation for Hamiltonian fractional scaling]
For any $\epsilon>0$, $0<\tau<1$, and constant $0<\xi\leq\frac{\pi}{2}$, there exist a real odd polynomial $p(x)$ and even polynomial $q(x)$ such that
\begin{equation}
\begin{aligned}
    &\abs{p(x)-\sin\left(\tau\arcsin(x)\right)}\leq\epsilon,\qquad&&\forall x\in\left[-\sin\left(\frac{\pi}{2}-\xi\right),\sin\left(\frac{\pi}{2}-\xi\right)\right],\\
    &\abs{q(x)-\frac{\cos\left(\tau\arcsin(x)\right)}{\sqrt{1-x^2}}}\leq\epsilon,\qquad&&\forall x\in\left[-\sin\left(\frac{\pi}{2}-\xi\right),\sin\left(\frac{\pi}{2}-\xi\right)\right],\\
    &p^2(x)+(1-x^2)q^2(x)\leq1+\epsilon,\qquad&&\forall x\in[-1,1].
\end{aligned}
\end{equation}
Moreover, both $p$ and $q$ have the asymptotic degree
\begin{equation}
    \mathbf{O}\left(\log\left(\frac{1}{\epsilon}\right)\right).
\end{equation}
\end{proposition}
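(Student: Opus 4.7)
The plan is to construct $p$ and $q$ as polynomial approximations to $\sin(\tau\arcsin(x))$ and $\frac{\cos(\tau\arcsin(x))}{\sqrt{1-x^2}}$ respectively, leveraging the exact trigonometric identity $\sin^2(\tau\arcsin(x)) + (1-x^2)\cdot\frac{\cos^2(\tau\arcsin(x))}{1-x^2} = 1$ valid on $[-1,1]$. Because this identity reproduces the dominated constraint exactly, the dominated condition will follow automatically once the approximation errors are small and the polynomials are slightly rescaled. This mirrors the normalization argument at the end of the proof of \thm{qsvt_odd}, which rescales $p,q$ by $1/\sqrt{1+\epsilon}$ to absorb the additive slack.

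First I would analyze the analyticity of the two target functions. Writing $(\sqrt{1-x^2}+ix)^\tau = \cos(\tau\arcsin(x)) + i\sin(\tau\arcsin(x))$ on the principal branch, both $\sin(\tau\arcsin(x))$ and $\cos(\tau\arcsin(x))$ extend analytically to a neighborhood of the open interval $(-1,1)$ in the complex plane, and are uniformly bounded by $1$ on $[-1,1]$ since $|\tau|<1$. Following the complex-analytic machinery of \append{composite_stadium} (and analogous to \cite[Corollary 25]{TangTian24}), Chebyshev truncation on a Bernstein-type region enclosing the unit interval yields polynomial approximations of $\sin(\tau\arcsin(x))$ on $[-1,1]$ with exponentially decaying tails, giving degree $\mathbf{O}(\log(1/\epsilon))$ independent of $\tau$. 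For $\frac{\cos(\tau\arcsin(x))}{\sqrt{1-x^2}}$ the same argument works on the shrunken interval $[-\sin(\pi/2-\xi),\sin(\pi/2-\xi)]$, since $\xi$ keeps the singularities at $x=\pm 1$ a constant distance away. To handle the singularity on the rest of $[-1,1]$, I would multiply the approximant by a smoothened rectangular window of the kind used in \prop{dominated_ext} and \cite[Lemma 29]{Gilyen2018singular}, which for fixed transition width $\xi$ has degree $\mathbf{O}(\log(1/\epsilon))$.

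With $p$ approximating $\sin(\tau\arcsin(x))$ on $[-1,1]$ to accuracy $\mathbf{O}(\epsilon)$ and $q$ approximating the windowed $\frac{\cos(\tau\arcsin(x))}{\sqrt{1-x^2}}$, the dominated inequality is verified in two regions. On $[-\sin(\pi/2-\xi),\sin(\pi/2-\xi)]$ the trigonometric identity gives $p^2(x) + (1-x^2)q^2(x) = 1 + \mathbf{O}(\epsilon)$ because $|q(x)|$ is bounded by $1/\cos(\pi/2-\xi) = \mathbf{O}(1)$ and the approximation errors are additive. Outside this region both the windowed $q$ and the approximation $p$ of $\sin(\tau\arcsin(x))$ (which is bounded by $1$) contribute at most $1+\mathbf{O}(\epsilon)$. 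Applying the same $p/\sqrt{1+\mathbf{O}(\epsilon)}, q/\sqrt{1+\mathbf{O}(\epsilon)}$ rescaling as in the proof of \thm{qsvt_odd} and then rescaling $\epsilon$ gives the claimed $1+\epsilon$ bound. Parity is enforced at the end by symmetrizing: replace $p$ by its odd part $(p(x)-p(-x))/2$ and $q$ by its even part $(q(x)+q(-x))/2$, which preserves all three conditions since the target functions already have the matching parities.

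The main obstacle is keeping the degree at $\mathbf{O}(\log(1/\epsilon))$, uniformly in $\tau\in(0,1)$, and in particular showing that the Bernstein-region analysis of $\sin(\tau\arcsin(x))$ and $\cos(\tau\arcsin(x))$ yields $\tau$-independent constants. A naive composition bound (as used to prove the weaker version of \prop{dominated}) would give a $\tau$-dependent prefactor or an extra polynomial factor of $\log$'s; avoiding this requires analyzing the composite functions globally on a complex stadium, exactly the machinery developed in \append{composite_stadium} and referenced in \append{composite_frac}. Once this analyticity bound is in place, the remaining windowing, rescaling, and parity steps are routine and contribute only additive $\mathbf{O}(\log(1/\epsilon))$ to the degree.
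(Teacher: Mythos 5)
Your overall strategy mirrors the paper's: bound the composite function on a Bernstein/stadium region, invoke Chebyshev truncation, control the behavior on $[-1,1]$ by windows, then rescale and parity-symmetrize. The paper does exactly this via \cor{approx_dominated_ellipse} after the stadium containment chain. However, there is a genuine gap in how you treat $p$.

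You claim that Chebyshev truncation on a Bernstein-type region enclosing the unit interval yields an approximation of $\sin(\tau\arcsin(x))$ on $[-1,1]$ of degree $\mathbf{O}(\log(1/\epsilon))$, and later you assert that this $p$ is bounded by $1+\mathbf{O}(\epsilon)$ on all of $[-1,1]$. Neither claim holds. For $0<\tau<1$, the function $\sin(\tau\arcsin(z))$ has genuine branch points at $z=\pm 1$: near $z=1$ one has $\arcsin(z)=\tfrac{\pi}{2}-\sqrt{2(1-z)}(1+\mathbf{O}(1-z))$, so $\sin(\tau\arcsin(z))=\sin(\tfrac{\tau\pi}{2})\cos\left(\tau\sqrt{2(1-z)}\right)-\cos(\tfrac{\tau\pi}{2})\sin\left(\tau\sqrt{2(1-z)}\right)$, and the second term carries a $\sqrt{1-z}$ factor that is not analytic (since $\cos(\tfrac{\tau\pi}{2})\neq 0$). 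Thus $\sin(\tau\arcsin(x))$ is \emph{not} analytic on any Bernstein ellipse enclosing $[-1,1]$, the Chebyshev coefficients decay only polynomially, and the degree needed for $\epsilon$-accuracy on the full interval is $\poly(1/\epsilon)$. Conversely, if you approximate only on the shrunken interval $[-\sin(\tfrac{\pi}{2}-\xi),\sin(\tfrac{\pi}{2}-\xi)]$ (which is what the degree count actually requires), the resulting truncated Chebyshev polynomial grows exponentially in its degree on the gap $[\sin(\tfrac{\pi}{2}-\xi),1]$, so the dominated bound $p^2(x)\leq 1+\epsilon$ fails there. You correctly introduce a window for $q$, but $p$ needs one as well. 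The paper handles this by rescaling $\sin(\tau\arcsin((1-\xi_1)\cdot))$ to the unit interval so the Bernstein bound applies, and then routes both $p$ and $q$ through \cor{approx_dominated_ellipse}, whose window enforces the domination condition on the gap. Adding the analogous window to your construction of $p$ (at only $\mathbf{O}(\log(1/\epsilon))$ extra degree, since $\xi$ is constant) repairs the argument.
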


\begin{corollary}[Fractional scaling]
\label{cor:frac_scale}
Let $A$ be a matrix encoded by the Hamiltonian block encoding $E_{A}=\exp\left(-i\left[\begin{smallmatrix}
    0 & A^\dagger\\
    A & 0
\end{smallmatrix}\right]\right)$ with $\norm{A}<\frac{\pi}{2}$. For any $0<\tau<1$, the Hamiltonian block encoding
\begin{equation}
    E_{\tau A}=\exp\left(-i\begin{bmatrix}
        0 & \tau A^\dagger\\
        \tau A & 0
    \end{bmatrix}\right)
\end{equation}
can be constructed with accuracy $\epsilon$ using
\begin{equation}
    \mathbf{O}\left(\log\left(\frac{1}{\epsilon}\right)\right)
\end{equation}
queries to $E_A$.
\end{corollary}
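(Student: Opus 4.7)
The plan is to derive this corollary as an immediate instance of the Hamiltonian QSVT from \thm{qsvt_odd}, applied to the odd degree-one polynomial $f(x)=\tau x$, combined with the tighter dominated approximation stated just above the corollary. First, I would observe that $f(x)=\tau x$ is a real odd polynomial and that, for any $A$ with singular value decomposition $A=U\Sigma V^\dagger$, we have $f_{\mathrm{sv}}(A)=U(\tau\Sigma)V^\dagger=\tau A$. Therefore a Hamiltonian singular value transformation of $A$ by $f$ produces exactly the intended block encoding $E_{\tau A}$, and \thm{qsvt_odd} is directly applicable.

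Next, I would follow the construction in the proof of \thm{qsvt_odd}. Given the assumption $\norm{A}<\frac{\pi}{2}$, fix $\xi\in(0,\frac{\pi}{2}]$ with $\norm{A}\leq\frac{\pi}{2}-\xi$. Invoke the dominated approximation proposition preceding this corollary to obtain a real odd polynomial $p$ and a real even polynomial $q$ that approximate $\sin(\tau\arcsin(x))$ and $\cos(\tau\arcsin(x))/\sqrt{1-x^2}$ respectively on $[-\sin(\frac{\pi}{2}-\xi),\sin(\frac{\pi}{2}-\xi)]$ up to $\epsilon$, while satisfying the dominated bound $p^2(x)+(1-x^2)q^2(x)\leq 1+\epsilon$ over the entire unit interval. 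Then renormalize by $\sqrt{1+\epsilon}$ and complexify via \cite[Theorem 5]{Gilyen2018singular} to obtain polynomials $p_2,q_2$ satisfying the exact QSVT normalization $\abs{p_2(x)}^2+(1-x^2)\abs{q_2(x)}^2=1$, at the cost of only a constant-factor error inflation. Feeding these phases into the alternating Hamiltonian QSVT circuit of \thm{qsvt_odd}, which interleaves calls to $E_A$ (and to $E_{-A}$, constructible from $E_A$ at no overhead via \prop{phase_scale}) with Pauli corrections on the ancilla and a final $-iX$ rotation, yields an $\epsilon$-approximation of $E_{\tau A}$.

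The query count equals the degree of $p$ and $q$. This is where the specialized approximation matters: the generic bound of \prop{dominated} instantiated at $d=1$ and $\norm{f}_{\max,[-\pi/2,\pi/2]}=\tau\pi/2=\mathbf{O}(1)$ would give $\mathbf{O}(\log^2(1/\epsilon))$, a spurious extra logarithmic factor. The specialized proposition instead guarantees degree $\mathbf{O}(\log(1/\epsilon))$, which directly yields the query complexity claimed in the corollary.

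The main obstacle is therefore not the corollary itself but the tighter dominated approximation of $\sin(\tau\arcsin(x))$ and $\cos(\tau\arcsin(x))/\sqrt{1-x^2}$ with degree $\mathbf{O}(\log(1/\epsilon))$ uniformly in $\tau\in(0,1)$ and $\xi$. The naive two-step composition --- approximating $\arcsin$ separately and then composing with a polynomial approximation of $\sin$ or $\cos$ --- costs an extra $\log$ factor because the intermediate approximation error must be rescaled by the Markov-brothers bound on $f'$. To avoid this, one must treat the composite function as a whole, analytically continue it onto a Bernstein ellipse enclosing $[-1,1]$ (as referenced in \append{composite}), and bound it there to obtain exponential Chebyshev truncation error. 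This is deferred to \append{composite_frac}, and everything else in the proof is bookkeeping on top of \thm{qsvt_odd}.
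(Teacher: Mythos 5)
Your proposal is correct and matches the paper's intent: Corollary~\ref*{cor:frac_scale} is proved by running the Hamiltonian QSVT construction of \thm{qsvt_odd} on the odd polynomial $f(x)=\tau x$, but replacing the generic dominated approximation of \prop{dominated} (which at $d=1$, $\norm{f}_{\max}=\mathbf{O}(1)$ would give the spurious $\mathbf{O}(\log^2(1/\epsilon))$ degree you correctly flag) with the specialized proposition stated immediately above, whose $\mathbf{O}(\log(1/\epsilon))$ bound is established in \append{composite_frac} by bounding $\sin(\tau\arcsin(\cdot))$ and $\cos(\tau\arcsin(\cdot))/\sqrt{1-(\cdot)^2}$ on a Bernstein ellipse as composite functions. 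Your accounting of the normalization, complexification via \cite[Theorem 5]{Gilyen2018singular}, and the alternating QSVT circuit is the same bookkeeping the paper relies on implicitly.
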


\section{Hamiltonian overlap estimation}
\label{sec:overlap}

In this section, we consider Hamiltonian overlap estimation, which extracts classical properties of Hamiltonian block encoded operators. We describe and analyze the overlap estimation algorithm in~\sec{overlap_est}, proving the main result~\thm{overlap}. We then discuss an application in~\sec{overlap_green} for estimating the Green's functions of many-body quantum systems.

\subsection{Overlap estimation}
\label{sec:overlap_est}
Let $A$ be a matrix and $\ket{\psi}$ be a quantum state. Then, the goal of overlap estimation is to estimate the (complex) number $\bra{\psi}A\ket{\psi}$ to an arbitrary precision with a sufficiently large confidence. In the case where $A$ is accessible through a unitary block encoding $O_A$, overlap estimation can be realized by the well-known Hadamard test~\cite[Appendix D]{2021Yupreconditioned}. Here, we describe an overlap estimation algorithm for Hamiltonian block encoded operators with a comparable complexity, at no extra cost of qubits.

At a high level, the Hadamard test uses a controlled version of unitary $O_A$ to implement $\frac{1}{2} (I \pm O_A)$ and $\frac{1}{2} (I \pm i O_A)$, so that $\bra{\psi} A \ket{\psi}$ can be expressed as a linear combination of $\norm{\frac{1}{2} (I + O_A) \ket{\psi}}^2$ and $ \norm{\frac{1}{2} (I + iO_A) \ket{\psi}}^2$. Similarly, one could estimate $\bra{\psi} A \ket{\psi}$ by converting $E_A$ into a controlled unitary block encoding of $A$ via~\cite[Corollary 71]{Gilyen2018singular}, then applying the Hadamard test, at the cost of an additional ancilla qubit. But, $E_A$ is already a Hamiltonian dilation. We exploit this to perform overlap estimation without requiring any additional ancilla qubits.

Our overlap estimation algorithm uses the following sampling based amplitude estimation as a subroutine.

\begin{lemma}[Sampling based amplitude estimation]
Let $\ket{\psi}$ be a quantum state, $U$ be a unitary and $\Pi$ be an orthogonal projection. Then for any $\epsilon>0$ and $0<p_{\text{fail}}\leq1$, there exists a quantum algorithm that outputs $y$ with failure probability
\begin{equation}
    \mathbf{P}\left(\abs{y-\norm{\Pi U\ket{0}}}\geq\epsilon\right)<p_{\text{fail}}
\end{equation}
using
\begin{equation}
    \mathbf{O}\left(\frac{1}{\epsilon^2}\log\left(\frac{1}{p_{\text{fail}}}\right)\right)
\end{equation}
samples of a quantum circuit, each making $1$ query to $\ket{\psi}$, $U$, and $\Pi$.
\end{lemma}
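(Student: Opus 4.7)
The plan is to use the standard independent-sampling estimator. For each of $n$ rounds, prepare $U\ket{0}$ and perform the projective measurement $\{\Pi, I - \Pi\}$; record $X_i = 1$ if the outcome lies in the range of $\Pi$ and $X_i = 0$ otherwise. Each $X_i$ is an independent Bernoulli random variable with success probability $p = \norm{\Pi U \ket{0}}^2$. Form the empirical mean $\hat{p} = \frac{1}{n}\sum_{i=1}^{n} X_i \in [0,1]$ and return $y = \sqrt{\hat{p}}$. Each of the $n$ sampled circuits uses exactly one query to $\ket{\psi}$, $U$, and $\Pi$, matching the accounting in the statement.

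The core task is to control the amplitude error $\abs{y - \sqrt{p}}$ with only $n = O(\epsilon^{-2}\log(1/p_{\text{fail}}))$ samples. Using the identity
\begin{equation}
    \abs{\sqrt{\hat p} - \sqrt{p}} = \frac{\abs{\hat p - p}}{\sqrt{\hat p} + \sqrt{p}},
\end{equation}
it is tempting to apply Hoeffding's inequality, which gives $\abs{\hat p - p} \leq O(\epsilon)$ for this sample count; however, that only yields amplitude error $O(\sqrt{\epsilon})$ via the weaker estimate $\abs{\sqrt{\hat p} - \sqrt{p}} \leq \sqrt{\abs{\hat p - p}}$ and would force the suboptimal scaling $n = O(\epsilon^{-4})$. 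The plan is instead to invoke Bernstein's inequality, which exploits the variance bound $\mathrm{Var}(X_i) = p(1-p) \leq p$ specific to Bernoulli samples. For a sufficiently large constant $C$, the choice $n = C\log(1/p_{\text{fail}})/\epsilon^2$ yields
\begin{equation}
    \abs{\hat p - p} \leq c_1\epsilon\sqrt{p} + c_2\epsilon^2
\end{equation}
with failure probability at most $p_{\text{fail}}$, where $c_1, c_2$ can be made as small as desired by increasing $C$.

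A short case analysis then finishes the proof. When $\sqrt{p} \geq \epsilon$, the denominator $\sqrt{\hat p} + \sqrt{p}$ is at least $\sqrt{p}$, and the Bernstein estimate gives $\abs{\sqrt{\hat p} - \sqrt{p}} \leq c_1\epsilon + c_2\epsilon^2/\sqrt{p} \leq (c_1 + c_2)\epsilon$. When $\sqrt{p} < \epsilon$, the Bernstein bound forces $\hat p \leq p + c_1\epsilon^2 + c_2\epsilon^2 = O(\epsilon^2)$, so $\sqrt{\hat p} = O(\epsilon)$ and the triangle inequality gives $\abs{\sqrt{\hat p} - \sqrt{p}} \leq \sqrt{\hat p} + \sqrt{p} = O(\epsilon)$. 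Rescaling $\epsilon$ by a constant absorbs the prefactors and yields the stated guarantee.

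The main obstacle is realizing that the $\epsilon^{-2}$ sample complexity is incompatible with a pure Hoeffding approach and genuinely requires the variance-sensitive Bernstein inequality: the nonlinear square-root transformation naively loses a factor of $\sqrt{\epsilon}$, and recovering it depends on the fact that Bernoulli variances are themselves controlled by the mean $p$. The only other delicate point is the behavior near $p = 0$, where $x \mapsto \sqrt{x}$ has unbounded derivative; this is handled by the second case above, where both $\sqrt{p}$ and $\sqrt{\hat p}$ must simultaneously be $O(\epsilon)$.
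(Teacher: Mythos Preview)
The paper states this lemma without proof, treating it as a standard fact, so there is no argument to compare against. Your proof is correct: the Bernstein bound $|\hat p - p| \lesssim \epsilon\sqrt{p} + \epsilon^2$ is exactly what is needed to push the error through the square root, and your two-case split handles the singularity of $\sqrt{\cdot}$ at the origin cleanly. Your diagnosis that a pure Hoeffding approach only delivers $O(\epsilon^{-4})$ samples is also accurate.

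As a side remark, an even more direct route exists: Okamoto's inequality for binomial proportions gives $\mathbf{P}\bigl(|\sqrt{\hat p} - \sqrt{p}| \geq \epsilon\bigr) \leq 2e^{-n\epsilon^2}$ outright, bypassing both Bernstein and the case analysis. But your argument is self-contained and makes the mechanism (small $p$ implies small Bernoulli variance, which compensates for the blowup of $(\sqrt{\cdot})'$) fully explicit.
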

\noindent Note that the above sampling-based amplitude estimation can be replaced by phase estimation~\cite{brassard2002quantum}, which reduces the total query complexity from $\mathbf{O}\left(\frac{1}{\epsilon^2}\log\left(\frac{1}{p_{\text{fail}}}\right)\right)$ to $\mathbf{O}\left(\frac{1}{\epsilon}\log\left(\frac{1}{p_{\text{fail}}}\right)\right)$ but increases the maximum query depth accordingly. The analysis of this modified variant is routine and will hence be omitted.

We now give a high-level description of the Hamiltonian-based overlap estimation algorithm.
Let $A$ be Hamiltonian block encoded as $E_A=\exp\left(-i\left[\begin{smallmatrix}
    0 & A^\dagger\\
    A & 0
\end{smallmatrix}\right]\right)$ with $\norm{A}<1$ and $A=U\Sigma V^\dagger$ be its singular value decomposition. We start by performing the Hamiltonian QSVT
\begin{equation}
    \exp\left(-i\begin{bmatrix}
        0 & A^\dagger\\
        A & 0
    \end{bmatrix}\right)
    \mapsto\exp\left(-i\begin{bmatrix}
        0 & f_{\text{sv}}^\dagger\left(A\right)\\
        f_{\text{sv}}(A) & 0
    \end{bmatrix}\right)
\end{equation}
with the target function $f$ to be specified momentarily. We then apply the transformed Hamiltonian block encoding to $4$ different initial states, perform repeated measurements and collect the measurement statistics.

First we prepare the initial state $\ket{0}\ket{\psi}$, and project onto $\bra{0}\otimes I$ after transforming the Hamiltonian block encoding. The probability of this event is given by
\begin{equation}
\begin{aligned}
    &\norm{\left(\bra{0}\otimes I\right)\exp\left(-i\begin{bmatrix}
        0 & f_{\text{sv}}^\dagger\left(A\right)\\
        f_{\text{sv}}\left(A\right)& 0
    \end{bmatrix}\right)\ket{0}\ket{\psi}}^2\\
    &=\norm{\left(\bra{0}\otimes I\right)\exp\left(-i\begin{bmatrix}
        0 & Vf(\Sigma)U^\dagger\\
        Uf(\Sigma)V^\dagger& 0
    \end{bmatrix}\right)\ket{0}\ket{\psi}}^2\\
    &=\norm{\left(\bra{0}\otimes I\right)\begin{bmatrix}
        V & 0\\
        0 & U
    \end{bmatrix}\begin{bmatrix}
        \cos\left(f(\Sigma)\right) & -i\sin\left(f(\Sigma)\right)\\
        -i\sin\left(f(\Sigma)\right) & \cos\left(f(\Sigma)\right)
    \end{bmatrix}\begin{bmatrix}
        V^\dagger & 0\\
        0 & U^\dagger
    \end{bmatrix}\ket{0}\ket{\psi}}^2\\
    &=\norm{\cos\left(f(\Sigma)\right)V^\dagger\ket{\psi}}^2.
\end{aligned}
\end{equation}
Then we prepare the initial state $\ket{1}\ket{\psi}$ and project onto $\bra{0}\otimes I$, with probability
\begin{equation}
\begin{aligned}
    &\norm{\left(\bra{0}\otimes I\right)\exp\left(-i\begin{bmatrix}
        0 & f_{\text{sv}}^\dagger\left(A\right)\\
        f_{\text{sv}}\left(A\right)& 0
    \end{bmatrix}\right)\ket{1}\ket{\psi}}^2\\
    &=\norm{\left(\bra{0}\otimes I\right)\exp\left(-i\begin{bmatrix}
        0 & Vf(\Sigma)U^\dagger\\
        Uf(\Sigma)V^\dagger& 0
    \end{bmatrix}\right)\ket{1}\ket{\psi}}^2\\
    &=\norm{\left(\bra{0}\otimes I\right)\begin{bmatrix}
        V & 0\\
        0 & U
    \end{bmatrix}\begin{bmatrix}
        \cos\left(f(\Sigma)\right) & -i\sin\left(f(\Sigma)\right)\\
        -i\sin\left(f(\Sigma)\right) & \cos\left(f(\Sigma)\right)
    \end{bmatrix}\begin{bmatrix}
        V^\dagger & 0\\
        0 & U^\dagger
    \end{bmatrix}\ket{1}\ket{\psi}}^2\\
    &=\norm{\sin\left(f(\Sigma)\right)U^\dagger\ket{\psi}}^2.
\end{aligned}
\end{equation}
Next we prepare the initial state $\frac{\ket{0}-\ket{1}}{\sqrt{2}}\ket{\psi}$ and project onto $\bra{0}\otimes I$, with probability
\begin{equation}
\begin{aligned}
    &\norm{\left(\bra{0}\otimes I\right)\exp\left(-i\begin{bmatrix}
        0 & f_{\text{sv}}^\dagger\left(A\right)\\
        f_{\text{sv}}\left(A\right)& 0
    \end{bmatrix}\right)\frac{\ket{0}-\ket{1}}{\sqrt{2}}\ket{\psi}}^2\\
    &=\norm{\left(\bra{0}\otimes I\right)\exp\left(-i\begin{bmatrix}
        0 & Vf(\Sigma)U^\dagger\\
        Uf(\Sigma)V^\dagger& 0
    \end{bmatrix}\right)\frac{\ket{0}-\ket{1}}{\sqrt{2}}\ket{\psi}}^2\\
    &=\norm{\left(\bra{0}\otimes I\right)\begin{bmatrix}
        V & 0\\
        0 & U
    \end{bmatrix}\begin{bmatrix}
        \cos\left(f(\Sigma)\right) & -i\sin\left(f(\Sigma)\right)\\
        -i\sin\left(f(\Sigma)\right) & \cos\left(f(\Sigma)\right)
    \end{bmatrix}\begin{bmatrix}
        V^\dagger & 0\\
        0 & U^\dagger
    \end{bmatrix}\frac{\ket{0}-\ket{1}}{\sqrt{2}}\ket{\psi}}^2\\
    &=\norm{\frac{1}{\sqrt{2}}\left(\cos\left(f(\Sigma)\right)V^\dagger+i\sin\left(f(\Sigma)\right)U^\dagger\right)\ket{\psi}}^2\\
    &=\frac{1}{2}\left(\norm{\cos\left(f\left(\Sigma\right)\right)V^\dagger\ket{\psi}}^2
    +\norm{\sin\left(f\left(\Sigma\right)\right)U^\dagger\ket{\psi}}^2
    +\Im\left(\bra{\psi}U\sin\left(2f\left(\Sigma\right)\right)V^\dagger\ket{\psi}\right)\right).
\end{aligned}
\end{equation}
Finally we prepare the initial state $\frac{\ket{0}+i\ket{1}}{\sqrt{2}}\ket{\psi}$ and project onto $\bra{0}\otimes I$, with probability
\begin{equation}
\begin{aligned}
    &\norm{\left(\bra{0}\otimes I\right)\exp\left(-i\begin{bmatrix}
        0 & f_{\text{sv}}^\dagger\left(A\right)\\
        f_{\text{sv}}\left(A\right)& 0
    \end{bmatrix}\right)\frac{\ket{0}+i\ket{1}}{\sqrt{2}}\ket{\psi}}^2\\
    &=\norm{\left(\bra{0}\otimes I\right)\exp\left(-i\begin{bmatrix}
        0 & Vf(\Sigma)U^\dagger\\
        Uf(\Sigma)V^\dagger& 0
    \end{bmatrix}\right)\frac{\ket{0}+i\ket{1}}{\sqrt{2}}\ket{\psi}}^2\\
    &=\norm{\left(\bra{0}\otimes I\right)\begin{bmatrix}
        V & 0\\
        0 & U
    \end{bmatrix}\begin{bmatrix}
        \cos\left(f(\Sigma)\right) & -i\sin\left(f(\Sigma)\right)\\
        -i\sin\left(f(\Sigma)\right) & \cos\left(f(\Sigma)\right)
    \end{bmatrix}\begin{bmatrix}
        V^\dagger & 0\\
        0 & U^\dagger
    \end{bmatrix}\frac{\ket{0}+i\ket{1}}{\sqrt{2}}\ket{\psi}}^2\\
    &=\norm{\frac{1}{\sqrt{2}}\left(\cos\left(f(\Sigma)\right)V^\dagger+\sin\left(f(\Sigma)\right)U^\dagger\right)\ket{\psi}}^2\\
    &=\frac{1}{2}\left(\norm{\cos\left(f\left(\Sigma\right)\right)V^\dagger\ket{\psi}}^2
    +\norm{\sin\left(f\left(\Sigma\right)\right)U^\dagger\ket{\psi}}^2
    +\Re\left(\bra{\psi}U\sin\left(2f\left(\Sigma\right)\right)V^\dagger\ket{\psi}\right)\right).
\end{aligned}
\end{equation}

Estimates of the above $4$ probabilities can be linearly recombined to yield an estimation of the complex number $\bra{\psi}U\sin\left(2f\left(\Sigma\right)\right)V^\dagger\ket{\psi}$. To get the desired overlap, we thus seek a polynomial $f\approx\frac{\arcsin}{2}$. Equivalently, we need a dominated approximation of $\sin\left(\frac{1}{2}\arcsin\arcsin(x)\right)$ and $\frac{\cos\left(\frac{1}{2}\arcsin\arcsin(x)\right)}{\sqrt{1-x^2}}$. We solve this dominated polynomial approximation problem in~\append{composite_overlap} and preview the result below.

\begin{proposition}[Dominated approximation for Hamiltonian overlap estimation]
\label{prop:dominated_overlap}
For any $\epsilon>0$ and constant $0<\xi\leq1$, there exist a real odd polynomial $p(x)$ and even polynomial $q(x)$ such that
\begin{equation}
\begin{aligned}
    &\abs{p(x)-\sin\left(\frac{1}{2}\arcsin\arcsin(x)\right)}\leq\epsilon,\qquad&&\forall x\in\left[-\sin\left(1-\xi\right),\sin\left(1-\xi\right)\right],\\
    &\abs{q(x)-\frac{\cos\left(\frac{1}{2}\arcsin\arcsin(x)\right)}{\sqrt{1-x^2}}}\leq\epsilon,\qquad&&\forall x\in\left[-\sin\left(1-\xi\right),\sin\left(1-\xi\right)\right],\\
    &p^2(x)+(1-x^2)q^2(x)\leq1+\epsilon,\qquad&&\forall x\in[-1,1].
\end{aligned}
\end{equation}
Moreover, both $p$ and $q$ have the asymptotic degree
\begin{equation}
    \mathbf{O}\left(\log\left(\frac{1}{\epsilon}\right)\right).
\end{equation}
\end{proposition}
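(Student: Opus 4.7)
The plan is to invoke the refined, ellipse-based composite-function approximation machinery developed in \append{composite}, rather than naively composing polynomial approximants of $\arcsin$, $\sin$, $\cos$, and $1/\sqrt{1-x^2}$ as in the proof of \prop{dominated}. The naive route would incur a product of two $\arcsin$ approximation degrees and produce an $\mathbf{O}(\log^2(1/\epsilon))$ bound, missing the claimed $\mathbf{O}(\log(1/\epsilon))$ by a logarithmic factor; the holistic approach avoids this overhead.

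The starting point is the exact Pythagorean identity: writing $g(x)=\sin\bigl(\tfrac12\arcsin\arcsin(x)\bigr)$ and $h(x)=\cos\bigl(\tfrac12\arcsin\arcsin(x)\bigr)/\sqrt{1-x^2}$, one has $g^2(x)+(1-x^2)h^2(x)=1$ on the natural real domain $x\in(-\sin 1,\sin 1)$. Thus once $p\approx g$ and $q\approx h$ with uniform accuracy $\mathbf{O}(\epsilon)$ on the approximation region, the dominated inequality is forced there up to $\mathbf{O}(\epsilon)$ error, and a standard windowing step extends it to all of $[-1,1]$.

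The next step is to establish a holomorphic extension of $g$ and $h$ to a Bernstein ellipse $E_\rho$ of some semi-axis $\rho>1$ depending only on $\xi$, after linearly rescaling the approximation interval $[-\sin(1-\xi),\sin(1-\xi)]$ onto $[-1,1]$. The inner $\arcsin$ is holomorphic on $\mathbb{C}\setminus\bigl((-\infty,-1]\cup[1,+\infty)\bigr)$; for sufficiently small $\rho>1$, the image of $E_\rho$ under rescaling followed by this map lies in a compact set inside $\{\abs{\operatorname{Re}z}<1\}$ bounded away from $\pm1$, on which the outer $\arcsin$ is again holomorphic. The factor $1/\sqrt{1-x^2}$ is holomorphic on $E_\rho$ as long as $\rho<1/\sin(1-\xi)$. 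Truncating the Chebyshev expansions of $g$ and $h$ at degree $\mathbf{O}(\log(1/\epsilon))$ then produces parity-preserving real polynomial approximants of uniform error $\mathbf{O}(\epsilon)$ by the standard Bernstein truncation bound $\sim\rho^{-d}\max_{E_\rho}\abs{\cdot}$. Multiplication by a rectangular-window polynomial of degree $\mathbf{O}(\log(1/\epsilon))$ as in \prop{dominated_ext} enforces the dominated inequality globally on $[-1,1]$ while perturbing the approximation by at most $\mathbf{O}(\epsilon)$ on the interior.

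The principal obstacle is the quantitative geometric estimate underlying the holomorphic-extension step: one must verify that for some $\rho>1$ depending only on $\xi$, the image $\arcsin\bigl(\sin(1-\xi)\cdot E_\rho\bigr)$ stays strictly inside the analyticity domain of the outer $\arcsin$ with positive margin from the branch cuts $(-\infty,-1]\cup[1,+\infty)$. This amounts to controlling the complex distortion of $\arcsin$ on a thin tube around $[-\sin(1-\xi),\sin(1-\xi)]$, and---once resolved---fixes the effective $\rho$ that determines the constant hidden in the $\mathbf{O}(\log(1/\epsilon))$ degree. Everything downstream (Chebyshev truncation, parity selection, and windowing) is then routine.
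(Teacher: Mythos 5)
Your proposal follows essentially the same strategy as the paper: rescale the approximation interval to $[-1,1]$, bound the composite function on a Bernstein ellipse, and then apply the dominated approximation corollary (\cor{approx_dominated_ellipse}) which bundles the Chebyshev truncation and the windowing together. Two remarks are worth making.

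First, the step you flag as the ``principal obstacle''---showing that $\arcsin\bigl(\sin(1-\xi)\,E_\rho\bigr)$ stays a positive distance from the branch cuts of the outer $\arcsin$---is exactly where the paper's stadium machinery does the work. The paper sandwiches $E_{1+\ell}$ between stadiums (\fig{stadium}) and then uses \prop{arcsin_stadium}, which shows that the inner $\arcsin$ maps $\mathbf{Stadium}([-1,1];\ell)$, rescaled by $1-\xi_1=\sin(1-\xi)$, into $\mathbf{Stadium}([-\arcsin(1-\xi_1),\arcsin(1-\xi_1)];\mathbf{O}(\ell))$, whose semicircle radius is again of order $\ell$. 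Since $\arcsin(1-\xi_1)=1-\xi$ is a constant strictly less than $1<\pi/2$, one can then rewrite the outer composite $\sin(\tfrac12\arcsin(\cdot))$ over this second stadium as another instance of the same ``$\arcsin$ on a rescaled stadium'' computation, landing in $\mathbf{Stadium}([-\pi/2,\pi/2];\mathbf{O}(\ell))$ where $\sin(\cdot/2)$ is entire with $\max$-norm $\exp(\mathbf{O}(\ell))$. Since you identified this as the crux but did not carry it out, your argument as written is incomplete---the quantitative control is precisely what pins down the constant in the $\mathbf{O}(\log(1/\epsilon))$ degree.

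Second, your treatment of $q$ is too quick. The dominated condition $p^2+(1-x^2)q^2\leq 1+\epsilon$ on all of $[-1,1]$ is \emph{not} a consequence of windowing plus the Pythagorean identity on the interior region: the factor $\tfrac{1}{\sqrt{1-x^2}}$ blows up at $\pm1$ and must be controlled pointwise, not just on $[-\sin(1-\xi),\sin(1-\xi)]$. The paper handles this by constructing $q$ as a product $h_{\text{inv-sqrt}}(x)\cdot h_{\cos}(\cdots)$ where $h_{\text{inv-sqrt}}$ satisfies the one-sided bound $0\leq h_{\text{inv-sqrt}}(x)\leq\tfrac{1}{\sqrt{1-x^2}}$ on all of $[-1,1]$ (\lem{poly_inv_sqrt}); this is what makes $(1-x^2)q^2(x)\leq(|\cos(\cdots)|+\epsilon)^2$ hold globally and in turn forces the dominated inequality. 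Applying \cor{approx_dominated_ellipse} or a generic window to $q$ directly would not give you this: windowing guarantees $|q|\leq\epsilon$ near $\pm1$, but says nothing about whether $|q|\leq\tfrac{1+\epsilon}{\sqrt{1-x^2}}$ throughout the middle portion of $[-1,1]\setminus[-\sin(1-\xi),\sin(1-\xi)]$. You would need to replicate the multiplicative structure from \append{composite_qsvt} rather than treat $p$ and $q$ symmetrically.
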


\begin{theorem}[Overlap estimation]
\label{thm:overlap}
Let $A$ be a matrix encoded by the Hamiltonian block encoding $E_{A}=\exp\left(-i\left[\begin{smallmatrix}
    0 & A^\dagger\\
    A & 0
\end{smallmatrix}\right]\right)$ with $\norm{A}<1$, and $\ket{\psi}$ be a quantum state. Then, the overlap $\bra{\psi}A\ket{\psi}$ can be estimated with accuracy $\epsilon$ and success probability $1-p_{\text{fail}}$:
\begin{equation}
    \mathbf{P}\left(\abs{y-\bra{\psi}A\ket{\psi}}\geq\epsilon\right)<p_{\text{fail}},
\end{equation}
using
\begin{equation}
    \mathbf{O}\left(\frac{1}{\epsilon^2}\log\left(\frac{1}{p_{\text{fail}}}\right)\right)
\end{equation}
samples of quantum circuits, each making 
\begin{equation}
    \mathbf{O}\left(\log\left(\frac{1}{\epsilon}\right)\right)
\end{equation}
queries to $E_{A}$ and $1$ query to $\ket{\psi}$. See \fig{overlap} for the corresponding circuit diagram.
\end{theorem}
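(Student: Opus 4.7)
The plan is to implement directly the four-state measurement scheme laid out in the body of this subsection. First I would apply Hamiltonian QSVT with target odd function $f(x) = \frac{1}{2}\arcsin(x)$ to convert $E_A$ into an approximate Hamiltonian block encoding of $f_{\text{sv}}(A) = U \cdot \tfrac{1}{2}\arcsin(\Sigma) \cdot V^\dagger$. Since $f$ is not polynomial but $A$ has spectrum strictly inside $(-1,1)$, I would invoke \prop{dominated_overlap} to obtain real odd $p$ and even $q$ of degree $\mathbf{O}(\log(1/\epsilon))$ that jointly dominate the approximation of $\sin(f(\arcsin(\cdot)))$ and $\cos(f(\arcsin(\cdot)))/\sqrt{1-(\cdot)^2}$ on the spectral support of $A$, and then feed these into the odd Hamiltonian QSVT from \thm{qsvt_odd} to obtain a unitary $W$ with $\|W - E_{f_{\text{sv}}(A)}\| \leq \epsilon'$ using $\mathbf{O}(\log(1/\epsilon'))$ queries to $E_A$, for a rescaled $\epsilon'$ chosen below.

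Next, for each of the four ancilla preparations $\ket{0}$, $\ket{1}$, $\frac{\ket{0}-\ket{1}}{\sqrt{2}}$, $\frac{\ket{0}+i\ket{1}}{\sqrt{2}}$ tensored with $\ket{\psi}$, I would run the sampling-based amplitude estimation lemma on the projector $\ketbra{0}{0}\otimes I$ after applying $W$. With the target $2f(x) = \arcsin(x)$, the identity $\sin(2f(\Sigma)) = \Sigma$ holds exactly on the spectrum of $A$, so the four ideal probabilities $p_1, p_2, p_3, p_4$ computed in the text satisfy
\begin{equation}
    \bra{\psi}A\ket{\psi} \;=\; \bigl(2p_4 - p_1 - p_2\bigr) \;+\; i\bigl(2p_3 - p_1 - p_2\bigr).
\end{equation}
I would estimate each $p_j$ to additive accuracy $\Theta(\epsilon)$ with failure probability at most $p_{\text{fail}}/4$, so that a union bound together with the above linear combination and the triangle inequality deliver $\bra{\psi}A\ket{\psi}$ with accuracy $\epsilon$ and failure probability at most $p_{\text{fail}}$. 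The cost per probability is $\mathbf{O}(\epsilon^{-2}\log(1/p_{\text{fail}}))$ samples, each of depth $\mathbf{O}(\log(1/\epsilon))$ queries to $E_A$ together with $1$ query to the preparation unitary of $\ket{\psi}$, matching the claimed complexity.

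The main obstacle I anticipate is the error propagation between the approximate Hamiltonian QSVT and the measured probabilities. Each ideal $p_j$ is a squared amplitude $\|(\bra{0}\otimes I) E_{f_{\text{sv}}(A)} \ket{\phi_j}\|^2$ of a unit vector, and perturbing $E_{f_{\text{sv}}(A)}$ by $\epsilon'$ in operator norm perturbs each such amplitude squared by at most $\mathbf{O}(\epsilon')$ by a direct $\|a\|^2 - \|b\|^2 \leq (\|a\|+\|b\|)\|a-b\|$ estimate. Setting $\epsilon' = \Theta(\epsilon)$ therefore absorbs the QSVT approximation error into the amplitude-estimation budget without inflating the query depth, since $\log(1/\epsilon') = \Theta(\log(1/\epsilon))$. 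A minor sanity check is that \thm{qsvt_odd} applies here: the hypothesis $\|A\| < 1 < \pi/2$ lands us in the regime where $\xi$ in \prop{dominated_overlap} can be chosen as a positive constant, so the polynomial degree absorbs only an $\mathbf{O}(1)$ factor from this conversion.
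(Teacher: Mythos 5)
Your proposal is correct and takes essentially the same route as the paper's proof: applying \thm{qsvt_odd} with the dominated approximation from \prop{dominated_overlap} to realize $f(x)=\tfrac{1}{2}\arcsin(x)$, running the four-state measurement scheme, and invoking the sampling-based amplitude estimation lemma with a final rescaling of $\epsilon$ and $p_{\text{fail}}$. The explicit linear combination $\bra{\psi}A\ket{\psi} = (2p_4 - p_1 - p_2) + i(2p_3 - p_1 - p_2)$ and the $\|a\|^2 - \|b\|^2 \leq (\|a\|+\|b\|)\|a-b\|$ perturbation estimate are correct details that the paper leaves implicit, but they do not constitute a different argument.
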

\begin{proof}
Suppose that $\norm{A}\leq1-\xi<1$ for some constant $0<\xi\leq1$, with the singular value decomposition $A=U\Sigma V^\dagger$. We apply Hamiltonian QSVT (\thm{qsvt_odd}) with the dominated polynomial approximation implemented by~\prop{dominated_overlap}. This produces the Hamiltonian block encoding
\begin{equation}
    E_{\frac{\arcsin_{\text{sv}}(A)}{2}}
    =\exp\left(-i\begin{bmatrix}
        0 & \frac{1}{2}\arcsin_{\text{sv}}^\dagger\left(A\right)\\
        \frac{1}{2}\arcsin_{\text{sv}}\left(A\right)& 0
    \end{bmatrix}\right)
\end{equation}
with accuracy $\epsilon$. Consequently, the $4$ probabilities 
\begin{equation}
\begin{aligned}
    &\norm{\cos\left(\frac{\arcsin(\Sigma)}{2}\right)V^\dagger\ket{\psi}}^2,\qquad
    \norm{\sin\left(\frac{\arcsin(\Sigma)}{2}\right)U^\dagger\ket{\psi}}^2,\\
    &\frac{1}{2}\left(\norm{\cos\left(\frac{\arcsin(\Sigma)}{2}\right)V^\dagger\ket{\psi}}^2
    +\norm{\sin\left(\frac{\arcsin(\Sigma)}{2}\right)U^\dagger\ket{\psi}}^2
    +\Im\left(\bra{\psi}U\Sigma V^\dagger\ket{\psi}\right)\right),\\
    &\frac{1}{2}\left(\norm{\cos\left(\frac{\arcsin(\Sigma)}{2}\right)V^\dagger\ket{\psi}}^2
    +\norm{\sin\left(\frac{\arcsin(\Sigma)}{2}\right)U^\dagger\ket{\psi}}^2
    +\Re\left(\bra{\psi}U\Sigma V^\dagger\ket{\psi}\right)\right),
\end{aligned}
\end{equation}
can be produced with accuracy $\mathbf{O}(\epsilon)$, which also leads to an $\mathbf{O}(\epsilon)$ approximation of $\bra{\psi}U\Sigma V^\dagger\ket{\psi}=\bra{\psi}A\ket{\psi}$.
Moreover, each circuit makes
\begin{equation}
    \mathbf{O}\left(\log\left(\frac{1}{\epsilon}\right)\right)
\end{equation}
queries to $E_A$ and $1$ query to $\ket{\psi}$.

We now apply the sampling based amplitude estimation to estimate each amplitude with accuracy $\mathbf{O}(\epsilon)$ and failure probability $p_{\text{fail}}$, which translates to an estimation of $\bra{\psi}A\ket{\psi}$ with accuracy $\mathbf{O}(\epsilon)$ and failure probability $4p_{\text{fail}}$. The claim follows by a final rescaling of $\epsilon$ and $p_{\text{fail}}$.
\end{proof}

\begin{figure}[t]
	\centering
\includegraphics[scale=\circuitwidth]{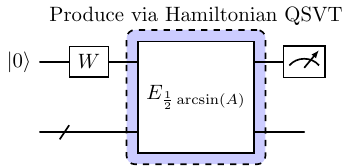}
\caption{Quantum circuit for overlap estimation. We select $W = I, X, (\mathrm{Had}\cdot X), (S\cdot\mathrm{Had})$ and perform norm estimation via measurement under $\left\{\ket{0},\ket{1}\right\}$.}
\label{fig:overlap}
\end{figure}

\subsection{Green's function estimation}
\label{sec:overlap_green}
Green's functions carry valuable spectroscopic information about the underlying quantum system and are widely applied in the study of quantum physics and chemistry~\cite{WangMcArdleBerta24,2021Yupreconditioned}. Specifically, the advanced and retarded Green's functions are defined as
\begin{equation}
\begin{aligned}
    [G]_{j,k,+}(z)&=\bra{\psi_0}A_j\left(z-(H-\lambda_0)\right)^{-1}A_k^\dagger\ket{\psi_0},\\
    [G]_{j,k,-}(z)&=\bra{\psi_0}A_k^\dagger\left(z+(H-\lambda_0)\right)^{-1}A_j\ket{\psi_0},\\
\end{aligned}
\end{equation}
where $A_j$ and $A_k^\dagger$ are the fermionic annihilation and creation operators, $\ket{\psi_0}$ is ground state of the target Hamiltonian $H$ with ground energy $\lambda_0$, and $z=\zeta+i\eta\in\mathbb{C}$ is a complex number such that $\lambda_0\pm z$ are not eigenvalues of $H$. This is satisfied when the broadening parameter $\eta\neq0$. Without loss of generality, we assume $\eta>0$ hereafter.

In calculating the Green's functions, our goal is to realize a complex-valued real-variable function of the form $x\mapsto\frac{1}{i\eta+x}$. We will present an implementation based on Hamiltonian QSVT using only one single ancilla qubit, recovering existing results based on the Fourier series expansion~\cite{WangMcArdleBerta24}.

Specifically, we consider the real and imaginary part of the function respectively:
\begin{equation}
\begin{aligned}
    \Re\left(\frac{1}{i\eta+x}\right)&=\frac{1}{2}\left(\frac{1}{i\eta+x}+\frac{1}{-i\eta+x}\right)
    =\frac{x}{\eta^2+x^2},\\
    \Im\left(\frac{1}{i\eta+x}\right)&=\frac{1}{2i}\left(\frac{1}{i\eta+x}-\frac{1}{-i\eta+x}\right)
    =\frac{-\eta}{\eta^2+x^2}.
\end{aligned}
\end{equation}
We then construct a dominated polynomial approximation for Hamiltonian QSVT, with the result previewed below and established in~\append{composite_green}.

\begin{proposition}[Dominated approximation for Green's function estimation]
\label{prop:dominated_green}
For any $\epsilon>0$, $\eta>0$, and constant $0<\xi\leq\frac{\pi}{2}$, there exist a real odd polynomial $p(x)$ and even polynomial $q(x)$ such that
\begin{equation}
\begin{aligned}
    &\abs{p(x)-\sin\left(\frac{1}{2}\arcsin\left(\frac{\eta\arcsin(x)}{\eta^2+\arcsin^2(x)}\right)\right)}\leq\epsilon,\quad&&\forall x\in\left[-\sin\left(\frac{\pi}{2}-\xi\right),\sin\left(\frac{\pi}{2}-\xi\right)\right],\\
    &\abs{q(x)-\frac{\cos\left(\frac{1}{2}\arcsin\left(\frac{\eta\arcsin(x)}{\eta^2+\arcsin^2(x)}\right)\right)}{\sqrt{1-x^2}}}\leq\epsilon,\quad&&\forall x\in\left[-\sin\left(\frac{\pi}{2}-\xi\right),\sin\left(\frac{\pi}{2}-\xi\right)\right],\\
    &p^2(x)+(1-x^2)q^2(x)\leq1+\epsilon,\quad&&\forall x\in[-1,1].
\end{aligned}
\end{equation}
Additionally, there exists a real even polynomial $h(x)$ such that
\begin{equation}
\begin{aligned}
    &\abs{h(x)-\frac{\eta}{\sqrt{\eta^2+\arcsin^2(x)}}}\leq\epsilon,\qquad&&\forall x\in\left[-\sin\left(\frac{\pi}{2}-\xi\right),\sin\left(\frac{\pi}{2}-\xi\right)\right],\\
    &\abs{h(x)}\leq1+\epsilon,\qquad&&\forall x\in[-1,1].
\end{aligned}
\end{equation}
Moreover, $p$, $q$, and $h$ all have the asymptotic degree
\begin{equation}
    \mathbf{O}\left(\frac{1}{\eta}\log\left(\frac{1}{\eta\epsilon}\right)\right).
\end{equation}
\end{proposition}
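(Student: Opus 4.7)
The plan is to construct $p$, $q$, and $h$ by invoking the composite-function dominated approximation machinery (see \append{composite_stadium} and its applications in \prop{dominated}, \prop{dominated_overlap}, \cor{inverse}, and \cor{frac_scale}), specialized to the three Green's-function targets. A preliminary step is to record parities: the inner map $\Phi_\eta(x):=\frac{\eta\arcsin(x)}{\eta^{2}+\arcsin^{2}(x)}$ is odd because $\arcsin$ is odd and $y\mapsto\eta y/(\eta^{2}+y^{2})$ is odd in $y$; hence $g(x):=\tfrac{1}{2}\arcsin(\Phi_\eta(x))$ is odd, $\sin\circ g$ is odd, $\cos\circ g$ is even, and division by the even $\sqrt{1-x^{2}}$ preserves parity, yielding the parities required for $p$ and $q$. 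The function $\frac{\eta}{\sqrt{\eta^{2}+\arcsin^{2}(x)}}$ is manifestly even, matching the parity of $h$.

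The core step is to locate a complex neighborhood of $[-1,1]$ on which all three targets extend holomorphically with uniform bounds. The obstructions are (i) the branch points of $\arcsin$ at $z=\pm 1$, which any Bernstein ellipse or stadium already avoids; and, critically, (ii) the zeros of $\eta^{2}+\arcsin^{2}(z)$, situated at $z=\pm i\sinh(\eta)\simeq\pm i\eta$ for small $\eta$. Obstruction (ii) forces the admissible neighborhood to have imaginary half-width of order $\eta$. By the standard Chebyshev/Bernstein truncation estimate, a polynomial approximating an analytic function on such a thin ellipse converges geometrically with rate $1+\Theta(\eta)$, producing the degree scaling $\mathbf{O}(\eta^{-1}\log(1/(\eta\epsilon)))$ claimed in the statement. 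Inside that ellipse, the uniform magnitude bounds follow from $|\arcsin(z)|\lesssim 1$, $|\eta^{2}+\arcsin^{2}(z)|\gtrsim \eta^{2}$ away from the excluded points, $|\Phi_\eta(z)|\le \tfrac{1}{2}+\mathbf{O}(\eta)$, and the single-valuedness of the outer $\arcsin$ on that image.

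Once those analytic bounds are established, the remainder is routine. I would take Chebyshev truncations of each target on the enlarged region and multiply by a smoothed rectangular window as in the proof of \prop{dominated_ext} to enforce the dominated inequalities $p^{2}(x)+(1-x^{2})q^{2}(x)\le 1+\epsilon$ and $|h(x)|\le 1+\epsilon$ throughout $[-1,1]$, while preserving accuracy on $[-\sin(\tfrac{\pi}{2}-\xi),\sin(\tfrac{\pi}{2}-\xi)]$; a final normalization by $\sqrt{1+\epsilon}$ (as in the proof of \thm{qsvt_odd}) can absorb the small excess. The main obstacle is the analyticity bookkeeping for the doubly-nested composition $\sin\circ\tfrac{1}{2}\arcsin\circ\Phi_\eta\circ\arcsin$: one must verify that the image of the chosen ellipse under $\Phi_\eta\circ\arcsin$ stays strictly inside the principal-branch domain of the outer $\arcsin$, i.e., avoids the branch cuts $(-\infty,-1]\cup[1,+\infty)$. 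On the real axis this is clear since $|\Phi_\eta|\le\tfrac{1}{2}$ by AM--GM, but promoting this strict bound uniformly over a complex region whose width depends on $\eta$, and simultaneously controlling $|\eta^{2}+\arcsin^{2}(z)|$ from below, is the delicate analytic step that the appendix must carry out in detail.
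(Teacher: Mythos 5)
Your proposal matches the paper's proof in \append{composite_green}: both work over a Bernstein ellipse/stadium of imaginary half-width $\mathbf{\Theta}(\eta)$ imposed by the poles of $\eta z/(\eta^2+z^2)$ at $z=\pm i\eta$ in $\arcsin$-space, both derive the degree $\mathbf{O}\bigl(\eta^{-1}\log(1/(\eta\epsilon))\bigr)$ from the resulting elliptical radius $1+\mathbf{\Theta}(\eta)$, and both close with the dominated extension/window construction of \prop{dominated_ext}. The "delicate analytic step" you flag (bounding $|\Phi_\eta|$ uniformly on the thin complex region and keeping the image away from the outer $\arcsin$ branch cuts) is precisely the content of the paper's two stadium-containment propositions, where $|\eta z/(\eta^2+z^2)|\le 2/3$ and $\eta/|\eta^2+z^2|^{1/2}\le 2/\sqrt{3}$ are shown by an explicit two-case argument on the flat sides and the semicircular caps of $\mathbf{Stadium}([-\pi/2,\pi/2];\eta/2)$.
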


We now start to describe the quantum algorithm, focusing on the retarded Green's function without loss of generality. To this end, we first decompose the operator into its Hermitian and anti-Hermitian component
\begin{equation}
\begin{aligned}
    &\bra{\psi_0}A_k^\dagger\left(z+(H-\lambda_0)\right)^{-1}A_j\ket{\psi_0}\\
    &=\bra{\psi_0}A_k^\dagger\Re\left(\left(z+(H-\lambda_0)\right)^{-1}\right)A_j\ket{\psi_0}
    +i\bra{\psi_0}A_k^\dagger\Im\left(\left(z+(H-\lambda_0)\right)^{-1}\right)A_j\ket{\psi_0}.
\end{aligned}
\end{equation}
Moreover, each component can be further expanded as a combination of diagonal terms as
\begin{equation}
\begin{aligned}
    \bra{\psi_0}A_k^\dagger\Re\left(\left(z+(H-\lambda_0)\right)^{-1}\right)A_j\ket{\psi_0}
    &=\frac{1}{4}\bra{\psi_0}\left(A_j^\dagger+A_k^\dagger\right)\Re\left(\left(z+(H-\lambda_0)\right)^{-1}\right)\left(A_j+A_k\right)\ket{\psi_0}\\
    &\quad-\frac{1}{4}\bra{\psi_0}\left(A_j^\dagger-A_k^\dagger\right)\Re\left(\left(z+(H-\lambda_0)\right)^{-1}\right)\left(A_j-A_k\right)\ket{\psi_0}\\
    &\quad+i\frac{1}{4}\bra{\psi_0}\left(A_j^\dagger-iA_k^\dagger\right)\Re\left(\left(z+(H-\lambda_0)\right)^{-1}\right)\left(A_j+iA_k\right)\ket{\psi_0}\\
    &\quad-i\frac{1}{4}\bra{\psi_0}\left(A_j^\dagger+iA_k^\dagger\right)\Re\left(\left(z+(H-\lambda_0)\right)^{-1}\right)\left(A_j-iA_k\right)\ket{\psi_0},\\
\end{aligned}
\end{equation}
and
\begin{equation}
\begin{aligned}
    \bra{\psi_0}A_k^\dagger\Im\left(\left(z+(H-\lambda_0)\right)^{-1}\right)A_j\ket{\psi_0}
    &=\frac{1}{4}\bra{\psi_0}\left(A_j^\dagger+A_k^\dagger\right)\Im\left(\left(z+(H-\lambda_0)\right)^{-1}\right)\left(A_j+A_k\right)\ket{\psi_0}\\
    &\quad-\frac{1}{4}\bra{\psi_0}\left(A_j^\dagger-A_k^\dagger\right)\Im\left(\left(z+(H-\lambda_0)\right)^{-1}\right)\left(A_j-A_k\right)\ket{\psi_0}\\
    &\quad+i\frac{1}{4}\bra{\psi_0}\left(A_j^\dagger-iA_k^\dagger\right)\Im\left(\left(z+(H-\lambda_0)\right)^{-1}\right)\left(A_j+iA_k\right)\ket{\psi_0}\\
    &\quad-i\frac{1}{4}\bra{\psi_0}\left(A_j^\dagger+iA_k^\dagger\right)\Im\left(\left(z+(H-\lambda_0)\right)^{-1}\right)\left(A_j-iA_k\right)\ket{\psi_0}.\\
\end{aligned}
\end{equation}
Finally, the Hermitian and anti-Hermitian component can be re-expressed as
\begin{equation}
\begin{aligned}
    \Re\left(\left(z+(H-\lambda_0)\right)^{-1}\right)
    &=\frac{1}{2}\left(\left(i\eta+(H-\lambda_0+\zeta)\right)^{-1}+\left(-i\eta+(H-\lambda_0+\zeta)\right)^{-1}\right)\\
    &=\frac{1}{2}\left(i\eta+(H-\lambda_0+\zeta)\right)^{-1}
    2(H-\lambda_0+\zeta)
    \left(-i\eta+(H-\lambda_0+\zeta)\right)^{-1}\\
    &=\frac{H-\lambda_0+\zeta}{\eta^2+(H-\lambda_0+\zeta)^2},
\end{aligned}
\end{equation}
and
\begin{equation}
\begin{aligned}
    \Im\left(\left(z+(H-\lambda_0)\right)^{-1}\right)
    &=\frac{1}{2i}\left(\left(i\eta+(H-\lambda_0+\zeta)\right)^{-1}-\left(-i\eta+(H-\lambda_0+\zeta)\right)^{-1}\right)\\
    &=\frac{1}{2i}\left(i\eta+(H-\lambda_0+\zeta)\right)^{-1}
    (-2i\eta)
    \left(-i\eta+(H-\lambda_0+\zeta)\right)^{-1}\\
    &=\frac{-\eta}{\eta^2+(H-\lambda_0+\zeta)^2},
\end{aligned}
\end{equation}
where we have abbreviated $A^{-1}BC^{-1}$ as $\frac{B}{AC}$ when Hermitian operators $A,B,C$ pairwise commute.

The above analysis applies to a general fermionic Hamiltonian in second quantization. In the case of quantum chemistry simulation, the target Hamiltonian $H$ often takes the form
\begin{equation}
    H=\sum_{p,q}h_{pq}A_p^\dagger A_q+\sum_{p,q,r,s}h_{pqrs}A_p^\dagger A_q A_r^\dagger A_s,
\end{equation}
where the coefficients $h_{pq}$ and $h_{pqrs}$ are real due to the use of real basis functions. This means $H$ is a real symmetric matrix under the Jordan-Wigner representation, and so are its Hermitian and anti-Hermitian components $\frac{H-\lambda_0+\zeta}{\eta^2+(H-\lambda_0+\zeta)^2}$ and $\frac{-\eta}{\eta^2+(H-\lambda_0+\zeta)^2}$. Consequently, the diagonal expansions introduced above can be simplified to
\begin{equation}
\begin{aligned}
    \bra{\psi_0}A_k^\dagger\Re\left(\left(z+(H-\lambda_0)\right)^{-1}\right)A_j\ket{\psi_0}
    &=\frac{1}{2}\bra{\psi_0}\left(A_j^\dagger+A_k^\dagger\right)\Re\left(\left(z+(H-\lambda_0)\right)^{-1}\right)\left(A_j+A_k\right)\ket{\psi_0}\\
    &\quad-\bra{\psi_0}A_j^\dagger\Re\left(\left(z+(H-\lambda_0)\right)^{-1}\right)A_j\ket{\psi_0}\\
    &\quad-\bra{\psi_0}A_k^\dagger\Re\left(\left(z+(H-\lambda_0)\right)^{-1}\right)A_k\ket{\psi_0},
\end{aligned}
\end{equation}
and
\begin{equation}
\begin{aligned}
    \bra{\psi_0}A_k^\dagger\Im\left(\left(z+(H-\lambda_0)\right)^{-1}\right)A_j\ket{\psi_0}
    &=\frac{1}{2}\bra{\psi_0}\left(A_j^\dagger+A_k^\dagger\right)\Im\left(\left(z+(H-\lambda_0)\right)^{-1}\right)\left(A_j+A_k\right)\ket{\psi_0}\\
    &\quad-\bra{\psi_0}A_j^\dagger\Im\left(\left(z+(H-\lambda_0)\right)^{-1}\right)A_j\ket{\psi_0}\\
    &\quad-\bra{\psi_0}A_k^\dagger\Im\left(\left(z+(H-\lambda_0)\right)^{-1}\right)A_k\ket{\psi_0}.
\end{aligned}
\end{equation}

Let us first consider the Hermitian component of the Green's function. Without loss of generality, we explain how to estimate $\bra{\psi_0}\left(A_j^\dagger+A_k^\dagger\right)
    \frac{H-\lambda_0+\zeta}{\eta^2+(H-\lambda_0+\zeta)^2}
    \left(A_j+A_k\right)\ket{\psi_0}$.
To this end, we rewrite it as
\begin{equation}
\label{eq:green_herm_overlap0}
    \bra{\psi_0}\left(A_j^\dagger+A_k^\dagger\right)
    \frac{H-\lambda_0+\zeta}{\eta^2+(H-\lambda_0+\zeta)^2}
    \left(A_j+A_k\right)\ket{\psi_0}
    =\frac{4}{\widetilde{\eta}}
    \bra{\psi_0}\frac{A_j^\dagger+A_k^\dagger}{2}
    \frac{\widetilde{\eta}\widetilde{H}}{\widetilde{\eta}^2+\widetilde{H}^2}
    \frac{A_j+A_k}{2}\ket{\psi_0},
\end{equation}
where
\begin{equation}
    \widetilde{\eta}=\frac{\eta}{\alpha_H+\abs{\lambda_0}+\abs{\zeta}},\qquad
    \widetilde{H}=\frac{H-\lambda_0+\zeta}{\alpha_H+\abs{\lambda_0}+\abs{\zeta}},
\end{equation}
and $\alpha_H\geq\norm{H}$ is a known upper bound on the spectral norm of the Hamiltonian.
At a high level, we start by constructing a Hamiltonian block encoding of
\begin{equation}
    \exp\left(-i\frac{1}{2}\begin{bmatrix}
        0 & A_j^\dagger+A_k^\dagger\\
        A_j+A_k & 0
    \end{bmatrix}\right).
\end{equation}
Through inverse block encoding, we obtain a unitary block encoding close to
\begin{equation}
    \begin{bmatrix}
        \frac{1}{2}\left(A_j+A_k\right) & \cdot\\
        \cdot & \cdot
    \end{bmatrix}
\end{equation}
If we apply it to the initial state $\ket{0}\ket{\psi_0}$, the probability of obtaining $\bra{0}\otimes I$ is approximately
\begin{equation}
\label{eq:green_herm_overlap1}
    \norm{\frac{A_j+A_k}{2}\ket{\psi_0}}^2.
\end{equation}
Assuming this is true, the post-measurement state becomes $\frac{(A_j+A_k)\ket{\psi_0}}{\norm{(A_j+A_k)\ket{\psi_0}}}$.
Next, we construct a Hamiltonian block encoding close to
\begin{equation}
    \exp\left(-i\begin{bmatrix}
        0 & f\left(\frac{\widetilde{\eta}\widetilde{H}}{\widetilde{\eta}^2+\widetilde{H}^2}\right)\\
        f\left(\frac{\widetilde{\eta}\widetilde{H}}{\widetilde{\eta}^2+\widetilde{H}^2}\right) & 0
    \end{bmatrix}\right),
\end{equation}
for $f(x)\approx\frac{1}{2}\arcsin(x)$. Reusing the single ancilla qubit from the previous step, we apply the Hamiltonian block encoding to $\ket{\beta}\frac{(A_j+A_k)\ket{\psi_0}}{\norm{(A_j+A_k)\ket{\psi_0}}}$ with $\ket{\beta}=\ket{0},\ket{1},\frac{\ket{0}-\ket{1}}{\sqrt{2}},\frac{\ket{0}+i\ket{1}}{\sqrt{2}}$.
If we measure the ancilla qubit, the probability of projecting onto $\bra{0}\otimes I$ can be linearly combined to approximately yield
\begin{equation}
\label{eq:green_herm_overlap2}
    \frac{\bra{\psi_0}\left(A_j^\dagger+A_k^\dagger\right)}{\norm{\bra{\psi_0}\left(A_j^\dagger+A_k^\dagger\right)}}
    \frac{\widetilde{\eta}\widetilde{H}}{\widetilde{\eta}^2+\widetilde{H}^2}
    \frac{(A_j+A_k)\ket{\psi_0}}{\norm{(A_j+A_k)\ket{\psi_0}}}.
\end{equation}
Hence, with sufficiently many measurement statistics, we obtain an estimate of
\begin{equation}
\label{eq:green_herm_overlap3}
    \norm{\frac{A_j+A_k}{2}\ket{\psi_0}}^2\cdot
    \frac{\bra{\psi_0}\left(A_j^\dagger+A_k^\dagger\right)}{\norm{\bra{\psi_0}\left(A_j^\dagger+A_k^\dagger\right)}}
    \frac{\widetilde{\eta}\widetilde{H}}{\widetilde{\eta}^2+\widetilde{H}^2}
    \frac{(A_j+A_k)\ket{\psi_0}}{\norm{(A_j+A_k)\ket{\psi_0}}},
\end{equation}
which is our desired quantity up to a rescaling factor of $\frac{4}{\widetilde{\eta}}=\frac{4\left(\alpha_H+\abs{\lambda_0}+\abs{\zeta}\right)}{\eta}$.

Similarly, we also consider the anti-Hermitian component of the Green's function. Without loss of generality, we explain how to estimate $\bra{\psi_0}\left(A_j^\dagger+A_k^\dagger\right)
    \frac{-\eta}{\eta^2+(H-\lambda_0+\zeta)^2}
    \left(A_j+A_k\right)\ket{\psi_0}$.
To this end, we rewrite it as
\begin{equation}
    \bra{\psi_0}\left(A_j^\dagger+A_k^\dagger\right)
    \frac{-\eta}{\eta^2+(H-\lambda_0+\zeta)^2}
    \left(A_j+A_k\right)\ket{\psi_0}
    =\frac{-4}{\eta}\bra{\psi_0}\frac{A_j^\dagger+A_k^\dagger}{2}
    \frac{\widetilde{\eta}^2}{\widetilde{\eta}^2+\widetilde{H}^2}
    \frac{A_j+A_k}{2}\ket{\psi_0}.
\end{equation}
Proceeding as above, we prepare the state $\frac{(A_j+A_k)\ket{\psi_0}}{\norm{(A_j+A_k)\ket{\psi_0}}}$ using Hamiltonian block encoding with probability $\norm{\frac{A_j+A_k}{2}\ket{\psi_0}}^2$.
Next, reusing the single ancilla qubit, we construct an operator close to
\begin{equation}
\begin{bmatrix}
    \frac{\widetilde{\eta}}{\sqrt{\widetilde{\eta}^2+\widetilde{H}^2}} & \cdot\\
    \cdot & \cdot
\end{bmatrix}.
\end{equation}
Here, the entry of interest is an even function in $\widetilde{H}$, which can be implemented using~\cite{Lloye21hamiltonianqsvt} that outputs a unitary block encoding. Due to the positive semidefiniteness of the target operator, the probability of projecting onto $\bra{0}\otimes I$ with initial state $\ket{0}\frac{(A_j+A_k)\ket{\psi_0}}{\norm{(A_j+A_k)\ket{\psi_0}}}$ would yield
\begin{equation}
    \frac{\bra{\psi_0}\left(A_j^\dagger+A_k^\dagger\right)}{\norm{\bra{\psi_0}\left(A_j^\dagger+A_k^\dagger\right)}}
    \frac{\widetilde{\eta}^2}{\widetilde{\eta}^2+\widetilde{H}^2}
    \frac{(A_j+A_k)\ket{\psi_0}}{\norm{(A_j+A_k)\ket{\psi_0}}}.
\end{equation}
Hence, with sufficiently many measurement statistics, we obtain an estimate of
\begin{equation}
    \norm{\frac{A_j+A_k}{2}\ket{\psi_0}}^2\cdot
    \frac{\bra{\psi_0}\left(A_j^\dagger+A_k^\dagger\right)}{\norm{\bra{\psi_0}\left(A_j^\dagger+A_k^\dagger\right)}}
    \frac{\widetilde{\eta}^2}{\widetilde{\eta}^2+\widetilde{H}^2}
    \frac{(A_j+A_k)\ket{\psi_0}}{\norm{(A_j+A_k)\ket{\psi_0}}},
\end{equation}
which is our desired quantity up to a rescaling factor of $-\frac{4}{\eta}$. The remaining analysis proceeds as in the Hermitian case.

\begin{proposition}[Green's function estimation]
\label{prop:green}
Let $H$ be a Hamiltonian with ground state $\ket{\psi_0}$, ground energy $\lambda_0$ and norm upper bound $\alpha_H\geq\norm{H}$. For any $z=\xi+i\eta\in\mathbb{C}$ and $\epsilon>0$, the Green's functions 
\begin{equation}
\begin{aligned}
    [G]_{j,k,+}(z)&=\bra{\psi_0}A_j\left(z-(H-\lambda_0)\right)^{-1}A_k^\dagger\ket{\psi_0},\\
    [G]_{j,k,-}(z)&=\bra{\psi_0}A_k^\dagger\left(z+(H-\lambda_0)\right)^{-1}A_j\ket{\psi_0},\\
\end{aligned}
\end{equation}
can be estimated with accuracy $\epsilon$ and success probability $1-p_{\text{fail}}$ using
\begin{equation}
    \mathbf{O}\left(\frac{(\alpha_H+\abs{\lambda_0}+\abs{\xi})^2}{\eta^2\epsilon^2}\log\left(\frac{1}{p_{\text{fail}}}\right)\right)
\end{equation}
samples of quantum circuits, each making
\begin{equation}
    \mathbf{O}\left(\frac{\alpha_H+\abs{\lambda_0}+\abs{\xi}}{\abs{\eta}}\log\left(\frac{\alpha_H+\abs{\lambda_0}+\abs{\xi}}{\abs{\eta}\epsilon}\right)\right)
\end{equation}
queries to the Hamiltonian block encoding of $\frac{H-\lambda_0+\xi}{\alpha_H+\abs{\lambda_0}+\abs{\xi}}$ and
\begin{equation}
    \mathbf{O}\left(\log\left(\frac{\alpha_H+\abs{\lambda_0}+\abs{\xi}}{\abs{\eta}\epsilon}\right)\right)
\end{equation}
queries to the preparation of $\ket{\psi_0}$. The circuits use one ancilla qubit beyond the system qubits supporting $H$.
\end{proposition}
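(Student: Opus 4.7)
The plan is to formalize the two-stage algorithm sketched in the preceding paragraphs, treating the retarded Green's function without loss of generality and exhibiting quantum subroutines whose combined complexity matches the claim. First I would decompose $[G]_{j,k,-}(z)$ into its Hermitian and anti-Hermitian parts, then polarize each bilinear form in $A_j$ and $A_k$ into a linear combination of diagonal overlaps of the form $\bra{\psi_0}(A_j^\dagger+\alpha A_k^\dagger) F(\widetilde H)(A_j+\alpha A_k)\ket{\psi_0}$ with $\alpha\in\{1,-1,i,-i\}$, where $F$ is either $\frac{\widetilde\eta\widetilde H}{\widetilde\eta^2+\widetilde H^2}$ (Hermitian part) or $\frac{\widetilde\eta^2}{\widetilde\eta^2+\widetilde H^2}$ (anti-Hermitian part), and $\widetilde H = (H-\lambda_0+\xi)/(\alpha_H+|\lambda_0|+|\xi|)$, $\widetilde\eta=\eta/(\alpha_H+|\lambda_0|+|\xi|)$ are the normalized operators appearing in equations~\eqref{eq:green_herm_overlap0}–\eqref{eq:green_herm_overlap3}.

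Next I would implement each diagonal overlap via a two-stage quantum circuit reusing one single ancilla qubit. Stage one constructs a Hamiltonian block encoding $E_{(A_j+\alpha A_k)/2}$ of the normalized creation/annihilation combination; since $\|\frac{1}{2}(A_j+\alpha A_k)\|\le 1<\frac{\pi}{2}$, I would then apply \cor{inverse}-style inverse block encoding (or equivalently a fractional-scaled variant of \thm{qsvt_odd} implementing $\sin^{-1}$) to turn $E_{(A_j+\alpha A_k)/2}$ into a unitary block encoding of $\frac{1}{2}(A_j+\alpha A_k)$ up to error $\epsilon_1$, whose action on $\ket 0 \ket{\psi_0}$ succeeds with probability $\|\frac{1}{2}(A_j+\alpha A_k)\ket{\psi_0}\|^2$ and prepares the post-selected state $(A_j+\alpha A_k)\ket{\psi_0}/\|(A_j+\alpha A_k)\ket{\psi_0}\|$. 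Stage two, conditioned on the first projection outcome, runs either: (i) for the Hermitian part, Hamiltonian QSVT (\thm{qsvt_herm_even} or \thm{qsvt_odd} as appropriate) instantiated with the dominated polynomial approximation of \prop{dominated_green} to the target function $\frac{1}{2}\arcsin\!\bigl(\frac{\widetilde\eta\widetilde H}{\widetilde\eta^2+\widetilde H^2}\bigr)$, followed by Hamiltonian overlap estimation as in \thm{overlap}; or (ii) for the anti-Hermitian part, the Lloyd--Lin--Kieferová Hamiltonian QSVT of~\cite{Lloye21hamiltonianqsvt} applied to the even positive function $\frac{\widetilde\eta}{\sqrt{\widetilde\eta^2+\widetilde H^2}}$ from \prop{dominated_green}, producing a unitary block encoding whose top-left projection gives directly the desired $\frac{\widetilde\eta^2}{\widetilde\eta^2+\widetilde H^2}$ matrix element. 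In both cases, sampling-based amplitude estimation on the joint two-stage circuit recovers the product of probabilities~\eqref{eq:green_herm_overlap3}.

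For the error and complexity accounting, I would propagate errors additively: Hamiltonian QSVT contributes $\epsilon_{\text{qsvt}}$ per invocation at query cost $\mathbf O\!\left(\frac{1}{\widetilde\eta}\log(1/(\widetilde\eta\epsilon_{\text{qsvt}}))\right)$ by \prop{dominated_green}, while amplitude estimation contributes $\epsilon_{\text{AE}}$ with $\mathbf O(\epsilon_{\text{AE}}^{-2}\log(1/p_{\text{fail}}))$ samples. Because the target quantity is rescaled by $4/\widetilde\eta = 4(\alpha_H+|\lambda_0|+|\xi|)/\eta$, to obtain an $\epsilon$-accurate estimate of the unrescaled overlap I would set $\epsilon_{\text{qsvt}},\epsilon_{\text{AE}} = \mathbf\Theta(\eta\epsilon/(\alpha_H+|\lambda_0|+|\xi|))$; substituting yields precisely the claimed query and sample complexities. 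The advanced Green's function $[G]_{j,k,+}$ is treated symmetrically by replacing $z$ with $-z$.

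The main obstacle will be the careful tracking of the post-selection: the second-stage estimate is a conditional probability, so one must either renormalize by the first-stage success probability (introducing correlated errors) or, more cleanly, estimate the unconditioned joint probability of the two $\bra 0 \otimes I$ projections, which equals the product~\eqref{eq:green_herm_overlap3} and can be directly amplitude-estimated on the composed circuit. A subtle point is that \prop{dominated_green} requires $\|\widetilde H\|\le \sin(\frac{\pi}{2}-\xi')<1$, which follows from $\|H-\lambda_0+\xi\|\le \alpha_H+|\lambda_0|+|\xi|$ with a small slack absorbed into the asymptotic constants; I would verify that this slack does not degrade the stated bounds. The remaining bookkeeping—polarization coefficients, the real-symmetry simplification for quantum chemistry Hamiltonians, and the union bound over the $\mathbf O(1)$ diagonal terms—is routine.
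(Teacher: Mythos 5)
Your proposal follows essentially the same two-stage structure as the paper's proof: polarize each Green's function matrix element into a constant number of diagonal overlaps, prepare the normalized excited state $(A_j + \alpha A_k)\ket{\psi_0}/\norm{(A_j + \alpha A_k)\ket{\psi_0}}$ probabilistically via an inverse ($\arcsin$-type) block encoding, then run Hamiltonian QSVT with the dominated approximation of \prop{dominated_green} and the overlap estimation protocol, setting all intermediate accuracies to $\mathbf{\Theta}(\widetilde\eta\epsilon)$ where $\widetilde\eta = \eta/(\alpha_H + \abs{\lambda_0} + \abs{\xi})$. The arithmetic checks out and matches the claimed bounds.

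Two remarks on where you depart from the paper's bookkeeping. First, the paper performs two separate amplitude estimations with \emph{adaptively} chosen accuracies — it first estimates $\norm{\frac{1}{2}(A_j+A_k)\ket{\psi_0}}^2$ to accuracy $\epsilon_{\text{amp},1}=\mathbf{\Theta}(\widetilde\eta\epsilon)$, observes that under the assumption $\epsilon \lesssim \abs{[G]}$ this gives a constant-factor multiplicative estimate of the success probability, and then tunes $\epsilon_{\text{amp},2}$ accordingly; the factor $1/\norm{\cdot}^2$ in its sample count reflects discarding shots that fail the first-stage projection. Your ``estimate the joint probability directly'' route gets the same final $\mathbf{O}\bigl((\alpha_H+\abs{\lambda_0}+\abs{\xi})^2/(\eta^2\epsilon^2)\bigr)$ sample complexity without the adaptivity, and is arguably cleaner, but you should be precise: there is no single joint probability equal to~\eqref{eq:green_herm_overlap3}. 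Rather, for each of the four $\beta \in \{\ket{0},\ket{1},\tfrac{\ket{0}-\ket{1}}{\sqrt 2},\tfrac{\ket{0}+i\ket{1}}{\sqrt 2}\}$ the two-stage circuit (with the intermediate ancilla measurement kept) has a joint ``both $\ket{0}$'' success probability $P_\beta = \norm{\frac{1}{2}(A_j+\alpha A_k)\ket{\psi_0}}^2 p_\beta$, and it is the signed linear combination of the four estimates $\widehat{P_\beta}$ that yields~\eqref{eq:green_herm_overlap3}. The intermediate measurement matters: a fully coherent composed circuit without it would mix in cross terms from the first-stage failure branch, so you are really doing Monte Carlo sampling of the two-shot outcome string, not coherent amplitude amplification on the composite unitary. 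Second, you treat the Hamiltonian block encoding of $\frac{1}{2}(A_j+\alpha A_k)$ as given; the paper's proof opens with an explicit Jordan--Wigner circuit that constructs it exactly (no Trotter error) with gate complexity $\mathbf{O}(k)$, which is needed to make the gate cost per shot concrete. Neither point affects the claimed asymptotics, but both belong in a complete proof.
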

\begin{proof}
We first describe a method to construct the Hamiltonian block encoding of a linear combination of fermionic operators as required by the Green's function estimation algorithm.
Specifically, for a system with $n$ fermionic modes and any coefficients $c_j,c_k\in\mathbb C$, our goal is to obtain
\begin{equation}
    E_{c_jA_j+c_kA_k}=\exp\left(-i\begin{bmatrix}
        0 & c_j^*A_j^\dagger+c_k^*A_k^\dagger\\
        c_jA_j+c_kA_k & 0\\
    \end{bmatrix}\right),
\end{equation}
where $A^\dagger$ and $A$ are fermionic creation and annihilation operators and $1\leq j<k\leq n$.

To this end, we consider the Jordan-Wigner representation of fermionic operators which reads
\begin{equation}
    A_j^\dagger=\frac{1}{2}\left(X_j-iY_j\right)\otimes Z_{j-1}\otimes\ldots\otimes Z_1,\qquad
    A_j=\frac{1}{2}\left(X_j+iY_j\right)\otimes Z_{j-1}\otimes\ldots\otimes Z_1.
\end{equation}
This allows us to represent $c_jA_j+c_kA_k$ as a (complex) linear combination of Pauli strings, which can be further decomposed using the Lie-Trotter-Suzuki formulas. However, this approach would introduce Trotter error and increase the complexity. Instead, we now describe an error-free circuit implementation.
Note that 
\begin{equation}
\begin{aligned}
    c_jA_j+c_kA_k&=\frac{c_k}{2}\left(X_k+iY_k\right)\otimes Z_{k-1}\otimes\ldots\otimes Z_1+\frac{c_j}{2}\left(X_j+iY_j\right)\otimes Z_{j-1}\otimes\ldots\otimes Z_1\\
    &=\left[\frac{c_k}{2}\left(X_k+iY_k\right)\otimes Z_{k-1}\otimes\ldots\otimes Z_j+\frac{c_j}{2}\left(X_j+iY_j\right)\right]\otimes Z_{j-1}\otimes\ldots\otimes Z_1.
\end{aligned}
\end{equation}
Hence, after multiplying the unitary $Z_{j-1}\otimes\ldots\otimes Z_1$, it suffices to consider the Hamiltonian block encoding of $\frac{c_k}{2}\left(X_k+iY_k\right)\otimes Z_{k-1}\otimes\ldots\otimes Z_j+\frac{c_j}{2}\left(X_j+iY_j\right)$.

We now remove the Pauli-$Z$ string from the first term using the circuit identity
\begin{equation}
    \mathrm{CNOT}_{1,2}\left(Z_1\otimes Z_2\right)\mathrm{CNOT}_{1,2}
    =I_1\otimes Z_2,
\end{equation}
yielding
\begin{equation}
\begin{aligned}
    &\frac{c_k}{2}\left(X_k+iY_k\right)\otimes I\otimes\ldots\otimes I \otimes Z_{j+1}\otimes Z_j+\frac{c_j}{2}\left(X_j+iY_j\right)\\
    &=\mathrm{CNOT}_{j+2,j+1}\cdots\mathrm{CNOT}_{k-1,k-2}\\
    &\quad\cdot\left(\frac{c_k}{2}\left(X_k+iY_k\right)\otimes Z_{k-1}\otimes\ldots\otimes Z_{j+1}\otimes Z_j+\frac{c_j}{2}\left(X_j+iY_j\right)\right)\\
    &\quad\cdot\mathrm{CNOT}_{k-1,k-2}\cdots\mathrm{CNOT}_{j+2,j+1},
\end{aligned}
\end{equation}
where $\mathrm{CNOT}_{j,k}$ denotes a CNOT gate with qubit $j$ as the control and $k$ as the target. This reduces the original Hamiltonian block encoding problem to the construction of $4$-qubit unitary
\begin{small}
\begin{equation}
\newmaketag
    \exp\left(-i\frac{1}{2}\begin{bmatrix}
        0 & c_k^*\left(X_k-iY_k\right)\otimes Z_{j+1}\otimes Z_j+c_j^*\left(X_j-iY_j\right)\\
        c_k\left(X_k+iY_k\right)\otimes Z_{j+1}\otimes Z_j+c_j\left(X_j+iY_j\right) & 0
    \end{bmatrix}\right),
\end{equation}
\end{small}%
which can be realized with gate complexity $\mathbf{O}(1)$ through a direct gate synthesis.
Altogether, the above approach has a gate complexity $\mathbf{O}(k)$ to construct a Hamiltonian block encoding of linear combination of fermionic operators $c_jA_j+c_kA_k$ with arbitrary coefficients.

Let us now analyze the query complexity of the algorithm. Recall that we need to estimate the overlaps of both \eq{green_herm_overlap1} and \eq{green_herm_overlap2}. Suppose that we perform amplitude estimation with a sufficiently large confidence and accuracy $\epsilon_{\text{amp},1}$, $\epsilon_{\text{amp},2}$ respectively. Then the error in estimating the (normalized) overlap \eq{green_herm_overlap3} is at most $\mathbf{O}\left(\epsilon_{\text{amp},1}+\norm{\frac{A_j+A_k}{2}\ket{\psi_0}}^2\epsilon_{\text{amp},2}\right)$. To ensure that the error in the Hermitian component of the Green's function \eq{green_herm_overlap0} is at most $\epsilon$ after rescaling, we set $\epsilon_{\text{amp},1}=\mathbf{\Theta}\left(\frac{\abs{\eta}\epsilon}{\alpha_H+\abs{\lambda_0}+\abs{\xi}}\right)$. Now assuming $\epsilon$ is smaller than  \eq{green_herm_overlap0}, $\epsilon_{\text{amp},1}$ is asymptotically smaller than \eq{green_herm_overlap3}, which is in turn upper bounded by the probability $\norm{\frac{A_j+A_k}{2}\ket{\psi_0}}^2$ in \eq{green_herm_overlap1}. We thus obtain a constant multiplicative approximation of $\norm{\frac{A_j+A_k}{2}\ket{\psi_0}}^2$ as a byproduct. Using this constant multiplicative estimate, we choose $\epsilon_{\text{amp},2}=\mathbf{\Theta}\left(\frac{\abs{\eta}\epsilon}{\norm{\frac{A_j+A_k}{2}\ket{\psi_0}}(\alpha_H+\abs{\lambda_0}+\abs{\xi})}\right)$. The total sample complexity is then asymptotically 
\begin{align}
    \mathbf{O}\left(\frac{1}{\epsilon_{\text{amp},2}^2}\frac{1}{\norm{\frac{A_j+A_k}{2}\ket{\psi_0}}^2}\right)=\mathbf{O}\left(\frac{(\alpha_H+\abs{\lambda_0}+\abs{\xi})^2}{\eta^2\epsilon^2}\right).
\end{align}
This analysis covers the error introduced by the measurement statistics. 

Following a similar reasoning, we set the accuracy of block encodings to be $\mathbf{\Theta}\left(\frac{\abs{\eta}\epsilon}{\alpha_H+\abs{\lambda_0}+\abs{\xi}}\right)$. Then the inverse block encoding of $\frac{1}{2}(A_j+A_k)$ makes $\mathbf{O}\left(\log\left(\frac{\alpha_H+\abs{\lambda_0}+\abs{\xi}}{\abs{\eta}\epsilon}\right)\right)$ queries to the preparation of initial state $\ket{\psi_0}$. 
Hamiltonian QSVT makes $\mathbf{O}\left(\frac{\alpha_H+\abs{\lambda_0}+\abs{\xi}}{\abs{\eta}}\log\left(\frac{\alpha_H+\abs{\lambda_0}+\abs{\xi}}{\abs{\eta}\epsilon}\right)\right)$ queries to the Hamiltonian block encoding of $\frac{H-\lambda_0+\xi}{\alpha_H+\abs{\lambda_0}+\abs{\xi}}$, as per~\prop{dominated_green}. This completes the analysis of the Hermitian case.
The anti-Hermitian case can be handled in a similar manner.
\end{proof}

For presentational purpose, we have described the Green's function estimation algorithm under the assumption that both the ground state $\ket{\psi_0}$ and ground energy $\lambda_0$ are available a priori. However, it is fairly straightforward to handle the general case with an approximately prepared ground state and an estimate of ground energy. We refer the reader to~\cite{2021Yupreconditioned} for details.

\section{Sum-of-squares Hamiltonian simulation}
\label{sec:sos}

We now describe applications of Hamiltonian-based matrix arithmetics to quantum simulation. Specifically, we describe a method to simulate a class of sum-of-squares Hamiltonians in~\sec{sos_sos}. When the summands are Hermitian, we present a Hamiltonian squaring subroutine in~\sec{sos_square} that significantly simplifies the simulation algorithm while reducing the ancilla qubit usage. We discuss the simulation of sum-of-squares electronic structure Hamiltonians in~\sec{sos_electron}.

\subsection{Generic sum-of-squares Hamiltonian simulation}
\label{sec:sos_sos}
A sum-of-squares Hamiltonian has the form
\begin{equation}
    H=\sum_{k=1}^{n_K}H_k=\sum_{k=1}^{n_K}\left(\sum_{j_1=1}^{n_J}A_{j_1k}\right)^\dagger\left(\sum_{j_2=1}^{n_J}A_{j_2k}\right).
\end{equation}
For simplicity, we assume that each $A_{jk}$ is an elementary term whose Hamiltonian block encoding is available as input. In practice, $A_{jk}$ are monomials of spin, fermionic, and bosonic operators, and their Hamiltonian block encodings can be constructed through further matrix arithmetics.

As discussed in~\sec{intro}, our algorithm proceeds as
\begin{equation}
\begin{aligned}
    &\exp\left(-i\sqrt{\tau}\begin{bmatrix}
        0 & A_{jk}^\dagger\\
        A_{jk} & 0
    \end{bmatrix}\right)\\
    \overset{\text{Summation}}&{\longmapsto}\exp\left(-i\sqrt{\tau}\begin{bmatrix}
        0 & \sum_{j=1}^{n_J}A_{jk}^\dagger\\
        \sum_{j=1}^{n_J}A_{jk} & 0
    \end{bmatrix}\right)\\
    \overset{\text{Multiplication}}&{\underset{\dagger}{\longmapsto}}\exp\left(-i\tau\begin{bmatrix}
        0 & \left(\sum_{j_1=1}^{n_J}A_{j_1k}\right)^\dagger\left(\sum_{j_2=1}^{n_J}A_{j_2k}\right)\\
        \left(\sum_{j_1=1}^{n_J}A_{j_1k}\right)^\dagger\left(\sum_{j_2=1}^{n_J}A_{j_2k}\right) & 0
    \end{bmatrix}\right)\\
    \overset{\text{Summation}}&{\longmapsto}\exp\left(-i\tau\begin{bmatrix}
        0 & H\\
        H & 0
    \end{bmatrix}\right)
    \overset{\text{Redefinition}}{\longmapsto}\exp\left(-i\tau\begin{bmatrix}
        H & 0\\
        0 & -H
    \end{bmatrix}\right).
\end{aligned}
\end{equation}
Here, we start with the Hamiltonian block encoding of each $A_{jk}$, and sum over $j$ using~\prop{add} to get a Hamiltonian block encoding of $\sum_j A_{jk}$. After taking the Hermitian conjugation using~\prop{herm_conjugate}, we invoke~\thm{multiply} to get a Hamiltonian block encoding of $H_k=\left(\sum_{j_1}A_{j_1k}\right)^\dagger\left(\sum_{j_2}A_{j_2k}\right)$. Finally, we apply~\prop{add} again to get a Hamiltonian evolution under $H$. Our goal is to choose a sufficiently small time step, such that the resulting simulation achieves the desired accuracy. This is handled by the following proposition.

\begin{corollary}[Generic sum-of-squares Hamiltonian simulation]
\label{cor:sos}
Given a sum-of-squares Hamiltonian $H=\sum_{k=1}^{n_K}H_k=\sum_{k=1}^{n_K}\left(\sum_{j_1=1}^{n_J}A_{j_1k}\right)^\dagger\left(\sum_{j_2=1}^{n_J}A_{j_2k}\right)$ with $\norm{A_{j,k}}<\frac{\pi}{2}$, accuracy $\epsilon>0$ and evolution time $t>0$, the controlled version of Hamiltonian block encoding
$E_{tH}=\exp\left(-it\left[\begin{smallmatrix}
    0 & H\\
    H & 0
\end{smallmatrix}\right]\right)$
can be constructed with accuracy $\epsilon$ using
\begin{equation}
    \left(\alpha_{\text{comm}}+\alpha_{\infty}\right)t\left(\frac{n_K\alpha_\infty t}{\epsilon}\right)^{o(1)}n_Kn_J
\end{equation}
queries to each controlled $E_{A_{jk}}=\exp\left(-i\left[\begin{smallmatrix}
    0 & A_{jk}^\dagger\\
    A_{jk} & 0
\end{smallmatrix}\right]\right)$, where
\begin{equation}
    \alpha_{\text{comm}}=\sup_{p\in\mathbb{Z}_{\geq1}}\left(\sum_{k_1,\ldots,k_{p+1}=1}^{n_K}\norm{\left[H_{k_{p+1}},\ldots,\left[H_{k_2},H_{k_1}\right]\right]}\right)^{\frac{1}{p+1}},\qquad
    \alpha_\infty=\max_{k=1,\ldots,n_K}\left(\sum_{j=1}^{n_J}\norm{A_{jk}}\right)^2.
\end{equation}
Moreover, the simulation uses $2$ ancilla qubits beyond those supporting $H$.
\end{corollary}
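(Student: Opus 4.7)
The plan is to realize a single Trotter step of $H=\sum_{k=1}^{n_K} H_k$ by composing Hamiltonian matrix arithmetics on each squared summand $H_k$, then combining the $n_K$ pieces via a high-order product formula. For each step of effective evolution time $\tau = t/r$, I follow the pipeline $E_{A_{jk}} \mapsto E_{\sqrt{\tau} A_{jk}} \mapsto E_{\sqrt{\tau} \sum_j A_{jk}} \mapsto E_{\tau H_k} \mapsto E_{\tau H}$: fractional scaling (\cor{frac_scale}) sets the effective evolution time; an inner matrix addition (\prop{add}) runs over $j$; Hermitian conjugation (\prop{herm_conjugate}) combined with matrix multiplication (\thm{multiply}) produces $E_{\tau H_k}$ from two copies of $E_{\sqrt{\tau} \sum_j A_{jk}}$, since the Lie group commutator product formula underlying \thm{multiply} naturally outputs an evolution whose time scales like the product of the two input times; and an outer matrix addition over $k$ combines the $H_k$ into $E_{\tau H}$. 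Iterating this step $r$ times and re-expressing the result via a basis change on the ancilla yields the controlled $E_{tH}$.

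The step count $r$ is governed by two separate contributions. First, the outer product formula over $\{H_k\}$ inherits the commutator scaling of \prop{add}, requiring $r \gtrsim \alpha_{\text{comm}}^{1+o(1)} t^{1+o(1)} / \epsilon^{o(1)}$; here the commutators of the Hermitian dilations $\bigl[\begin{smallmatrix}0 & H_k\\H_k & 0\end{smallmatrix}\bigr]$ reduce to block-diagonal embeddings of commutators of $H_k$ themselves, so the effective $\alpha_{\text{comm}}$ is the one stated. Second, the higher-order Lie group commutator formula behind \thm{multiply} requires the generator norm of the multiplication input to be $\mathbf{O}(1)$, i.e.\ $\sqrt{\tau} \cdot \|\sum_j A_{jk}\| = \mathbf{O}(1)$, which forces $\tau \lesssim 1/\alpha_\infty$ and contributes $r \gtrsim \alpha_\infty t$. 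Choosing $r = (\alpha_{\text{comm}}+\alpha_\infty)\,t \cdot (n_K \alpha_\infty t / \epsilon)^{o(1)}$ meets both constraints while, via the error-propagation corollary of \sec{block_error}, distributing the $\epsilon$ budget across the $r n_K$ intermediate multiplications and additions at per-stage accuracy $\epsilon/(r n_K)$. Each such stage makes $\mathbf{O}(n_J)$ queries to the underlying $E_{A_{jk}}$ after absorbing logarithmic overheads from fractional scaling and high-order formulas into the $o(1)$ exponent, giving the claimed total query complexity $r n_K n_J \cdot (\cdots)^{o(1)}$.

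The main obstacle is cleanly disentangling the two-layer error budget: the outer Trotter error scales with the commutator norm $\alpha_{\text{comm}}$ of the $H_k$, while each per-step multiplication introduces its own approximation error of order $\tau^{(p+1)/2}\alpha_\infty^{(p+1)/2}$ at formula order $p$; demanding both be $\mathbf{O}(\epsilon/r)$ simultaneously is what yields the additive scaling $\alpha_{\text{comm}} + \alpha_\infty$ rather than a multiplicative one. A secondary subtlety is the ancilla count: \thm{multiply} introduces a second ancilla qubit beyond the single qubit defining each Hamiltonian block encoding, but all subsequent arithmetic (addition, fractional scaling, Hermitian conjugation, and the outer summation over $k$) reuses the same two ancillae coherently without allocating fresh ones, so exactly $2$ ancilla qubits suffice across all $r$ Trotter steps, as asserted.
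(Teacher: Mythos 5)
Your proposal is correct and takes essentially the same approach as the paper: a three-layer composition (inner Trotter sum over $j$, group-commutator multiplication to form $H_k$, outer Trotter sum over $k$), with $r$ steps chosen so that the accumulated inner error and outer commutator error both fall below $\epsilon$. The only stylistic difference is where you attribute the dominant $\alpha_\infty t$ contribution to $r$: you frame it as the generator-norm requirement $\sqrt{\tau}\norm{\sum_j A_{jk}}=\mathbf{O}(1)$ needed for the group commutator formula, while the paper derives it directly from the per-step error bound $\mathbf{O}\bigl(\sqrt{\alpha_\infty t/r}^{\,p+1}\bigr)$ summed over $r n_K$ applications; both give $r=\alpha_\infty t\,(n_K\alpha_\infty t/\epsilon)^{o(1)}$, and your subsequent error-budget distribution at accuracy $\epsilon/(r n_K)$ recovers the $n_K$-dependence in the $(\cdots)^{o(1)}$ factor, so the argument closes the same way.
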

\begin{proof}
There are three sources of error in the simulation: the error from Hamiltonian block encoding the summation $\sum_j A_{jk}$, the error from Hamiltonian-based matrix multiplication $H_k=\left(\sum_{j_1}A_{j_1k}\right)^\dagger\left(\sum_{j_2}A_{j_2k}\right)$, and finally the error from Hamiltonian block encoding the summation $\sum_k\left(\sum_{j_1}A_{j_1k}\right)^\dagger\left(\sum_{j_2}A_{j_2k}\right)$. Dividing the simulation into $r$ steps, we will show how to choose $r$ to achieve an overall accuracy of $\epsilon$.

To this end, we use a $p$th-order formula $S_{p,k}\left(\sqrt{\frac{t}{r}}\right)$ to implement the Hamiltonian block encoding of $\sum_j A_{jk}$, obtaining
\begin{equation}
    \norm{S_{p,k}\left(\sqrt{\frac{t}{r}}\right)
    -\exp\left(-i\sqrt{\frac{t}{r}}\begin{bmatrix}
        0 & \sum_{j=1}^{n_J}A_{jk}^\dagger\\
        \sum_{j=1}^{n_J}A_{jk} & 0
    \end{bmatrix}\right)}
    =\mathbf{O}\left(\sqrt{\frac{\alpha_\infty t}{r}}^{p+1}\right).
\end{equation}
Assuming this is implemented perfectly, we then use a $p$th-order formula $M_{p,k}\left(\sqrt{\frac{t}{r}}\right)$ to realize the Hamiltonian-based matrix multiplication, giving
\begin{equation}
\begin{aligned}
    &\norm{M_{p,k}\left(\sqrt{\frac{t}{r}}\right)-
    \exp\left(-i\frac{t}{r}\begin{bmatrix}
        0 & \left(\sum_{j_1=1}^{n_J}A_{j_1k}\right)^\dagger\left(\sum_{j_2=1}^{n_J}A_{j_2k}\right)\\
        \left(\sum_{j_1=1}^{n_J}A_{j_1k}\right)^\dagger\left(\sum_{j_2=1}^{n_J}A_{j_2k}\right) & 0
    \end{bmatrix}\right)}\\
    &=\norm{M_{p,k}\left(\sqrt{\frac{t}{r}}\right)-
    \exp\left(-i\frac{t}{r}\begin{bmatrix}
        0 & H_k\\
        H_k & 0
    \end{bmatrix}\right)}
    =\mathbf{O}\left(\sqrt{\frac{\alpha_\infty t}{r}}^{p+1}\right).
\end{aligned}
\end{equation}
Assuming again that this is done perfectly, we finally use a $p$th-order formula $S_p\left(\frac{t}{r}\right)$ to construct the Hamiltonian block encoding of $\sum_kH_k$:
\begin{equation}
\begin{aligned}
    &\norm{S_p\left(\frac{t}{r}\right)
    -\exp\left(-i\frac{t}{r}\begin{bmatrix}
        0 & \sum_kH_k\\
        \sum_kH_k & 0
    \end{bmatrix}\right)}\\
    &=\norm{S_p\left(\frac{t}{r}\right)
    -\exp\left(-i\frac{t}{r}\begin{bmatrix}
        0 & H\\
        H & 0
    \end{bmatrix}\right)}
    =\mathbf{O}\left(\left(\frac{\alpha_{\text{comm}}t}{r}\right)^{p+1}\right).
\end{aligned}
\end{equation}

The total error of our simulation is thus asymptotically bounded by
\begin{equation}
    \mathbf{O}\left(rn_K\left(\frac{\alpha_\infty t}{r}\right)^{\frac{p+1}{2}}
    +r\left(\frac{\alpha_{\text{comm}}t}{r}\right)^{p+1}\right)
    =\mathbf{O}\left(\frac{n_K\alpha_\infty^{\frac{p+1}{2}}t^{\frac{p+1}{2}}}{r^{\frac{p-1}{2}}}
    +\frac{\alpha_{\text{comm}}^{p+1}t^{p+1}}{r^p}\right).
\end{equation}
Hence, to achieve a target accuracy of $\epsilon$, it suffices to take
\begin{equation}
    r=\mathbf{O}\left(\alpha_\infty t\left(\frac{n_K\alpha_\infty t}{\epsilon}\right)^{\frac{2}{p-1}}
    +\alpha_{\text{comm}}t\left(\frac{\alpha_{\text{comm}}t}{\epsilon}\right)^{\frac{1}{p}}\right)
    =\left(\alpha_{\text{comm}}+\alpha_{\infty}\right)t\left(\frac{n_K\alpha_\infty t}{\epsilon}\right)^{o(1)}
\end{equation}
for $p$ sufficiently large. The remaining claims follow from the definition of product formulas $S_{p,k}$, $M_{p,k}$, and $S_p$. 

This simulation makes fractional queries to each input Hamiltonian block encoding $E_{\tau A_{jk}}=\exp\left(-i\tau\left[\begin{smallmatrix}
    0 & A_{jk}^\dagger\\
    A_{jk} & 0
\end{smallmatrix}\right]\right)$, which may be further realized using~\cor{frac_scale}.
\end{proof}

\subsection{Hamiltonian squaring}
\label{sec:sos_square}
In our simulation of the generic sum-of-squares Hamiltonians, we have used the Hamiltonian-based matrix multiplication subroutine from~\thm{multiply}, which requires higher-order Lie group commutator formulas and $2$ ancilla qubits.

In this section, we present a significantly simplified algorithm for simulating Hermitian sum-of-squares Hamiltonians:
\begin{equation}
    H=\sum_{k=1}^{n_K}H_k
    =\sum_{k=1}^{n_K}\left(\sum_{j=1}^{n_J}H_{jk}\right)^2.
\end{equation}
where each $H_{jk}$ is Hermitian and its exponential can be implemented directly. The main technical ingredient is a Hamiltonian squaring subroutine which realizes Hamiltonian evolution under $H^2$ using evolutions of $H$. Since squaring is an even function, this can be realized using our Hamiltonian QSVT~\thm{qsvt_herm_even} for even polynomials and Hermitian inputs, although the detailed construction is somewhat convoluted. Here, we present a direct solution based on~\thm{qsvt_odd} by implementing the cubic function instead. This depends on the following dominated polynomial approximation, which is discussed further in~\append{composite_square}.

\begin{proposition}[Dominated approximation for Hamiltonian squaring]
For any $\epsilon>0$ and constant $0<\xi\leq\frac{\pi}{2}$, there exist a real odd polynomial $p(x)$ and even polynomial $q(x)$ such that
\begin{equation}
\begin{aligned}
    &\abs{p(x)-\sin\left(\arcsin^3(x)\right)}\leq\epsilon,\qquad&&\forall x\in\left[-\sin\left(\frac{\pi}{2}-\xi\right),\sin\left(\frac{\pi}{2}-\xi\right)\right],\\
    &\abs{q(x)-\frac{\cos\left(\arcsin^3(x)\right)}{\sqrt{1-x^2}}}\leq\epsilon,\qquad&&\forall x\in\left[-\sin\left(\frac{\pi}{2}-\xi\right),\sin\left(\frac{\pi}{2}-\xi\right)\right],\\
    &p^2(x)+(1-x^2)q^2(x)\leq1+\epsilon,\qquad&&\forall x\in[-1,1].
\end{aligned}
\end{equation}
Moreover, both $p$ and $q$ have the asymptotic degree
\begin{equation}
    \mathbf{O}\left(\log\left(\frac{1}{\epsilon}\right)\right).
\end{equation}
\end{proposition}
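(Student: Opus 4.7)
\medskip

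\noindent\textbf{Proof proposal.} The plan is to bypass the generic bound of \prop{dominated} applied with $f(y)=y^3$, which would only yield degree $\mathbf{O}\bigl(\log^2(1/\epsilon)\bigr)$ since both $d=3$ and $\|f\|_{\max,[-\pi/2,\pi/2]}$ are constants, and the stated bound is the product of two $\mathbf{O}(\log(1/\epsilon))$ factors. Instead, I would exploit analyticity of the target composite functions directly: since $\arcsin(z)$ is analytic on the open unit disk $\{|z|<1\}$, both $\sin(\arcsin^3(x))$ and $\cos(\arcsin^3(x))/\sqrt{1-x^2}$ are analytic (and bounded) on an open neighborhood of the approximation interval $[-r,r]$ with $r=\sin(\tfrac{\pi}{2}-\xi)<1$.

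The first step is to construct polynomial approximants $p_0$ (odd) and $q_0$ (even) of degree $\mathbf{O}(\log(1/\epsilon))$ that meet the first two conditions on $[-r,r]$. Because $r<1$, one can find a parameter $\rho>1$ (depending only on $\xi$) such that the Bernstein ellipse with foci $\pm r$ and parameter $\rho$ lies strictly inside $\{|z|<1\}$, and on this ellipse the targets remain bounded by an absolute constant. The machinery of \append{composite_stadium} then yields truncated Chebyshev expansions whose error decays as $\rho^{-n}$; hence $n=\mathbf{O}(\log(1/\epsilon)/\log\rho)=\mathbf{O}(\log(1/\epsilon))$ achieves accuracy $\epsilon$ on $[-r,r]$. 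Because the targets are respectively odd and even, restricting to their odd/even Chebyshev components preserves parity at no extra cost.

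The second step is to enforce the dominated condition $p^2(x)+(1-x^2)q^2(x)\leq 1+\epsilon$ on the whole of $[-1,1]$. Following the proof of \prop{dominated_ext}, I would multiply $p_0$ and $q_0$ by a smooth even rectangular window polynomial $w(x)$ that equals $1-\mathbf{O}(\epsilon)$ on $[-r,r]$, is bounded by $1$ throughout $[-1,1]$, and is $\mathbf{O}(\epsilon)$ outside a slightly enlarged neighborhood $[-r',r']$ with $r<r'<1$ constant; by \cite[Lemma 29]{Gilyen2018singular} such a window has degree $\mathbf{O}(\log(1/\epsilon))$, since the transition width $r'-r$ depends only on $\xi$. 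Setting $p=w\cdot p_0$ and $q=w\cdot q_0$ preserves parity and keeps the asymptotic degree at $\mathbf{O}(\log(1/\epsilon))$.

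The main obstacle is verifying the dominated condition uniformly on $[-1,1]$, particularly in the transition region where neither $w$ nor $p_0,q_0$ are negligible. On $[-r,r]$ the Pythagorean identity $\sin^2(\arcsin^3(x))+\cos^2(\arcsin^3(x))=1$ and the approximation error yield $p_0^2(x)+(1-x^2)q_0^2(x)\leq 1+\mathbf{O}(\epsilon)$, so multiplication by $|w|\leq 1$ preserves the bound. Outside $[-r',r']$ the factor $w$ is $\mathbf{O}(\epsilon)$, and since Markov's inequality controls $\|p_0\|_{\max,[-1,1]},\|q_0\|_{\max,[-1,1]}$ polynomially in the degree, the contribution remains $\mathbf{O}(\epsilon)$ after adjusting the window accuracy by a polylogarithmic factor absorbed in the $\mathbf{O}$. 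The delicate transition region requires simultaneous tracking of both $w$ and the approximation residuals; this is the only place where care is needed, but the bound $p^2+(1-x^2)q^2\leq w^2(p_0^2+(1-x^2)q_0^2)+\mathbf{O}(\epsilon)\leq w^2(1+\mathbf{O}(\epsilon))+\mathbf{O}(\epsilon)\leq 1+\mathbf{O}(\epsilon)$ closes the argument, after which a final rescaling of $\epsilon$ yields the claimed statement.
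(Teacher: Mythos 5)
Your overall strategy is the same as the paper's: since the composites $\sin(\arcsin^3(x))$ and $\cos(\arcsin^3(x))/\sqrt{1-x^2}$ are analytic and constant-bounded on a Bernstein ellipse (equivalently, stadium) around the rescaled approximation interval, a degree-$\mathbf{O}(\log(1/\epsilon))$ polynomial approximant exists by \lem{approx_ellipse}, and the paper then closes the argument by invoking \cor{approx_dominated_ellipse} exactly as in \append{composite_qsvt}. You reconstruct both ingredients, and your estimate of where the generic bound of \prop{dominated} fails ($\mathbf{O}(\log^2(1/\epsilon))$) is also correct.

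However, there is a genuine gap in your dominated-condition argument in the transition region $[r,r']\cup[-r',-r]$. You write $p_0^2(x)+(1-x^2)q_0^2(x)\leq 1+\mathbf{O}(\epsilon)$ there, but your construction only delivers this bound on $[-r,r]$: a Chebyshev truncation tuned to accuracy $\epsilon$ on $[-r,r]$ is not, as stated, guaranteed to remain within $\mathbf{O}(\epsilon)$ of the target outside $[-r,r]$, and the Pythagorean identity you invoke only transfers through an approximation bound. To close this, you would need to observe either (a) that the truncated Chebyshev series still approximates the target on $[-r',r']$ at a slower but geometric rate, provided the Bernstein ellipse of analyticity you chose has semi-major axis strictly exceeding $r'$ — a nontrivial quantitative step that must be spelled out; or (b) to use the \emph{bounded} polynomial approximation of \cite[Theorem 21]{TangTian24} as \cor{approx_dominated_ellipse} does, which guarantees the approximant stays bounded by $\norm{f}_{\mathbf{Ellipse}}$ on a strictly enlarged interval $[-1-\delta,1+\delta]$ before the window is applied; that control on the enlarged interval is precisely what makes the transition zone safe. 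As it stands, the chain $p^2+(1-x^2)q^2\leq w^2(1+\mathbf{O}(\epsilon))+\mathbf{O}(\epsilon)$ is asserted rather than derived. A secondary imprecision: your claim that Markov's inequality controls $\norm{p_0}_{\max,[-1,1]}$ ``polynomially in the degree'' should be that polynomials bounded on $[-r,r]$ grow \emph{exponentially} in the degree outside that interval; with $d=\mathbf{O}(\log(1/\epsilon))$ this is $\mathrm{poly}(1/\epsilon)$, which forces a polynomial-in-$\epsilon$ tightening of the window floor, translating to a \emph{constant}-factor (not polylogarithmic) increase in window degree — still $\mathbf{O}(\log(1/\epsilon))$, so the final bound survives, but the stated mechanism is off.
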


\begin{figure}[t]
	\centering
\includegraphics[scale=\circuitwidth]{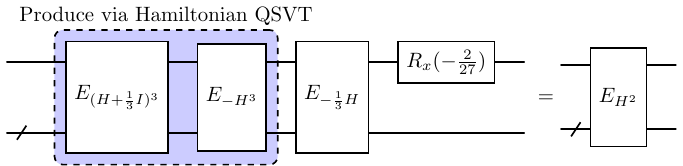}
\caption{Quantum circuit for Hamiltonian squaring ($R_x(\theta)=e^{-i\frac{\theta}{2}X}$).}
\label{fig:hamiltonian square}
\end{figure}

\begin{corollary}[Hamiltonian squaring]
\label{cor:square}
Let $H$ be a Hermitian matrix encoded by the Hamiltonian block encoding $E_H=\exp\left(-i\left[\begin{smallmatrix}
    0 & H\\
    H & 0
\end{smallmatrix}\right]\right)$ with $\norm{H}\leq1$. Then the Hamiltonian block encoding
\begin{equation}
    E_{H^2}=\exp\left(-i\begin{bmatrix}
        0 & H^2\\
        H^2 & 0
    \end{bmatrix}\right)
\end{equation}
can be constructed with accuracy $\epsilon$ using
\begin{equation}
    \mathbf{O}\left(\log\left(\frac{1}{\epsilon}\right)\right)
\end{equation}
queries to $E_H$. See \fig{hamiltonian square} for the corresponding circuit diagram.
\end{corollary}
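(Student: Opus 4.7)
The plan is to invoke \thm{qsvt_odd} with the odd polynomial $f(y) = y^3$, drawing the QSVT phase data from the dominated polynomial approximation proposition stated just above. Since $\norm{H} \leq 1 < \pi/2$, we fix the gap parameter $\xi = \pi/2 - 1 > 0$, which places the singular values of $H$ safely inside the approximation window $[-\sin(\pi/2-\xi),\sin(\pi/2-\xi)]$ on which that proposition guarantees accuracy.

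Concretely, the preceding proposition supplies an odd $p$ and an even $q$, each of degree $\mathbf{O}(\log(1/\epsilon))$, approximating $\sin(\arcsin^3(x))$ and $\cos(\arcsin^3(x))/\sqrt{1-x^2}$ on that window, while satisfying the unit-interval dominance bound $p^2(x) + (1-x^2)q^2(x) \leq 1 + \epsilon$ required by the single-ancilla Hamiltonian QSVT. Substituting these polynomials into \thm{qsvt_odd} produces, within accuracy $\epsilon$, the Hamiltonian block encoding of $f_{\text{sv}}(H)$ at query cost $\mathbf{O}((d + \log(\norm{f}_{\max}/\epsilon))(\norm{f}_{\max} + \log(1/\epsilon)))$. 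Since $d = 3$ and $\norm{y^3}_{\max,[-\pi/2,\pi/2]} = (\pi/2)^3$ are both constants, this collapses to the claimed $\mathbf{O}(\log(1/\epsilon))$ queries to $E_H$, with the ancilla usage of \thm{qsvt_odd} keeping the total to a single ancilla qubit.

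The hard part — deferred to \append{composite_square} — is pinning down that this cubic Hamiltonian QSVT, executed via the specific ancilla manipulations depicted in \fig{hamiltonian square} (the $R_x$ rotations being the QSVT phase gates associated with $p,q$), really produces $H^2$ in the off-diagonal blocks of the output rather than the $H^3$ that a naive reading of $f_{\text{sv}}(H) = f(H)$ on a Hermitian input might suggest. The analysis amounts to tracing how the alternating iterates $E_H(iX \otimes I)$ and $(Z \otimes I)(-iX \otimes I) E_{-H}(Z \otimes I)$ act on each eigenspace of $H$ under the cubic QSVT phase sequence, and showing that the aggregate effect on the spectrum $\lambda \mapsto$ (desired image) is precisely $\lambda^2$, so that the top-left and bottom-right blocks of the resulting unitary match $\cos(H^2)$ and the off-diagonal blocks match $-i\sin(H^2)$. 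Once this spectral identification is carried out, the accuracy and query-count bounds above immediately yield the corollary.
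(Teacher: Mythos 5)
Your proposal identifies the right ingredients---\thm{qsvt_odd} applied with the cubic polynomial, and the dominated-approximation proposition supplying the QSVT phase data---but it leaves a fundamental gap at exactly the step you flag as ``the hard part.'' Applying Hamiltonian QSVT with the odd polynomial $f(y)=y^3$ to the Hermitian input $H$ produces the Hamiltonian block encoding of $f_{\text{sv}}(H)=H^3$ in the off-diagonal blocks, not $H^2$. There is no mechanism by which the alternating iterates would aggregate on each eigenspace to $\lambda\mapsto\lambda^2$ when the phase sequence implements $\lambda\mapsto\lambda^3$; the ``naive reading'' you try to argue away is in fact the correct one, and the appendix on composite functions only justifies the degree bound for the $\arcsin^3$ target, not any such spectral substitution. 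The $R_x$ gates in \fig{hamiltonian square} are also not QSVT phase rotations for $p,q$; they realize the identity shifts described below.

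What is actually missing is the shift-and-cancel device. One first forms the exact Hamiltonian block encoding of $H+\frac{1}{3}I$ by multiplying $E_H$ with $\exp\bigl(-\frac{i}{3}X\otimes I\bigr)$, which incurs no Trotter error because $I$ commutes with $H$; since $H+\frac{1}{3}I$ has spectrum inside $[-\frac{2}{3},\frac{4}{3}]\subset\left(-\frac{\pi}{2},\frac{\pi}{2}\right)$, one may apply the cubic Hamiltonian QSVT to it with $\mathbf{O}\left(\log\left(\frac{1}{\epsilon}\right)\right)$ queries. The binomial expansion $\left(H+\frac{1}{3}I\right)^3=H^3+H^2+\frac{1}{3}H+\frac{1}{27}I$, together with the commutativity of all powers of $H$, factors the result exactly as $E_{(H+\frac{1}{3}I)^3}=E_{H^3}E_{H^2}E_{\frac{1}{3}H}E_{\frac{1}{27}I}$. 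One then separately constructs $E_{H^3}$ (again by cubic Hamiltonian QSVT on the unshifted $H$), $E_{\frac{1}{3}H}$, and $E_{\frac{1}{27}I}$, and multiplies them in with negated parameters to cancel the unwanted factors and isolate $E_{H^2}$---again with no Trotter error since the generators commute. Without this shift-and-cancel construction, your proposal only ever produces $H^3$, and the corollary does not follow.
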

\begin{proof}
We begin by constructing a Hamiltonian block encoding of the shifted operator $H+\frac{1}{3}I$:
\begin{equation}
    E_{H+\frac{1}{3}I}
    =\exp\left(-i\begin{bmatrix}
        0 & H+\frac{1}{3}I\\
        H+\frac{1}{3}I & 0
    \end{bmatrix}\right)
    =\exp\left(-i\begin{bmatrix}
        0 & H\\
        H & 0
    \end{bmatrix}\right)
    \exp\left(-i\frac{1}{3}\begin{bmatrix}
        0 & I\\
        I & 0
    \end{bmatrix}\right).
\end{equation}
Note that $H$ is a Hermitian matrix that commutes with the identity operator, so the above shifting can be implemented exactly using the first-order Lie Trotter formula with no Trotter error.

Here, $H+\frac{1}{3}I$ is a Hermitian matrix with eigenvalues $\left[-\frac{2}{3},\frac{4}{3}\right]$ strictly enclosed by $\left[-\frac{\pi}{2},\frac{\pi}{2}\right]$. So we can apply Hamiltonian QSVT to construct an $\epsilon$-approximate of
\begin{equation}
    E_{\left(H+\frac{1}{3}I\right)^3}
    =\exp\left(-i\begin{bmatrix}
        0 & \left(H+\frac{1}{3}I\right)^3\\
        \left(H+\frac{1}{3}I\right)^3 & 0
    \end{bmatrix}\right)
\end{equation}
with query complexity
\begin{equation}
    \mathbf{O}\left(\log\left(\frac{1}{\epsilon}\right)\right).
\end{equation}

This Hamiltonian block encoding contains the desired $H^2$ term, but also includes constant, linear, and cubic terms:
\begin{equation}
\begin{aligned}
    &\exp\left(-i\begin{bmatrix}
        0 & \left(H+\frac{1}{3}I\right)^3\\
        \left(H+\frac{1}{3}I\right)^3 & 0
    \end{bmatrix}\right)\\
    &=\exp\left(-i\begin{bmatrix}
        0 & H^3\\
        H^3 & 0
    \end{bmatrix}\right)
    \exp\left(-i\begin{bmatrix}
        0 & H^2\\
        H^2 & 0
    \end{bmatrix}\right)
    \exp\left(-i\frac{1}{3}\begin{bmatrix}
        0 & H\\
        H & 0
    \end{bmatrix}\right)
    \exp\left(-i\frac{1}{27}\begin{bmatrix}
        0 & I\\
        I & 0
    \end{bmatrix}\right).
\end{aligned}
\end{equation}
To cancel the unwanted terms, we implement each individual Hamiltonian block encoding, with the cubic case handled again by Hamiltonian QSVT.
Similar as before, the results can be combined using the first-order Lie-Trotter formula without Trotter error.
\end{proof}

\subsection{Sum-of-squares electronic structure Hamiltonian simulation}
\label{sec:sos_electron}
We now consider simulating sum-of-squares electronic structure Hamiltonians~\cite{Low25}, which can be expressed using the notation of~\append{fermionic_property} as
\begin{equation}
\begin{aligned}
    H=\sum_{r=1}^{n_R}\sum_{c=1}^{n_C}H_{rc}^2=\sum_{r=1}^{n_R}\sum_{c=1}^{n_C}\mathbf{Quad}^2(W_{rc}),\qquad
    [W_{rc}]_{p,q}=\sum_{b=1}^{n_B}w_{rcb}u_{rb,p}u_{rb,q},
\end{aligned}
\end{equation}
with rank $n_R$, copies $n_C$, base size $n_B$ and number of spin orbitals $n$, where $w$, $u$ are real-valued coefficient tensors satisfying only the normalization condition $\norm{u_{rb,\cdot}}=1$, and $\mathbf{Quad}\left(W\right)
=\sum_{p,q}\left[W\right]_{p,q}A_p^\dagger A_q$ is the fermionic quadratic operator. For simplicity, we have omitted a one-body term $\sum_{p,q=1}^nh_{pq}A_p^\dagger A_q$ as it does not dominate the complexity scaling. We have also dropped the shifting by identity matrix $w_{rc}I$ as its commutator with any operator is zero.

It is clear that these Hamiltonians are special cases of the generic sum-of-squares models studied in~\sec{sos_sos}, to which~\cor{sos} applies. However, as is discussed in~\sec{intro_sos}, the performance of quantum simulation can be improved by utilizing specific properties of the fermionic systems. For instance, instead of implementing elementary terms from $\mathbf{Quad}(\cdot)$ one by one, we decompose its exponential into $\mathbf{O}(n)$ Givens rotations following~\cite[Lemma 8]{vonBurg21}. Moreover, each $\mathbf{Quad}(W_{rc})$ is Hermitian, and the evolution under $\mathbf{Quad}^2(W_{rc})$ can be implemented via Hamiltonian squaring (\cor{square}). 

In the following, we evaluate the fermionic $\eta$-seminorm of nested commutators of the Hamiltonian terms $\alpha_{\text{comm},\eta}$. This $\eta$-seminorm, introduced in~\append{fermionic_seminorm}, can be seen as the spectral norm restricted to the subspace of $\eta$ particles. It determines normalization factor of the simulation algorithm from~\cor{sos} for input states with a fixed number of electrons. With notation same as~\append{fermionic_seminorm}, 
the $\eta$-seminorm of nested commutators (for a $p$th-order formula) is given by
\begin{equation}
    \left(\sum_{\substack{r_{p+1},c_{p+1},\ldots,\\r_{2},c_{2},r_{1},c_{1}}}\norm{\left[\mathbf{Quad}^2(W_{r_{p+1}c_{p+1}}),\ldots,\left[\mathbf{Quad}^2(W_{r_{2}c_{2}}),\mathbf{Quad}^2(W_{r_{1}c_{1}})\right]\right]}_\eta\right)^{\frac{1}{p+1}}.
\end{equation}
Let us analyze the combinatorial structure of these nested commutators. We start by observing the following property of commutator
\begin{equation}
    \left[J^2,K\right]
    =J^2K-KJ^2
    =J(JK-KJ)+(JK-KJ)J
    =\left\{J,\left[J,K\right]\right\}
\end{equation}
for arbitrary matrices $J$ and $K$ where $\{J,K\}=JK+KJ$ is the matrix anticommutator. This allows us to rewrite the nested commutator (for a fixed choice of $r_1,c_1,\ldots,r_{p+1},c_{p+1}$) as
\begin{equation}
\begin{aligned}
    &\left[\mathbf{Quad}^2(W_{r_{p+1}c_{p+1}}),\cdots,\left[\mathbf{Quad}^2(W_{r_{2}c_{2}}),\mathbf{Quad}^2(W_{r_{1}c_{1}})\right]\right]\\
    &=\left\{\mathbf{Quad}(W_{r_{p+1}c_{p+1}}),\left[\mathbf{Quad}(W_{r_{p+1}c_{p+1}}),\ldots,\left\{\mathbf{Quad}(W_{r_{2}c_{2}}),\left[\mathbf{Quad}(W_{r_{2}c_{2}}),\mathbf{Quad}^2(W_{r_{1}c_{1}})\right]\right\}\right]\right\}.
\end{aligned}
\end{equation}

To save spaces, we abbreviate $\mathbf{Quad}(W_{r_{j}c_{j}})$ as $\mathbf{Quad}_{j}$. Starting for the left-most operator, we exchange matrix commutators and anticommutators using the following rule
\begin{equation}
\begin{aligned}
    \left[J,\left\{K,L\right\}\right]
    &=J(KL+LK)-(KL+LK)J\\
    &=(JK-KJ)L+K(JL-LJ)
    +(JL-LJ)K+L(JK-KJ)\\
    &=\left\{[J,K],L\right\}
    +\left\{K,[J,L]\right\}.
\end{aligned}
\end{equation}
The first application of this commutation rule yields
\begin{equation}
\begin{aligned}
    &\left\{\mathbf{Quad}_{p+1},
    \left[\mathbf{Quad}_{p+1},
    \left\{\mathbf{Quad}_{p},
    \left[\mathbf{Quad}_{p},\cdots
    \right]
    \right\}
    \right]
    \right\}\\
    &=\left\{\mathbf{Quad}_{p+1},
    \left\{
    \left[\mathbf{Quad}_{p+1},\mathbf{Quad}_{p}
    \right],
    \left[\mathbf{Quad}_{p},\cdots
    \right]
    \right\}
    \right\}\\
    &\quad+\left\{\mathbf{Quad}_{p+1},
    \left\{\mathbf{Quad}_{p},
    \left[\mathbf{Quad}_{p+1},\left[\mathbf{Quad}_{p},\cdots
    \right]
    \right]
    \right\}
    \right\}.
\end{aligned}
\end{equation}
Continuing, the first term splits into
\begin{equation}
\begin{aligned}
    &\left\{\mathbf{Quad}_{p+1},
    \left\{
    \left[\mathbf{Quad}_{p+1},\mathbf{Quad}_{p}
    \right],
    \left[\mathbf{Quad}_{p},
    \left\{\mathbf{Quad}_{p-1},
    \left[\mathbf{Quad}_{p-1},\cdots
    \right]
    \right\}
    \right]
    \right\}
    \right\}\\
    &=\left\{\mathbf{Quad}_{p+1},
    \left\{
    \left[\mathbf{Quad}_{p+1},\mathbf{Quad}_{p}
    \right],
    \left\{\left[\mathbf{Quad}_{p},\mathbf{Quad}_{p-1}\right],
    \left[\mathbf{Quad}_{p-1},\cdots
    \right]
    \right\}
    \right\}
    \right\}\\
    &\quad+\left\{\mathbf{Quad}_{p+1},
    \left\{
    \left[\mathbf{Quad}_{p+1},\mathbf{Quad}_{p}
    \right],
    \left\{\mathbf{Quad}_{p-1},
    \left[\mathbf{Quad}_{p},\left[\mathbf{Quad}_{p-1},\cdots
    \right]\right]
    \right\}
    \right\}
    \right\},
\end{aligned}
\end{equation}
whereas the second term expands into
\begin{equation}
\begin{aligned}
    &\left\{\mathbf{Quad}_{p+1},
    \left\{\mathbf{Quad}_{p},
    \left[\mathbf{Quad}_{p+1},\left[\mathbf{Quad}_{p},
    \left\{\mathbf{Quad}_{p-1},
    \left[\mathbf{Quad}_{p-1},\cdots
    \right]
    \right\}
    \right]
    \right]
    \right\}
    \right\}\\
    &=\left\{\mathbf{Quad}_{p+1},
    \left\{\mathbf{Quad}_{p},
    \left\{\left[\mathbf{Quad}_{p+1},\left[\mathbf{Quad}_{p},\mathbf{Quad}_{p-1}\right]\right],
    \left[\mathbf{Quad}_{p-1},\cdots
    \right]
    \right\}
    \right\}
    \right\}\\
    &\quad+\left\{\mathbf{Quad}_{p+1},
    \left\{\mathbf{Quad}_{p},
    \left\{\left[\mathbf{Quad}_{p+1},\mathbf{Quad}_{p-1}\right],
    \left[\mathbf{Quad}_{p},\left[\mathbf{Quad}_{p-1},\cdots
    \right]
    \right]
    \right\}
    \right\}
    \right\}\\
    &\quad+\left\{\mathbf{Quad}_{p+1},
    \left\{\mathbf{Quad}_{p},
    \left\{\left[\mathbf{Quad}_{p},\mathbf{Quad}_{p-1}\right],
    \left[\mathbf{Quad}_{p+1},\left[\mathbf{Quad}_{p-1},\cdots
    \right]
    \right]
    \right\}
    \right\}
    \right\}\\
    &\quad+\left\{\mathbf{Quad}_{p+1},
    \left\{\mathbf{Quad}_{p},
    \left\{\mathbf{Quad}_{p-1},
    \left[\mathbf{Quad}_{p+1},\left[\mathbf{Quad}_{p},\left[\mathbf{Quad}_{p-1},\cdots
    \right]\right]\right]
    \right\}
    \right\}
    \right\}.\\
\end{aligned}
\end{equation}
We can then unwrap this by induction.

The result is a linear combination of nested anticommutators of $p+2$ operators, each consisting of nested commutators. Specifically, we have
\begin{equation}
\begin{aligned}
    &\left[\mathbf{Quad}^2(W_{r_{p+1}c_{p+1}}),\cdots,\left[\mathbf{Quad}^2(W_{r_{2}c_{2}}),\mathbf{Quad}^2(W_{r_{1}c_{1}})\right]\right]\\
    &=\left\{\mathbf{Quad}(W_{r_{p+1}c_{p+1}}),\left[\mathbf{Quad}(W_{r_{p+1}c_{p+1}}),\ldots,\left\{\mathbf{Quad}(W_{r_{2}c_{2}}),\left[\mathbf{Quad}(W_{r_{2}c_{2}}),\mathbf{Quad}^2(W_{r_{1}c_{1}})\right]\right\}\right]\right\}\\
    &=\frac{1}{2}\sum_{\mathcal{S}_{p+1},\mathcal{S}_{p},\ldots,\mathcal{S}_{1},\mathcal{S}_{0}}
    \left\{\mathbf{Quad}\left(W_{\mathcal{S}_{p+1}}\right),\left\{\mathbf{Quad}\left(W_{\mathcal{S}_{p}}\right),\ldots,\left\{\mathbf{Quad}\left(W_{\mathcal{S}_{1}}\right),\mathbf{Quad}\left(W_{\mathcal{S}_{0}}\right)\right\}\right\}\right\}.
\end{aligned}
\end{equation}
Here, the subsets $\mathcal{S}_0,\mathcal{S}_1,\ldots,\mathcal{S}_{p+1}$ form a monotonic partition of $\{2,\ldots,p+1\}$ satisfying the following constraints.
\begin{enumerate}
    \item Exhaustion: $\mathcal{S}_0\cup\mathcal{S}_1\cup\cdots\cup\mathcal{S}_{p+1}=\{2,\ldots,p+1\}$.
    \item Disjointness: $\mathcal{S}_j\cap\mathcal{S}_l$ is empty for all $j\neq l$.
    \item Monotonicity: $\mathcal{S}_j\subseteq\{j+1,\ldots,p+1\}$ for all $j\geq 1$ and $\mathcal{S}_0\subseteq\{2,\ldots,p+1\}$.
\end{enumerate}
Fixing a specific
\begin{equation}
    \mathcal{S}_j=
    \begin{cases}
        \left\{j+1\leq s_1\leq \cdots\leq s_{\#\mathcal{S}_j}\leq p+1\right\},\qquad&j\geq1,\\
        \left\{2\leq s_1\leq \cdots\leq s_{\#\mathcal{S}_0}\leq p+1\right\},\qquad&j=0,\\
    \end{cases}
\end{equation}
we define
\begin{equation}
    W_{\mathcal{S}_{j}}
    =\begin{cases}
        \left[\cdots,\left[W_{r_{s_2},c_{s_2}},\left[W_{r_{s_1},c_{s_1}},W_{r_j,c_j}\right]\right]\right],\qquad&j\geq 1,\\
        \left[\cdots,\left[W_{r_{s_2},c_{s_2}},\left[W_{r_{s_1},c_{s_1}},W_{r_1,c_1}\right]\right]\right],\qquad&j=0.\\
    \end{cases}
\end{equation}
Applying the triangle inequality, we finally obtain
\begin{equation}
\begin{aligned}
    \alpha_{\text{comm},\eta}
    &=\left(\sum_{\substack{r_{p+1},c_{p+1},\ldots,\\r_{2},c_{2},r_{1},c_{1}}}
    \sum_{\mathcal{S}_{p+1},\mathcal{S}_{p},\ldots,\mathcal{S}_{1},\mathcal{S}_{0}}
    \norm{\mathbf{Quad}\left(W_{\mathcal{S}_{p+1}}\right)}_\eta\cdots\norm{\mathbf{Quad}\left(W_{\mathcal{S}_{1}}\right)}_\eta\norm{\mathbf{Quad}\left(W_{\mathcal{S}_{0}}\right)}_\eta\right)^{\frac{1}{p+1}}\\
    &=\left(\sum_{\substack{r_{p+1},c_{p+1},\ldots,\\r_{2},c_{2},r_{1},c_{1}}}
    \sum_{\mathcal{S}_{p+1},\mathcal{S}_{p},\ldots,\mathcal{S}_{1},\mathcal{S}_{0}}
    \norm{W_{\mathcal{S}_{p+1}}}_\eta\cdots\norm{W_{\mathcal{S}_{1}}}_\eta\norm{W_{\mathcal{S}_{0}}}_\eta\right)^{\frac{1}{p+1}}.
\end{aligned}
\end{equation}
Correspondingly, the gate complexity of our simulation is
\begin{equation}
    \left(\alpha_{\text{comm},\eta}+\alpha_{\infty}\right)t\left(\frac{n\alpha_\infty t}{\epsilon}\right)^{o(1)}n_Rn_Cn_Bn,
\end{equation}
where
\begin{equation}
    \alpha_\infty=\max_{r,c}\left(\sum_b\abs{w_{rcb}}\right)^2.
\end{equation}
We achieve this with $1$ ancilla qubit.

Per the discussion in~\append{fermionic_seminorm}, $\alpha_{\text{comm},\eta}$ depends only on the coefficient tensors $W$, and can thus be computed in polynomial time on a classical computer. It can be seen as the effective normalization factor of our simulation algorithm for $\eta$-particle input states. We now explain how a loose upper bound of it reproduces the normalization factor of qubitization~\cite{Low2016Qubitization}. In particular, we identify three sources of looseness:
\begin{enumerate}
    \item Ignorance of particle number: if we were to ignore the number of electrons $\eta$ in the initial state, we would get the worst-case spectral norm bound
\begin{equation}
\begin{aligned}
    &\sum_{\substack{r_{p+1},c_{p+1},\ldots,\\r_{2},c_{2},r_{1},c_{1}}}
    \sum_{\mathcal{S}_{p+1},\mathcal{S}_{p},\ldots,\mathcal{S}_{1},\mathcal{S}_{0}}
    \norm{W_{\mathcal{S}_{p+1}}}_\eta\cdots\norm{W_{\mathcal{S}_{1}}}_\eta\norm{W_{\mathcal{S}_{0}}}_\eta\\
    &\leq\sum_{\substack{r_{p+1},c_{p+1},\ldots,\\r_{2},c_{2},r_{1},c_{1}}}
    \sum_{\mathcal{S}_{p+1},\mathcal{S}_{p},\ldots,\mathcal{S}_{1},\mathcal{S}_{0}}
    \norm{W_{\mathcal{S}_{p+1}}}\cdots\norm{W_{\mathcal{S}_{1}}}\norm{W_{\mathcal{S}_{0}}}.
\end{aligned}
\end{equation}
    \item Ignorance of commutators: if we were to drop all the commutator structures and apply the loose bound $\norm{[W,V]}\leq2\norm{W}\norm{V}$, we would have
\begin{equation}
\begin{aligned}
    \sum_{\substack{r_{p+1},c_{p+1},\ldots,\\r_{2},c_{2},r_{1},c_{1}}}
    \sum_{\mathcal{S}_{p+1},\mathcal{S}_{p},\ldots,\mathcal{S}_{1},\mathcal{S}_{0}}
    \norm{W_{\mathcal{S}_{p+1}}}\cdots\norm{W_{\mathcal{S}_{1}}}\norm{W_{\mathcal{S}_{0}}}
    &\leq2^{p}\left(\sum_{r,c}\norm{W_{rc}}^2\right)^{p+1}.
\end{aligned}
\end{equation}
    \item Worse-case bound of nonorthogonal tensors: if we further applied the triangle inequality to the factorization $W_{rc}=\sum_bw_{rcb}u_{rb}u_{rb}^\top$ as the sum of rank-$1$ mutually nonorthogonal projections, we would end up with
\begin{equation}
    \sum_{r,c}\norm{W_{rc}}^2
    \leq\sum_{r,c}\left(\sum_b\abs{w_{rcb}}\norm{u_{rb}u_{rb}^\top}\right)^2
    =\sum_{r,c}\left(\sum_b\abs{w_{rcb}}\right)^2.
\end{equation}
\end{enumerate}
This now becomes the $\alpha_1$ of qubitization.

\section{Discussion}
\label{sec:discuss}
In this work, we have elucidated a framework for performing matrix arithmetics with Hamiltonian evolution, where matrices of interest are embedded in the off-diagonal blocks of a Hamiltonian. We show how to maintain this Hamiltonian block encoding after specifying all its entries, so that matrix operations can be composed one after another, and the entire quantum computation uses at most two ancilla qubits. We present a host of methods to manipulate these Hamiltonian block encodings, implementing matrix multiplication, matrix addition, matrix inversion, Hermitian conjugation, complex phase scaling, integer scaling, fractional scaling, and singular value transformation for both odd and even polynomials. We then apply our results to a range of problems, such as Green's function estimation and quantum simulation of the sum-of-squares Hamiltonian, matching or improving the state of the art.

We have demonstrated how to multiply generic matrices within Hamiltonian block encoding using the Lie group commutator formula and its higher-order generalizations. These formulas approximate the evolution under commutators of Hamiltonian terms, in contrast to the conventional Lie-Trotter-Suzuki formulas, which target the exponential of a sum of Hamiltonian terms. Historically, Lie group commutator formulas have played an important role in the study of Lie theory and have served as useful tools in proving the Solovay-Kitaev theorem and in engineering quantum many-body Hamiltonians. However, to the best of our knowledge, there have been very few algorithmic applications. Our Hamiltonian-based matrix multiplication algorithm relies essentially on these commutator product formulas, which motivates further investigation into improving their practical performance, as exemplified by the proof of our sharpened error bound in~\lem{commutator_tight} and the construction of new product formulas in recent work~\cite{Casas2025,YuAn22commutator}.

Our generic multiplication scheme uses $2$ ancilla qubits independent of the number of multiplicands. In fact, a single qutrit suffices: the abstract definition from \cref{sec:block_abstract} immediately shows that an ancilla space with dimension $\geq3$ enables a sufficiently large operator dilation for matrix multiplication. By allowing the multiplication to be approximate, we have thus surpassed the lower bound on the ancilla space dimensions, known from operator theory for more than four decades~\cite{thompson1982doubly}. Moreover, we have achieved this for generic multiplicands, going beyond recent result~\cite[Theorem 4]{Vasconcelos25} targeting only at a neighborhood of the identity operator. Thus, while our approach introduces a $\frac{1}{\epsilon^{o(1)}}$ query overhead and uses a slightly larger ancilla space for dilation, it enables a generic reduction in space complexity across many potential applications, which would be difficult to obtain otherwise.

We have developed multiple algorithms for Hamiltonian QSVT, with both the input and output represented by Hamiltonian block encodings. We show this can be naturally reduced to the dominated approximation problem, where the target functions $\sin(f(\arcsin(x)))\approx p(x)$ and $\frac{\cos(f(\arcsin(x)))}{\sqrt{1-x^2}}\approx q(x)$ are simultaneously approximated by polynomials over the domain of interest, while the dominated condition $p^2(x)+(1-x^2)q^2(x)\leq1$ is approximately satisfied throughout the entire unit interval. For even polynomials, there is an intrinsic challenge due to the parity constraint of QSVT.
Nevertheless, we show that for Hermitian inputs the Hamiltonian block encoding can be shifted such that its spectra lie sufficiently far away from $x = 0$, and the transformation can be approximated via an odd extension of the original polynomial. 
The generic case is then handled by considering the Hermitian dilation.
In constructing the polynomial approximations, we have analyzed the behavior of a variety of functions over complex regions surrounding the unit interval. It is plausible that our approximation results can be improved by exploiting further properties of the target function, as suggested in~\cite[Remark 22]{TangTian24}, or by employing randomized compilation~\cite{martyn2025halving}, but a systematic study of such improvements is left for future work. 

We have primarily focused on applications to Green's function estimation and sum-of-squares Hamiltonian simulation, but we expect that our methods are useful in a variety of other settings.
For instance, recent algorithms for solving differential equations~\cite{fang2025qubitefficientquantumalgorithmlinear} and simulating Lindbladian evolution for ground state preparation~\cite{lin2025dissipativepreparationmanybodyquantum, ding2025endtoendefficientquantumthermal, zhan2025rapidquantumgroundstate} can be formulated using Hamiltonian block encodings. These algorithms do not utilize the more complex matrix arithmetic operations we have studied, but may benefit from the new techniques for transforming Hamiltonians developed here. 

Notably, our methods can inherit tightened, state-dependent Trotter error bounds. We prepare and add Hamiltonian block encodings using conventional product formulas. Thus, adding terms with low Trotter error can lead to significantly improved gate complexities. While initial Trotter error bounds were incredibly loose~\cite{CMNRS18}, they have been systematically tightened via commutator-based~\cite{CSTWZ19} and state-based bounds~\cite{mizuta2025trotterizationsubstantiallyefficientlowenergy,Su2021nearlytight,ChenBrandao21,Zhao21,Hatomura22,Sahinoglu2021} (for quantum simulation in the low-energy subspace~\cite{Zlokapa2024hamiltonian,Gu2021fastforwarding} and beyond). 

Several other algorithmic paradigms also maintain the Hamiltonian block encoding and can therefore be used in conjunction with our methods. This includes the algorithm qDRIFT~\cite{Campbell18} based on randomization, the symmetry protection algorithm~\cite{Burgarth2022oneboundtorulethem,Tran21} based on Zeno subspace projection, and the linear Hamiltonian transformation algorithm~\cite{Odake24} in the Pauli basis. 
By applying these methods together, one can achieve additional improvements in deploying quantum linear algebra algorithms, uncovering speedups not known previously.

There are reasons for optimism regarding the implementability of our algorithms. First, our algorithms are qubit-efficient, requiring at most two additional ancilla qubits, and only one if the computation does not invoke the generic matrix multiplication subroutine. This is especially important for early fault-tolerant devices with few logical qubits~\cite{Katabarwa_2024}. Second, our circuits are significantly simpler to describe than those required for unitary block encodings, which employ complicated arithmetic and may require QROM. 
Simpler circuits may be easier to compile and implement on fault-tolerant quantum computers.  While our approach underperforms in $T$ complexity for specific quantum chemistry applications, recent advances in magic state cultivation have significantly reduced the cost of preparing magic states~\cite{gidney2024magic, ruiz2025unfolded, sahay2025foldtransversalsurfacecodecultivation}. 
Note that this underperformance can also be attributed in part to the fact that the ansatz Hamiltonians studied in those chemistry work are specifically tailored for unitary block encodings; developing new ansatze for Trotter-based methods leveraging Hamiltonian-based matrix arithmetics could substantially reduce the cost of simulation.
Furthermore, the Trotter steps required for Hamiltonian block encodings may be implemented via multi-target CNOTs and arbitrary angle rotations, subroutines that can both be optimized. Multi-target CNOTs can be implemented in constant depth via lattice surgery on surface codes~\cite{fowler2018low}.  Arbitrary angle $R_z$ gates can be implemented with constant depth in expectation via PAR rotations~\cite{Cody_Jones_2012,choi2023fault,akahoshi2024partially,litinski2019game,sun2025space}. 
We leave concrete resource estimates of our methods to future work.

\section*{Acknowledgements}
We thank Guang Hao Low and Yu Tong for helpful discussions.

C.K. is funded in part by the STAQ project under award NSF Phy-232580; in part by the US Department of Energy Office of Advanced Scientific Computing Research, Accelerated 
Research for Quantum Computing Program; and in part by the NSF Quantum Leap Challenge Institute for Hybrid Quantum Architectures and Networks (NSF Award 2016136), in part by the NSF National Virtual Quantum Laboratory program, in part based upon work supported by the U.S. Department of Energy, Office of Science, National Quantum 
Information Science Research Centers, and in part by the Army Research Office under Grant Number W911NF-23-1-0077. The views and conclusions contained in this document are those of the authors and should not be interpreted as representing the official policies, either expressed or implied, of the U.S. Government. The U.S. Government is authorized to reproduce and distribute reprints for Government purposes notwithstanding any copyright notation herein.

\newpage
\appendix
\section{Further analysis of Lie group commutator formulas}
\label{append:lie}

In this appendix, we present auxiliary results on the Lie group commutator product formulas. In~\append{lie_bch}, we derive the BCH expansion of the second-order formula, which shows the tightness of our error bound in the main text. We then analyze higher-order group commutator formulas in \append{lie_higher}.

\subsection{Baker-Campbell-Hausdorff expansion}
\label{append:lie_bch}
Given Hermitian operators $J$, $K$ and $\tau\geq0$ without loss of generality, we now derive the BCH expansion of $e^{-i\tau J}e^{-i\tau K}e^{i\tau J}e^{i\tau K}$ to third order.
To this end, we begin with the expansion
\begin{equation}
\begin{aligned}
    e^{-i\tau J}e^{-i\tau K}
    &=\exp\left(-i\tau J-i\tau K-\frac{\tau^2}{2}[J,K]
    +\frac{i\tau^3}{12}[J,[J,K]]
    +\frac{i\tau^3}{12}[K,[K,J]]+\mathbf{O}\left(\tau^4\right)\right),\\
    e^{i\tau J}e^{i\tau K}
    &=\exp\left(i\tau J+i\tau K-\frac{\tau^2}{2}[J,K]
    -\frac{i\tau^3}{12}[J,[J,K]]
    -\frac{i\tau^3}{12}[K,[K,J]]+\mathbf{O}\left(\tau^4\right)\right),
\end{aligned}
\end{equation}
where the convergence is guaranteed if
\begin{equation}
    \tau\left(\norm{J}+\norm{K}\right)<\frac{\ln(2)}{2}.
\end{equation}

Assuming the above requirement on the convergence holds, let us bound the norm of operator in the exponent as~\cite{muger2019notes}
\begin{equation}
    \norm{\ln\left(e^{-i\tau J}e^{-i\tau K}\right)}
    =\norm{\sum_{l=1}^\infty\frac{(-1)^{l-1}}{l}\left(e^{-i\tau J}e^{-i\tau K}-I\right)^l}
    \leq\sum_{l=1}^\infty\frac{1}{l}\norm{e^{-i\tau J}e^{-i\tau K}-I}^l.
\end{equation}
Here,
\begin{equation}
    \norm{e^{-i\tau J}e^{-i\tau K}-I}
    =\norm{\sum_{\substack{j,k\geq0\\j+k>0}}\frac{\left(-i\tau J\right)^j\left(-i\tau K\right)^k}{j!k!}}
    \leq\sum_{\substack{j,k\geq0\\j+k>0}}\frac{\left(\tau \norm{J}\right)^j\left(\tau \norm{K}\right)^k}{j!k!}
    =e^{\tau\left(\norm{J}+\norm{K}\right)}-1,
\end{equation}
which gives
\begin{equation}
    \norm{\ln\left(e^{-i\tau J}e^{-i\tau K}\right)}
    \leq\sum_{l=1}^\infty\frac{1}{l}\left(e^{\tau\left(\norm{J}+\norm{K}\right)}-1\right)^l
    =-\ln\left(2-e^{\tau\left(\norm{J}+\norm{K}\right)}\right).
\end{equation}

Now, applying the BCH expansion again yields
\begin{equation}
\begin{aligned}
    &e^{-i\tau J}e^{-i\tau K}e^{i\tau J}e^{i\tau K}\\
    &=\exp\left(-\tau^2[J,K]
    +\frac{1}{2}\left[-i\tau J-i\tau K,-\frac{\tau^2}{2}[J,K]\right]
    +\frac{1}{2}\left[-\frac{\tau^2}{2}[J,K],i\tau J+i\tau K\right]
    +\mathbf{O}\left(\tau^4\right)\right)\\
    &=\exp\left(-\tau^2[J,K]+\frac{i\tau^3}{2}[J,[J,K]]+\frac{i\tau^3}{2}[K,[J,K]]+\mathbf{O}\left(\tau^4\right)\right).
\end{aligned}
\end{equation}
To ensure that the series converges, we require that
\begin{equation}
    -\ln\left(2-e^{\tau\left(\norm{J}+\norm{K}\right)}\right)<\frac{\ln(2)}{4}\quad\Rightarrow\quad
    \tau\left(\norm{J}+\norm{K}\right)<\ln\left(2-\frac{1}{2^{1/4}}\right).
\end{equation}
Our explicit error bound in~\lem{commutator_tight} thus matches the third-order terms of the BCH expansion for $\tau$ sufficiently small, and is provably tight up to a single application of the triangle inequality.

\subsection{Analysis of higher-order formulas}
\label{append:lie_higher}
We have focused on the second-order Lie group commutator product formula in the main text. Here, we analyze higher-order group commutator formulas which achieve asymptotically better performance. Such formulas can be constructed recursively as follows~\cite{ChildsWiebe12}. We start with 
\begin{equation}
    M_2\left(\tau\right)=e^{-i\tau J}e^{-i\tau K}e^{i\tau J}e^{i\tau K},
\end{equation}
which approximates the exponential $e^{-\tau^2[J,K]}$ to second order. Then higher-order formulas can be defined for $k\in\mathbb{Z}_{\geq2}$ via the recursion
\begin{equation}
\label{eq:def_commpf}
\begin{aligned}
    M_{2k}\left(\tau\right)
    &=M_{2(k-1)}\left(\gamma_k\tau\right)
     M_{2(k-1)}\left(-\gamma_k\tau\right)\\
    &\quad\cdot M_{2(k-1)}^{-1}\left(\beta_k\tau\right)
     M_{2(k-1)}^{-1}\left(-\beta_k\tau\right)\\
    &\quad\cdot M_{2(k-1)}\left(\gamma_k\tau\right)
     M_{2(k-1)}\left(-\gamma_k\tau\right),
\end{aligned}
\end{equation}
where
\begin{equation}
    \beta_k=\sqrt{2v_k},\qquad
    \gamma_k=\sqrt{\frac{1}{4}+v_k},\qquad
    v_k=\frac{2^{\frac{1}{k}}}{4\left(2-2^{\frac{1}{k}}\right)}.
\end{equation}
It is shown in~\cite{ChildsWiebe12} that $M_{2k}\left(\tau\right)=e^{-\tau^2[J,K]}+\mathbf{O}\left(\tau^{2k+1}\right)$, where the limit is taken as $\tau\rightarrow0$.

More generally, a $p$th-order group commutator formula is characterized by the order condition $M_{p}\left(\tau\right)=e^{-\tau^2[J,K]}+\mathbf{O}\left(\tau^{p+1}\right)$. Following an analysis similar to~\cite{CSTWZ19}, we obtain the following commutator bound for a generic higher-order Lie group commutator product formula.

\begin{lemma}[Commutator bound for higher-order group commutator formulas]
Let $J$, $K$ be Hermitian matrices, $\tau\geq0$, and $M_{p}\left(\tau\right)=e^{-\tau^2[J,K]}+\mathbf{O}\left(\tau^{p+1}\right)$ be a $p$th-order group commutator product formula for some $p\in\mathbb{Z}_{\geq2}$. Then,
\begin{equation}
    \norm{M_{p}(\tau)-e^{-\tau^2[J,K]}}
    =\mathbf{O}\left(\alpha_{\text{comm}}^{p+1}\tau^{p+1}\right),
\end{equation}
where
\begin{equation}
    \alpha_{\text{comm}}=\left(\sum_{j_1,\ldots,j_{p+1}=1}^{2}\norm{\left[H_{j_{p+1}},\ldots,\left[H_{j_2},H_{j_1}\right]\right]}\right)^{\frac{1}{p+1}},
\end{equation}
with $H_1=J$ and $H_2=K$.
\end{lemma}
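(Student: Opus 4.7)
The plan is to adapt the commutator-scaling analysis of~\cite{CSTWZ19} (their Theorem 6, originally stated for Lie-Trotter-Suzuki formulas approximating $e^{-it\sum_j H_j}$) to the present setting where the target is the commutator evolution $e^{-\tau^2[J,K]}$. The starting point is to unroll the Childs-Wiebe recursion \eq{def_commpf} and express $M_p(\tau)$ as an ordinary product formula
\begin{equation*}
    M_p(\tau) = \prod_{l=1}^{\Upsilon_p} e^{-i\tau c_l H_{\pi_p(l)}},
\end{equation*}
where $\Upsilon_p$ and all $\abs{c_l}$ are finite constants depending only on $p$, and $H_{\pi_p(l)}\in\{J,K\}$. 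The order condition $M_p(\tau) = e^{-\tau^2[J,K]}+\mathbf{O}(\tau^{p+1})$ is given by hypothesis.

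Next, let $\mathcal{A}_M(s)=M_p'(s)M_p(s)^{-1}$ denote the effective generator of $M_p$, which is a sum of Ad-conjugated copies of $-ic_l H_{\pi_p(l)}$. A Duhamel identity yields
\begin{equation*}
    M_p(\tau)-e^{-\tau^2[J,K]}
    = \int_0^\tau M_p(\tau)M_p(s)^{-1}
    \bigl[\mathcal{A}_M(s)+2s[J,K]\bigr]
    e^{(s^2-\tau^2)[J,K]}\,\mathrm{d}s.
\end{equation*}
Since $\norm{M_p(\tau)M_p(s)^{-1}}=\norm{e^{(s^2-\tau^2)[J,K]}}=1$, it suffices to bound the integrand generator $\mathcal{A}_M(s)+2s[J,K]$, which by the order condition must vanish to order $s^p$.

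The crux is to show that this $\mathbf{O}(s^p)$ remainder is expressible as a linear combination of nested commutators of the form $[H_{j_{p+1}},\ldots,[H_{j_2},H_{j_1}]]$, up to $\tau$-independent bounded coefficients and Ad-conjugations by subproducts of $M_p$. I would do this by iteratively Taylor-expanding each conjugated factor $e^{isc_lH}He^{-isc_lH}$ inside $\mathcal{A}_M(s)$ and collecting terms by total degree in $s$, analogous to~\lem{commutator_tight} but carried out to degree $p$. All monomials of degree strictly less than $p+1$ in $J, K$ cancel by virtue of the order condition (equivalently, by Dynkin's theorem any degree-$(p+1)$ element of the free Lie algebra generated by $J, K$ is a linear combination of $(p+1)$-fold nested commutators, while the non-Lie part is absorbed into $2s[J,K]$), leaving only nested commutators of depth $p$ in the remainder. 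Applying the triangle inequality then gives
\begin{equation*}
    \norm{\mathcal{A}_M(s)+2s[J,K]}
    \leq C(p)\,s^p\sum_{j_1,\ldots,j_{p+1}=1}^2 \norm{[H_{j_{p+1}},\ldots,[H_{j_2},H_{j_1}]]}
    = C(p)\,s^p\,\alpha_{\text{comm}}^{p+1},
\end{equation*}
and integrating over $s\in[0,\tau]$ yields the desired $\mathbf{O}(\alpha_{\text{comm}}^{p+1}\tau^{p+1})$ bound.

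The main obstacle is the bookkeeping in the iterative Taylor-expansion step: one must verify that, after $p$ rounds of expansion, every surviving monomial of degree $\leq p$ in $J, K$ pairs up into commutators and cancels (either between different Taylor remainders within $\mathcal{A}_M(s)$, or against $2s[J,K]$), leaving only genuine $(p+1)$-fold nested commutators. This is essentially the same combinatorial obstruction that appears in~\cite{CSTWZ19}; the cleanest route is to invoke the abstract free Lie algebra identification rather than track the Childs-Wiebe coefficients $\gamma_k, \beta_k$ explicitly, since the only property of $M_p$ used is the order condition itself together with the fact that $M_p$ is a product of exponentials of $J$ and $K$.
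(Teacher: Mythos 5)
Your approach is essentially the paper's: set up a Duhamel-type integral representation so the error reduces to the generator mismatch $R(s)=\mathcal{A}_M(s)+2s[J,K]$, use the order condition to infer $R(s)=\mathbf{O}(s^p)$, and then appeal to the nested-commutator machinery of~\cite{CSTWZ19} (which is exactly the route the paper takes, deferring to that reference for the commutator bookkeeping). The one concrete slip is in the integral identity as written: with the unitary factors placed as $M_p(\tau)M_p(s)^{-1}(\cdots)e^{(s^2-\tau^2)[J,K]}$, differentiating $F(s)=M_p(\tau)M_p(s)^{-1}e^{(s^2-\tau^2)[J,K]}$ gives the integrand $2s[J,K]-\mathcal{A}_M(s)\approx 4s[J,K]$ (not $\mathbf{O}(s^p)$), and the endpoint difference is $I-M_p(\tau)e^{-\tau^2[J,K]}$, which is only $\mathbf{O}(\tau^2)$, not the quantity you wanted. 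The correct factorization follows from $G(s)=e^{s^2[J,K]}M_p(s)$, yielding
\begin{equation*}
    M_p(\tau)-e^{-\tau^2[J,K]}=\int_0^\tau e^{(s^2-\tau^2)[J,K]}\bigl[\mathcal{A}_M(s)+2s[J,K]\bigr]M_p(s)\,\mathrm{d}s,
\end{equation*}
after which the rest of your argument (the order condition forcing the integrand to be $\mathbf{O}(s^p)$, and the free-Lie-algebra identification of the surviving terms as $(p+1)$-fold nested commutators) goes through and matches the paper's reliance on~\cite{CSTWZ19}.
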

\begin{proof}
We start with the integral representation
\begin{equation}
\begin{aligned}
    M_{p}(\tau)-e^{-\tau^2[J,K]}
    &=e^{-\tau^2[J,K]}\left(e^{\tau^2[J,K]}M_{p}(\tau)-I\right)
    =\int_{0}^{\tau}\mathrm{d}\tau_1\ e^{-(\tau^2-\tau_1^2)[J,K]}R(\tau_1)M_{p}(\tau_1),\\
    R(\tau_1)&=2\tau_1[J,K]
    +\left(\frac{\mathrm{d}}{\mathrm{d}\tau_1}M_{p}(\tau_1)\right)M_{p}^{-1}(\tau_1),
\end{aligned}
\end{equation}
which follows from the fundamental theorem of calculus.

Differentiating both sides of $M_{p}(\tau)=e^{-\tau^2[J,K]}+\mathbf{O}\left(\tau^{p+1}\right)$, we obtain the order condition
\begin{equation}
    \frac{\mathrm{d}}{\mathrm{d}\tau_1}M_{p}(\tau_1)=-2\tau_1[J,K]e^{-\tau_1^2[J,K]}+\mathbf{O}\left(\tau_1^{p}\right),
\end{equation}
which implies
\begin{equation}
    \frac{\mathrm{d}}{\mathrm{d}\tau_1}M_{p}(\tau_1)=-2\tau_1[J,K]M_{p}(\tau)+\mathbf{O}\left(\tau_1^{p}\right),
\end{equation}
or equivalently,
\begin{equation}
    R(\tau_1)=2\tau_1[J,K]
    +\left(\frac{\mathrm{d}}{\mathrm{d}\tau_1}M_{p}(\tau_1)\right)M_{p}^{-1}(\tau_1)
    =\mathbf{O}\left(\tau_1^{p}\right).
\end{equation}
The remaining analysis proceeds as in~\cite{CSTWZ19}.
\end{proof}

\begin{corollary}[Step number for higher-order group commutator formulas]
Let $J$, $K$ be Hermitian matrices, and $M_{p}\left(\tau\right)=e^{-\tau^2[J,K]}+\mathbf{O}\left(\tau^{p+1}\right)$ be a $p$th-order group commutator product formula for some $p\in\mathbb{Z}_{\geq2}$. For any evolution time $t\geq0$, the approximations
\begin{equation}
    \norm{M_{p}^r\left(\sqrt{\frac{t}{r}}\right)-e^{-t[J,K]}}\leq\epsilon,\qquad
    \norm{M_{p}^{\dagger r}\left(\sqrt{\frac{t}{r}}\right)-e^{t[J,K]}}\leq\epsilon
\end{equation}
can be achieved with accuracy $\epsilon$ by choosing
\begin{equation}
    r=\mathbf{O}\left(\frac{\alpha_{\text{comm}}^{2+4/(p-1)}t^{1+2/(p-1)}}{\epsilon^{2/(p-1)}}\right),
\end{equation}
where
\begin{equation}
    \alpha_{\text{comm}}=\left(\sum_{j_1,\ldots,j_{p+1}=1}^{2}\norm{\left[H_{j_{p+1}},\ldots,\left[H_{j_2},H_{j_1}\right]\right]}\right)^{\frac{1}{p+1}}
\end{equation}
with $H_1=J$ and $H_2=K$.
\end{corollary}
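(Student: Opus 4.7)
The plan is to establish this corollary as a direct consequence of the per-step commutator bound just proved in the preceding lemma. Writing $\tau=\sqrt{t/r}$, one application of $M_p(\tau)$ approximates $e^{-\tau^2[J,K]}=e^{-(t/r)[J,K]}$, so iterating $r$ times targets $e^{-t[J,K]}$. The core task is to bound the accumulated error and invert the resulting inequality to find the sufficient $r$.

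First I would invoke the previous lemma to get the per-step bound $\norm{M_p(\sqrt{t/r})-e^{-(t/r)[J,K]}}=\mathbf{O}\!\left(\alpha_{\text{comm}}^{p+1}(t/r)^{(p+1)/2}\right)$. Next I would telescope via
\begin{equation}
M_p^r(\tau)-e^{-t[J,K]}=\sum_{k=0}^{r-1}M_p^k(\tau)\bigl(M_p(\tau)-e^{-\tau^2[J,K]}\bigr)e^{-(r-k-1)\tau^2[J,K]},
\end{equation}
and apply the triangle inequality together with unitary invariance of the spectral norm (both $M_p(\tau)$ and $e^{-\tau^2[J,K]}$ are unitary, since $J,K$ Hermitian makes $[J,K]$ anti-Hermitian) to obtain
\begin{equation}
\norm{M_p^r(\sqrt{t/r})-e^{-t[J,K]}}\leq r\cdot\mathbf{O}\!\left(\alpha_{\text{comm}}^{p+1}(t/r)^{(p+1)/2}\right)=\mathbf{O}\!\left(\alpha_{\text{comm}}^{p+1}t^{(p+1)/2}r^{-(p-1)/2}\right).
\end{equation}

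Requiring this quantity to be at most $\epsilon$ and solving for $r$ yields $r\gtrsim\left(\alpha_{\text{comm}}^{p+1}t^{(p+1)/2}/\epsilon\right)^{2/(p-1)}$, which rearranges to
\begin{equation}
r=\mathbf{O}\!\left(\alpha_{\text{comm}}^{2+4/(p-1)}\,t^{1+2/(p-1)}\,\epsilon^{-2/(p-1)}\right),
\end{equation}
using the identities $2(p+1)/(p-1)=2+4/(p-1)$ and $(p+1)/(p-1)=1+2/(p-1)$. The adjoint variant follows by taking Hermitian conjugates throughout the argument: $(M_p^r(\tau))^\dagger=M_p^{\dagger r}(\tau)$ and $(e^{-t[J,K]})^\dagger=e^{t[J,K]}$, so the spectral norm estimate is unchanged and the same $r$ suffices.

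There is no substantive obstacle here; the argument is essentially a routine step-subdivision followed by solving for $r$. The only point requiring mild care is ensuring the implicit constant in the per-step lemma is independent of $r$ (so that the telescoped bound stays linear in $r$), which is immediate from the lemma's statement since $\alpha_{\text{comm}}$ depends only on $J$ and $K$.
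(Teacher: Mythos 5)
Your proof is correct and mirrors the paper's argument exactly: apply the per-step commutator bound at $\tau=\sqrt{t/r}$, propagate the error linearly over $r$ steps using unitarity (the paper states the subadditivity bound directly rather than writing out the telescoping sum, but the reasoning is identical), solve the resulting inequality for $r$, and obtain the adjoint variant by Hermitian conjugation. No gaps.
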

\begin{proof}
Applying the commutator bound for time $\tau=\sqrt{\frac{t}{r}}$, we have
\begin{equation}
    \norm{M_{p}\left(\sqrt{\frac{t}{r}}\right)-e^{-\frac{t}{r}[J,K]}}
    =\mathbf{O}\left(\alpha_{\text{comm}}^{p+1}\frac{t^{\frac{p+1}{2}}}{r^{\frac{p+1}{2}}}\right),
\end{equation}
which implies that
\begin{equation}
    \norm{M_{p}^r\left(\sqrt{\frac{t}{r}}\right)-e^{-t[J,K]}}
    \leq r\norm{M_{p}\left(\sqrt{\frac{t}{r}}\right)-e^{-\frac{t}{r}[J,K]}}
    =\mathbf{O}\left(\alpha_{\text{comm}}^{p+1}\frac{t^{\frac{p+1}{2}}}{r^{\frac{p-1}{2}}}\right).
\end{equation}
To ensure that the error is at most $\epsilon$, it suffices to choose
\begin{equation}
    r=\mathbf{O}\left(\frac{\alpha_{\text{comm}}^{2+4/(p-1)}t^{1+2/(p-1)}}{\epsilon^{2/(p-1)}}\right)
\end{equation}
as claimed. This handles the case where the evolution time $t\geq0$ is nonnegative. When $t<0$, we can simply take the Hermitian conjugation.
\end{proof}

\section{Dominated approximation of composite functions}
\label{append:composite}

In this appendix, we present auxiliary results on approximating composite functions by polynomials. We start by reviewing a method to approximate functions analytic over Bernstein ellipses surrounding the unit interval (\append{composite_stadium}). We also introduce common geometric objects such as stadiums and disks to simplify the reasoning. We then derive dominated polynomial approximation results to realize Hamiltonian QSVT (\append{composite_qsvt}), Hamiltonian matrix inversion (\append{composite_inverse}), Hamiltonian fractional scaling (\append{composite_frac}), Hamiltonian overlap estimation (\append{composite_overlap}), Green's function estimation (\append{composite_green}), and Hamiltonian squaring (\append{composite_square}).

\subsection{Dominated polynomial approximation over the unit interval}
\label{append:composite_stadium}
To find polynomial approximations of a function $f$ defined over a real interval, it is often helpful to examine its behavior over a slightly larger region surrounding the real interval.
A common setting is that $f$ is analytic in a complex region containing the Bernstein ellipses~\cite{demanet2010chebyshev,Fawzi}. 

A Bernstein ellipse is an ellipse in the complex plane with foci $\pm 1$ and center $0$. It can be described in terms of the semi-major axis $a>1$, the semi-minor axis $b>0$, or the elliptical radius $\rho>1$. These parameters are related by
\begin{equation}
\begin{aligned}
    &a=\frac{\rho+\rho^{-1}}{2},\qquad
    &&b=\frac{\rho-\rho^{-1}}{2},\\
    &\rho=a+\sqrt{a^2-1},\qquad
    &&\rho=b+\sqrt{b^2+1},\\
    &a=\sqrt{b^2+1},\qquad
    &&b=\sqrt{a^2-1}.
\end{aligned}
\end{equation}
We will slightly abuse the notation, denoting the Bernstein ellipse as $\mathbf{Ellipse}\left(a=\cdot\right)$, $\mathbf{Ellipse}\left(b=\cdot\right)$, and $\mathbf{Ellipse}\left(\rho=\cdot\right)$, when values for the semi-major axis, semi-minor axis and elliptical radius are provided.
Whenever the context is clear, we will use $\mathbf{Ellipse}(\cdot)$ to also denote the region enclosed by the Bernstein ellipse. We will include the foci to describe a general Ellipse oriented along the real axis. For instance,
\begin{equation}
    \mathbf{Ellipse}\left(\left[c_0,c_1\right];a,b\right)
    =\left\{\frac{\left(\Re(z)-c_0\right)^2}{a^2}+\frac{\left(\Im(z)-c_1\right)^2}{b^2}=1\right\}.
\end{equation}
A direct calculation yields the following containment~\cite[Fact 30]{TangTian24}.
\begin{lemma}
For any $0<\delta\leq1$,
\begin{small}
\begin{equation}
\newmaketag
\begin{aligned}
    \mathbf{Ellipse}\left(a=1+\frac{\delta^2}{4}\right)
    \subseteq\mathbf{Ellipse}\left(\rho=1+\delta\right)&=\mathbf{Ellipse}\left(a=1+\frac{\delta^2}{2(1+\delta)}\right)
    \subseteq\mathbf{Ellipse}\left(a=1+\frac{\delta^2}{2}\right),\\
    \mathbf{Ellipse}\left(b=\frac{3\delta}{4}\right)
    \subseteq\mathbf{Ellipse}\left(\rho=1+\delta\right)&=\mathbf{Ellipse}\left(b=\delta-\frac{\delta^2}{2(1+\delta)}\right)
    \subseteq\mathbf{Ellipse}\left(b=\delta\right).\\
\end{aligned}
\end{equation}
\end{small}%
\end{lemma}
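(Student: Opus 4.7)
The plan is to reduce the entire lemma to a direct algebraic computation based on the two definitions $a = (\rho+\rho^{-1})/2$ and $b = (\rho-\rho^{-1})/2$ together with the monotonicity of the Bernstein-ellipse family in $a$ (equivalently in $b$, since $b^2 = a^2 - 1$ for fixed foci $\pm 1$). The two middle equalities will be established by substituting $\rho = 1+\delta$ and simplifying; the four inclusions will then become scalar inequalities in $\delta$, which I would verify one at a time.

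First I would compute
\begin{equation*}
a = \frac{(1+\delta) + (1+\delta)^{-1}}{2} = \frac{(1+\delta)^2 + 1}{2(1+\delta)} = \frac{2 + 2\delta + \delta^2}{2(1+\delta)} = 1 + \frac{\delta^2}{2(1+\delta)},
\end{equation*}
and
\begin{equation*}
b = \frac{(1+\delta) - (1+\delta)^{-1}}{2} = \frac{(1+\delta)^2 - 1}{2(1+\delta)} = \frac{\delta(2+\delta)}{2(1+\delta)} = \delta - \frac{\delta^2}{2(1+\delta)},
\end{equation*}
which immediately yields the two middle equalities.

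Next I would dispatch the four inclusions by appealing to the fact that ellipses with fixed foci $\pm 1$ are nested in $a$ (equivalently in $b$). The first line reduces to the sandwich $\delta^2/4 \leq \delta^2/(2(1+\delta)) \leq \delta^2/2$: the right inequality rearranges to $1 \leq 1+\delta$ and holds for all $\delta > 0$, while the left rearranges to $1+\delta \leq 2$ and uses the hypothesis $\delta \leq 1$. The second line reduces to $3\delta/4 \leq \delta - \delta^2/(2(1+\delta)) \leq \delta$: the right is immediate from positivity of $\delta^2/(2(1+\delta))$, and the left, after dividing by $\delta > 0$ and clearing $2(1+\delta)$, becomes $3(1+\delta) \leq 2(2+\delta)$, i.e.\ $\delta \leq 1$.

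No step presents a real obstacle; the argument is a short algebraic simplification, and the constraint $\delta \leq 1$ is used precisely at the two tight inequalities on the outer ends of each chain.
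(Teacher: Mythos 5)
Your proof is correct and is exactly the ``direct calculation'' the paper alludes to (the paper itself gives no proof, merely citing Fact 30 of Tang--Tian): expand $a$ and $b$ at $\rho=1+\delta$ to get the middle equalities, then use nestedness of confocal ellipses in $a$ (equivalently $b$) to reduce the inclusions to scalar inequalities, each of which reduces to $\delta\le 1$ or triviality. All the algebra checks out, including the final cross-multiplication giving $3(1+\delta)\le 2(2+\delta)\iff\delta\le 1$.
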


The following result from approximation theory (\cite[Theorem 8.1 and 8.2]{trefethen2019approximation} and~\cite{Fawzi}) shows how to construct polynomial approximations of a function over the unit interval, based on its behavior on the Bernstein ellipses. The polynomial used in the proof is actually the truncated Chebyshev expansion of $f$, although this fact is not needed in the remainder of our work.
\begin{lemma}[Polynomial approximation via Bernstein ellipses]
\label{lem:approx_ellipse}
Let $f$ be analytic over a complex region enclosing $\mathbf{Ellipse}(\rho)$ for some elliptical radius $\rho>1$, such that $f_{[-1,1]}$ is real. For every $d\in\mathbb{Z}_{\geq0}$, there exists a real polynomial $h$ of degree $d$, such that
\begin{equation}
    \norm{h-f}_{\max,[-1,1]}\leq\frac{2\norm{f}_{\max,\mathbf{Ellipse}(\rho)}}{\rho-1}\rho^{-d}.
\end{equation}
Moreover, $h$ has the same parity as $f$.
\end{lemma}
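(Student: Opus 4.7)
The plan is to invoke the Chebyshev expansion of $f$ and truncate it, mirroring the classical argument in Trefethen's approximation theory reference. Since $f$ is analytic on a complex region containing $\mathbf{Ellipse}(\rho)$ and real on $[-1,1]$, it admits a uniformly convergent Chebyshev series $f(x) = \sum_{k=0}^{\infty} c_k T_k(x)$ on $[-1,1]$, where $T_k$ denotes the degree-$k$ Chebyshev polynomial of the first kind and the coefficients $c_k$ are real. I would take the truncation
\begin{equation}
    h(x) = \sum_{k=0}^{d} c_k T_k(x),
\end{equation}
which is a real polynomial of degree at most $d$.

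Next I would establish the coefficient decay estimate $|c_k| \leq 2 \norm{f}_{\max,\mathbf{Ellipse}(\rho)}\, \rho^{-k}$ for all $k \geq 1$ (and $|c_0| \leq \norm{f}_{\max,\mathbf{Ellipse}(\rho)}$). The standard derivation uses the contour integral representation of $c_k$ over the Bernstein ellipse, together with the Joukowski substitution $x = \tfrac{1}{2}(z + z^{-1})$ which maps the circle $|z|=\rho$ onto $\mathbf{Ellipse}(\rho)$; the arc-length factor and the bound $|T_k(x)| \leq \tfrac{1}{2}(\rho^k + \rho^{-k})$ on this ellipse yield the geometric decay. With this in hand, the truncation error is controlled by a geometric series:
\begin{equation}
    \norm{h - f}_{\max,[-1,1]} \leq \sum_{k=d+1}^{\infty} |c_k| \norm{T_k}_{\max,[-1,1]} \leq 2 \norm{f}_{\max,\mathbf{Ellipse}(\rho)} \sum_{k=d+1}^{\infty} \rho^{-k} = \frac{2 \norm{f}_{\max,\mathbf{Ellipse}(\rho)}}{\rho - 1}\, \rho^{-d},
\end{equation}
which is exactly the claimed bound.

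Finally, for the parity claim, I would observe that $T_k$ is an even polynomial when $k$ is even and odd when $k$ is odd. If $f$ is even, the integral formula for $c_k$ (or equivalently the orthogonality of $T_k$ against $f$ with weight $(1-x^2)^{-1/2}$) forces $c_k = 0$ for all odd $k$, so $h$ is a sum of even Chebyshev polynomials and hence even. The odd case is symmetric. No step here should present a serious obstacle since this is a textbook derivation; the only care needed is in verifying the contour integral bound on $|c_k|$ and in confirming the vanishing of coefficients of the wrong parity, both of which are routine once one writes $c_k$ via $x = \cos\theta$ or via the Joukowski substitution. Thus I would simply cite \cite[Theorem 8.1 and 8.2]{trefethen2019approximation} for the truncation estimate and add a one-line remark on parity preservation.
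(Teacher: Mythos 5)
Your proposal is correct and follows precisely the route the paper indicates: the paper itself does not supply a proof but cites \cite[Theorems 8.1 and 8.2]{trefethen2019approximation} and explicitly remarks that the approximant is the truncated Chebyshev expansion, which is exactly what you reconstruct, including the coefficient decay bound via the Joukowski map, the geometric tail sum giving $\frac{2\norm{f}_{\max,\mathbf{Ellipse}(\rho)}}{\rho-1}\rho^{-d}$, and the parity argument from the vanishing of Chebyshev coefficients of the wrong parity.
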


The polynomial constructed above fulfills the approximation condition for $x\in[-1,1]$. However, their value will blow up exponentially outside the unit interval. To remedy this problem, one can multiply it with a window function to get a bounded polynomial approximation~\cite[Theorem 21]{TangTian24}. However, recall that we need a dominated polynomial approximation to realize Hamiltonian QSVT, which is different from the setting of~\cite[Theorem 21]{TangTian24}.

\begin{corollary}[Dominated polynomial approximation via Bernstein ellipse]
\label{cor:approx_dominated_ellipse}
Let $f$ be analytic over a complex region enclosing $\mathbf{Ellipse}(\rho=1+\alpha)$ for some $\alpha>0$, such that $f_{[-1,1]}$ is real. For any $\epsilon_{\text{dom}}>0$ and $0<\xi\leq1<b$, there exists a real polynomial $h_{\text{dom}}$ satisfying
\begin{equation}
\begin{aligned}
    &\abs{h_{\text{dom}}(x)-f(x)}\leq\epsilon_{\text{dom}},\qquad&&\forall x\in\left[-1+\xi,1-\xi\right],\\
    &\abs{h_{\text{dom}}(x)}\leq\abs{f(x)}+\epsilon_{\text{dom}},\qquad&&\forall x\in\left[-1,1\right],\\
    &\abs{h_{\text{dom}}(x)}\leq\epsilon_{\text{dom}},\qquad&&\forall x\in\left[-b,-1\right]\cup\left[1,b\right],
\end{aligned}
\end{equation}
with the asymptotic degree
\begin{equation}
    \mathbf{O}\left(\frac{b}{\alpha^2}\log\left(\frac{b\norm{f}_{\max,\mathbf{Ellipse}(\rho=1+\alpha)}}{\alpha^2\epsilon_{\text{dom}}}\right)
    +\frac{b}{\xi}\log\left(\frac{\norm{f}_{\max,\mathbf{Ellipse}(\rho=1+\alpha)}}{\epsilon_{\text{dom}}}\right)\right).
\end{equation}
Moreover, $h_{\text{dom}}$ has the same parity as $f$.
\end{corollary}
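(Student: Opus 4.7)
The plan is to realize $h_{\text{dom}}$ as a product $p \cdot w$, where $p$ is an unrestricted polynomial approximation of $f$ on $[-1,1]$ obtained from \lem{approx_ellipse}, and $w$ is a smooth rectangular window that is close to $1$ on $[-1+\xi,1-\xi]$, bounded by $1$ on $[-1,1]$, and extremely small on $[-b,-1]\cup[1,b]$. The window not only enforces the third inequality (by suppressing the potentially large tail of $p$ outside the unit interval) but also preserves parity, since it will be chosen to be even.

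First, I would invoke \lem{approx_ellipse} to produce a real polynomial $p$ of degree $d_1 = \mathbf{O}\bigl(\frac{1}{\alpha}\log\bigl(\frac{M_0}{\alpha\epsilon_1}\bigr)\bigr)$ approximating $f$ on $[-1,1]$ to accuracy $\epsilon_1$, where $M_0 = \norm{f}_{\max,\mathbf{Ellipse}(\rho=1+\alpha)}$. Because \lem{approx_ellipse} produces the truncated Chebyshev expansion, $p$ automatically inherits the parity of $f$. Next, I would control the growth of $p$ on $[-b,b]$ via a Bernstein--Walsh-type bound: for $|x|\leq b$, $|p(x)| \leq \norm{p}_{\max,[-1,1]} \cdot \bigl(b+\sqrt{b^{2}-1}\bigr)^{d_1} \leq (M_0+\epsilon_1)\bigl(b+\sqrt{b^{2}-1}\bigr)^{d_1}$. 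Call this upper bound $M_p$.

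Then I would build an even polynomial window $w$ via \cite[Lemma~29]{Gilyen2018singular} of degree $d_2 = \mathbf{O}\bigl(\frac{b}{\xi}\log\bigl(\frac{1}{\epsilon_w}\bigr)\bigr)$ satisfying $|w(x)-1|\leq\epsilon_w$ on $[-1+\xi,1-\xi]$, $|w(x)|\leq1$ on $[-1,1]$, and $|w(x)|\leq\epsilon_w$ on $[-b,-1]\cup[1,b]$. Setting $h_{\text{dom}} = p\cdot w$, the three requirements follow readily:
\begin{equation*}
|h_{\text{dom}}(x)-f(x)| \leq |p(x)||w(x)-1| + |p(x)-f(x)| \leq (M_0+\epsilon_1)\epsilon_w + \epsilon_1, \quad x\in[-1+\xi,1-\xi];
\end{equation*}
$|h_{\text{dom}}(x)|\leq |p(x)|\leq |f(x)|+\epsilon_1$ on $[-1,1]$; and $|h_{\text{dom}}(x)|\leq M_p\epsilon_w$ on $[-b,-1]\cup[1,b]$. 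Choosing $\epsilon_1 = \Theta(\epsilon_{\text{dom}})$ and $\epsilon_w = \Theta\bigl(\epsilon_{\text{dom}}/M_p\bigr)$ produces the claimed dominated approximation, and since $w$ is even the parity of $h_{\text{dom}}$ matches that of $f$.

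The main obstacle is tightening the total degree to match the asymptotic bound in the statement, which features the somewhat unusual prefactor $b/\alpha^{2}$ rather than $b/\alpha$ in the first additive term. Naively, $\log M_p \approx \log M_0 + d_1\log(2b)$ enters the window degree and yields a contribution of order $\frac{b}{\xi\alpha}\log(2b)\log\bigl(M_0/(\alpha\epsilon_{\text{dom}})\bigr)$, which is too loose. Matching the stated degree requires exploiting the geometry of the Bernstein ellipse more carefully: the semi-major axis is $a = 1+\mathbf{\Theta}(\alpha^{2})$, so the real points $x\in[-1-\alpha^{2},1+\alpha^{2}]$ already lie inside $\mathbf{Ellipse}(1+\alpha)$ and $|p(x)|$ can be bounded directly by $M_0$ there. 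Splitting the window into two stages---a wider one with transition width $\mathbf{\Theta}(\alpha^{2}/b)$ contributing the $\frac{b}{\alpha^{2}}$ factor to trim $p$ down from $[-b,b]$ onto a neighborhood still covered by the ellipse, followed by a narrower one of transition width $\xi/b$ contributing the $\frac{b}{\xi}$ factor---should realize the additive degree decomposition in the target bound. Verifying that the two stages compose correctly (and that $M_p$ after the first stage reduces to $\mathbf{O}(M_0)$) is the principal technical step.
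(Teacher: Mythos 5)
Your high-level plan (approximate $f$ on $[-1,1]$, then window the result to enforce the parity and to suppress the tail outside $[-1,1]$) is sound in spirit, and you correctly identify the degree mismatch as the crux. However, the proposed two-stage-window fix does not actually close the gap, because you are still killing the exponential growth of the Chebyshev truncation $p$ by multiplying it with a window. For the first window $w_1$ with transition width $\mathbf{\Theta}(\alpha^2/b)$, the product $p\cdot w_1$ must be small on $\bigl[1+\mathbf{\Theta}(\alpha^2),b\bigr]$ where $|p(x)|$ can be as large as $\mathbf{\Theta}\bigl(M_0(2b)^{d_1}\bigr)$. This forces the window accuracy to satisfy $\epsilon_{w_1}\lesssim (2b)^{-d_1}$, whence $\log(1/\epsilon_{w_1})\gtrsim d_1\log(b)$, and the window degree becomes $\mathbf{O}\Bigl(\frac{b}{\alpha^2}\,d_1\log(b)\Bigr)=\mathbf{O}\Bigl(\frac{b\log(b)}{\alpha^3}\log\bigl(\frac{M_0}{\alpha\epsilon_{\text{dom}}}\bigr)\Bigr)$, not the claimed $\mathbf{O}\bigl(\frac{b}{\alpha^2}\log(\cdot)\bigr)$. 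The extra factor of $\frac{\log(b)}{\alpha}$ is unavoidable as long as you start from the Chebyshev truncation $p$ of $f$ and try to window it, because $p$ genuinely grows like $(b+\sqrt{b^2-1})^{d_1}$ on $[a,b]$ and the window has to absorb that growth. Splitting the window into two stages does not change the total amount of suppression needed.

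The paper circumvents this entirely by never constructing the unrestricted Chebyshev truncation $p$ on $[-b,b]$. Instead it invokes a bounded polynomial approximation from Tang and Tian (their Theorem 21, cited in the proof) which directly produces a polynomial $h_{\text{bnd}}$ of degree $\mathbf{O}\bigl(\frac{b}{\alpha^2}\log\bigl(\frac{b}{\alpha^2\epsilon_{\text{bnd}}}\bigr)\bigr)$ that approximates $f$ on $[-1,1]$, is bounded by $M_0$ on $[-1-\delta,1+\delta]$ with $\delta=\mathbf{\Theta}(\alpha^2)$, and is at most $\epsilon_{\text{bnd}}M_0$ on the rest of $[-b,b]$. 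This object packages the approximation and the tail suppression in a single polynomial whose degree does not incur the $d_1\log(b)$ penalty; it is not of the form $p\cdot w_1$ for $p$ a Chebyshev truncation. The paper then applies exactly your second, narrower window with transition width $\xi/b$ to $h_{\text{bnd}}$ to get the dominated approximation, and since $h_{\text{bnd}}$ is already $\mathbf{O}(M_0)$ on $[-b,b]$ the window accuracy only needs to be $\mathbf{O}(\epsilon_{\text{dom}}/M_0)$, yielding the claimed total degree. So the missing ingredient in your proposal is this bounded-approximation black box; your geometric observation about the semi-major axis $a=1+\mathbf{\Theta}(\alpha^2)$ is the right intuition for why $\frac{b}{\alpha^2}$ is the correct prefactor, but turning that intuition into a polynomial construction requires more than windowing the Chebyshev truncation.
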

\begin{proof}
We start with the polynomial $h_{\text{bnd}}$ constructed by~\cite[Theorem 21]{TangTian24}, which has the behavior
\begin{equation}
\begin{aligned}
    &\abs{h_{\text{bnd}}(x)-f(x)}\leq\epsilon_{\text{bnd}}\norm{f}_{\max,\mathbf{Ellipse}(\rho=1+\alpha)},\qquad&&\forall x\in\left[-1,1\right],\\
    &\abs{h_{\text{bnd}}(x)}\leq\norm{f}_{\max,\mathbf{Ellipse}(\rho=1+\alpha)},\qquad&&\forall x\in\left[-1-\delta,1+\delta\right],\\
    &\abs{h_{\text{bnd}}(x)}\leq\epsilon_{\text{bnd}}\norm{f}_{\max,\mathbf{Ellipse}(\rho=1+\alpha)},\qquad&&\forall x\in\left[-b,-1-\delta\right]\cup\left[1+\delta,b\right],
\end{aligned}
\end{equation}
with the asymptotic degree
\begin{equation}
    \mathbf{O}\left(\frac{b}{\alpha^2}\log\left(\frac{b}{\alpha^2\epsilon_{\text{bnd}}}\right)\right),
\end{equation}
where $\delta=\mathbf{\Theta}\left(\alpha^2\right)$ is sufficiently small.
The first condition can be further split into
\begin{equation}
\begin{aligned}
    &\abs{h_{\text{bnd}}(x)-f(x)}\leq\epsilon_{\text{bnd}}\norm{f}_{\max,\mathbf{Ellipse}(\rho=1+\alpha)},\qquad&&\forall x\in\left[-1+\xi,1-\xi\right],\\
    &\abs{h_{\text{bnd}}(x)}\leq\abs{f(x)}+\epsilon_{\text{bnd}}\norm{f}_{\max,\mathbf{Ellipse}(\rho=1+\alpha)},\qquad&&\forall x\in\left[-1,-1+\xi\right]\cup\left[1-\xi,1\right].\\
\end{aligned}
\end{equation}

To restrict the range of $h_{\text{bnd}}$ outside $[-1,1]$, we multiply it with another smoothened rectangular window function. Specifically, we construct an even polynomial $w(x)$:
\begin{equation}
    w(x)\in
    \begin{cases}
        \left[1-\epsilon_{\text{rec}},1\right],\quad&x\in\left[-1+\xi,1-\xi\right],\\
        [-1,1],&x\in\left[-1,-1+\xi\right]\cup\left[1-\xi,1\right],\\
        \left[0,\epsilon_{\text{rec}}\right],&x\in\left[-b,-1\right]\cup\left[1,b\right],
    \end{cases}
\end{equation}
with an effective transition width $\frac{\xi}{b}$
and an asymptotic degree of~\cite[Lemma 29]{Gilyen2018singular}
\begin{equation}
    \mathbf{O}\left(\frac{b}{\xi}\log\left(\frac{1}{\epsilon_{\text{rec}}}\right)\right).
\end{equation}
We then let
\begin{equation}
    h_{\text{dom}}(x)=h_{\text{bnd}}(x)w(x).
\end{equation}
The polynomial $h_{\text{dom}}$ has the claimed behavior as long as
\begin{equation}
    \epsilon_{\text{bnd}},\epsilon_{\text{rec}}=\mathbf{O}\left(\frac{\epsilon_{\text{dom}}}{\norm{f}_{\max,\mathbf{Ellipse}(\rho=1+\alpha)}}\right).
\end{equation}
\end{proof}

We now consider other geometric objects which can simplify the reasoning about the Bernstein ellipse. We first consider an annulus centered at $c$ with inner and outer radius $0\leq\ell_0\leq\ell_1\leq\infty$:
\begin{equation}
    \mathbf{Disk}\left(c;\left[\ell_0,\ell_1\right]\right)
    =\left\{\ell_0\leq\abs{z-c}\leq\ell_1\right\},
\end{equation}
which reduces to a circle when $\ell_0=\ell_1$ and an actual disk when $\ell_0=0$. These annuli satisfy the inversion rule
\begin{equation}
    \frac{1}{\mathbf{Disk}\left(0;\left[\ell_0,\ell_1\right]\right)}
    =\mathbf{Disk}\left(0;\left[\frac{1}{\ell_1},\frac{1}{\ell_0}\right]\right)
\end{equation}
assuming $\ell_0>0$.

We then consider stadiums~\cite{demanet2010chebyshev}.
A stadium encloses a region that is the Minkowski sum of a disk and a line segment. It can be equivalently characterized as the set of points having the same distance from a line segment. We will restrict to stadiums centered on the real axis, writing $\mathbf{Stadium}\left([c_0,c_1];\ell\right)$ to denote a stadium with centers $c_0$ and $c_1$ and radius $\ell>0$ of the semicircles. Explicitly,
\begin{equation}
\begin{aligned}
    &\mathbf{Stadium}\left([c_0,c_1];\ell\right)\\
    &=\left\{\Im(z)=\pm\ell,c_0\leq \Re(z)\leq c_1\right\}
    \bigcup\left\{\abs{z-c_0}=\ell,\Re(z)<c_0\right\}
    \bigcup\left\{\abs{z-c_1}=\ell,\Re(z)>c_1\right\}.
\end{aligned}
\end{equation}
Whenever the context is clear, we will use $\mathbf{Stadium}(\cdot)$ to also denote the region enclosed by the stadium curve.
These stadiums satisfy the rescaling rule
\begin{equation}
    \alpha\cdot\mathbf{Stadium}\left([-c,c];\ell\right)
    =\mathbf{Stadium}\left([-\alpha c,\alpha c];\alpha\ell\right)
\end{equation}
for $c,\ell,\alpha>0$.
The following lemma establishes a containment relation between Bernstein ellipses and stadiums. 
It follows by comparing derivatives of the ellipse and semicircles, but we omit the details as the calculation is elementary and tedious.
See~\fig{stadium} for a pictorial illustration of this relation.

\begin{lemma}
For any $\ell>0$,
\begin{equation}
    \mathbf{Ellipse}\left(b=\ell\right)
    \subseteq\mathbf{Stadium}\left([-1,1];\ell\right)
    \subseteq\mathbf{Ellipse}\left(a=1+\ell\right).
\end{equation}
\end{lemma}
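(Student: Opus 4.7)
The plan is to establish the two containments separately by direct comparison of the boundary curves, exploiting the focus–directrix characterization of the Bernstein ellipses (both ellipses have foci $\pm 1$).

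For the outer containment $\mathbf{Stadium}\left([-1,1];\ell\right)\subseteq\mathbf{Ellipse}\left(a=1+\ell\right)$, I would use the fact that $z$ lies on or inside $\mathbf{Ellipse}(a=1+\ell)$ iff $\abs{z-1}+\abs{z+1}\leq 2(1+\ell)$. It then suffices to check this inequality on the boundary of the stadium, which splits into two cases. On the horizontal segment $\Im(z)=\pm\ell$ with $\Re(z)\in[-1,1]$, the function $\abs{z-1}+\abs{z+1}$ is convex in $\Re(z)$, hence maximized at $\Re(z)=\pm 1$, where it equals $\ell+\sqrt{4+\ell^2}$; the desired inequality $\sqrt{4+\ell^2}\leq 2+\ell$ follows immediately by squaring. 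On the right semicircle $z=1+\ell e^{i\theta}$ with $\theta\in[-\pi/2,\pi/2]$, we have $\abs{z-1}=\ell$ and $\abs{z+1}=\sqrt{4+4\ell\cos\theta+\ell^2}\leq 2+\ell$, again by squaring, with equality at $\theta=0$; the left semicircle is symmetric.

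For the inner containment $\mathbf{Ellipse}(b=\ell)\subseteq\mathbf{Stadium}\left([-1,1];\ell\right)$, I would parametrize the ellipse as $z(t)=\sqrt{1+\ell^2}\cos t+i\ell\sin t$ (recall $a^2=b^2+1$). Split into cases by the sign of $\abs{\Re(z)}-1$. When $\abs{\Re(z)}\leq 1$, i.e.\ $\abs{\cos t}\leq 1/\sqrt{1+\ell^2}$, the bound $\abs{\Im(z)}\leq \ell$ is automatic, so the point lies in the horizontal strip of the stadium. When $\Re(z)>1$ (the case $\Re(z)<-1$ is symmetric), one computes
\begin{equation}
\abs{z(t)-1}^2=\cos^2 t-2\sqrt{1+\ell^2}\cos t+1+\ell^2,
\end{equation}
and the quadratic $u^2-2\sqrt{1+\ell^2}u+1$ in $u=\cos t$ has roots $\sqrt{1+\ell^2}\pm\ell$, so the bound $\abs{z(t)-1}\leq\ell$ reduces to $\sqrt{1+\ell^2}-\ell\leq\cos t\leq 1$. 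The lower bound holds on the relevant range $\cos t\geq 1/\sqrt{1+\ell^2}$ because $1/\sqrt{1+\ell^2}\geq\sqrt{1+\ell^2}-\ell$ rearranges to $\sqrt{1+\ell^2}\geq\ell$, which is trivially true.

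The main obstacle, if any, is just bookkeeping: one must carefully cover both semicircles and both horizontal segments of the stadium, and verify that the parametric analysis of the inner ellipse matches up with the stadium's piecewise description at the transition points $\Re(z)=\pm 1$. Both inequalities reduce to squaring real numbers, so no subtlety in analysis is expected.
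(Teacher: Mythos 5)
Your proof is correct, and it takes a different route from what the paper suggests. The paper gives no actual argument, only the hint that the containments ``follow by comparing derivatives of the ellipse and semicircles,'' i.e.\ a tangency/curvature comparison of the boundary curves at their contact points, which the authors describe as elementary but tedious. You instead avoid derivatives entirely: for the outer containment you use the focus--distance characterization of the ellipse ($z$ lies in $\mathbf{Ellipse}(a=1+\ell)$ iff $\abs{z-1}+\abs{z+1}\le 2(1+\ell)$), which turns the check into two one-line squaring arguments on the stadium's boundary, with convexity of $\abs{z-1}+\abs{z+1}$ handling the flat segments; for the inner containment you parametrize the ellipse and reduce $\abs{z(t)-1}\le\ell$ to locating the roots of the quadratic $u^2-2\sqrt{1+\ell^2}\,u+1$ in $u=\cos t$. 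The one implicit step worth making explicit is that it suffices to verify the inclusions on boundary curves: this is legitimate since both regions are convex (the stadium is a Minkowski sum of a disk and a segment, the filled ellipse is a sublevel set of a convex function), so containing the boundary forces containing the interior. Overall your approach is cleaner than the paper's sketch and gives a complete, short proof where the paper declines to give one.
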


\begin{figure}[t]
	\centering
\includegraphics[scale=0.8]{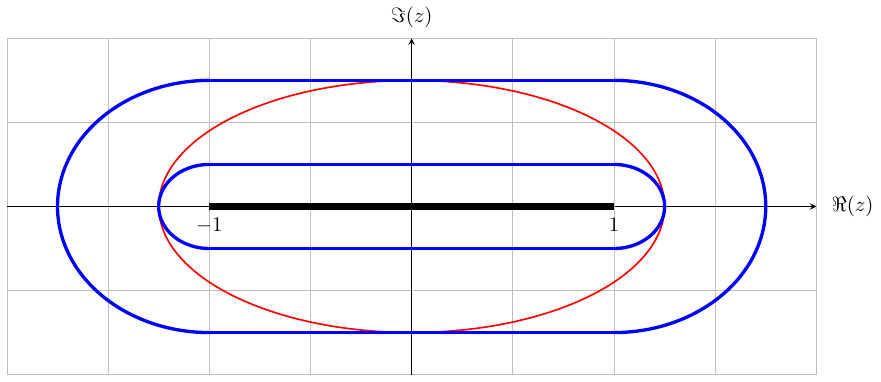}
\caption{Inscribing and circumscribing Bernstein ellipse with stadiums.}
\label{fig:stadium}
\end{figure}

\subsection{Dominated polynomial approximation for Hamiltonian QSVT}
\label{append:composite_qsvt}
Recall from~\sec{qsvt_dominate} that in Hamiltonian QSVT, we are given a real odd polynomial $f$ of degree $d$. For any $\epsilon>0$ and constant $0<\xi\leq\frac{\pi}{2}$, our goal is to find a real odd polynomial $p$ and even polynomial $q$ that solves the following dominated approximation problem:
\begin{equation}
\begin{aligned}
    &\abs{p(x)-\sin(f(\arcsin(x)))}\leq\epsilon,\qquad&&\forall x\in\left[-\sin\left(\frac{\pi}{2}-\xi\right),\sin\left(\frac{\pi}{2}-\xi\right)\right],\\
    &\abs{q(x)-\frac{\cos(f(\arcsin(x)))}{\sqrt{1-x^2}}}\leq\epsilon,\qquad&&\forall x\in\left[-\sin\left(\frac{\pi}{2}-\xi\right),\sin\left(\frac{\pi}{2}-\xi\right)\right],\\
    &p^2(x)+(1-x^2)q^2(x)\leq1+\epsilon,\qquad&&\forall x\in[-1,1].
\end{aligned}
\end{equation}

To this end, we define $1-\xi_1=\sin\left(\frac{\pi}{2}-\xi\right)$ and consider approximating the rescaled function $f\left(\arcsin\left(\left(1-\xi_1\right)y\right)\right)$ over the unit interval $y\in[-1,1]$. Since $\xi$ is constant, we can relax the Bernstein ellipse to a stadium and consider the behavior of this composite function over stadiums surrounding the unit interval.

\begin{proposition}
\label{prop:arcsin_stadium}
For any $0<\ell\leq\xi_1\leq 1$,
\begin{equation}
    \arcsin\left(\left(1-\xi_1\right)
    \mathbf{Stadium}\left([-1,1];\ell\right)
    \right)
    \subseteq\mathbf{Stadium}\left(\left[-\arcsin\left(1-\xi_1\right),\arcsin\left(1-\xi_1\right)\right];
    \ell_1\right),
\end{equation}
with $\ell_1
    =\frac{\left(1-\xi_1\right)}{\xi_1}\ell$.
\end{proposition}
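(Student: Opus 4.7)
The plan is to combine a convex nearest-point argument with a fundamental theorem of calculus estimate and a sharp lower bound on $|1-u^2|$ over the scaled stadium. First I would use the rescaling rule for stadiums to rewrite the left-hand set as $\mathbf{Stadium}([-(1-\xi_1),1-\xi_1];(1-\xi_1)\ell)$. Given any $z$ in this set, pick a nearest real point $w\in[-(1-\xi_1),1-\xi_1]$ on the central segment, so $|z-w|\leq(1-\xi_1)\ell$ and $\arcsin(w)$ lies automatically in the target central segment $[-\arcsin(1-\xi_1),\arcsin(1-\xi_1)]$. It therefore suffices to prove $|\arcsin(z)-\arcsin(w)|\leq\ell_1=(1-\xi_1)\ell/\xi_1$, which places $\arcsin(z)$ within distance $\ell_1$ of the target central segment, hence inside the target stadium.

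For the second step, since stadiums are convex, the line segment $u(t)=w+t(z-w)$, $t\in[0,1]$, lies entirely in the scaled stadium. Provided $1-u(t)^2$ is bounded away from $0$ there, the principal branch of $\arcsin$ is analytic on a neighborhood of this segment, and the fundamental theorem of calculus gives $|\arcsin(z)-\arcsin(w)|\leq|z-w|\sup_{t\in[0,1]}|1-u(t)^2|^{-1/2}$. A uniform lower bound of the form $|1-u^2|\geq\xi_1^2$ on the scaled stadium then immediately produces the desired estimate.

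The main obstacle is this lower bound, and specifically the exponent: a naive termwise triangle inequality $|1-u||1+u|\geq(\xi_1-(1-\xi_1)\ell)^2$ degrades to $\xi_1^4$, which is a factor $\xi_1^2$ too weak for the target $\ell_1$. The crucial point is to preserve the cancellation whereby if one of $|1\pm u|$ is of order $\xi_1$ the other is of order $2-\xi_1$. I would therefore treat $1-u^2$ as a single quantity: parametrize $u=a+\delta'$ with $a\in[-(1-\xi_1),1-\xi_1]$ real and $|\delta'|\leq(1-\xi_1)\ell$, expand $1-u^2=(1-a^2)-2a\delta'-(\delta')^2$, and use the uniform bound $|1-a^2|\geq\xi_1(2-\xi_1)$ on the central real segment. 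Invoking the hypothesis $\ell\leq\xi_1$ and substituting $s=1-\xi_1$, a short algebraic factorization yields $|1-u^2|\geq(1-s)^2(1+2s-s^2)\geq(1-s)^2=\xi_1^2$, which simultaneously certifies the nonvanishing needed for analyticity and delivers the required derivative bound $|\arcsin'(u)|\leq 1/\xi_1$, completing the argument.
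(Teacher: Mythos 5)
Your proof is correct and takes essentially the same route as the paper's: rescale the stadium, project a given point $z$ to its nearest real point $w$ on the central segment, and bound $|\arcsin(z)-\arcsin(w)|$ via an FTC estimate along the connecting segment using a uniform lower bound on $|1-u^2|$. The one place where you take a small detour is the lower bound on $|1-u^2|$: rather than decomposing $u=a+\delta'$ and expanding $1-u^2=(1-a^2)-2a\delta'-(\delta')^2$, the paper simply notes that along the segment $|u|\le 1+\ell$ in the unscaled variable, so $|1-(1-\xi_1)^2u^2|\ge 1-(1-\xi_1)^2(1+\ell)^2\ge 1-(1-\xi_1^2)^2=\xi_1^2(2-\xi_1^2)$, which after the substitution $s=1-\xi_1$ is exactly your $(1-s)^2(1+2s-s^2)$. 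The ``naive $\xi_1^4$'' trap you flag (bounding $|1\mp u|\ge 1-|u|$ separately) is something neither you nor the paper actually falls into; the inequality $|1-v^2|\ge 1-|v|^2$ already preserves the cancellation you are worried about, since $1-|v|^2=(1-|v|)(1+|v|)$ keeps the large factor. Two minor loose ends: stating that ``$1-u^2$ bounded away from $0$'' certifies analyticity is not by itself sufficient for the principal branch (one must also avoid the cut $(-\infty,0]$), but since $|u|<1$ on the scaled stadium implies $\Re(1-u^2)>0$, this is automatic and worth noting explicitly; and your reduction step should also observe, as you implicitly do, that $\arcsin$ maps the real central segment $[-(1-\xi_1),1-\xi_1]$ monotonically onto the target central segment, so the enlargement by $\ell_1$ indeed lands inside the target stadium.
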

\begin{proof}
It is clear that the unit interval $[-1,1]$ is mapped to $\left[-\arcsin\left(1-\xi_1\right),\arcsin\left(1-\xi_1\right)\right]$ under the function $\arcsin\left(\left(1-\xi_1\right)(\cdot)\right)$. So we only need to examine points within the stadium that are outside the unit interval.

To this end, take any complex $w\in\mathbf{Stadium}\left([-1,1];\ell\right)$ and find the closest real $y\in[-1,1]$, such that $\abs{w-y}\leq\ell$ by definition.
Then,
\begin{equation}
\begin{aligned}
    \abs{\arcsin\left(\left(1-\xi_1\right)w\right)
    -\arcsin\left(\left(1-\xi_1\right)y\right)}
    &\leq\frac{\abs{1-\xi_1}}{\sqrt{1-\left(1-\xi_1\right)^2(1+\ell)^2}}\ell
    \leq\frac{\abs{1-\xi_1}}{\sqrt{1-\left(1-\xi_1\right)^2\left(1+\xi_1\right)^2}}\ell\\
    &=\frac{\abs{1-\xi_1}}{\sqrt{2\xi_1^2-\xi_1^4}}\ell
    \leq\frac{\left(1-\xi_1\right)}{\xi_1}\ell,
\end{aligned}
\end{equation}
where we have differentiated $\arcsin$ in the first inequality and applied the assumption $\ell\leq\xi_1$ in the second inequality.
\end{proof}

Now, let us consider the action of $f$ on the stadium.
\begin{proposition}
\label{prop:poly_stadium}
Let $f$ be a real polynomial of degree $d$. For any $\ell_1>0$,
\begin{equation}
    \norm{f}_{\max,\mathbf{Stadium}(\left[-\frac{\pi}{2},\frac{\pi}{2}\right];\ell_1)}
    \leq\cosh\left(\sqrt{2\ell_1}d\right)\norm{f}_{\max,\left[-\frac{\pi}{2},\frac{\pi}{2}\right]}
    \leq\exp\left(\sqrt{2\ell_1}d\right)\norm{f}_{\max,\left[-\frac{\pi}{2},\frac{\pi}{2}\right]}.
\end{equation}
\end{proposition}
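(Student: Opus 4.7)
The plan is to mirror the calculation already carried out inside the proof of \prop{dominated_ext} (which established an analogous bound for $f$ on an extended real interval $[-b,b]$), but now transplant it to the complex stadium surrounding $\left[-\frac{\pi}{2},\frac{\pi}{2}\right]$. First, I would pick an arbitrary $z\in\mathbf{Stadium}\!\left(\left[-\frac{\pi}{2},\frac{\pi}{2}\right];\ell_1\right)$ and, by the very definition of a stadium (every point lies within distance $\ell_1$ of the defining segment), select $x_0\in\left[-\frac{\pi}{2},\frac{\pi}{2}\right]$ with $\abs{z-x_0}\leq\ell_1$. Since $f$ has degree $d$, its Taylor expansion about $x_0$ terminates exactly:
\begin{equation}
    f(z)=\sum_{j=0}^{d}\frac{f^{(j)}(x_0)}{j!}(z-x_0)^j.
\end{equation}

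Next, I would bound the derivatives $f^{(j)}(x_0)$ by Markov brothers' inequality, exactly as in the proof of \prop{dominated_ext}, now applied on the interval $\left[-\frac{\pi}{2},\frac{\pi}{2}\right]$. Using the standard bound $\norm{f^{(j)}}_{\max,\left[-\frac{\pi}{2},\frac{\pi}{2}\right]}\leq\frac{d^{2j}}{(2j-1)!!}\norm{f}_{\max,\left[-\frac{\pi}{2},\frac{\pi}{2}\right]}$ (which is loose by a factor of $(\pi/2)^{-j}$ compared to the fully rescaled inequality; this slack is what lets us state the bound with the clean constant $\sqrt{2\ell_1}$ instead of $\sqrt{4\ell_1/\pi}$), combining with $\abs{z-x_0}\leq\ell_1$ gives
\begin{equation}
    \abs{f(z)}\leq\sum_{j=0}^{d}\frac{d^{2j}\ell_1^j}{(2j-1)!!\,j!}\norm{f}_{\max,\left[-\frac{\pi}{2},\frac{\pi}{2}\right]}.
\end{equation}

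Finally, I would use the identity $(2j-1)!!\cdot j!=\frac{(2j)!}{2^j}$ to rewrite the coefficients, collapsing the sum into a truncated Taylor series for $\cosh$:
\begin{equation}
    \sum_{j=0}^{d}\frac{d^{2j}2^j\ell_1^j}{(2j)!}=\sum_{j=0}^{d}\frac{\left(\sqrt{2\ell_1}\,d\right)^{2j}}{(2j)!}\leq\cosh\!\left(\sqrt{2\ell_1}\,d\right),
\end{equation}
since $\ell_1>0$ makes every omitted term nonnegative. Taking the supremum over $z$ in the stadium yields the first inequality in the statement, and the second inequality follows from the elementary bound $\cosh(x)\leq e^x$ for $x\geq 0$.

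There is no real obstacle here beyond careful bookkeeping; the only point requiring a small amount of attention is deciding how tightly to apply Markov brothers' inequality (scaling versus not scaling by the interval half-length $\pi/2$), and confirming that the looser choice still produces the claimed constant $\sqrt{2\ell_1}$ rather than the sharper but less tidy $\sqrt{4\ell_1/\pi}$. Since $4/\pi<2$ and $\cosh$ is increasing on $[0,\infty)$, the loose choice still yields a valid upper bound.
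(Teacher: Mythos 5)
Your proof is correct and follows the paper's argument essentially verbatim: choose the nearest real point $x_0$ in the segment, Taylor expand the degree-$d$ polynomial, bound the derivatives via Markov brothers' inequality, and regroup the resulting series into a truncated $\cosh$ expansion. Your observation that the unscaled Markov inequality gives the tidier constant $\sqrt{2\ell_1}$ rather than the sharper $\sqrt{4\ell_1/\pi}$ is accurate and a sensible aside, though the paper does not dwell on it.
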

\begin{proof}
By Markov brothers' inequality~\cite[Eq.\ (33)]{Kalmykov21}, we can bound higher-order derivatives of $f$ as
\begin{equation}
    \norm{f^{(j)}}_{\max,\left[-\frac{\pi}{2},\frac{\pi}{2}\right]}
    \leq\frac{d^{2j}}{(2j-1)!!}\norm{f}_{\max,\left[-\frac{\pi}{2},\frac{\pi}{2}\right]}.
\end{equation}
Now take any complex $z\in\mathbf{Stadium}(\left[-\frac{\pi}{2},\frac{\pi}{2}\right];\ell_1)$ and find the closest real $x\in\left[-\frac{\pi}{2},\frac{\pi}{2}\right]$, such that $\abs{z-x}\leq\ell_1$. Applying Taylor's theorem,
\begin{equation}
\begin{aligned}
    \abs{f(z)}
    &=\abs{\sum_{j=0}^d\frac{f^{(j)}(x)}{j!}(z-x)^j}
    \leq\sum_{j=0}^d\frac{\abs{f^{(j)}(x)}}{j!}\ell_1^j
    \leq\sum_{j=0}^d\frac{d^{2j}}{(2j-1)!!}\frac{1}{j!}\ell_1^j\norm{f}_{\max,\left[-\frac{\pi}{2},\frac{\pi}{2}\right]}\\
    &=\sum_{j=0}^d\frac{d^{2j}2^jj!}{(2j)!j!}\ell_1^j\norm{f}_{\max,\left[-\frac{\pi}{2},\frac{\pi}{2}\right]}
    =\sum_{j=0}^d\frac{d^{2j}\sqrt{2\ell_1}^{2j}}{(2j)!}\norm{f}_{\max,\left[-\frac{\pi}{2},\frac{\pi}{2}\right]}\\
    &\leq\cosh\left(\sqrt{2\ell_1}d\right)\norm{f}_{\max,\left[-\frac{\pi}{2},\frac{\pi}{2}\right]}.
\end{aligned}
\end{equation}
\end{proof}

Summarizing the above discussion, we obtain
\begin{equation}
\begin{aligned}
    &\norm{f\left(\arcsin\left(\left(1-\xi_1\right)(\cdot)\right)\right)}_{\max,\mathbf{Ellipse}\left(\rho=1+\ell\right)}\\
    &\leq\norm{f\left(\arcsin\left(\left(1-\xi_1\right)(\cdot)\right)\right)}_{\max,\mathbf{Ellipse}\left(b=\ell\right)}\\
    &\leq\norm{f\left(\arcsin\left(\left(1-\xi_1\right)(\cdot)\right)\right)}_{\max,\mathbf{Stadium}\left([-1,1];\ell\right)}\\
    &\leq\norm{f}_{\max,\mathbf{Stadium}\left(\left[-\arcsin\left(1-\xi_1\right),\arcsin\left(1-\xi_1\right)\right];
    \ell_1\right)}\\
    &\leq\norm{f}_{\max,\mathbf{Stadium}\left(\left[-\frac{\pi}{2},\frac{\pi}{2}\right];
    \ell_1\right)}
    \leq\exp\left(\sqrt{2\ell_1}d\right)\norm{f}_{\max,\left[-\frac{\pi}{2},\frac{\pi}{2}\right]}.
\end{aligned}
\end{equation}
Now suppose we start with constant $0<\xi<\xi_2\leq\frac{\pi}{2}$, which gives $1>1-\xi_1=\sin\left(\frac{\pi}{2}-\xi\right)>1-\xi_3=\sin\left(\frac{\pi}{2}-\xi_2\right)\geq0$. By rescaling~\cor{approx_dominated_ellipse}, we obtain a dominated polynomial approximation $h_{f\arcsin,\text{dom}}$, which is a real odd polynomial with the behavior
\begin{equation}
\begin{aligned}
    &\abs{h_{f\arcsin,\text{dom}}(x)-f(\arcsin(x))}\leq\epsilon_{\text{dom}},\qquad&&\forall x\in\left[-\sin\left(\frac{\pi}{2}-\xi_2\right),\sin\left(\frac{\pi}{2}-\xi_2\right)\right],\\
    &\abs{h_{f\arcsin,\text{dom}}(x)}\leq\abs{f(\arcsin(x))}+\epsilon_{\text{dom}},\qquad&&\forall x\in\left[-\sin\left(\frac{\pi}{2}-\xi\right),\sin\left(\frac{\pi}{2}-\xi\right)\right],\\
    &\abs{h_{f\arcsin,\text{dom}}(x)}\leq\epsilon_{\text{dom}},\qquad&&\forall x\in\left[-1,-\sin\left(\frac{\pi}{2}-\xi\right)\right]\bigcup\left[\sin\left(\frac{\pi}{2}-\xi\right),1\right],
\end{aligned}
\end{equation}
and an asymptotic degree of
\begin{equation}
    \mathbf{O}\left(d+\log\left(\frac{\norm{f}_{\max,\left[-\frac{\pi}{2},\frac{\pi}{2}\right]}}{\epsilon_{\text{dom}}}\right)\right).
\end{equation}
That is, the polynomial $h_{f\arcsin,\text{dom}}(x)$ approximates $f(\arcsin(x))$ over a slightly smaller interval $\left[-\sin\left(\frac{\pi}{2}-\xi_2\right),\sin\left(\frac{\pi}{2}-\xi_2\right)\right]$, while its absolute value is bounded by $\abs{f}+\epsilon$ across the entire unit interval $x\in[-1,1]$.

The remaining proof proceeds as that of~\prop{dominated}, except we replace the composite function $f(\arcsin(\cdot))$ by our newly constructed polynomial $h_{f\arcsin,\text{dom}}$. In particular, we construct $h_{\text{inv-sqrt}}$, $h_{\sin,\alpha}$ and $h_{\cos,\alpha}$ as before and define
\begin{equation}
    p(x)=h_{\sin,\alpha}\left(\frac{1}{\alpha}h_{f\arcsin,\text{dom}}(x)\right),\qquad
    q(x)=h_{\text{inv-sqrt}}(x)h_{\cos,\alpha}\left(\frac{1}{\alpha}h_{f\arcsin,\text{dom}}(x)\right),
\end{equation}
where $\alpha=\norm{f}_{\max,\left[-\frac{\pi}{2},\frac{\pi}{2}\right]}+\epsilon_{\text{dom}}=\mathbf{O}\left(\norm{f}_{\max,\left[-\frac{\pi}{2},\frac{\pi}{2}\right]}\right)$.
Then the dominated condition is verified in the same way as before.

For the approximation condition, we assume $x\in\left[-\sin\left(\frac{\pi}{2}-\xi_2\right),\sin\left(\frac{\pi}{2}-\xi_2\right)\right]$ and compute
\begin{equation}
\begin{aligned}
    &\abs{p(x)-\sin(f(\arcsin(x)))}\\
    &=\abs{h_{\sin,\alpha}\left(\frac{1}{\alpha}h_{f\arcsin,\text{dom}}(x)\right)-\sin(f(\arcsin(x)))}\\
    &\leq\abs{h_{\sin,\alpha}\left(\frac{1}{\alpha}h_{f\arcsin,\text{dom}}(x)\right)
    -\sin(h_{f\arcsin,\text{dom}}(x))}
    +\abs{\sin\left(h_{f\arcsin,\text{dom}}(x)\right)-\sin(f(\arcsin(x)))}\\
    &\leq\epsilon_{\text{trig}}+\abs{h_{f\arcsin,\text{dom}}(x)-f(\arcsin(x))}
    \leq\epsilon_{\text{trig}}+\epsilon_{\text{dom}}.
\end{aligned}
\end{equation}
Similarly,
\begin{equation}
\begin{aligned}
    &\abs{q(x)-\frac{1}{\sqrt{1-x^2}}\cos(f(\arcsin(x)))}\\
    &=\abs{h_{\text{inv-sqrt}}(x)h_{\cos,\alpha}\left(\frac{1}{\alpha}h_{f\arcsin,\text{dom}}(x)\right)-\frac{1}{\sqrt{1-x^2}}\cos(f(\arcsin(x)))}\\
    &\leq\abs{h_{\text{inv-sqrt}}(x)h_{\cos,\alpha}\left(\frac{1}{\alpha}h_{f\arcsin,\text{dom}}(x)\right)-\frac{1}{\sqrt{1-x^2}}h_{\cos,\alpha}\left(\frac{1}{\alpha}h_{f\arcsin,\text{dom}}(x)\right)}\\
    &\quad+\abs{\frac{1}{\sqrt{1-x^2}}h_{\cos,\alpha}\left(\frac{1}{\alpha}h_{f\arcsin,\text{dom}}(x)\right)-\frac{1}{\sqrt{1-x^2}}\cos(f(\arcsin(x)))}\\
    &\leq\abs{h_{\text{inv-sqrt}}(x)-\frac{1}{\sqrt{1-x^2}}}\left(1+\epsilon_{\text{trig}}\right)
    +\mathbf{O}\left(\epsilon_{\text{trig}}+\epsilon_{\text{dom}}\right)
    =\mathbf{O}\left(\epsilon_{\text{inv-sqrt}}+\epsilon_{\text{trig}}+\epsilon_{\text{dom}}\right).
\end{aligned}
\end{equation}
Hence, the approximation condition is satisfied as long as $\epsilon_{\text{inv-sqrt}},\epsilon_{\text{trig}},\epsilon_{\text{dom}}=\mathbf{O}(\epsilon)$.

To summarize, $p$ is a real odd polynomial of degree
\begin{equation}
\begin{aligned}
    &\mathbf{O}\left(\left(d+\log\left(\frac{\norm{f}_{\max,\left[-\frac{\pi}{2},\frac{\pi}{2}\right]}}{\epsilon_{\text{dom}}}\right)\right)
    \left(\alpha+\log\left(\frac{1}{\epsilon_{\text{trig}}}\right)\right)\right)\\
    &=\mathbf{O}\left(\left(d+\log\left(\frac{\norm{f}_{\max,\left[-\frac{\pi}{2},\frac{\pi}{2}\right]}}{\epsilon}\right)\right)
    \left(\norm{f}_{\max,\left[-\frac{\pi}{2},\frac{\pi}{2}\right]}+\log\left(\frac{1}{\epsilon}\right)\right)
    \right),
\end{aligned}
\end{equation}
whereas $q$ is a real even polynomial of degree
\begin{equation}
\begin{aligned}
    &\mathbf{O}\left(\left(d+\log\left(\frac{\norm{f}_{\max,\left[-\frac{\pi}{2},\frac{\pi}{2}\right]}}{\epsilon_{\text{dom}}}\right)\right)
    \left(\alpha+\log\left(\frac{1}{\epsilon_{\text{trig}}}\right)\right)
    +\log\left(\frac{1}{\epsilon_{\text{inv-sqrt}}}\right)\right)\\
    &=\mathbf{O}\left(\left(d+\log\left(\frac{\norm{f}_{\max,\left[-\frac{\pi}{2},\frac{\pi}{2}\right]}}{\epsilon}\right)\right)
    \left(\norm{f}_{\max,\left[-\frac{\pi}{2},\frac{\pi}{2}\right]}+\log\left(\frac{1}{\epsilon}\right)\right)
    \right).
\end{aligned}
\end{equation}
This completes the proof of~\prop{dominated} on the existence of dominated polynomial approximation with the claimed asymptotic degree.

\subsection{Dominated polynomial approximation for Hamiltonian matrix inversion}
\label{append:composite_inverse}
Recall from~\sec{qsvt_inverse_frac} that in Hamiltonian-based matrix inversion, we aim to implement a normalized inverse function. Specifically, for any $\epsilon>0$, $\kappa>0$ and constant $0<\xi\leq\frac{\pi}{2}$, our goal is to find a real odd polynomial $p$ and even polynomial $q$ such that 
\begin{equation}
\begin{aligned}
    &\abs{p(x)-\sin\left(\frac{1}{\kappa\arcsin(x)}\right)}\leq\epsilon,\quad&&\forall x\in\left[-\sin\left(\frac{\pi}{2}-\xi\right),-\sin\left(\frac{1}{\kappa}\right)\right]\bigcup\left[\sin\left(\frac{1}{\kappa}\right),\sin\left(\frac{\pi}{2}-\xi\right)\right],\\
    &\abs{q(x)-\frac{\cos\left(\frac{1}{\kappa\arcsin(x)}\right)}{\sqrt{1-x^2}}}\leq\epsilon,\quad&&\forall x\in\left[-\sin\left(\frac{\pi}{2}-\xi\right),-\sin\left(\frac{1}{\kappa}\right)\right]\bigcup\left[\sin\left(\frac{1}{\kappa}\right),\sin\left(\frac{\pi}{2}-\xi\right)\right],\\
    &p^2(x)+(1-x^2)q^2(x)\leq1+\epsilon,\quad&&\forall x\in[-1,1].
\end{aligned}
\end{equation}

Without loss of generality, let us focus on $\sin\left(\frac{1}{\kappa\arcsin(x)}\right)$. Define
\begin{equation}
    \mathbf{Inv}_+(x)=
    \begin{cases}
        \frac{1}{x},\quad&x>0,\\
        0,\quad&x\leq0,
    \end{cases}
    \qquad
    \mathbf{Inv}_-(x)=
    \begin{cases}
        \frac{1}{x},\quad&x<0,\\
        0,\quad&x\geq0,
    \end{cases}
\end{equation}
to be the positive and negative branches of the inversion function. Then, we have
\begin{equation}
\begin{aligned}
    \sin\left(\frac{1}{\kappa\arcsin(x)}\right)
    &=\sin\left(\frac{\mathbf{Inv}_+\left(\arcsin(x)\right)}{\kappa}
    +\frac{\mathbf{Inv}_-\left(\arcsin(x)\right)}{\kappa}\right)\\
    &=\sin\left(\frac{\mathbf{Inv}_+\left(\arcsin(x)\right)}{\kappa}\right)
    \cos\left(\frac{\mathbf{Inv}_-\left(\arcsin(x)\right)}{\kappa}\right)\\
    &\quad+\cos\left(\frac{\mathbf{Inv}_+\left(\arcsin(x)\right)}{\kappa}\right)
    \sin\left(\frac{\mathbf{Inv}_-\left(\arcsin(x)\right)}{\kappa}\right)
\end{aligned}
\end{equation}
for $x\in\left[-\sin\left(\frac{\pi}{2}-\xi\right),-\sin\left(\frac{1}{\kappa}\right)\right]\bigcup\left[\sin\left(\frac{1}{\kappa}\right),\sin\left(\frac{\pi}{2}-\xi\right)\right]$. Without loss of generality, we will focus on $\cos\left(\frac{\mathbf{Inv}_+\left(\arcsin(x)\right)}{\kappa}\right)$.

Let us first consider approximating the function $\cos\left(\frac{\mathbf{Inv}_+\left(\arcsin(x)\right)}{\kappa}\right)=\cos\left(\frac{1}{\kappa\arcsin(x)}\right)$ only for $x\in\left[\sin\left(\frac{1}{\kappa}\right),\sin\left(\frac{\pi}{2}-\xi\right)\right]$. 
We apply the affine mapping
\begin{equation}
    x=\sin\left(\frac{1}{\kappa}\right)
    +\frac{y+1}{2}\left(\sin\left(\frac{\pi}{2}-\xi\right)-\sin\left(\frac{1}{\kappa}\right)\right),
\end{equation}
which gives the function
\begin{equation}
    \cos\left(\frac{1}{\kappa\arcsin\left(
    \sin\left(\frac{1}{\kappa}\right)
    +\frac{y+1}{2}\left(\sin\left(\frac{\pi}{2}-\xi\right)-\sin\left(\frac{1}{\kappa}\right)\right)\right)}\right)
\end{equation}
moving the unit interval $y\in[-1,1]$ to $x\in\left[\sin\left(\frac{1}{\kappa}\right),\sin\left(\frac{\pi}{2}-\xi\right)\right]$.

Suppose we start with the Bernstein ellipse $\mathbf{Ellipse}\left(\rho=1+\sqrt{\frac{c}{\kappa}}\right)$ with a tunable constant $c>0$ sufficiently small, which can be upper bounded by
\begin{equation}
\begin{aligned}
    \mathbf{Ellipse}\left(\rho=1+\sqrt{\frac{c}{\kappa}}\right)
    \subseteq\mathbf{Ellipse}\left([-1,1];a=1+\frac{c}{2\kappa},b=\sqrt{\frac{c}{\kappa}}\right).
\end{aligned}
\end{equation}
Then the affine mapping transforms it into
\begin{small}
\begin{equation}
\newmaketag
\begin{aligned}
    &\mathbf{Ellipse}\left(\left[\sin\left(\frac{1}{\kappa}\right),\sin\left(\frac{\pi}{2}-\xi\right)\right];
    a=\left(1+\frac{c}{2\kappa}\right)\frac{\sin\left(\frac{\pi}{2}-\xi\right)-\sin\left(\frac{1}{\kappa}\right)}{2},
    b=\sqrt{\frac{c}{\kappa}}\frac{\sin\left(\frac{\pi}{2}-\xi\right)-\sin\left(\frac{1}{\kappa}\right)}{2}\right)\\
    &=\left\{\frac{\left(\Re(z)-\frac{\sin\left(\frac{\pi}{2}-\xi\right)+\sin\left(\frac{1}{\kappa}\right)}{2}\right)^2}{\left(\left(1+\frac{c}{2\kappa}\right)\frac{\sin\left(\frac{\pi}{2}-\xi\right)-\sin\left(\frac{1}{\kappa}\right)}{2}\right)^2}
    +\frac{\Im^2(z)}{\frac{c}{\kappa}\left(\frac{\sin\left(\frac{\pi}{2}-\xi\right)-\sin\left(\frac{1}{\kappa}\right)}{2}\right)^2}
    =1\right\}.
\end{aligned}
\end{equation}
\end{small}%
Our main effort is devoted to analyzing the behavior of this ellipse under the action of 
$\arcsin$.
\begin{proposition}
\begin{footnotesize}
\begin{equation}
\newmaketag
\begin{aligned}
    &\arcsin\left(\mathbf{Ellipse}\left(\left[\sin\left(\frac{1}{\kappa}\right),\sin\left(\frac{\pi}{2}-\xi\right)\right];
    a=\left(1+\frac{c}{2\kappa}\right)\frac{\sin\left(\frac{\pi}{2}-\xi\right)-\sin\left(\frac{1}{\kappa}\right)}{2},
    b=\sqrt{\frac{c}{\kappa}}\frac{\sin\left(\frac{\pi}{2}-\xi\right)-\sin\left(\frac{1}{\kappa}\right)}{2}\right)\right)\\
    &\subseteq\mathbf{Disk}\left(0;\left[\mathbf{\Omega}\left(\frac{1}{\kappa}\right),\frac{\pi}{2}-\xi+\mathbf{O}\left(\frac{1}{\kappa}\right)\right]\right),
\end{aligned}
\end{equation}
\end{footnotesize}%
where $\mathbf{\Omega}$ is taken in the limit $c\rightarrow0$ and $\kappa\rightarrow\infty$.
\end{proposition}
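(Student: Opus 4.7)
The plan is to verify the claimed annular containment by establishing matching upper and lower bounds on $|\arcsin(z)|$ as $z$ ranges over the filled ellipse. As a preliminary, I would confirm that $\arcsin$ is analytic on an open neighborhood of the closed ellipse: denoting the ellipse's center $x_c$, horizontal semi-axis $a$, and vertical semi-axis $b$, the horizontal extent $[x_c-a,\,x_c+a]$ lies within $[\sin(1/\kappa)-\mathbf{O}(1/\kappa),\,\sin(\pi/2-\xi)+\mathbf{O}(1/\kappa)]$, which is bounded away from the branch points $\pm 1$ by a positive constant depending on $\xi$ for $\kappa$ large enough, while the imaginary extent $|y|\le b=\mathbf{O}(1/\sqrt{\kappa})$ keeps $z$ uniformly away from those branch points throughout the filled ellipse.

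For the upper bound $|\arcsin(z)|\le\pi/2-\xi+\mathbf{O}(1/\kappa)$, I would write $z=x+iy$ and Taylor-expand $\arcsin$ in the imaginary variable around the real axis: $\Re(\arcsin(z))=\arcsin(x)+\mathbf{O}(y^2)$ and $\Im(\arcsin(z))=y/\sqrt{1-x^2}+\mathbf{O}(y^3)$, with implicit constants uniform because $x$ stays a constant distance from $\pm 1$. Squaring and adding yields $|\arcsin(z)|^2=\arcsin^2(x)+\mathbf{O}(y^2)=\arcsin^2(x)+\mathbf{O}(1/\kappa)$ since $b^2=\mathbf{O}(1/\kappa)$. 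Because $\arcsin(x)\le\arcsin(x_c+a)=\pi/2-\xi+\mathbf{O}(1/\kappa)$, taking the square root and using $\sqrt{s^2+\delta}\le s+\delta/s$ for $s$ bounded away from zero (or directly noting the bound is trivial when $s$ is small) gives the claimed outer radius.

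For the lower bound $|\arcsin(z)|\ge\mathbf{\Omega}(1/\kappa)$, I would split into two regimes separated by a small fixed threshold $\delta>0$. When $|z|\ge\delta$, the intersection of the filled ellipse with $\{|z|\ge\delta\}$ is a compact subset of the slit plane on which $\arcsin$ vanishes only at the origin, which lies outside the ellipse; hence $|\arcsin(z)|$ attains a positive minimum by continuity. When $|z|<\delta$, the convergent Taylor series $\arcsin(z)=z+z^3/6+\cdots$ gives $|\arcsin(z)|\ge|z|/2$ for $\delta$ small enough, so it suffices to show $|z|\ge\mathbf{\Omega}(1/\kappa)$ uniformly over the ellipse. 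Since the ellipse lies entirely in the right half-plane, the closest point to the origin is the leftmost real vertex $(x_c-a,0)$, and a direct computation gives $x_c-a=\sin(1/\kappa)-\tfrac{c}{4\kappa}(\sin(\pi/2-\xi)-\sin(1/\kappa))=(1-c\cos(\xi)/4+o(1))/\kappa=\mathbf{\Omega}(1/\kappa)$ provided the tunable constant $c$ is taken small enough.

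The principal obstacle is obtaining the sharp $\mathbf{O}(1/\kappa)$ correction in the outer radius rather than the naive $\mathbf{O}(1/\sqrt{\kappa})$. A direct Lipschitz estimate is too weak: since $\arcsin$ is only $\mathbf{O}(1)$-Lipschitz on the ellipse, bounding the deviation from the real image $[\arcsin(x_c-a),\arcsin(x_c+a)]$ by $b\cdot\|\arcsin'\|_\infty=\mathbf{O}(1/\sqrt{\kappa})$ yields only the outer radius $\pi/2-\xi+\mathbf{O}(1/\sqrt{\kappa})$. The improvement comes from passing to the modulus, in which the imaginary contribution enters as a square and hence as $\mathbf{O}(1/\kappa)$; when this is combined with $\arcsin^2(x)$ under a square root, the additive correction divides by the $\mathbf{\Theta}(1)$ value of $\arcsin(x)$ whenever the latter is not itself small, and for points where $\arcsin(x)$ is comparable to $1/\sqrt{\kappa}$ or smaller the inequality $|\arcsin(z)|\le\pi/2-\xi+\mathbf{O}(1/\kappa)$ holds trivially because the left side is already $\mathbf{O}(1/\sqrt{\kappa})\ll\pi/2-\xi$.
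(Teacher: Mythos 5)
Your proof is correct, but the argument for the inner radius takes a genuinely different and, in my view, cleaner route than the paper's. The paper parametrizes boundary points $z = x_{\text{left}}+\delta_x+i\delta_y$, bounds $\delta_y$ in terms of $\delta_x$ by the ellipse equation, integrates $\arcsin'$ along a vertical segment, and minimizes the resulting lower bound over $\delta_x$ to find a $\mathbf{\Theta}(1/\kappa)$ minimizer. (It implicitly relies on the minimum of $|\arcsin|$ over the filled ellipse being attained on the boundary, which follows from the maximum principle applied to $1/\arcsin$ but is left unstated.) You instead observe that $|\arcsin(z)|\geq|z|/2$ near the origin, handle $|z|\geq\delta$ by a uniform compactness argument, and reduce everything to the elementary computation that $\mathrm{dist}(0,\text{ellipse}) = x_c - a = \mathbf{\Omega}(1/\kappa)$; this sidesteps both the parametric minimization and the boundary-minimum appeal. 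For the outer radius, the paper uses the classical fact that $\arcsin$ has nonnegative Maclaurin coefficients, giving $|\arcsin(z)|\leq\arcsin(|z|)$, and then bounds $|z|$; your Taylor expansion of $\Re$ and $\Im$ about the real axis, combined with the observation that the imaginary contribution enters $|\arcsin(z)|^2$ only at order $y^2 = \mathbf{O}(1/\kappa)$, is a workable alternative that reaches the same $\mathbf{O}(1/\kappa)$ correction. One small imprecision: the phrase ``since the ellipse lies entirely in the right half-plane, the closest point to the origin is the leftmost real vertex'' is not true from that hypothesis alone---you also need $a\geq b$, which does hold here since $a = \mathbf{\Theta}(1)$ and $b = \mathbf{O}(1/\sqrt{\kappa})$, but this should be noted because a tall narrow ellipse ($b\gg a$) in the right half-plane can have its closest point off the real axis.
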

\begin{proof}
To understand the asymptotic scaling, we consider the limit $c\rightarrow0$ and $\kappa\rightarrow\infty$, but assume $\xi$ is constant.
The left extreme point of the ellipse has $x$-coordinate
\begin{equation}
    x_{\text{left}}
    =-\left(1+\frac{c}{2\kappa}\right)\frac{\sin\left(\frac{\pi}{2}-\xi\right)-\sin\left(\frac{1}{\kappa}\right)}{2}
    +\frac{\sin\left(\frac{\pi}{2}-\xi\right)+\sin\left(\frac{1}{\kappa}\right)}{2}
    \rightarrow\sin\left(\frac{1}{\kappa}\right)
\end{equation}
and the right extreme point has $x$-coordinate
\begin{equation}
    x_{\text{right}}
    =\left(1+\frac{c}{2\kappa}\right)\frac{\sin\left(\frac{\pi}{2}-\xi\right)-\sin\left(\frac{1}{\kappa}\right)}{2}
    +\frac{\sin\left(\frac{\pi}{2}-\xi\right)+\sin\left(\frac{1}{\kappa}\right)}{2}
    \rightarrow\sin\left(\frac{\pi}{2}-\xi\right)
\end{equation}
in the limit $c\rightarrow0$ as expected. Hence, the ellipse is contained in $\mathbf{Disk}\left(0;\left[0,\sin\left(\frac{\pi}{2}-\xi+\mathbf{O}\left(\frac{1}{\kappa}\right)\right)\right]\right)$, so applying $\arcsin$ maps it into $\mathbf{Disk}\left(0;\left[0,\frac{\pi}{2}-\xi+\mathbf{O}\left(\frac{1}{\kappa}\right)\right]\right)$.

It remains to show that the region is actually an annulus gapped away from $0$ by $\mathbf{\Omega}\left(\frac{1}{\kappa}\right)$. To this end, let us increase $x_{\text{left}}$ by $\delta_x>0$, and solve for $\delta_y>0$ so that the new point $z=x_{\text{left}}+\delta_x+i\delta_y$ is on the ellipse:
\begin{equation}
    \delta_y=\sqrt{\frac{c}{\kappa}}
    \sqrt{\frac{2\delta_x\frac{\sin\left(\frac{\pi}{2}-\xi\right)-\sin\left(\frac{1}{\kappa}\right)}{2}}{1+\frac{c}{2\kappa}}
    -\frac{\delta_x^2}{\left(1+\frac{c}{2\kappa}\right)^2}}
    \leq\sqrt{\frac{c}{\kappa}}\sqrt{\frac{\delta_x\sin\left(\frac{\pi}{2}-\xi\right)}{1+\frac{c}{2\kappa}}}.
\end{equation}
This implies
\begin{equation}
\begin{aligned}
    \abs{\arcsin(z)-\arcsin\left(x_{\text{left}}+\delta_x\right)}
    &=\abs{\int_{x_{\text{left}}+\delta_x}^z\mathrm{d}u\ \frac{1}{\sqrt{1-u^2}}}
    \leq\int_{x_{\text{left}}+\delta_x}^z\abs{\mathrm{d}u}\frac{1}{\sqrt{1-\abs{u}^2}}\\
    &\leq\delta_y\frac{1}{\sqrt{1-\left(x_{\text{left}}+\delta_x\right)^2-\delta_y^2}}
    \rightarrow\delta_y\frac{1}{\sqrt{1-\left(\sin\left(\frac{1}{\kappa}\right)+\delta_x\right)^2}}
\end{aligned}
\end{equation}
for $c$ sufficiently small, and hence for a small value of $c$ and large value of $\kappa$
\begin{equation}
\begin{aligned}
    \abs{\arcsin(z)}
    &\geq\arcsin\left(x_{\text{left}}+\delta_x\right)-\delta_y\frac{1}{\sqrt{1-\left(\sin\left(\frac{1}{\kappa}\right)+\delta_x\right)^2}}\\
    &\geq x_{\text{left}}+\delta_x-\delta_y\frac{1}{\sqrt{1-\left(\sin\left(\frac{1}{\kappa}\right)+\delta_x\right)^2}}\\
    &\rightarrow\sin\left(\frac{1}{\kappa}\right)+\delta_x
    -\sqrt{\frac{c}{\kappa}}\sqrt{\delta_x\sin\left(\frac{\pi}{2}-\xi\right)}
    \frac{1}{\sqrt{1-\left(\sin\left(\frac{1}{\kappa}\right)+\delta_x\right)^2}}\\
    &\rightarrow\sin\left(\frac{1}{\kappa}\right)+\delta_x
    -\sqrt{\frac{c}{\kappa}}\sqrt{\delta_x\sin\left(\frac{\pi}{2}-\xi\right)}
    \frac{1}{\sqrt{1-\delta_x^2}}\\
    &\geq\sin\left(\frac{1}{\kappa}\right)+\delta_x
    -\sqrt{\frac{c}{\kappa}}\sqrt{\delta_x\sin\left(\frac{\pi}{2}-\xi\right)}
    \frac{1}{\sqrt{1-\sin^2\left(\frac{\pi}{2}-\xi\right)}}.\\
\end{aligned}
\end{equation}
This is quadratic in $\sqrt{\delta_x}$, and is thus minimized at $\delta_x=\mathbf{\Theta}\left(\frac{1}{\kappa}\right)$. This means
\begin{equation}
    \abs{\arcsin(z)}=\mathbf{\Omega}\left(\frac{1}{\kappa}\right)
\end{equation}
assuming $0<\xi\leq\frac{\pi}{2}$ is constant, $c$ is sufficiently small and $\kappa$ is sufficiently large.
\end{proof}

We now apply the normalized inverse function and $\cos$, obtaining
\begin{equation}
\begin{aligned}
    &\norm{\cos\left(\frac{1}{\kappa\arcsin\left(
    \sin\left(\frac{1}{\kappa}\right)
    +\frac{(\cdot)+1}{2}\left(\sin\left(\frac{\pi}{2}-\xi\right)-\sin\left(\frac{1}{\kappa}\right)\right)\right)}\right)}_{\max,\mathbf{Ellipse}\left(\rho=1+\sqrt{\frac{c}{\kappa}}\right)}\\
    &\leq\norm{\cos\left(\frac{1}{\kappa\arcsin\left(
    \sin\left(\frac{1}{\kappa}\right)
    +\frac{(\cdot)+1}{2}\left(\sin\left(\frac{\pi}{2}-\xi\right)-\sin\left(\frac{1}{\kappa}\right)\right)\right)}\right)}_{\max,\mathbf{Ellipse}\left([-1,1];a=1+\frac{c}{2\kappa},b=\sqrt{\frac{c}{\kappa}}\right)}\\
    &\leq\norm{\cos\left(\frac{1}{\kappa\arcsin(\cdot)}\right)}_{\max,\mathbf{Ellipse}\left(\left[\sin\left(\frac{1}{\kappa}\right),\sin\left(\frac{\pi}{2}-\xi\right)\right];
    a=\left(1+\frac{c}{2\kappa}\right)\frac{\sin\left(\frac{\pi}{2}-\xi\right)-\sin\left(\frac{1}{\kappa}\right)}{2},
    b=\sqrt{\frac{c}{\kappa}}\frac{\sin\left(\frac{\pi}{2}-\xi\right)-\sin\left(\frac{1}{\kappa}\right)}{2}\right)}\\
    &\leq\norm{\cos\left(\frac{1}{\kappa(\cdot)}\right)}_{\max,\mathbf{Disk}\left(0;\left[\mathbf{\Omega}\left(\frac{1}{\kappa}\right),\frac{\pi}{2}-\xi+\mathbf{O}\left(\frac{1}{\kappa}\right)\right]\right)}
    \leq\norm{\cos}_{\max,\mathbf{Disk}\left(0;\left[0,\mathbf{O}(1)\right]\right)}
    =\exp\left(\mathbf{O}(1)\right)
    =\mathbf{O}(1).
\end{aligned}
\end{equation}
Now suppose we start with constant $0<\xi<\xi_2\leq\frac{\pi}{2}$. By rescaling~\cor{approx_dominated_ellipse}, we obtain a dominated polynomial approximation $h_{\cos\mathbf{Inv}_+,\text{dom}}$, which is a real polynomial with the behavior
\begin{equation}
\begin{aligned}
    &\abs{h_{\cos\mathbf{Inv}_+,\text{dom}}(x)-\cos\left(\frac{\mathbf{Inv}_+\left(\arcsin(x)\right)}{\kappa}\right)}\leq\epsilon_{\text{dom}},\qquad&&\forall x\in\left[\sin\left(\frac{1}{\kappa}\right),\sin\left(\frac{\pi}{2}-\xi_2\right)\right],\\
    &\abs{h_{\cos\mathbf{Inv}_+,\text{dom}}(x)}\leq\abs{\cos\left(\frac{\mathbf{Inv}_+\left(\arcsin(x)\right)}{\kappa}\right)}+\epsilon_{\text{dom}},\qquad&&\forall x\in\left[0,\sin\left(\frac{\pi}{2}-\xi\right)\right],\\
    &\abs{h_{\cos\mathbf{Inv}_+,\text{dom}}(x)}\leq\epsilon_{\text{dom}},\qquad&&\forall x\in\left[-1,0\right]\bigcup\left[\sin\left(\frac{\pi}{2}-\xi\right),1\right],
\end{aligned}
\end{equation}
and an asymptotic degree of
\begin{equation}
    \mathbf{O}\left(\kappa\log\left(\frac{\kappa}{\epsilon_{\text{dom}}}\right)\right).
\end{equation}

To handle $x<0$, we use~\cor{approx_dominated_ellipse} or~\cite[Lemma 29]{Gilyen2018singular} to construct a dominated approximation of the constant function $h_{1,\text{dom}}$ with the behavior
\begin{equation}
\begin{aligned}
    &\abs{h_{1,\text{dom}}(x)-1}\leq\epsilon_{\text{dom}},\qquad&&\forall x\in\left[-\sin\left(\frac{\pi}{2}-\xi_2\right),-\sin\left(\frac{1}{\kappa}\right)\right],\\
    &\abs{h_{1,\text{dom}}(x)}\leq1+\epsilon_{\text{dom}},\qquad&&\forall x\in\left[-\sin\left(\frac{\pi}{2}-\xi\right),0\right],\\
    &\abs{h_{1,\text{dom}}(x)}\leq\epsilon_{\text{dom}},\qquad&&\forall x\in\left[-1,-\sin\left(\frac{\pi}{2}-\xi\right)\right]\bigcup[0,1],
\end{aligned}
\end{equation}
and an asymptotic degree of
\begin{equation}
    \mathbf{O}\left(\kappa\log\left(\frac{1}{\epsilon_{\text{dom}}}\right)\right).
\end{equation}
Now $h_{\cos\mathbf{Inv}_+,\text{dom}}+h_{1,\text{dom}}$ is the desired dominated polynomial approximation. For all $x\in\left[-\sin\left(\frac{\pi}{2}-\xi_2\right),-\sin\left(\frac{1}{\kappa}\right)\right]\bigcup\left[\sin\left(\frac{1}{\kappa}\right),\sin\left(\frac{\pi}{2}-\xi_2\right)\right]$,
\begin{equation}
    \abs{h_{\cos\mathbf{Inv}_+,\text{dom}}(x)+h_{1,\text{dom}}(x)-\cos\left(\frac{\mathbf{Inv}_+\left(\arcsin(x)\right)}{\kappa}\right)}\leq2\epsilon_{\text{dom}},
\end{equation}
whereas for $x\in\left[-\sin\left(\frac{\pi}{2}-\xi\right),\sin\left(\frac{\pi}{2}-\xi\right)\right]$,
\begin{equation}
    \abs{h_{\cos\mathbf{Inv}_+,\text{dom}}(x)+h_{1,\text{dom}}(x)}\leq\abs{\cos\left(\frac{\mathbf{Inv}_+\left(\arcsin(x)\right)}{\kappa}\right)}+2\epsilon_{\text{dom}},
\end{equation}
and for $x\in\left[-1,-\sin\left(\frac{\pi}{2}-\xi\right)\right]\bigcup\left[\sin\left(\frac{\pi}{2}-\xi\right),1\right]$,
\begin{equation}
    \abs{h_{\cos\mathbf{Inv}_+,\text{dom}}(x)+h_{1,\text{dom}}(x)}\leq2\epsilon_{\text{dom}}.
\end{equation}
This completes the analysis of $\cos\left(\frac{\mathbf{Inv}_+\left(\arcsin(x)\right)}{\kappa}\right)$. The remaining cases are treated similarly. Finally, to satisfy the odd and even parity constraint, we take the odd and even part of the constructed polynomial respectively.

\subsection{Dominated polynomial approximation for Hamiltonian fractional scaling}
\label{append:composite_frac}
Recall from~\sec{qsvt_inverse_frac} that in Hamiltonian fractional scaling, we aim to implement a scalar function. Specifically, for any $\epsilon>0$, $0<\tau<1$ and constant $0<\xi\leq\frac{\pi}{2}$, our goal is to find a real odd polynomial $p$ and even polynomial $q$ such that 
\begin{equation}
\begin{aligned}
    &\abs{p(x)-\sin\left(\tau\arcsin(x)\right)}\leq\epsilon,\qquad&&\forall x\in\left[-\sin\left(\frac{\pi}{2}-\xi\right),\sin\left(\frac{\pi}{2}-\xi\right)\right],\\
    &\abs{q(x)-\frac{\cos\left(\tau\arcsin(x)\right)}{\sqrt{1-x^2}}}\leq\epsilon,\qquad&&\forall x\in\left[-\sin\left(\frac{\pi}{2}-\xi\right),\sin\left(\frac{\pi}{2}-\xi\right)\right],\\
    &p^2(x)+(1-x^2)q^2(x)\leq1+\epsilon,\qquad&&\forall x\in[-1,1].
\end{aligned}
\end{equation}

Without loss of generality, let us consider $\sin\left(\tau\arcsin(x)\right)$. Setting $1-\xi_1=\sin\left(\frac{\pi}{2}-\xi\right)$, we have the rescaled function $\sin\left(\tau\arcsin((1-\xi_1)x)\right)$ over the unit interval. Suppose we start with the Bernstein ellipse $\mathbf{Ellipse}\left(\rho=1+\ell\right)$ for some $\ell>0$ sufficiently small. Then,
\begin{equation}
\begin{aligned}
    &\norm{\sin\left(\tau\arcsin((1-\xi_1)\cdot)\right)}_{\max,\mathbf{Ellipse}\left(\rho=1+\ell\right)}\\
    &\leq\norm{\sin\left(\tau\arcsin((1-\xi_1)\cdot)\right)}_{\max,\mathbf{Ellipse}\left(b=\ell\right)}\\
    &\leq\norm{\sin\left(\tau\arcsin((1-\xi_1)\cdot)\right)}_{\max,\mathbf{Stadium}\left([-1,1];\ell\right)}\\
    &\leq\norm{\sin\left(\tau(\cdot)\right)}_{\max,\mathbf{Stadium}\left(\left[-\arcsin\left(1-\xi_1\right),\arcsin\left(1-\xi_1\right)\right];
    \mathbf{O}(\ell)\right)}\\
    &\leq\norm{\sin\left(\tau(\cdot)\right)}_{\max,\mathbf{Stadium}\left(\left[-\frac{\pi}{2},\frac{\pi}{2}\right];
    \mathbf{O}(\ell)\right)}
    =\exp\left(\mathbf{O}(\ell)\right).\\
\end{aligned}
\end{equation}
Hence, the result is constant if $\ell=\mathbf{O}(1)$ is sufficiently small.
The remaining analysis follows from~\cor{approx_dominated_ellipse} similar to that of~\append{composite_qsvt}.

\subsection{Dominated polynomial approximation for Hamiltonian overlap estimation}
\label{append:composite_overlap}
Recall from~\sec{overlap_est} that in Hamiltonian-based overlap estimation, we aim to implement a scaled $\arcsin$ function. Specifically, for any $\epsilon>0$, and constant $0<\xi\leq1$, our goal is to find a real odd polynomial $p$ and even polynomial $q$ such that 
\begin{equation}
\begin{aligned}
    &\abs{p(x)-\sin\left(\frac{1}{2}\arcsin\arcsin(x)\right)}\leq\epsilon,\qquad&&\forall x\in\left[-\sin\left(1-\xi\right),\sin\left(1-\xi\right)\right],\\
    &\abs{q(x)-\frac{\cos\left(\frac{1}{2}\arcsin\arcsin(x)\right)}{\sqrt{1-x^2}}}\leq\epsilon,\qquad&&\forall x\in\left[-\sin\left(1-\xi\right),\sin\left(1-\xi\right)\right],\\
    &p^2(x)+(1-x^2)q^2(x)\leq1+\epsilon,\qquad&&\forall x\in[-1,1].
\end{aligned}
\end{equation}

Without loss of generality, let us consider $\sin\left(\frac{1}{2}\arcsin\arcsin(x)\right)$. Setting $1-\xi_1=\sin\left(1-\xi\right)$, we have the rescaled function $\sin\left(\frac{1}{2}\arcsin\arcsin((1-\xi_1)x)\right)$ over the unit interval. Suppose we start with the Bernstein ellipse $\mathbf{Ellipse}\left(\rho=1+\ell\right)$ for some $\ell>0$ sufficiently small. Then,
\begin{equation}
\begin{aligned}
    &\norm{\sin\left(\frac{1}{2}\arcsin\arcsin((1-\xi_1)(\cdot))\right)}_{\max,\mathbf{Ellipse}\left(\rho=1+\ell\right)}\\
    &\leq\norm{\sin\left(\frac{1}{2}\arcsin\arcsin((1-\xi_1)(\cdot))\right)}_{\max,\mathbf{Stadium}\left([-1,1];\ell\right)}\\
    &\leq\norm{\sin\left(\frac{1}{2}\arcsin(\cdot)\right)}_{\max,\mathbf{Stadium}\left([-1+\xi,1-\xi];\mathbf{O}(\ell)\right)}\\
    &=\norm{\sin\left(\frac{1}{2}\arcsin((1-\xi)(\cdot))\right)}_{\max,\mathbf{Stadium}\left([-1,1];\mathbf{O}(\ell)\right)}\\
    &\leq\norm{\sin\left(\frac{(\cdot)}{2}\right)}_{\max,\mathbf{Stadium}\left(\left[-\arcsin\left(1-\xi\right),\arcsin\left(1-\xi\right)\right];
    \mathbf{O}(\ell)\right)}\\
    &\leq\norm{\sin\left(\frac{(\cdot)}{2}\right)}_{\max,\mathbf{Stadium}\left(\left[-\frac{\pi}{2},\frac{\pi}{2}\right];
    \mathbf{O}(\ell)\right)}
    =\exp\left(\mathbf{O}(\ell)\right).\\
\end{aligned}
\end{equation}
Hence, the result is constant if $\ell=\mathbf{O}(1)$ is sufficiently small.
The remaining analysis follows from~\cor{approx_dominated_ellipse} similar to that of~\append{composite_qsvt}.

\subsection{Dominated polynomial approximation for Green's function estimation}
\label{append:composite_green}
Recall from~\sec{overlap_green} that in the Green's function estimation problem, we aim to implement a rational function. Let us consider the Hermitian component of the Green's function first. Specifically, for any $\epsilon>0$, $\eta>0$, and constant $0<\xi\leq\frac{\pi}{2}$, our goal is to find a real odd polynomial $p$ and even polynomial $q$ such that
\begin{equation}
\begin{aligned}
    &\abs{p(x)-\sin\left(\frac{1}{2}\arcsin\left(\frac{\eta\arcsin(x)}{\eta^2+\arcsin^2(x)}\right)\right)}\leq\epsilon,\quad&&\forall x\in\left[-\sin\left(\frac{\pi}{2}-\xi\right),\sin\left(\frac{\pi}{2}-\xi\right)\right],\\
    &\abs{q(x)-\frac{\cos\left(\frac{1}{2}\arcsin\left(\frac{\eta\arcsin(x)}{\eta^2+\arcsin^2(x)}\right)\right)}{\sqrt{1-x^2}}}\leq\epsilon,\quad&&\forall x\in\left[-\sin\left(\frac{\pi}{2}-\xi\right),\sin\left(\frac{\pi}{2}-\xi\right)\right],\\
    &p^2(x)+(1-x^2)q^2(x)\leq1+\epsilon,\quad&&\forall x\in[-1,1].
\end{aligned}
\end{equation}

Without loss of generality, let us consider $\sin\left(\frac{1}{2}\arcsin\left(\frac{\eta\arcsin(x)}{\eta^2+\arcsin^2(x)}\right)\right)$. Setting $1-\xi_1=\sin\left(1-\xi\right)$, we have the rescaled function $\sin\left(\frac{1}{2}\arcsin\left(\frac{\eta\arcsin((1-\xi_1)x)}{\eta^2+\arcsin^2((1-\xi_1)x)}\right)\right)$ over the unit interval. Suppose we start with the Bernstein ellipse $\mathbf{Ellipse}(\rho=1+c\eta)$ for some tunable constant $c>0$ sufficiently small. Then after the $\arcsin((1-\xi_1)(\cdot))$ map, we get
\begin{equation}
    \mathbf{Stadium}\left(\left[-\arcsin\left(1-\xi_1\right),\arcsin\left(1-\xi_1\right)\right];
    \frac{1-\xi_1}{\xi_1}c\eta\right).
\end{equation}
Let us choose $c$ so that $\frac{1-\xi_1}{\xi_1}c\leq\frac{1}{2}$. We now study how this region gets transformed under $\frac{\eta(\cdot)}{\eta^2+(\cdot)^2}$.

\begin{proposition}
For $\eta=\mathbf{O}(1)$ sufficiently small,
\begin{equation}
    \frac{\eta\cdot\mathbf{Stadium}\left(\left[-\frac{\pi}{2},\frac{\pi}{2}\right];
    \frac{\eta}{2}\right)}{\eta^2+\mathbf{Stadium}^2\left(\left[-\frac{\pi}{2},\frac{\pi}{2}\right];
    \frac{\eta}{2}\right)}
    \subseteq\mathbf{Disk}\left(0;\left[0,\frac{2}{3}\right]\right).
\end{equation}
\end{proposition}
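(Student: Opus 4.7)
The plan is to rescale by $u = z/\eta$ in order to reduce the claim to a uniform inclusion on a stadium of fixed shape. Under this substitution one has $\frac{\eta z}{\eta^2+z^2} = \frac{u}{1+u^2} =: g(u)$, and the input stadium $\mathbf{Stadium}([-\pi/2,\pi/2];\eta/2)$ maps bijectively onto $\mathbf{Stadium}([-\pi/(2\eta),\pi/(2\eta)];1/2)$. Thus it suffices to prove that $|g(u)| \leq 2/3$ for every $u$ in the rescaled stadium.

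Writing $u = a + ib$, in both the rectangular part and the two semicircular caps the imaginary component satisfies $|b| \leq 1/2$, while the real component $a$ can take arbitrarily large values (and in particular becomes unbounded as $\eta \to 0$). So I would prove the stronger claim that $|g(u)| \leq 2/3$ on the entire horizontal strip $\{a+ib : a \in \mathbb R,\ |b| \leq 1/2\}$, which contains the rescaled stadium regardless of $\eta$.

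To establish this I would compute $|1+u^2|^2 - (9/4)|u|^2$ and show it is non-negative. Using $\Re(u^2) = a^2 - b^2$ and $|u|^2 = a^2 + b^2$, and setting $s = a^2 \geq 0$ and $t = b^2 \in [0,1/4]$, this reduces to the single polynomial inequality
\begin{equation}
4 + 4(s+t)^2 \;\geq\; s + 17\, t.
\end{equation}
I would verify it by optimizing in $t$ for each fixed $s$: the derivative of the left minus the right in $t$ is $-17 + 8(s+t)$, so the minimum over $t \in [0,1/4]$ is attained either at $t = 1/4$ (giving $4s^2 + s \geq 0$, trivial for $s \geq 0$) or at $t = 0$ (giving $4 - s + 4s^2 \geq 0$, which follows from negative discriminant). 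Equality holds exactly at $s = 0,\ t = 1/4$, i.e.\ at $u = \pm i/2$, corresponding to the two points $z = \pm i\eta/2$ at the tops of the semicircular caps. This matches the known real-variable maximum of $|v|/(1-v^2)$ on the imaginary line at $v = 1/2$, so the constant $2/3$ is sharp.

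The main ``obstacle'' is essentially conceptual rather than technical: one must notice that the stadium's imaginary half-width is precisely $\eta/2$, so the rescaled stadium lies in the optimal horizontal strip where $g$ is bounded by exactly $2/3$. Once the $u = z/\eta$ rescaling is in place, the proof is a short algebraic computation with no serious difficulty; the hypothesis ``$\eta$ sufficiently small'' is not needed for the inclusion itself (it plays its role elsewhere in controlling the approximation on the Bernstein ellipse feeding into this stadium).
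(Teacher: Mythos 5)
Your approach is genuinely different from the paper's, and in one respect strictly stronger. The paper bounds $\left|\frac{\eta z}{\eta^2+z^2}\right|=\frac{\eta|z|}{|z+i\eta||z-i\eta|}$ by a case split on the boundary of the stadium: on the straight edges it gets exactly $2/3$, while on the semicircular caps it gets only $\mathbf{O}(\eta)$, which is why the hypothesis ``$\eta$ sufficiently small'' appears. By rescaling $u=z/\eta$ and observing that the whole stadium lies in the strip $|\Im u|\le\frac12$, you eliminate the case split and prove a single scale-free polynomial inequality, obtaining the inclusion for every $\eta>0$; your remark that the smallness hypothesis is unnecessary here is correct. Your reduction to $4+4(s+t)^2\ge s+17t$ is also correct, and so is the equality identification at $(s,t)=(0,\tfrac14)$.

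There is, however, a genuine gap in the one-variable optimization. You claim that because $\partial_t\bigl(4+4(s+t)^2-s-17t\bigr)=8(s+t)-17$, the minimum over $t\in[0,\tfrac14]$ is attained at an endpoint. But that derivative has positive slope in $t$, so $h(t)=4+4(s+t)^2-s-17t$ is \emph{convex} in $t$, and a convex function on an interval can attain its minimum at an interior critical point. Indeed for $s\in(\tfrac{15}{8},\tfrac{17}{8})$ the critical point $t^\*=\tfrac{17}{8}-s$ lies strictly inside $(0,\tfrac14)$, and there $h(t^\*)<\min\{h(0),h(\tfrac14)\}$ (e.g.\ $s=2$ gives $h(t^\*)=\tfrac{287}{16}<18=h(0)=h(\tfrac14)$). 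The conclusion still holds because $h(t^\*)=16s-\tfrac{225}{16}\ge\tfrac{255}{16}>0$ whenever $s\ge\tfrac{15}{8}$, but you must check this third case to close the argument. (A cleaner route: with $v=s+t$ one needs $4+4v^2\ge v+16t$; the right side is maximized at $t=\min(v,\tfrac14)$, and both resulting cases reduce to $4v^2\ge v$ for $v\ge\tfrac14$ and $4+4v^2\ge 17v$ for $v\le\tfrac14$, the latter having a double root at $v=\tfrac14$.) Separately, the equality locus $u=\pm i/2$ corresponds to $z=\pm i\eta/2$, which are the midpoints of the long edges of the rectangle, not the tops of the semicircular caps as you state.
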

\begin{proof}
Take any $z\in\mathbf{Stadium}\left(\left[-\frac{\pi}{2},\frac{\pi}{2}\right];
\frac{\eta}{2}\right)$ and consider the decomposition
\begin{equation}
    \abs{\frac{\eta z}{\eta^2+z^2}}
    =\frac{\eta \abs{z}}{\abs{z+i\eta}\abs{z-i\eta}}.
\end{equation}
If $z=x+iy$ with $x\in\left[-\frac{\pi}{2},\frac{\pi}{2}\right]$ and $y=\pm\frac{\eta}{2}$, then
\begin{equation}
    \frac{\eta \abs{z}}{\abs{z+i\eta}\abs{z-i\eta}}
    =\frac{\eta\sqrt{x^2+\frac{\eta^2}{4}}}{\sqrt{x^2+\frac{9\eta^2}{4}}\sqrt{x^2+\frac{\eta^2}{4}}}
    \leq\frac{2}{3}.
\end{equation}
On the other hand, if $z$ is on the semicircles, then $\abs{z}\leq\frac{\pi}{2}+\frac{\eta}{2}$ and $\abs{z\pm i\eta}\geq\sqrt{\frac{\pi^2}{4}+\frac{\eta^2}{4}}$, which gives
\begin{equation}
    \frac{\eta \abs{z}}{\abs{z+i\eta}\abs{z-i\eta}}
    \leq\frac{\eta\left(\frac{\pi}{2}+\frac{\eta}{2}\right)}{\frac{\pi^2}{4}+\frac{\eta^2}{4}}
    =\mathbf{O}\left(\eta\right),
\end{equation}
which is also smaller than $\frac{2}{3}$ for $\eta=\mathbf{O}(1)$ sufficiently small.
\end{proof}

Summarizing the above discussion, we obtain
\begin{equation}
\begin{aligned}
    &\norm{\sin\left(\frac{1}{2}\arcsin\left(\frac{\eta\arcsin((1-\xi_1)(\cdot))}{\eta^2+\arcsin^2((1-\xi_1)(\cdot))}\right)\right)}_{\max,\mathbf{Ellipse}(\rho=1+c\eta)}\\
    &\leq\norm{\sin\left(\frac{1}{2}\arcsin\left(\frac{\eta(\cdot)}{\eta^2+(\cdot)^2}\right)\right)}_{\max,\mathbf{Stadium}\left(\left[-\frac{\pi}{2},\frac{\pi}{2}\right];
    \frac{\eta}{2}\right)}\\
    &\leq\norm{\sin\left(\frac{1}{2}\arcsin(\cdot)\right)}_{\max,\mathbf{Disk}\left(0;\left[0,\frac{2}{3}\right]\right)}\\
    &\leq\norm{\sin}_{\max,\mathbf{Disk}\left(0;\left[0,\frac{1}{2}\arcsin\left(\frac{2}{3}\right)\right]\right)}
    =\mathbf{O}(1).
\end{aligned}
\end{equation}
Invoking~\lem{approx_ellipse}, we get a real odd polynomial $h_{\text{green}}$ with degree $d$ and error $\mathbf{O}\left(\frac{1}{\eta(1+\eta)^d}\right)$. To achieve accuracy $\epsilon_{\text{green}}$, we require that $(1+\eta)^d=\mathbf{\Theta}\left(\frac{1}{\eta\epsilon_{\text{green}}}\right)$, which can be attained by setting $d=\mathbf{\Theta}\left(\frac{1}{\eta}\log\left(\frac{1}{\eta\epsilon_{\text{green}}}\right)\right)$.
Therefore, we can choose $h_{\text{green}}$ with the behavior
\begin{equation}
    \norm{h_{\text{green}}-\sin\left(\frac{1}{2}\arcsin\left(\frac{\eta\arcsin(\cdot)}{\eta^2+\arcsin^2(\cdot)}\right)\right)}_{\max,\left[-\sin\left(\frac{\pi}{2}-\xi\right),\sin\left(\frac{\pi}{2}-\xi\right)\right]}
    \leq\epsilon_{\text{green}}
\end{equation}
and an asymptotic degree of
\begin{equation}
    \mathbf{O}\left(\frac{1}{\eta}\log\left(\frac{1}{\eta\epsilon_{\text{green}}}\right)\right).
\end{equation}

Note that $\sin\left(\frac{\pi}{2}-\xi\right)$ is still constant gapped away from $1$, but our Bernstein ellipse cannot be extended further without hitting the singularities of $\frac{\eta z}{\eta^2+z^2}$ at $z=\pm i\eta$. To address this, we apply the dominated extension of polynomials from~\prop{dominated_ext}.

Suppose we start with constant $0<\xi<\xi_2\leq\frac{\pi}{2}$. Applying the dominated extension gives a polynomial $h_{\text{green,dom}}$ with the behavior
\begin{footnotesize}
\begin{equation}
\newmaketag
\begin{aligned}
    &\abs{h_{\text{green,dom}}(x)-\sin\left(\frac{1}{2}\arcsin\left(\frac{\eta\arcsin(x)}{\eta^2+\arcsin^2(x)}\right)\right)}\leq\epsilon_{\text{dom}},\qquad&&\forall x\in\left[-\sin\left(\frac{\pi}{2}-\xi_2\right),\sin\left(\frac{\pi}{2}-\xi_2\right)\right],\\
    &\abs{h_{\text{green,dom}}(x)}\leq\abs{\sin\left(\frac{1}{2}\arcsin\left(\frac{\eta\arcsin(x)}{\eta^2+\arcsin^2(x)}\right)\right)}+\epsilon_{\text{dom}},\qquad&&\forall x\in\left[-\sin\left(\frac{\pi}{2}-\xi\right),\sin\left(\frac{\pi}{2}-\xi\right)\right],\\
    &\abs{h_{\text{green,dom}}(x)}\leq\epsilon_{\text{dom}},\qquad&&\forall x\in\left[-1,-\sin\left(\frac{\pi}{2}-\xi\right)\right]\bigcup\left[\sin\left(\frac{\pi}{2}-\xi\right),1\right]
\end{aligned}
\end{equation}
\end{footnotesize}%
and an asymptotic degree of
\begin{equation}
    \mathbf{O}\left(\frac{1}{\eta}\log\left(\frac{1}{\eta\epsilon_{\text{green}}}\right)+\log\left(\frac{1}{\epsilon_{\text{dom}}}\right)\right).
\end{equation}
The remaining analysis proceeds as in~\append{composite_qsvt}.

Now, we consider the anti-Hermitian component of the Green's function. Specifically, for any $\epsilon>0$, $\eta>0$, and constant $0<\xi\leq\frac{\pi}{2}$, our goal is to find a real even polynomial $h$ such that
\begin{equation}
\begin{aligned}
    &\abs{h(x)-\frac{\eta}{\sqrt{\eta^2+\arcsin^2(x)}}}\leq\epsilon,\qquad&&\forall x\in\left[-\sin\left(\frac{\pi}{2}-\xi\right),\sin\left(\frac{\pi}{2}-\xi\right)\right],\\
    &\abs{h(x)}\leq1+\epsilon,\qquad&&\forall x\in[-1,1].
\end{aligned}
\end{equation}
Note that different from the Hermitian case which requires dominated approximations, the anti-Hermitian case can be realized with a bounded polynomial approximation, as we aim for a unitary block encoding as the output.

Much of the analysis parallels that of the Hermitian case, so we omit the details, except that we need to study the behavior of stadiums under $\frac{\eta}{\sqrt{\eta^2+(\cdot)^2}}$.
\begin{proposition}
For $\eta=\mathbf{O}(1)$ sufficiently small,
\begin{equation}
    \frac{\eta}{\sqrt{\eta^2+\mathbf{Stadium}^2\left(\left[-\frac{\pi}{2},\frac{\pi}{2}\right];
    \frac{\eta}{2}\right)}}
    \subseteq\mathbf{Disk}\left(0;\left[0,\frac{2}{\sqrt{3}}\right]\right).
\end{equation}
\end{proposition}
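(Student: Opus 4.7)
The plan is to mirror the argument used in the immediately preceding proposition bounding $\frac{\eta z}{\eta^2+z^2}$ on the same stadium, exploiting the factorization $\eta^{2}+z^{2}=(z+i\eta)(z-i\eta)$. For any $z$ in the stadium, I would write
\begin{equation}
    \abs{\frac{\eta}{\sqrt{\eta^{2}+z^{2}}}}=\frac{\eta}{\sqrt{\abs{z+i\eta}\abs{z-i\eta}}},
\end{equation}
and then split into the two natural cases of the stadium (the flat top/bottom segments and the two semicircular caps). Note that because the stadium has half-width $\eta/2$, the point $z$ is always strictly separated from the branch points $\pm i\eta$, so the square root on the left is unambiguous up to a global sign (and only the modulus enters the disk containment anyway).

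For the flat part, write $z=x\pm i\eta/2$ with $x\in[-\pi/2,\pi/2]$. A direct computation gives $\abs{z+i\eta}^{2}\abs{z-i\eta}^{2}=(x^{2}+9\eta^{2}/4)(x^{2}+\eta^{2}/4)$, which is minimized at $x=0$ with value $9\eta^{4}/16$. Hence $\sqrt{\abs{z+i\eta}\abs{z-i\eta}}\geq \sqrt{3}\,\eta/2$ and
\begin{equation}
    \abs{\frac{\eta}{\sqrt{\eta^{2}+z^{2}}}}\leq\frac{2}{\sqrt{3}},
\end{equation}
with equality at $x=0$, $y=\pm\eta/2$. This is the binding case.

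For the semicircular caps, parametrize the right cap as $z=\pi/2+(\eta/2)e^{i\theta}$ with $\cos\theta\geq 0$, so that $\Re(z)\geq\pi/2$; symmetrically for the left cap. Then $\abs{z\pm i\eta}^{2}\geq\Re(z)^{2}\geq \pi^{2}/4$, so $\sqrt{\abs{z+i\eta}\abs{z-i\eta}}\geq \pi/2$ and
\begin{equation}
    \abs{\frac{\eta}{\sqrt{\eta^{2}+z^{2}}}}\leq \frac{2\eta}{\pi},
\end{equation}
which is strictly smaller than $2/\sqrt{3}$ once $\eta=\mathbf{O}(1)$ is sufficiently small. Combining both cases yields the claimed containment.

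I do not expect any real obstacle beyond bookkeeping: the semicircular estimate is crude but comfortable, and the flat case is sharp and mirrors Case~1 of the preceding proposition. The only subtlety worth double-checking is the branch of the square root, but since only the modulus is relevant for the disk containment, and since the stadium lies at distance $\geq\eta/2$ from the branch points $\pm i\eta$, there is no issue.
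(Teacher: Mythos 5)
Your proof is correct and mirrors the paper's argument almost exactly: the same case split into the flat segments $z = x \pm i\eta/2$ and the semicircular caps, the same factorization $\eta^2 + z^2 = (z+i\eta)(z-i\eta)$, and the same identification that the bound $2/\sqrt{3}$ is attained on the flat part at $x=0$. The only cosmetic differences are that the paper bounds the squared quantity $\eta^2/(\abs{z+i\eta}\abs{z-i\eta}) \leq 4/3$ rather than the quantity itself, and uses the marginally tighter cap estimate $\abs{z\pm i\eta} \geq \sqrt{\pi^2/4 + \eta^2/4}$ instead of your $\abs{z\pm i\eta} \geq \pi/2$; both give the same conclusion.
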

\begin{proof}
Take any $z\in\mathbf{Stadium}\left(\left[-\frac{\pi}{2},\frac{\pi}{2}\right];
\frac{\eta}{2}\right)$.
If $z=x+iy$ with $x\in\left[-\frac{\pi}{2},\frac{\pi}{2}\right]$ and $y=\pm\frac{\eta}{2}$, then
\begin{equation}
    \frac{\eta^2}{\abs{z+i\eta}\abs{z-i\eta}}
    =\frac{\eta^2}{\sqrt{x^2+\frac{9\eta^2}{4}}\sqrt{x^2+\frac{\eta^2}{4}}}
    \leq\frac{4}{3}.
\end{equation}
On the other hand, if $z$ is on the semicircles, then $\abs{z\pm i\eta}\geq\sqrt{\frac{\pi^2}{4}+\frac{\eta^2}{4}}$, which gives
\begin{equation}
    \frac{\eta^2}{\abs{z+i\eta}\abs{z-i\eta}}
    \leq\frac{\eta^2}{\frac{\pi^2}{4}+\frac{\eta^2}{4}}
    =\mathbf{O}\left(\eta^2\right),
\end{equation}
which is also smaller than $\frac{4}{3}$ for $\eta=\mathbf{O}(1)$ sufficiently small.
\end{proof}

\subsection{Dominated polynomial approximation for Hamiltonian squaring}
\label{append:composite_square}
Recall from~\sec{sos_square} that in Hamiltonian squaring, we aim to implement a cubic function, which is an odd but can be linearly combined to yield the desired square function. Specifically, for any $\epsilon>0$ and constant $0<\xi\leq\frac{\pi}{2}$, our goal is to find a real odd polynomial $p$ and even polynomial $q$ such that
\begin{equation}
\begin{aligned}
    &\abs{p(x)-\sin\left(\arcsin^3(x)\right)}\leq\epsilon,\qquad&&\forall x\in\left[-\sin\left(\frac{\pi}{2}-\xi\right),\sin\left(\frac{\pi}{2}-\xi\right)\right],\\
    &\abs{q(x)-\frac{\cos\left(\arcsin^3(x)\right)}{\sqrt{1-x^2}}}\leq\epsilon,\qquad&&\forall x\in\left[-\sin\left(\frac{\pi}{2}-\xi\right),\sin\left(\frac{\pi}{2}-\xi\right)\right],\\
    &p^2(x)+(1-x^2)q^2(x)\leq1+\epsilon,\qquad&&\forall x\in[-1,1].
\end{aligned}
\end{equation}

Without loss of generality, let us consider $\sin\left(\arcsin^3(x)\right)$. Setting $1-\xi_1=\sin\left(\frac{\pi}{2}-\xi\right)$, we have the rescaled function $\sin\left(\arcsin^3((1-\xi_1)x)\right)$ over the unit interval. Suppose we start with the Bernstein ellipse $\mathbf{Ellipse}(\rho=1+\ell)$ for some $\ell>0$ sufficiently small. Then,
\begin{equation}
\begin{aligned}
    &\norm{\sin\left(\arcsin^3((1-\xi_1)(\cdot))\right)}_{\max,\mathbf{Ellipse}(\rho=1+\ell)}\\
    &\leq\norm{\sin\left(\arcsin^3((1-\xi_1)(\cdot))\right)}_{\max,\mathbf{Stadium}([-1,1];\ell)}\\
    &\leq\norm{\sin\left((\cdot)^3\right)}_{\max,\mathbf{Stadium}\left(\left[-\arcsin(1-\xi_1),\arcsin(1-\xi_1)\right];\mathbf{O}(\ell)\right)}\\
    &\leq\norm{\sin\left((\cdot)^3\right)}_{\max,\mathbf{Stadium}\left(\left[-\frac{\pi}{2},\frac{\pi}{2}\right];\mathbf{O}(\ell)\right)}\\
    &\leq\norm{\sin\left(\cdot\right)}_{\max,\mathbf{Stadium}\left(\left[-\frac{\pi^3}{8},\frac{\pi^3}{8}\right];\mathbf{O}(\ell^3)\right)}
    =\exp\left(\mathbf{O}(\ell^3)\right).
\end{aligned}
\end{equation}
Hence, the result is constant if $\ell=\mathbf{O}(1)$ is sufficiently small. The remaining analysis follows from~\cor{approx_dominated_ellipse} similar to that of~\append{composite_qsvt}.

\section{Background on fermionic systems}
\label{append:fermionic}

In this appendix, we provide background on the analysis of fermionic systems. This includes a discussion on properties of fermionic operators in~\append{fermionic_property}, and an analysis of the fermionic $\eta$-seminorm in~\append{fermionic_seminorm}.

\subsection{Properties of fermionic operators}
\label{append:fermionic_property}

Let $A_j$ ($j=1,\ldots,n$) be fermionic operators satisfying the anticommutation relations $\left\{A_j,A_k\right\}=A_jA_k+A_kA_j=0$ and $\left\{A_j,A_k^\dagger\right\}=A_j A_k^\dagger+A_k^\dagger A_j=\pmb{\delta}_{j,k}I$, where $\pmb{\delta}_{j,k}=1$ if $j=k$ and $0$ otherwise.
Our analysis of the electronic structure Hamiltonian simulation in the main text makes extensive use of the following mapping of coefficient tensors to one-body free fermions:
\begin{equation}
    \mathbf{Quad}\left(J\right)
    =\sum_{j,k}\left[J\right]_{j,k}A_j^\dagger A_k.
\end{equation}
Alternatively, the mapping $\mathbf{Quad}$ acts on an arbitrary computational basis as
\begin{equation}
    \mathbf{Quad}:\ \ketbra{j}{k}
    \mapsto A_j^\dagger A_k
\end{equation}
and extends its action by linearity. This is in fact a restricted version of the more general correspondence between first and second quantized representation of fermionic systems, but our above definition suffices for the analysis of quantum simulation.
We refer the reader to~\cite{helgaker2014molecular,Otte2010Boundedness} for further discussions on fermionic operators not covered here.

To assist the analysis, we also introduce the mapping $\mathbf{Cre}$ between column vectors and fermionic creation operators, and $\mathbf{Ann}$ between row vectors and fermionic annihilation operators:
\begin{equation}
\begin{aligned}
    &\mathbf{Cre}:\ \ket{j}\mapsto A_j^\dagger,\qquad&&\mathbf{Cre}(\beta)=\sum_{j=1}^n\beta_jA_j^\dagger,\\
    &\mathbf{Ann}:\ \bra{j}\mapsto A_j,\qquad&&\mathbf{Ann}(\beta^\top)=\sum_{j=1}^n\beta_jA_j.
\end{aligned}
\end{equation}
These mappings of fermionic operators satisfy the following properties.

\begin{lemma}[Properties of fermionic operators]
\label{lem:fermionic_property}
Let $A_j$ ($j=1,\ldots,n$) be operators satisfying $\left\{A_j,A_k\right\}=A_jA_k+A_kA_j=0$ and $\left\{A_j,A_k^\dagger\right\}=A_j A_k^\dagger+A_k^\dagger A_j=\pmb{\delta}_{j,k}I$. Define
\begin{equation}
    \mathbf{Cre}(\beta)=\sum_{j=1}^n\beta_jA_j^\dagger,\qquad
    \mathbf{Ann}(\beta^\top)=\sum_{j=1}^n\beta_jA_j,\qquad
    \mathbf{Quad}(L)=\sum_{j,k=1}^n[L]_{j,k}A_j^\dagger A_k,
\end{equation}
for $\beta\in\mathbb{C}^n$ and $L\in\mathbb{C}^{n\times n}$. The following statements hold for scalars $b,c\in\mathbb{C}$, vectors $\beta,\gamma\in\mathbb{C}^n$ and matrices $L,M\in\mathbb{C}^{n\times n}$.
\begin{enumerate}
    \item
    \begin{enumerate}
        \item $\mathbf{Cre}(b\beta+c\gamma)=b\mathbf{Cre}(\beta)+c\mathbf{Cre}(\gamma)$; $\mathbf{Ann}(b\beta^\top+c\gamma^\top)=b\mathbf{Ann}(\beta^\top)+c\mathbf{Ann}(\gamma^\top)$;
        \item $\mathbf{Quad}(bL+cM)=b\mathbf{Quad}(L)+c\mathbf{Quad}(M)$.
    \end{enumerate}
    \item 
    \begin{enumerate}
        \item $\mathbf{Ann}(\beta^\dagger)=\mathbf{Cre}^\dagger(\beta)$;
        \item $\mathbf{Quad}(L^\dagger)=\mathbf{Quad}^\dagger(L)$.
    \end{enumerate}
    \item 
    \begin{enumerate}
        \item $\left\{\mathbf{Cre}(\beta),\mathbf{Cre}(\gamma)\right\}=\left\{\mathbf{Ann}(\beta^\top),\mathbf{Ann}(\gamma^\top)\right\}=0$;
        \item $\mathbf{Cre}(\beta)\mathbf{Ann}(\gamma^\top)=\mathbf{Quad}(\beta\gamma^\top)$;
        \item $\left\{\mathbf{Cre}(\beta),\mathbf{Ann}(\gamma^\top)\right\}=\left\{\mathbf{Ann}(\gamma^\top),\mathbf{Cre}(\beta)\right\}=\gamma^\top \beta I$;
        \item $\left[\mathbf{Quad}(L),\mathbf{Cre}(\beta)\right]=\mathbf{Cre}(L\beta)$; $\left[\mathbf{Quad}(L),\mathbf{Ann}(\beta^\top)\right]=-\mathbf{Ann}(\beta^\top L)$;
        \item $\left[\mathbf{Quad}(L),\mathbf{Quad}(M)\right]=\mathbf{Quad}([L,M])$.
    \end{enumerate}
    \item
    \begin{enumerate}
        \item $\mathbf{Cre}(\beta)=0\Leftrightarrow\mathbf{Ann}(\beta^\top)=0\Leftrightarrow\beta=0$;
        \item $\mathbf{Quad}(L)=0\Leftrightarrow L=0$.
    \end{enumerate}
\end{enumerate}
\end{lemma}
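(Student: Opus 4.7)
The plan is to prove each item by direct computation from the canonical anticommutation relations $\{A_j,A_k\}=0$ and $\{A_j,A_k^\dagger\}=\pmb{\delta}_{j,k}I$, using the Leibniz rule for commutators/anticommutators to bootstrap identities about $\mathbf{Quad}$ from those about $\mathbf{Cre}$ and $\mathbf{Ann}$.

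First, 1(a) and 1(b) are immediate from the defining linear expansions. For 2(a), since $\beta^\dagger$ is the row vector with entries $\overline{\beta_j}$, both $\mathbf{Ann}(\beta^\dagger)$ and $\mathbf{Cre}^\dagger(\beta)$ equal $\sum_j\overline{\beta_j}A_j$. For 2(b), I would write $[L^\dagger]_{j,k}=\overline{[L]_{k,j}}$, expand $\mathbf{Quad}(L^\dagger)=\sum_{j,k}\overline{[L]_{k,j}}A_j^\dagger A_k$, and compare with $\mathbf{Quad}^\dagger(L)$ after relabeling $j\leftrightarrow k$ in the latter.

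For 3(a), taking the Hermitian conjugate of $\{A_j,A_k\}=0$ yields $\{A_j^\dagger,A_k^\dagger\}=0$, and the claims follow by bilinear expansion. Item 3(b) is $\mathbf{Cre}(\beta)\mathbf{Ann}(\gamma^\top)=\sum_{j,k}\beta_j\gamma_k A_j^\dagger A_k$, which matches $\mathbf{Quad}(\beta\gamma^\top)$ since $[\beta\gamma^\top]_{j,k}=\beta_j\gamma_k$. Item 3(c) follows from $\sum_{j,k}\beta_j\gamma_k\{A_j^\dagger,A_k\}=\sum_j\beta_j\gamma_j I=\gamma^\top\beta\,I$. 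For 3(d), the key sub-computation is $[A_j^\dagger A_k,A_l^\dagger]=\pmb{\delta}_{k,l}A_j^\dagger$, which I would verify by pushing $A_l^\dagger$ past $A_k$ using $\{A_k,A_l^\dagger\}=\pmb{\delta}_{k,l}I$ and past $A_j^\dagger$ using $\{A_j^\dagger,A_l^\dagger\}=0$; contracting with $[L]_{j,k}\beta_l$ produces $\mathbf{Cre}(L\beta)$, and the analogous calculation for $[A_j^\dagger A_k,A_l]=-\pmb{\delta}_{j,l}A_k$ gives the $\mathbf{Ann}$ identity.

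Item 3(e) is the conceptual crux and the step I expect to require the most care: I would first write $\mathbf{Quad}(M)=\sum_l \mathbf{Cre}(Me_l)\mathbf{Ann}(e_l^\top)$, then apply the commutator Leibniz rule together with 3(d) to get
\begin{equation}
[\mathbf{Quad}(L),\mathbf{Cre}(Me_l)\mathbf{Ann}(e_l^\top)]=\mathbf{Cre}(LMe_l)\mathbf{Ann}(e_l^\top)-\mathbf{Cre}(Me_l)\mathbf{Ann}(e_l^\top L),
\end{equation}
and summing over $l$ gives $\mathbf{Quad}(LM)-\mathbf{Quad}(ML)=\mathbf{Quad}([L,M])$ after recognizing the completeness sum $\sum_l e_l e_l^\top=I$. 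Finally, for 4(a) I would evaluate on the fermionic vacuum $\ket{\mathrm{vac}}$ (characterized by $A_j\ket{\mathrm{vac}}=0$), using that the states $\{A_j^\dagger\ket{\mathrm{vac}}\}$ are orthonormal by the anticommutation relations; for 4(b), applying $\mathbf{Quad}(L)$ to $A_l^\dagger\ket{\mathrm{vac}}$ and using $A_k A_l^\dagger\ket{\mathrm{vac}}=\pmb{\delta}_{k,l}\ket{\mathrm{vac}}$ reduces $\mathbf{Quad}(L)=0$ to $\sum_j[L]_{j,l}A_j^\dagger\ket{\mathrm{vac}}=0$ for each $l$, whence $L=0$ by 4(a). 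The proof does not require any external input beyond the canonical anticommutation relations, so no step should present a serious obstacle beyond careful index bookkeeping in 3(e).
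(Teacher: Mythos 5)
Your proof is correct, and items 1--3 are handled in essentially the way the paper intends (the paper proves only item 4 explicitly and dismisses 1--3 as routine; your decomposition $\mathbf{Quad}(M)=\sum_l\mathbf{Cre}(Me_l)\mathbf{Ann}(e_l^\top)$ together with the Leibniz rule for 3(e) is a clean way to organize that computation). The genuine divergence is in item 4. The paper's argument is purely algebraic: from $\mathbf{Cre}(\beta)=0$ it takes the anticommutator with $A_j=\mathbf{Ann}(e_j^\top)$ and uses 3(c) to read off $e_j^\top\beta\,I=0$, hence $\beta=0$; similarly from $\mathbf{Quad}(L)=0$ it commutes with $\mathbf{Cre}(e_j)$ and uses 3(d) to get $\mathbf{Cre}(Le_j)=0$, reducing 4(b) to 4(a). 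Your approach instead applies the operators to the Fock vacuum $\ket{\mathrm{vac}}$ and uses orthonormality of $\{A_j^\dagger\ket{\mathrm{vac}}\}$. That is valid, but it invokes representation-theoretic structure not contained in the hypotheses as stated: the lemma only posits operators satisfying the CAR, and the existence of a cyclic vacuum vector annihilated by all $A_j$ is a feature of the Fock representation, not an algebraic consequence of the CAR alone (one would have to appeal to the fact that every representation of the CAR for finitely many modes decomposes as copies of the Fock one). The paper's anticommutator/commutator trick avoids this entirely, staying within the algebra generated by the CAR, which is both shorter and applicable without reference to any particular representation; your route is the one more familiar from physics but carries a hidden step.
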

\begin{proof}
We will only prove the forward direction of Statement 4 as the remaining follow directly from the definition of $\mathbf{Cre}$, $\mathbf{Ann}$ and $\mathbf{Quad}$, along with the anticommutation relation $\left\{A_j,A_k\right\}=0$, $\left\{A_j,A_k^\dagger\right\}=\pmb{\delta}_{j,k}I$ and commutation relation
\begin{equation}
    \left[A_j^\dagger A_k,A_l^\dagger\right]
    =\pmb{\delta}_{l,k}A_j^\dagger,\qquad
    \left[A_j^\dagger A_k,A_m\right]
    =\pmb{\delta}_{m,j}A_k,
\end{equation}
of fermionic operators.

Suppose that $\mathbf{Cre}(\beta)=0$. Let us consider its anticommutator with $A_j=\mathbf{Ann}(e_j^\top)$, where $e_j$ is the standard basis vector with $1$ at position $j$ and $0$ elsewhere:
\begin{equation}
    0=\left\{A_j,\mathbf{Cre}(\beta)\right\}
    =\left\{\mathbf{Ann}\left(e_j^\top\right),\mathbf{Cre}(\beta)\right\}
    =e_j^\top\beta I.
\end{equation}
As $j$ goes through $1,\ldots,n$, this forces every component of $\beta$ to be zero and hence $\beta=0$. The same argument shows that $\mathbf{Ann}(\beta^\top)=0$ implies $\beta=0$.

Similarly, if $\mathbf{Quad}(L)=0$, we consider its commutator with $A_j^\dagger=\mathbf{Cre}(e_j)$:
\begin{equation}
    0=\left[\mathbf{Quad}(L),A_j^\dagger\right]=\left[\mathbf{Quad}(L),\mathbf{Cre}(e_j)\right]
    =\mathbf{Cre}(Le_j).
\end{equation}
This means $Le_j=0$ for all $j=1,\ldots,n$ and therefore $L=0$.
\end{proof}

As an immediate corollary, we have:
\begin{corollary}
Let $A_j$ ($j=1,\ldots,n$) be fermionic operators and define the mappings $\mathbf{Cre}$, $\mathbf{Ann}$, and $\mathbf{Quad}$ as in~\lem{fermionic_property}.
The following statements hold for vectors $\beta,\gamma\in\mathbb{C}^n$ and matrices $L,M\in\mathbb{C}^{n\times n}$.
\begin{enumerate}
    \item $\mathbf{Cre}\left(e^{L}\beta\right)=e^{\mathbf{Quad}(L)}\mathbf{Cre}(\beta)e^{-\mathbf{Quad}(L)}$;
    \item $\mathbf{Ann}\left(\gamma^\top e^{-L}\right)=e^{\mathbf{Quad}(L)}\mathbf{Ann}(\gamma^\top)e^{-\mathbf{Quad}(L)}$;
    \item $\mathbf{Quad}\left(e^{L}Me^{-L}\right)=e^{\mathbf{Quad}(L)}\mathbf{Quad}(M)e^{-\mathbf{Quad}(L)}$.
\end{enumerate}
\end{corollary}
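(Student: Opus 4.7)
The plan is to derive all three identities from the commutation relations in Statement 3 of \lem{fermionic_property}, by exponentiating the adjoint action. Each identity asserts that conjugation by $e^{\mathbf{Quad}(L)}$ acts on the three kinds of fermionic operators exactly as matrix conjugation by $e^L$ acts on the underlying coefficient data. The cleanest route is to verify each side as the solution of the same first-order linear ODE in a parameter $t$, then specialize to $t=1$.

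First I would treat Statement 1. Define $F(t)=e^{t\mathbf{Quad}(L)}\mathbf{Cre}(\beta)e^{-t\mathbf{Quad}(L)}$ and $G(t)=\mathbf{Cre}(e^{tL}\beta)$. Direct calculation gives $F'(t)=[\mathbf{Quad}(L),F(t)]$, while $G'(t)=\mathbf{Cre}(Le^{tL}\beta)$, which equals $[\mathbf{Quad}(L),G(t)]$ by the identity $[\mathbf{Quad}(L),\mathbf{Cre}(\cdot)]=\mathbf{Cre}(L\,\cdot)$ from Statement 3(d) of \lem{fermionic_property}. Since $F(0)=G(0)=\mathbf{Cre}(\beta)$, uniqueness of solutions to linear matrix ODEs yields $F\equiv G$, and $t=1$ gives the claim. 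An equivalent formulation uses the Hadamard expansion $e^X Y e^{-X}=\sum_{k\geq0}\tfrac{1}{k!}\mathrm{ad}_X^k Y$: induction together with $[\mathbf{Quad}(L),\mathbf{Cre}(\beta)]=\mathbf{Cre}(L\beta)$ gives $\mathrm{ad}_{\mathbf{Quad}(L)}^k\mathbf{Cre}(\beta)=\mathbf{Cre}(L^k\beta)$, after which linearity of $\mathbf{Cre}$ (Statement 1(a)) lets the sum be pushed inside to yield $\mathbf{Cre}(e^L\beta)$.

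Statement 2 follows by the same argument using $[\mathbf{Quad}(L),\mathbf{Ann}(\gamma^\top)]=-\mathbf{Ann}(\gamma^\top L)$ from Statement 3(d); the only change is a sign, giving $\mathrm{ad}_{\mathbf{Quad}(L)}^k\mathbf{Ann}(\gamma^\top)=\mathbf{Ann}(\gamma^\top(-L)^k)$, which sums to $\mathbf{Ann}(\gamma^\top e^{-L})$. For Statement 3, the ingredient is $[\mathbf{Quad}(L),\mathbf{Quad}(M)]=\mathbf{Quad}([L,M])$ from Statement 3(e); by induction $\mathrm{ad}_{\mathbf{Quad}(L)}^k\mathbf{Quad}(M)=\mathbf{Quad}(\mathrm{ad}_L^k M)$, and the standard matrix identity $e^L M e^{-L}=\sum_{k\geq 0}\tfrac{1}{k!}\mathrm{ad}_L^k M$ combined with linearity of $\mathbf{Quad}$ produces the stated conclusion. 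There is no real obstacle: convergence is automatic because the enclosing operator algebra is finite-dimensional, and the only bookkeeping to watch is the sign in Statement 2 and the use of linearity of $\mathbf{Cre}$, $\mathbf{Ann}$, $\mathbf{Quad}$ to exchange these maps with the infinite sum.
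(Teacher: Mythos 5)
Your proof is correct. For Statements 1 and 2 you and the paper both rest on the Hadamard identity $e^X Y e^{-X}=\sum_k \frac{1}{k!}\mathrm{ad}_X^k Y$ together with the commutation relations in Statement 3(d) of \lem{fermionic_property}; your ODE formulation (matching $F'(t)=[\mathbf{Quad}(L),F(t)]$ and $G'(t)=\mathbf{Cre}(Le^{tL}\beta)$) is just a repackaging of the same series argument, and the note that continuity/convergence is automatic in finite dimensions matches the paper's remark. The genuine divergence is in Statement 3: you apply the Hadamard expansion once more, directly at the level of $\mathbf{Quad}$, using $[\mathbf{Quad}(L),\mathbf{Quad}(M)]=\mathbf{Quad}([L,M])$ (Statement 3(e)) and induction to get $\mathrm{ad}_{\mathbf{Quad}(L)}^k\mathbf{Quad}(M)=\mathbf{Quad}(\mathrm{ad}_L^k M)$, then pass the sum through linearity. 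The paper instead decomposes $M=\sum_j\beta_j\gamma_j^\top$ into rank-one pieces, writes $\mathbf{Quad}(\beta_j\gamma_j^\top)=\mathbf{Cre}(\beta_j)\mathbf{Ann}(\gamma_j^\top)$, and reduces Statement 3 to Statements 1 and 2 already proved. Both are sound; your route is more uniform (the same one-line mechanism handles all three claims and never needs $\mathbf{Cre}\cdot\mathbf{Ann}=\mathbf{Quad}$), whereas the paper's rank-one reduction emphasizes how Statement 3 is inherited from the single-operator cases and avoids a second induction.
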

\begin{proof}
The first claim follows from a Taylor expansion of the matrix exponential conjugation and an application of the mapping $\mathbf{Cre}$ term by term:
\begin{equation}
\begin{aligned}
    &\mathbf{Cre}\left(e^{L}\beta\right)\\
    &=\mathbf{Cre}\left(\beta+L\beta+\frac{1}{2!}L^2\beta+\cdots\right)\\
    &=\mathbf{Cre}(\beta)
    +\left[\mathbf{Quad}(L),\mathbf{Cre}(\beta)\right]
    +\frac{1}{2!}\left[\mathbf{Quad}(L),\left[\mathbf{Quad}(L),\mathbf{Cre}(\beta)\right]\right]
    +\cdots\\
    &=e^{\mathbf{Quad}(L)}\mathbf{Cre}(\beta)e^{-\mathbf{Quad}(L)},
\end{aligned}
\end{equation}
where passage to the limit is justified by the fact that $\mathbf{Cre}$ is necessarily continuous as a finite-dimensional linear mapping.
Similarly, for the second claim, we use properties of $\mathbf{Ann}$ to write:
\begin{equation}
\begin{aligned}
    &\mathbf{Ann}\left(\gamma^\top e^{-L}\right)\\
    &=\mathbf{Ann}\left(\gamma^\top-\gamma^\top L+\frac{1}{2!}\gamma^\top L^2+\cdots\right)\\
    &=\mathbf{Ann}\left(\gamma^\top\right)
    +\left[\mathbf{Quad}(L),\mathbf{Ann}\left(\gamma^\top\right)\right]
    +\frac{1}{2!}\left[\mathbf{Quad}(L),\left[\mathbf{Quad}(L),\mathbf{Ann}\left(\gamma^\top\right)\right]\right]
    +\cdots\\
    &=e^{\mathbf{Quad}(L)}\mathbf{Ann}(\gamma^\top)e^{-\mathbf{Quad}(L)}.
\end{aligned}
\end{equation}

Finally, to prove the third claim, suppose that $M=\sum_{j=1}^n\beta_j\gamma_j^\top$ is a decomposition of $M$ into rank-$1$ operators. Then,
\begin{equation}
\begin{aligned}
    \mathbf{Quad}\left(e^{L}Me^{-L}\right)
    &=\mathbf{Quad}\left(e^{L}\sum_{j=1}^n\beta_j\gamma_j^\top e^{-L}\right)
    =\sum_{j=1}^n\mathbf{Quad}\left(e^{L}\beta_j\gamma_j^\top e^{-L}\right)\\
    &=\sum_{j=1}^n\mathbf{Cre}\left(e^{L}\beta_j\right)\mathbf{Ann}\left(\gamma_j^\top e^{-L}\right)\\
    &=\sum_{j=1}^ne^{\mathbf{Quad}(L)}\mathbf{Cre}\left(\beta_j\right)e^{-\mathbf{Quad}(L)}
    e^{\mathbf{Quad}(L)}\mathbf{Ann}(\gamma_j^\top)e^{-\mathbf{Quad}(L)}\\
    &=e^{\mathbf{Quad}(L)}\sum_{j=1}^n\mathbf{Quad}\left(\beta_j\gamma_j^\top\right)e^{-\mathbf{Quad}(L)}
    =e^{\mathbf{Quad}(L)}\mathbf{Quad}(M)e^{-\mathbf{Quad}(L)}.
\end{aligned}
\end{equation}
This proves all the claimed identities.
\end{proof}

\subsection{Fermionic \texorpdfstring{$\eta$}{eta}-seminorm and its evaluation}
\label{append:fermionic_seminorm}
Let $A_j$ ($j=1,\ldots,n$) be fermionic operators satisfying the canonical anticommutation relations $\left\{A_j,A_k\right\}=A_jA_k+A_kA_j=0$ and $\left\{A_j,A_k^\dagger\right\}=A_j A_k^\dagger+A_k^\dagger A_j=\pmb{\delta}_{j,k}I$. Define the number operator 
\begin{equation}
    N=\mathbf{Quad}(I)=\sum_{j=1}^{n}A_j^\dagger A_j.
\end{equation}
We say $\ket{\psi_\eta}$ is an \emph{$\eta$-particle state} if it is an eigenstate of the number operator with eigenvalue $\eta$: $N\ket{\psi_\eta}=\sum_{j=1}^{n}A_j^\dagger A_j\ket{\psi_\eta}=\eta\ket{\psi_\eta}$. Note that
\begin{equation}
    \left(A_j^\dagger A_j\right)^2
    =A_j^\dagger\left(A_jA_j^\dagger\right)A_j
    =A_j^\dagger A_j-A_j^{\dagger2}A_j^2
    =A_j^\dagger A_j,
\end{equation}
whereas
\begin{equation}
    \left[A_j^\dagger A_j,A_k^\dagger A_k\right]
    =\left[\mathbf{Quad}\left(e_je_j^\top\right),\mathbf{Quad}\left(e_ke_k^\top\right)\right]
    =\mathbf{Quad}\left(\left[e_je_j^\top,e_ke_k^\top\right]\right)=0
\end{equation}
for $j\neq k$.
Hence, $\{A_j^\dagger A_j\}_{j=1}^n$ are pairwise commuting orthogonal projections that can be simultaneously diagonalized, so eigenvalues of their sum $N=\mathbf{Quad}(I)=\sum_{j=1}^{n}A_j^\dagger A_j$ are integers $0\leq\eta\leq n$.
We call an operator $U$ \emph{number preserving} if it commutes with the number operator: $\left[U,N\right]=0$.

As an example, since the identity matrix is invariant under any exponential conjugation
\begin{equation}
    I=e^{K}Ie^{-K},
\end{equation}
we have
\begin{equation}
    \mathbf{Quad}(I)=e^{\mathbf{Quad}(K)}\mathbf{Quad}(I)e^{-\mathbf{Quad}(K)}.
\end{equation}
This means $e^{\mathbf{Quad}(K)}=e^{\sum_{j,k}[K]_{j,k}A_j^\dagger A_k}$ commutes with $\mathbf{Quad}(I)=\sum_jA_j^\dagger A_j=N$, and must therefore be number-preserving.

In the following, we consider the \emph{fermionic $\eta$-seminorm} defined by
\begin{equation}
    \max_{\ket{\psi_\eta}}\abs{\bra{\psi_\eta}\mathbf{Quad}(J)\ket{\psi_\eta}},
\end{equation}
where $J\in\mathbb{C}^{n\times n}$ is a normal coefficient matrix and the maximization is taken over all $\eta$-particle states $\ket{\psi_\eta}$. This expectation value can be evaluated as follows~\cite[Eq.\ (16)]{McArdleCampbell22}.

\begin{lemma}[Fermionic $\eta$-seminorm with normal coefficient matrices]
\label{lem:normal_eta_norm}
Let $A_j$ ($j=1,\ldots,n$) be fermionic operators and define $\mathbf{Quad}$ as in~\lem{fermionic_property}.
Let $J\in\mathbb{C}^{n\times n}$ be a normal matrix with eigenvalues $\lambda_k(J)$, and $0\leq\eta\leq n$ be an integer. It holds that
\begin{equation}
    \max_{\ket{\psi_\eta}}\abs{\bra{\psi_\eta}\mathbf{Quad}(J)\ket{\psi_\eta}}=\max_{\#\mathcal{S}=\eta}\abs{\sum_{k\in \mathcal{S}}\lambda_k(J)},
\end{equation}
where the maximization on the left is taken over all $\eta$-partile states satisfying $\sum_jA_j^\dagger A_j\ket{\psi_\eta}=\eta\ket{\psi_\eta}$, and the maximization on the right is taken over all subsets $\mathcal{S}\subseteq\{1,2,\ldots,n\}$ of size $\#\mathcal{S}=\eta$.
\end{lemma}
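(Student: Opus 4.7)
The plan is to diagonalize $J$ using normality and transport the maximization over $\eta$-particle states through the induced unitary on Fock space, leaving a simple diagonal calculation.

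First I would use normality to write $J = U \Lambda U^\dagger$, where $\Lambda = \mathbf{diag}(\lambda_1(J),\ldots,\lambda_n(J))$ and $U$ is unitary. Since the exponential map is surjective onto the unitary group in finite dimensions, I may write $U = e^L$ with $L$ anti-Hermitian. Applying the conjugation identity established just before the lemma (the corollary after \lem{fermionic_property}), this yields
\begin{equation}
    \mathbf{Quad}(J) = \mathbf{Quad}\bigl(e^{L}\Lambda e^{-L}\bigr) = V\,\mathbf{Quad}(\Lambda)\,V^{\dagger}, \qquad V := e^{\mathbf{Quad}(L)}.
\end{equation}
Because $\mathbf{Quad}(I) = N$ and $\mathbf{Quad}(L)$ commutes with $\mathbf{Quad}(I)$ (as $[L,I]=0$ combined with the commutator identity in \lem{fermionic_property}), the unitary $V$ commutes with $N$, i.e.\ $V$ is number-preserving. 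Consequently $V$ restricts to a unitary on the $\eta$-particle subspace, and the substitution $\ket{\phi_\eta} = V^\dagger \ket{\psi_\eta}$ gives a bijection of $\eta$-particle unit vectors, yielding
\begin{equation}
    \max_{\ket{\psi_\eta}}\bigl|\bra{\psi_\eta}\mathbf{Quad}(J)\ket{\psi_\eta}\bigr|
    = \max_{\ket{\phi_\eta}}\bigl|\bra{\phi_\eta}\mathbf{Quad}(\Lambda)\ket{\phi_\eta}\bigr|.
\end{equation}

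Next I would evaluate the right-hand side by writing $\mathbf{Quad}(\Lambda) = \sum_{k=1}^n \lambda_k(J)\, A_k^\dagger A_k$. Since $\{A_k^\dagger A_k\}$ is a family of pairwise commuting orthogonal projections (as noted in \append{fermionic_seminorm}), they are simultaneously diagonalized in the Fock basis $\{\ket{n_1,\ldots,n_n}\}$ with $n_k \in \{0,1\}$, where $A_k^\dagger A_k$ acts as multiplication by $n_k$. The $\eta$-particle subspace is therefore spanned by basis vectors indexed by subsets $\mathcal{S}\subseteq\{1,\ldots,n\}$ with $\#\mathcal{S}=\eta$, on which $\mathbf{Quad}(\Lambda)$ acts as the scalar $\sum_{k\in\mathcal{S}} \lambda_k(J)$. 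Thus $\mathbf{Quad}(\Lambda)$ restricted to this subspace is a diagonal, hence normal, operator whose spectrum is exactly $\{\sum_{k\in\mathcal{S}}\lambda_k(J) : \#\mathcal{S}=\eta\}$.

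Finally I would invoke the fact that for any normal operator $M$ the numerical radius coincides with the spectral radius, $\max_{\|\phi\|=1}|\bra{\phi}M\ket{\phi}| = \max_k |\lambda_k(M)|$ (by unitary diagonalization and the convexity characterization of the numerical range as the convex hull of the spectrum, whose extreme points attain the maximum modulus). Applying this to the restriction of $\mathbf{Quad}(\Lambda)$ to the $\eta$-particle subspace gives exactly $\max_{\#\mathcal{S}=\eta}|\sum_{k\in\mathcal{S}}\lambda_k(J)|$, completing the proof. The only mildly delicate point is the surjectivity of the exponential map used to write $U = e^L$, which is standard for finite-dimensional unitaries; an alternative route would be to define $V$ directly as the unique number-preserving unitary implementing the one-body basis change $A_k \mapsto \sum_j [U^\dagger]_{k,j} A_j$ on Fock space, but either route reaches the same conclusion.
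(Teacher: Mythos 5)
Your proof is correct and follows essentially the same route as the paper: diagonalize $J$ by normality, transport the maximization through the number-preserving unitary $e^{\mathbf{Quad}(\log U)}$, and evaluate on the diagonal operator restricted to the $\eta$-particle sector. The only cosmetic difference is in the final step, where the paper expands explicitly in the Fock basis and applies the triangle inequality plus a saturating basis state, while you invoke the equivalent general fact that the numerical radius of a normal (here diagonal) operator equals its spectral radius.
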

\begin{proof}
Since $J$ is normal, it admits the spectral decomposition $J=U\Lambda U^\dagger$ where $U$ is unitary and $\Lambda$ is diagonal containing all the eigenvalues $\lambda_k(J)$. 
Taking the matrix logarithm,
we obtain
\begin{equation}
    \mathbf{Quad}(J)
    =\mathbf{Quad}\left(U\Lambda U^\dagger\right)
    =\mathbf{Quad}\left(e^{\log(U)}\Lambda e^{-\log(U)}\right)
    =e^{\mathbf{Quad}(\log(U))}\mathbf{Quad}(\Lambda)e^{-\mathbf{Quad}(\log(U))}.
\end{equation}
Hence,
\begin{equation}
\begin{aligned}
    \max_{\ket{\psi_\eta}}\abs{\bra{\psi_\eta}\mathbf{Quad}(J)\ket{\psi_\eta}}
    &=\max_{\ket{\psi_\eta}}\abs{\bra{\psi_\eta}e^{\mathbf{Quad}(\log(U))}\mathbf{Quad}(\Lambda)e^{-\mathbf{Quad}(\log(U))}\ket{\psi_\eta}}\\
    &=\max_{\ket{\psi_\eta}}\abs{\bra{\psi_\eta}\sum_k\lambda_k(J)A_k^\dagger A_k\ket{\psi_\eta}},
\end{aligned}
\end{equation}
where we have used the fact that $e^{\mathbf{Quad}(\log(U))}$ is a number-preserving unitary. 

To proceed, suppose that under the computational basis, quantum state $\ket{\psi_\eta}$ has the expansion
\begin{equation}
    \ket{\psi_\eta}=\sum_{\#\mathcal{S}=\eta}c_{\mathcal{S}}\ket{\mathcal{S}},
\end{equation}
with normalized coefficients $\norm{c}=1$ and occupied modes labeled by subsets $\mathcal{S}\subseteq\{1,2,\ldots,n\}$ of size $\eta$. This then gives
\begin{equation}
\begin{aligned}
    \max_{\ket{\psi_\eta}}\abs{\bra{\psi_\eta}\mathbf{Quad}(J)\ket{\psi_\eta}}
    &=\max_{\norm{c}=1}\abs{\sum_{\#\mathcal{S}_1=\eta}c_{\mathcal{S}_1}^*\bra{\mathcal{S}_1}\sum_k\lambda_k(J)A_k^\dagger A_k\sum_{\#\mathcal{S}_2=\eta}c_{\mathcal{S}_2}\ket{\mathcal{S}_2}}\\
    &=\max_{\norm{c}=1}\abs{\sum_{\#\mathcal{S}=\eta}\abs{c_{\mathcal{S}}}^2\sum_k\lambda_k(J)\bra{\mathcal{S}}A_k^\dagger A_k\ket{\mathcal{S}}}.
\end{aligned}
\end{equation}

It is clear from the triangle inequality that
\begin{equation}
    \max_{\ket{\psi_\eta}}\abs{\bra{\psi_\eta}\mathbf{Quad}(J)\ket{\psi_\eta}}
    \leq\max_{\#\mathcal{S}=\eta}\abs{\sum_k\lambda_k(J)\bra{\mathcal{S}}A_k^\dagger A_k\ket{\mathcal{S}}}
    =\max_{\#\mathcal{S}=\eta}\abs{\sum_{k\in \mathcal{S}}\lambda_k(J)}.
\end{equation}
But the above is in fact an equality. This is because for the specific $\mathcal{S}$ attaining the maximum on the right, we can simply let $c_{\mathcal{S}}=1$ and $c_{\mathcal{S}'}=0$ for any $\mathcal{S}'\neq\mathcal{S}$. This establishes the claimed formula for the maximum expectation value of fermionic operators.
\end{proof}

Whenever the context is clear, we will denote both the fermionic $\eta$-seminorm and its reduced version by $\norm{\cdot}_\eta$. That is, under the setting of~\lem{normal_eta_norm},
\begin{equation}
    \norm{\mathbf{Quad}(J)}_\eta
    =\max_{\ket{\psi_\eta}}\abs{\bra{\psi_\eta}\mathbf{Quad}(J)\ket{\psi_\eta}},\qquad
    \norm{J}_\eta
    =\max_{\#\mathcal{S}=\eta}\abs{\sum_{k\in \mathcal{S}}\lambda_k(J)}.
\end{equation}
We now explain how to efficiently evaluate the reduced fermionic $\eta$ seminorm $\max\limits_{\#\mathcal{S}=\eta}\abs{\sum_{k\in \mathcal{S}}\lambda_k(J)}$ on a classical computer. When the coefficient matrix $J$ is normal, we can find all its eigenvalues $\lambda_k(J)$ efficiently (with a complexity polynomial in $n$). Then the problem asks for the largest absolute value of an arbitrary sum of $0\leq\eta\leq n$ of them. This has the following geometric interpretation. Given vectors $x_1,\ldots,x_n\in\mathbb{R}^2$, our goal is to find a subset $\mathcal{S}\subseteq\{1,\ldots,n\}$ of size $\#\mathcal{S}=\eta$ that maximizes the Euclidean norm $\norm{\sum_{k\in\mathcal{S}}x_k}$. This is known as the \emph{longest $\eta$-vector sum} problem and can be solved in time $\mathbf{O}\left(n^2\log(n)\right)$~\cite{SHENMAIER202060} for any $0\leq\eta\leq n$.

In the special case where the coefficient matrix $J$ is Hermitian, it has $n$ real eigenvalues which can be efficiently sorted in time $\mathbf{O}\left(n\log(n)\right)$. Then the reduced fermionic seminorm $\max\limits_{\#\mathcal{S}=\eta}\abs{\sum_{k\in \mathcal{S}}\lambda_k(J)}$ is attained by either the largest $\eta$ or smallest $\eta$ eigenvalues $\lambda_k(J)$. The case where $J$ is anti-Hermitian can be handled similarly as $iJ$ is Hermitian and the magnitude of eigenvalues is not affected by the rescaling.

\clearpage
\bibliographystyle{myhamsplain2}
\bibliography{ham_linalg.bib}

\end{document}